\documentclass{article}

\usepackage{amsmath,amssymb,amsthm}
\usepackage{fullpage}
\usepackage{xcolor,xspace}
\usepackage{graphicx}
\usepackage{caption}
\usepackage{booktabs}
\usepackage{microtype}
\usepackage{enumerate}
\usepackage{algorithm}
\usepackage[
    indLines=true,
    noEnd=true,
    rightComments=true,
    italicComments=true,
]{algpseudocodex}
\algrenewcommand\algorithmicrequire{\textbf{Input:}}
\algrenewcommand\algorithmicensure{\textbf{Output:}}

\usepackage{thmtools}
\usepackage{thm-restate}

\declaretheorem[name=Theorem]{theorem}
\declaretheorem[name=Lemma,sibling=theorem]{lemma}
\declaretheorem[name=Corollary,sibling=theorem]{corollary}

\declaretheorem[name=Definition,sibling=theorem,style=definition]{definition}

\usepackage[
    colorlinks=true,
    linkcolor=blue!62!black,
    citecolor=green!48!black,
    urlcolor=blue!70!black,
    linktoc=page
]{hyperref}
\usepackage[nameinlink,noabbrev]{cleveref}

\crefname{theorem}{Theorem}{Theorems}
\Crefname{theorem}{Theorem}{Theorems}
\crefname{lemma}{Lemma}{Lemmas}
\Crefname{lemma}{Lemma}{Lemmas}
\crefname{corollary}{Corollary}{Corollaries}
\Crefname{corollary}{Corollary}{Corollaries}
\crefname{observation}{Observation}{Observations}
\Crefname{observation}{Observation}{Observations}
\crefname{proposition}{Proposition}{Propositions}
\Crefname{proposition}{Proposition}{Propositions}
\crefname{claim}{Claim}{Claims}
\Crefname{claim}{Claim}{Claims}
\crefname{conjecture}{Conjecture}{Conjectures}
\Crefname{conjecture}{Conjecture}{Conjectures}
\crefname{assumption}{Assumption}{Assumptions}
\Crefname{assumption}{Assumption}{Assumptions}
\crefname{definition}{Definition}{Definitions}
\Crefname{definition}{Definition}{Definitions}
\crefname{remark}{Remark}{Remarks}
\Crefname{remark}{Remark}{Remarks}
\crefname{algorithm}{Algorithm}{Algorithms}
\Crefname{algorithm}{Algorithm}{Algorithms}
\crefname{section}{Section}{Sections}
\Crefname{section}{Section}{Sections}
\crefname{appendix}{Appendix}{Appendices}
\Crefname{appendix}{Appendix}{Appendices}

\tikzset{
  algpxIndentLine/.style={draw=black!100,very thin}
}

\DeclareMathOperator{\End}{End}
\DeclareMathOperator{\mincut}{mincut}

\title{Faster Minimum \(k\)-Cut I: Simple and Sparse Weighted Graphs}

\author{
   Jason Li\thanks{Carnegie Mellon University. email: jmli@cs.cmu.edu}
   \and
   Trevor Vaughn\thanks{Carnegie Mellon University. email: tnvaughn@cmu.edu}
}

\begin{document}

\maketitle

\begin{abstract}
The minimum $k$-cut problem asks for the fewest edges whose removal leaves an input graph with at least $k$ connected components. Previously, the best algorithm for simple graphs ran in $O_k(n^{(1-\varepsilon)k+O(1)})$ time~\cite{HL22}, showing that the \(n^k\) barrier can be broken up to a polynomial overhead.

We give the first $\widetilde O_k(n^{ck})$-time algorithm for Minimum $k$-Cut on simple graphs for an absolute constant $c<1$. More precisely, the running times are $\widetilde O(n^2)$ for $k=3$, $\widetilde O(n^{55/19})$ for $k=4$, and $\widetilde O(n^{4.112007})$ for $k=5$; for every $k\ge6$, the running time is
\[
k^{O(k^2)}n^{1+(6k-6)\frac{k-1.749614}{7k-10}}(\log n)^{O(k^2)},
\]
whose exponent is $\frac67k-0.132\ldots+O(1/k)$.

The algorithm combines three ingredients. First, for weighted Minimum $k$-Cut we give a randomized
\[
k^{O(k^2)}n^{k-2}(m+n)\log^3(n)
\]
-time algorithm: it perturbs the edge weights so that any minimum $k$-cut has a side with boundary strictly smaller than average. These then cut few edges of some tree in a logarithmic-size sample from a tree packing with high probability. After we enumerate them, we recursively compute $(k-1)$-cuts to complete them to the $k$-cuts of which they were a part.
A variant of the perturbation and processing the entire packing support give a deterministic $k^{O(k^2)}n^{k+O(1)}$-time variant. Second, for cut size $s$, we give an improved FPT algorithm using a near-linear-time construction of an $(O(s\log^2 n\log\log n),s)$ edge-unbreakable tree decomposition with $O(s\log^2 n\log\log n)$ adhesion; this also gives a near-linear-time approximation algorithm for Minimum $k$-Cut. Third, we refine the border/island framework of~\cite{HL22}, using rectangular matrix multiplication to recover singleton islands and balancing it against the improved FPT algorithm.
\end{abstract}

\newpage

\tableofcontents

\newpage

\section{Introduction}

The Minimum $k$-Cut problem asks for the minimum number of edges whose removal leaves at least $k$ connected components. We write $\lambda_k(G)$ for this optimum value. The case $k=2$ is the global minimum cut problem, which admits near-linear-time algorithms \cite{Kar00}. For fixed $k$, the first polynomial-time algorithm was due to Goldschmidt and Hochbaum~\cite{GH94}, with running time $n^{O(k^2)}$. Karger and Stein~\cite{KS96} later gave a randomized contraction algorithm running in $\widetilde O(n^{2k-2})$ time, and Thorup~\cite{Tho08} gave a deterministic tree-packing algorithm which runs in $\widetilde O(n^{2k})$.

A sequence of recent works on the structure and enumeration of near-minimum $k$-cuts \cite{GLL18,GLL19,Li19} led to the algorithm of \cite{GHLL21}. They showed that all $k$-cuts of value at most $\beta\lambda_k$ can be enumerated in time $n^{\beta k}(\log n)^{O(\beta k^2)}$ with high probability. In particular, setting $\beta=1$ gives an algorithm for Minimum $k$-Cut in time $n^k(\log n)^{O(k^2)}$. For weighted graphs, a lower bound of \(\Omega(n^{k-1-o(1)})\) is known under standard clique hardness conjectures \cite{GHLL21,HL22}.

For simple unweighted graphs, however, the $n^k$ barrier is not the end of the story. He and Li~\cite{HL22} showed that one can break this barrier by combining a border-preserving version of the Kawarabayashi--Thorup sparsification framework~\cite{KT18,Li19} with matrix-multiplication methods for recovering singleton components. Their framework separates the simple unweighted case from the weighted case: after a border-preserving sparsification, the remaining difficulty is split between finding a low-value residual border and recovering the singleton ``islands'' of the cut. The latter task has a clique-like complexity resulting from fast matrix multiplication.

Our first contribution is a sparse-graph algorithm for weighted Min \(k\)-Cut, running in time \(\widetilde O_k(n^{k-2} (m + n))\). For sparse graphs, this matches the \(\Omega(n^{k-1-o(1)})\) lower bound under standard clique hardness conjectures \cite{GHLL21,HL22}. We remark that a companion paper~\cite{Vau26} gives a deterministic \(\widetilde O_k(n^{k-1})\)-time algorithm, but the speedup is significantly more technical and is unnecessary for our main result on simple graphs.

The algorithm uses a recursive strategy. After a small random perturbation of the capacities, some side of an optimum \(k\)-cut has boundary strictly below \(2\lambda_k/k\), which we call a strict-light side. By a theorem of \cite{GHLL21}, the number of \(2\)-cuts with boundary strictly below \(2\lambda_k/k\) is at most \(k^{O(k)}n\). We remark that the statement is false if strictness is removed: an unweighted cycle has \(\binom{n}{2}\) many \(2\)-cuts of size exactly \(2\lambda_k/k=2\). Observe that any optimal \(k\)-cut of an unweighted cycle also has no strict-light side, which explains why perturbation is necessary to guarantee strict-light sides.

We compute an approximate solution to the dual \(k\)-cut LP and sample
\(O(k^3\log n)\) spanning trees from its packing with probability proportional to weight.  Every
strict-light side is \((k-1)\)-respected by a sampled tree with
probability \(\Omega(1/k)\).  For each sampled tree, we enumerate the
rooted cuts crossing at most \(k-1\) tree edges and retain only the
\(k^{O(k)}n\) lightest distinct cuts.  The strict-light counting bound
guarantees that no strict-light side is discarded.  We then evaluate
the retained candidates by recursive Minimum \((k-1)\)-Cut calls on
their two sides.

\begin{theorem}[Informal]
For every \(k\ge2\), weighted Min \(k\)-Cut on an \(n\)-vertex, \(m\)-edge weighted graph can be solved with high probability in time
\[
    k^{O(k^2)}n^{k-2}(m+n)\log^3(n)
\]
\end{theorem}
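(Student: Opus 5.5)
We prove the statement by induction on \(k\), with the recurrence \(T_k(n,m)\le k^{O(k)}n\log n\cdot T_{k-1}(n,m)+(\text{overhead})\) and base case \(k=2\) given by the near-linear minimum cut algorithm \cite{Kar00} in time \(O(m\log^3 n)\). Unrolling this from \(k=2\) gives \(\prod_{j\le k}j^{O(j)}\cdot n^{k-2}\cdot m\log^3 n=k^{O(k^2)}n^{k-2}(m+n)\log^3 n\). This is only correct if the \(\log n\) factor per level is absorbed: we either avoid it by deduplicating candidates across trees, so there are \(k^{O(k)}n\) candidates in total, or we accept a \((\log n)^{O(k)}\) loss and tighten afterward. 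We will aim for deduplication across trees.

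\textbf{Step 1 (perturbation).} Add to each edge an independent random perturbation \(\delta_e\) from a range of polynomially many values scaled far below the minimum weight gap. Then with high probability every optimal \(k\)-cut of the perturbed instance is optimal in the original. Moreover, the optimum \(k\)-cut is unique, and no two distinct sides have exactly equal boundary unless their boundaries coincide as edge sets. For the optimum partition \(V_1,\dots,V_k\) we have \(\sum_i w(\partial V_i)=2\lambda_k\). Hence some side has \(w(\partial V_i)\le 2\lambda_k/k\), and we must argue that it is strictly below. If all sides had boundary exactly \(2\lambda_k/k\), this would be a linear equality among the perturbations of distinct edge sets. Such an equality occurs with probability \(1/\mathrm{poly}(n)\) per fixed partition. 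A union bound over all partitions is impossible, so we use the isolation idea instead: condition on the optimal partition being determined by the other edges' perturbations, and use the fact that one particular edge's \(\delta_e\) lies in \(\partial V_i\) but not in \(\partial V_j\). Thus strict lightness fails only with small probability.

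\textbf{Step 2 (candidate enumeration).} Compute a \((1+\epsilon)\)-approximate tree packing for the \(k\)-cut LP in \(\widetilde O(m)\) time. Sample \(O(k^3\log n)\) trees proportional to weight. We use the stated fact that a strict-light side \(S\) is \((k-1)\)-respected by a sampled tree with probability \(\Omega(1/k)\); this follows since the LP value relates \(w(\partial S)<2\lambda_k/k\) to the average tree crossing. For each tree, enumerate the cuts crossing at most \(k-1\) tree edges and keep the \(k^{O(k)}n\) lightest. By the GHLL counting bound, the strict-light side survives, because every strictly lighter cut is also strict-light. Keeping only the lightest requires a top-\(L\) selection over \(n^{k-1}\) tree-respecting cuts. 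We do this with dynamic tree data structures and a Gabow/Eppstein-style best-first enumeration, in time \(k^{O(k)}(n\cdot\mathrm{polylog}+m\log n)\) per tree.

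\textbf{Step 3 (recursion).} For each distinct candidate \(S\), the optimal \(k\)-cut with \(S\) as a side is \(w(\partial S)\) plus the minimum \((k-1)\)-cut of \(G[V\setminus S]\). We compute this recursively in time \(T_{k-1}(n,m)\), taking the minimum over all candidates.

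The main obstacle is Step 2's efficient selection of the lightest tree-respecting cuts, together with keeping the global candidate set of size \(k^{O(k)}n\) rather than \(k^{O(k)}n\log n\). Deduplication via hashing of cut vertex sets is not cheap, so we take the global lightest \(k^{O(k)}n\) over the union of trees; the counting bound again guarantees that the light side survives. Step 1's strictness argument is the second delicate point.
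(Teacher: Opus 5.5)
\textbf{Verdict: genuine gap in Step 2.} Your overall architecture matches the paper's: perturb so that an optimum has a strict-light side, sample $O(k^3\log n)$ trees from an approximate $k$-cut LP packing, keep the $k^{O(k)}n$ lightest $(k-1)$-respecting cuts, and complete each by a recursive $(k-1)$-cut.

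\textbf{Step 1 is fine.} Your isolation-style strictness argument can be made rigorous. First isolate a unique optimum. Then, for a cut edge $e$ joining parts $V_a,V_b$ and any third part $V_c$, conditioning on the other perturbations fixes the best partition cutting $e$. The equation $c^\star(\partial V_a)=c^\star(\partial V_c)$ then pins $\delta_e$ to a single value. Note, though, that a union bound \emph{is} available. The perturbation is lexicographically dominated by the original capacities, so every perturbed optimum is an original optimum, and there are only $n^kk^{O(k^2)}$ of those by GHLL. The paper union-bounds over these optima and their pairs of sides.

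\textbf{First problem: the enumeration primitive.} Your running time rests on a ``Gabow/Eppstein-style best-first enumeration with dynamic trees'' that selects the lightest $(k-1)$-respecting cuts in near-linear time per tree. You give no construction, and none is needed. The budget $n^{k-2}(m+n)$ already affords $n^{k-1}$ work per tree, so you can enumerate every set $B$ of at most $k-1$ tree edges. What is missing is how to evaluate each such cut in $k^{O(1)}\log n$ time instead of $O(m)$; the latter gives $n^{k-1}m$ per tree, which is too slow. With a fixed root $r$, the rooted side is $\triangle_{f\in B}R_f$, a disjoint union of at most $|B|+1$ DFS intervals. Its capacity is therefore a sum of $O(k^2)$ rectangle queries in a static 2D range-sum structure on the points $(\mathrm{tin}(u),\mathrm{tin}(v))$. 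This costs $k^{O(1)}n^{k-1}\log n\le k^{O(1)}n^{k-2}(m+n)\log n$ per tree for $k\ge3$, with a linear-selection buffer keeping the $M_k$ lightest.

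\textbf{Second problem: skipping deduplication breaks the counting argument.} You propose keeping the global lightest $k^{O(k)}n$ entries of the \emph{multiset} of per-tree lists. The GHLL bound controls \emph{distinct} strict-light cuts. A cut lighter than the light side $A$ that is respected by many of the $\Theta(k^3\log n)$ sampled trees occupies that many slots. The $M_k$ slots can then be filled by only $M_k/\Theta(k^3\log n)$ distinct strict-light cuts, so $A$ is dropped with no contradiction. Keeping $\Theta(k^3\log n)M_k$ entries instead restores correctness, but multiplies the candidate count by $\log n$ at every level and destroys the $\log^3 n$ bound.

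Deduplication is in fact cheap. All trees are rooted at the same $r$, so the rooted side is a canonical representative across trees. Materializing the $k^{O(k)}n\log n$ local survivors as incidence vectors and radix sorting costs $k^{O(k)}n^2\log n\le k^{O(k)}n^{k-2}(m+n)\log n$; random subtree XOR fingerprints would even make it $O(k)$ per cut. After that, truncate to the $M_k$ lightest distinct cuts.

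\textbf{Minor points.} Since the candidate is the rooted side, the strict-light side may be its complement, so recurse on both $G[S]$ and $G[V\setminus S]$. Also reduce first to a connected positive-capacity instance, so that spanning-tree packings exist and the counting lemma applies.
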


This weighted algorithm is used later on the contracted multigraph produced by border-preserving sparsification. In the no-island case, the optimum simple graph cut is represented as an ordinary weighted \(k\)-cut of the contracted graph, so the weighted algorithm replaces enumeration of all \(k\)-cut borders.

A deterministic version
of the algorithm replaces the random perturbation by a lexicographic perturbation based on the incidence vectors of the cuts and processes every tree in the support of the
same approximate packing instead of sampling from it.  The packing
averaging argument guarantees that at least one support tree respects the
required strict-light side.  This gives the following polynomial-overhead
deterministic result.

\begin{theorem}[Informal]
For every \(k\ge2\), weighted Min \(k\)-Cut on an \(n\)-vertex, \(m\)-edge weighted graph can be solved deterministically in time
\[
    k^{O(k^2)} n^{k+O(1)}.
\]
\end{theorem}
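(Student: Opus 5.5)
We prove the deterministic bound by induction on \(k\), with the base case \(k=2\) handled by any deterministic polynomial-time global minimum cut algorithm. For \(k\ge3\), the plan mirrors the randomized algorithm with two derandomizations.

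\textbf{Step 1: lexicographic perturbation.} Fix an ordering \(e_1,\dots,e_m\) of the edges. Break ties among \(k\)-cuts by comparing the original weight first, and then the incidence vectors of the cut edge sets lexicographically. Equivalently, add \(\varepsilon^i\) to the weight of \(e_i\) for a formal infinitesimal \(\varepsilon\). Every comparison can then be carried out symbolically: compare the real parts, and on a tie compare incidence vectors. This costs at most \(O(m)\) overhead per comparison, which is absorbed in \(n^{O(1)}\). The key claim is that under this perturbation the minimum \(k\)-cut \(\{S_1,\dots,S_k\}\) is unique and has a side with \(w(\partial S_i)<2\lambda_k/k\) strictly.

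To prove the claim, note that \(\sum_i w(\partial S_i)=2\lambda_k\), so some side satisfies \(w(\partial S_i)\le 2\lambda_k/k\). Suppose every side had boundary exactly \(2\lambda_k/k\) in the perturbed weights. Compare the coefficients of the infinitesimal parts. Each edge appears in exactly two boundaries, so equality for all \(i\) forces every side's boundary incidence vector to have the same perturbed weight. I expect this to be the main obstacle, because distinct boundary sets are never equal under a generic perturbation. Indeed, \(\partial S_i\ne\partial S_j\) for \(k\ge3\): an edge of \(\partial S_i\) leaves \(S_i\) to some \(S_l\), and some \(S_j\) with \(j\ne i,l\) exists and avoids that edge unless the cut is degenerate. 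Two distinct edge sets cannot have equal perturbed weight, which is a contradiction. The degenerate case, where all cut edges go between one pair of sides, is handled separately: there some side has boundary \(0<2\lambda_k/k\), or the instance is trivial.

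\textbf{Step 2: packing support.} Deterministically compute an approximate solution to the dual \(k\)-cut LP, i.e.\ a tree packing with \(\mathrm{poly}(n)\) trees in its support, using multiplicative-weights methods with deterministic MST oracles. Then apply the averaging argument already used in the randomized analysis, which shows that a strict-light side is \((k-1)\)-respected with probability \(\Omega(1/k)\). That bound is an expectation over the packing weights, so some support tree \((k-1)\)-respects the light side \(S_i\). Process every support tree.

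\textbf{Step 3: enumeration and recursion.} For each tree, enumerate all rooted cuts crossing at most \(k-1\) tree edges; there are \(O(n^{k-1})\) of them per tree. Keep only the \(k^{O(k)}n\) lightest distinct cuts under the perturbed order. By the strict-light counting theorem of \cite{GHLL21}, which applies to perturbed weights since the perturbation is infinitesimal, the side \(S_i\) survives. For each retained candidate \(S\), the optimum equals \(w(\partial S)\) plus the best over splits \(k_1+k_2=k\) of the \(k_1\)-cut value of \(G[S]\) plus the \((k-1)\)-cut value of \(G[V\setminus S]\). The minimum over all candidates and splits is a valid \(k\)-cut value and is attained at \(S=S_i\).

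The recurrence is \(T_k(n)\le \mathrm{poly}(n)\cdot k^{O(k)}n\cdot k\cdot T_{k-1}(n)+\mathrm{poly}(n)\cdot n^{k-1}\). Each level therefore multiplies by \(n\cdot k^{O(k)}\), giving \(k^{O(k^2)}n^{k+O(1)}\). To keep the polynomial overhead from compounding across levels, deduplicate the candidates globally across all trees down to \(k^{O(k)}n\) before recursing. Then only the top level incurs the \(\mathrm{poly}(n)\) cost of enumeration, and each deeper level contributes a factor of \(k^{O(k)}n\).
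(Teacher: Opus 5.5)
Your proposal is correct and is essentially the paper's proof. Its ingredients are a lexicographic perturbation forcing a strict-light side (your formal $\varepsilon^i$ plays the role of the paper's $c^\star(e_i)=Bc(e_i)+2^i$), processing the entire support of the deterministic approximate dual packing, truncating per tree and globally to the $k^{O(k)}n$ lightest distinct $(k-1)$-respecting cuts via the counting bound of \cite{GHLL21}, and recursing. This gives $T_k(n)\le k^{O(k)}n\,T_{k-1}(n)+\mathrm{poly}(n)\,n^{k-1}$; the additive $\mathrm{poly}(n)$ is paid in every recursive call, not only at the top, but this still yields $k^{O(k^2)}n^{k+O(1)}$.

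The one loose end is an input with $2\le h<k$ connected components. Such a graph has no spanning tree to pack, and your remark that ``the instance is trivial'' does not cover it. The paper handles this case, along with zero capacities, by the component-wise reduction of \cref{lem:zero_capacity_reduction_kcut}; in your setting, joining the components by zero-capacity edges, which the perturbation then makes positive, would also suffice.
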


The next ingredient is a faster exact algorithm when the cut value \(\lambda_k\) is small. The approach of Lokshtanov, Saurabh, and Surianarayanan~\cite{LSS22} gives such an FPT algorithm using edge-unbreakable decompositions. Informally, a bag is \((q,s)\)-edge-unbreakable if every cut of size at most \(s\) leaves at most \(q\) bag vertices on one side or the other. This property is useful for dynamic programming: in every bag and every small cut, all but one part are forced to be small. However, the edge-unbreakable decomposition of \cite{LSS22} has unbreakability parameter \(((s+1)^5,s)\) and polynomial construction time and depth. We replace it by a near-linear-time construction with nearly linear dependence on \(s\).

\begin{theorem}[Informal]
For every \(s\), there is a randomized algorithm which, with high probability, computes a compact tree decomposition of \(G\) whose bags are
\[
    (O(s\log^2 n\log\log n),s)\text{-edge-unbreakable},
\]
whose adhesions have size
\[
    O(s\log^2 n\log\log n),
\]
and whose depth is \(O(\log n)\). The running time is $\widetilde O(m+sn)$.
\end{theorem}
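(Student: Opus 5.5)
The plan is a recursive, expander-decomposition-style construction in the spirit of \cite{LSS22} and of Saranurak--Wang, with the polynomial unbreakability replaced by a near-linear one.

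\textbf{Primitive.} The basic step is a routine that, given a vertex set \(X\) (the current bag candidate, together with its adhesion terminals), does one of two things. Either it certifies that \(X\) is \((q,s)\)-edge-unbreakable with \(q=O(s\log^2 n\log\log n)\), or it finds a cut of size at most \(s\) that splits \(X\) in a balanced way, namely with at least \(q/\Theta(\log n)\) vertices of \(X\) on each side.

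I would implement this with a cut-matching game on \(X\) with capacity \(s\). Each round is a max-flow between two halves of \(X\), computed by near-linear-time approximate or local flow. This uses \(O(\log^2 n)\) rounds. The \(\log\log n\) factor comes from the approximation loss of a KRV/OSVV-style game, or from rounding capacities to powers of two.

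If the game succeeds, the flows embed an expander on \(X\) with congestion \(O(s\log^2 n)\). Any cut of size at most \(s\) then separates at most \(O(s\log^2 n\log\log n)\) vertices of \(X\), which is exactly unbreakability. If a flow fails, its min-cut is a sparse cut of size at most \(s\), and the bag is split along it.

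\textbf{Recursion.} I would build the tree decomposition top-down. At a node with bag set \(X\) and boundary terminals \(T\), with \(|T|\) bounded by the adhesion bound, run the primitive on \(X\cup T\).

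In the unbreakable case, \(X\) becomes a bag. Its children are the components of \(G-X\), each attached through its neighbourhood in \(X\). The adhesion must be controlled by also splitting with respect to the terminals, as in \cite{LSS22}. A child whose attachment set is too large gets its terminals re-split.

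In the split case, recurse on both sides, adding the \(\le s\) cut-edge endpoints to the terminals.

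\textbf{Depth and running time.} Balancedness, measured by vertex count or by edge volume, gives depth \(O(\log n)\). The adhesions grow by at most \(s\) per level, but I would reset them to \(O(q)\) by re-separating the terminal set whenever it doubles. That reset is the standard trick of alternating between balancing the vertices and balancing the terminals.

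Each level costs \(\widetilde O(m+sn)\): the flows have value at most \(s\), and augmenting-path or local flow costs \(O(s)\) per vertex. With \(O(\log n)\) levels this gives \(\widetilde O(m+sn)\) overall. Randomness enters through the cut player's random projections, which hold w.h.p. by a union bound over the polynomially many calls. Compactness, meaning every child subtree is connected and adjacent to its adhesion, follows by splitting children into connected components.

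\textbf{Main obstacle.} The hardest step is guaranteeing balance in the sense of \emph{terminal-relative} unbreakability while keeping adhesion linear in \(s\). A naive split could be unbalanced with respect to the vertices yet balanced with respect to the terminals, or the reverse.

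I would first try running the cut-matching game on a weighted demand that gives both bag vertices and terminals weight. This makes a sparse cut balanced in at least one measure. Both measures then shrink geometrically along any root-to-leaf path, which keeps the depth at \(O(\log n)\) and the adhesion at \(O(q)\). Getting the polylog factors down to \(\log^2 n\log\log n\) rather than a larger power is where I expect the delicate accounting.
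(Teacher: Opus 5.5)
There is a genuine gap, and it is in your primitive. The cut-matching game tests \emph{expansion} of $X$, not $(q,s)$-unbreakability. When it fails, it returns a cut that is sparse relative to $|X|$, and such a cut can have far more than $s$ edges, so it neither witnesses breakability nor can be used as a split. Concretely, join two constant-degree expanders on $n/2$ vertices by $\sqrt n$ matching edges, and take $s\le c\sqrt n$ for a small constant $c$. Every $s$-cut leaves only $O(s)$ vertices on one side, so $X=V$ is $(O(s),s)$-unbreakable. But a breaking cut has sparsity below $s/q=1/\mathrm{polylog}(n)$, so that is the threshold you must test, and the halving cut has sparsity $2/\sqrt n$, far below it. Hence no embedded-expander certificate exists, and the game can legitimately return the $\sqrt n$-edge cut between the halves. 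Splitting there puts $\Omega(\sqrt n)$ vertices into an adhesion; refusing to split leaves the bag uncertified. Your terminal re-separation step needs the same missing primitive. Even granting the primitive, two further claims do not follow. First, the matching player routes $\Theta(|X|)$ (or $\phi|X|$) units per round, not flows of value at most $s$, so the $O(s)$-per-vertex accounting is unsupported. Second, a split leaving only $q/\Theta(\log n)$ vertices on the smaller side is not a constant-fraction balance, so it gives no $O(\log n)$ depth bound. Finally, the factor $\log^2 n\log\log n$ is asserted rather than derived.

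What is missing is a multi-scale structure: a single bag must be allowed to absorb pieces glued together by cuts much heavier than $\Lambda$, and a single-threshold expansion test cannot express this. The paper gets it by using, as a black box, the hierarchical tree cut-sparsifier $\mathcal H$ of \cite{ADK26}. It has quality $\alpha=O(\log^2 n\log\log n)$ and depth $O(\log n)$, is computed in $\widetilde O(m+n)$ time, and its quality is the source of the polylog factors. The paper deletes the tree edges of weight at most $\Lambda=\lceil\alpha s\rceil$ and contracts the remaining components. Each bag consists of the component's own vertices plus the middle sets $\mu(e)=\End_G(\delta_G(S_e))$ of the incident low edges. Adhesions are therefore at most $2\Lambda$, since $|\mu(e)|\le 2w(e)$. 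For unbreakability, every $s$-cut extends to a tree cut of weight at most $\alpha s\le\Lambda$, which cannot cross a heavy edge. So each contracted component lies on one side, and a bag's minority vertices lie in middle sets of the low edges in that tree cut, at most $2\Lambda$ of them. Depth is inherited from the hierarchy, and compactness comes from a separate union-find cleanup.
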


\begin{table}[htpb]
\centering
\begin{tabular}{lcccccc}
\toprule
\textbf{Reference} & \textbf{Time} & \textbf{Unbreakability} & \textbf{Adhesion} & \textbf{Depth} & \textbf{Subtree?} & \textbf{Vertex or edge} \\
\midrule
\cite{CLP+19} & $2^{O(s^2)}n^2m$ & $(2^{O(s)}, s)$ & $2^{O(s)}$ & $n$ & Yes & Vertex \\
\cite{CKL+21} & $2^{O(s \log s)}n^{O(1)}$ & $(i, i) \; \forall i \le s$ & $s$ & $n$ & No & Vertex \\
\cite{LSS22} & $n^{O(1)}$ & $((s+1)^5, s)$ & $s$ & $n$ & No & Edge \\
\cite{ILSS23} & $2^{O(s \log s)}n^{O(1)}$ & $(9s, s)$ & $8s$ & $O(\log n)$ & No & Vertex \\
\cite{ALL+25} & $2^{O(\frac{s}{\epsilon} \log \frac{s}{\epsilon})}m^{1+\epsilon}$ & $\left(\left(2\lceil \frac{1}{\epsilon} \rceil + 3\right)s, s\right)$ & $\left(2\lceil \frac{1}{\epsilon} \rceil + 2\right)s$ & $O(\frac{s}{\epsilon} \log n)$ & Yes & Vertex \\
\cite{ALL+25} & $2^{O(\frac{s}{\epsilon} \log \frac{s}{\epsilon})}m^{1+\epsilon}$ & $(O(s/\epsilon), s)$ & $O(s/\epsilon)$ & $O(\log n)$ & Yes & Vertex \\
\cite{Kor25} & $s^{O(s^2)}n + O(m)$ & $(i, i) \; \forall i \le s$ & $s$ & $n$ & No & Vertex \\
\cref{thm:unbreakable_decomp} & $\widetilde O(m + s n)$ & $(s \log^{O(1)} n, s)$ & $s \log^{O(1)} n$ & $O(\log n)$ & No & Edge \\
\bottomrule
\end{tabular}
\caption{Comparison of unbreakable tree decompositions. ``Subtree?'' refers to \cref{def:subtree_unbreakability}. The \(s\log^{O(1)}n\) terms in our row are \(O(s\log^2 n\log\log n)\) specifically.}
\label{tab:unbreakable_decomp}
\end{table}

Using this decomposition, we improve the FPT algorithm of~\cite{LSS22} for Minimum \(k\)-Cut parameterized by the cut value. We first sample a spanning tree that \(2k-2\)-respects an optimum \(k\)-cut, using approximate dual tree packings. For a fixed respecting tree, the dynamic program only needs to consider partitions that cut at most \(2k-2\) tree edges. Large bags are handled by a two-level argument: first we guess the projected tree edges cut by the optimum transition, and then we guess the small components relevant to the non-giant sides of the cut. The resulting structure lets each transition be evaluated by a collection of small dynamic programs. Randomizing these guessing steps, together with the improved unbreakability parameter, gives the following bound. We remark that the exponent \(6k-6\) is important and dictates the final running time of our Minimum \(k\)-Cut algorithm.

\begin{theorem}[Informal]
Given a graph \(G\) with \(m\) edges, an integer \(s\ge \lambda_k(G)\), and a compact tree decomposition whose adhesions have size \(O(\Lambda)\) and whose bags are \((O(\Lambda),s)\)-edge-unbreakable, with no redundant bags, Minimum \(k\)-Cut can be solved with high probability in time
\[
    \widetilde O(m+k^{O(k)}\Lambda^{6k-6}n).
\]
In particular, for the decomposition above, where \(\Lambda=O(s\log^2 n\log\log n)\), this is
\[
    O(m)+ k^{O(k)}s^{6k-6}n \log^{12k+O(1)}n(\log\log n)^{6k-6}.
\]
\end{theorem}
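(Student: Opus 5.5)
The plan is a dynamic program over the compact tree decomposition, driven by a spanning tree \(T\) that \((2k-2)\)-respects some optimum \(k\)-cut.

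\textbf{Step 1: the respecting tree.} Compute an approximate dual tree packing and sample \(O(k\log n)\) trees from it with probability proportional to weight. By the standard averaging argument, each sample \((2k-2)\)-respects a fixed optimum \(k\)-cut with probability \(\Omega(1/k)\), so some sample works with high probability. Fix such a \(T\). The target is then a partition \(\mathcal P\) into \(k\) parts such that \(T\) crosses \(\mathcal P\) in at most \(2k-2\) edges and \(\partial\mathcal P\) is minimized. The total boundary is \(\lambda_k\le s\).

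\textbf{Step 2: the dynamic program.} Root the decomposition and process it bottom-up. For a node \(t\) with adhesion \(\sigma_t\), where \(|\sigma_t|=O(\Lambda)\), the DP state records three things:
\begin{itemize}
\item an assignment of the vertices of \(\sigma_t\) to labels in \([k]\);
\item the number of \(T\)-edges cut so far, at most \(2k-2\);
\item which labels are already nonempty.
\end{itemize}
The value of a state is the minimum weight of edges cut inside the subtree of \(t\). Naively there are \(k^{O(\Lambda)}\) assignments, which is far too many. Unbreakability prunes this. Every part except one has boundary at most \(s\), so in any bag all labels but one (the giant label) cover at most \(O(\Lambda)\) bag vertices. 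Moreover, because \(T\) restricted to the subtree is cut in at most \(2k-2\) edges, the non-giant vertices of the bag form a union of at most \(2k-2\) pieces of the projected forest of \(T\) on the bag. So a state is determined by three choices:
\begin{itemize}
\item the giant label;
\item at most \(2k-2\) projected tree edges, at most \(\Lambda^{2k-2}\) choices;
\item labels for the resulting \(O(k)\) small components.
\end{itemize}
This gives \(k^{O(k)}\Lambda^{2k-2}\) relevant states per adhesion.

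\textbf{Step 3: evaluating transitions.} At a bag \(B\) with children \(c_1,\dots,c_r\), I first guess the projected tree edges cut in \(B\), at most \(\Lambda^{2k-2}\) options, and then the small components serving the non-giant sides. I expect this to cost another \(\Lambda^{2k-2}\) factor. The children then decouple: given the labelling on \(B\), each child contributes the minimum of its table over states consistent with its adhesion. The \(2k-2\) budget of \(T\)-edges is split among children by a knapsack-style combination in \(k^{O(k)}\) time per child.

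To avoid paying for all guesses deterministically, I randomize. Each child independently guesses whether it is touched by a non-giant part, and the guesses are repeated \(k^{O(k)}\log n\) times. Only \(O(k)\) children are nontrivially involved, so correct guesses occur with probability \(k^{-O(k)}\).

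The third factor of \(\Lambda^{2k-2}\) comes from matching child states to parent assignments on adhesions of size \(O(\Lambda)\). That gives \(\Lambda^{6k-6}\) per bag-vertex unit of work. The absence of redundant bags keeps the total bag size at \(O(n)\). Summing over bags gives \(\widetilde O(m+k^{O(k)}\Lambda^{6k-6}n)\). Substituting \(\Lambda=O(s\log^2n\log\log n)\) yields the second form, with \(\log^{12k}\) arising as \((\log^2 n)^{6k}\).

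\textbf{Expected obstacle.} The hardest step is the transition accounting. I must show that at a large bag the number of optimum-relevant transitions is \(k^{O(k)}\Lambda^{6k-6}\) rather than something exponential in \(\Lambda\). This requires combining unbreakability, which makes the non-giant part small, with \(T\)-respect, which makes it a union of few tree pieces, so that the small parts are described by \(O(k)\) edge choices in a \(\Lambda\)-sized structure. Getting exactly three such guessing layers, each costing \(\Lambda^{2k-2}\), is where I anticipate most of the care is needed.
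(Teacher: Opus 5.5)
Your Steps 1 and 2 are fine and match the paper. Trees are sampled from the approximate dual packing, and adhesion states are $T$-feasible labelings read off the projected tree on the adhesion, giving $k^{O(k)}\Lambda^{2k-2}$ states.

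The gap is Step 3, which is where the whole bound has to be earned, and the mechanism you describe does not work.
\begin{itemize}
\item At a large bag, $|\beta(t)|$ is not bounded in terms of $\Lambda$; unbreakability only controls the non-giant side. So $T_{\beta(t)}$ has $\Theta(|\beta(t)|)$ edges, and guessing the at most $2k-2$ cut projected edges costs $|\beta(t)|^{2k-2}$, not $\Lambda^{2k-2}$.
\item If those edges were known exactly, identifying the small components would cost only $2^{O(k)}$. So your second $\Lambda^{2k-2}$ factor has no source.
\item The ``third factor'' from matching child states has no source either. A fixed bag labeling determines every child's adhesion state by restriction, so there is nothing to enumerate.
\item The per-child randomization fails. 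Independent touched/untouched guesses for all children succeed with probability exponentially small in $\deg^+_\tau(t)$ unless the sampling rate is tuned. Moreover, the number of children whose adhesions meet the non-giant side is not $O(k)$: many pendant children can hang off one small-side vertex, and even the children straddling both sides are only bounded by $s$.
\item Having no redundant bags gives $O(n)$ \emph{nodes}, but total bag volume is $O(\Lambda n)$. Charging $\Lambda^{6k-6}$ per bag vertex would therefore give $\Lambda^{6k-5}n$.
\end{itemize}

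The missing idea is random separation calibrated to the size of the small side rather than to the bag, done in two levels.

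\textbf{First level.} The paper samples subsets $\widehat C\subseteq E(T_{\beta(t)})$ with inclusion probability $\Theta(1/\Lambda)$. With probability at least $k^{-O(k)}\Lambda^{-(2k-2)}$, a sample contains the true cut projected edges and avoids the $O(k\Lambda)$ projected edges internal to the small side (\cref{lem:small_side_projected_size}). Extra deletions inside the giant side are harmless.

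\textbf{Second level.} Because $\widehat C$ may split the giant side into many components, a second random family over $\mathcal R(\widehat C)$, with inclusion probability $\Theta(1/\Lambda^2)$, is needed. It must select the at most $2k-2$ small-side components while avoiding their auxiliary neighborhood. That neighborhood has size $O(k\Lambda^2)$ by \cref{lem:aux_neighborhood_bound}, and this is where compactness is used. A child adhesion straddling both sides forces a cut edge incident with the connected set $\alpha(u)$. These sets are disjoint across children, so at most $s$ children are affected, each touching $O(\Lambda)$ components.

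\textbf{Accounting.} The two levels give $k^{O(k)}\Lambda^{(2k-2)+(4k-4)}L_\Lambda^2$ joint guesses per node. Each good pair yields a nice decomposition whose blocks have at most $2k$ atoms. These blocks are combined by a knapsack in $k^{O(k)}(1+\rho(D)+\deg^+_\tau(t))$ time, with the $\rho(D)$ and selection costs charged to the generation counter. Summing $1+\deg^+_\tau(t)$ over the $O(n)$ nodes gives the factor $\Lambda^{6k-6}n$.
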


This FPT algorithm also gives a near-linear approximation algorithm for \(\lambda_k\). Following the sampling framework of~\cite{LSS22}, we first remove very small \(2\)-cuts and then sample edges independently. With high probability the sampled graph has
\[
    \lambda_k=O(k\log n/\epsilon^3),
\]
and solving it exactly gives a \((1+\epsilon)\)-approximation in the original graph.

\begin{theorem}[Informal]
For every \(0<\epsilon\le1\), there is a randomized algorithm which, given an unweighted multigraph \(G=(V,E)\), computes a \((1+\epsilon)\)-approximation to \(\lambda_k(G)\) with high probability in time
\[
    \widetilde O(k^{O(1)}m) + (k/\epsilon)^{O(k)} n\log^{18k+O(1)}n(\log\log n)^{6k-6},
\]
provided \(\epsilon^{-1}\le n^{O(1)}\). More generally, the running time is
\[
    \widetilde O(k^{O(1)}m) + k^{O(k)} \left(\frac{k}{\epsilon^3}\right)^{6k-6} n\, \log^{18k-18} n\, (\log\log n)^{6k-6} \log^{O(1)}\!\bigl(k(n+\epsilon^{-1})\bigr).
\]
\end{theorem}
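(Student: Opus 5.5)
The approach follows the sampling framework of \cite{LSS22}: reduce to a graph whose $k$-cut value is $O(k\log n/\epsilon^3)$, then solve that graph exactly with the FPT algorithm of the previous theorem, built on the decomposition of the unbreakable-decomposition theorem.

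\textbf{Step 1: remove small $2$-cuts.} A Karger-style sampling argument needs $\lambda(G)$ to be at least a constant fraction of $\lambda_k/k$. I would first compute a near-linear-time $(1+\epsilon)$-approximation $\tilde\lambda_k$, or simply try all powers of $(1+\epsilon)$ as guesses, which costs only a logarithmic factor. I would then repeatedly split off $2$-cuts of value below $\epsilon\tilde\lambda_k/k$. Each such cut has a side that is a component, so I add it to the solution and recurse. Since at most $k-1$ such cuts are ever taken, they contribute at most $\epsilon\lambda_k$ total. Finding these cuts takes $\widetilde O(k^{O(1)} m)$ time using near-linear min-cut algorithms \cite{Kar00}. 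I would account for them by arguing that an optimal $k$-cut, restricted to the remaining pieces, loses at most an additive $\epsilon\lambda_k$. After this step every component has min-cut at least $\epsilon\lambda_k/k$.

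\textbf{Step 2: sample.} Keep each edge independently with probability $p=\Theta(k\log n/(\epsilon^3\lambda_k))$ (capped at $1$; if $p=1$, go directly to the exact algorithm). The main obstacle is proving uniform concentration over all $k$-cuts. I would charge each $k$-cut to at most $k$ of its $2$-cut sides, each of value at most $\alpha\lambda(G)$. Karger's cut-counting bound gives $n^{O(\alpha)}$ such $2$-cuts. The min-cut lower bound $\lambda\ge\epsilon\lambda_k/k$ makes $p\lambda=\Omega(\log n/\epsilon^2)$, which is enough for Chernoff plus a union bound over $n^{O(\alpha)}$ cuts at scale $\alpha$. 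The consequence is that every $k$-cut's sampled value is within a $(1\pm\epsilon)$ factor of $p$ times its true value. In particular, the sampled $\lambda_k$ is $O(k\log n/\epsilon^3)$, and the optimum of the sampled graph maps back to a $(1+O(\epsilon))$-approximation. One $\epsilon$ factor of slack comes from the cap $\lambda\ge\epsilon\lambda_k/k$; this is where $\epsilon^{-3}$ rather than $\epsilon^{-2}$ arises. Rescaling $\epsilon$ by a constant finishes this step.

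\textbf{Step 3: solve exactly.} Set $s=O(k\log n/\epsilon^3)$ and build the decomposition of the unbreakable-decomposition theorem in $\widetilde O(m+sn)$ time. Then run the FPT theorem, which costs
\[
k^{O(k)}s^{6k-6}n\log^{12k+O(1)}n(\log\log n)^{6k-6}.
\]
Substituting $s^{6k-6}=(k/\epsilon^3)^{6k-6}\log^{6k-6}n$ gives the stated general bound, with the $\log^{O(1)}(k(n+\epsilon^{-1}))$ factor absorbing the guessing over scales and the high-probability amplification.

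When $\epsilon^{-1}\le n^{O(1)}$, the factor $(k/\epsilon^3)^{6k-6}$ is $(k/\epsilon)^{O(k)}$ and $\log(\epsilon^{-1})=O(\log n)$, giving the first form $\log^{18k+O(1)}n$. Its exponent is $12k+6k$, together with the other $O(1)$ terms.
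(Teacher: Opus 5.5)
Your proposal follows the paper's proof essentially step for step: delete at most $k-1$ minimum $2$-cuts below a threshold $\Theta(\epsilon\widetilde\lambda_k/k)$, sample so that $p\,\lambda_2^+=\Theta(\log n/\epsilon^2)$ with cut sampling applied inside each component, and solve the sampled graph exactly by the FPT algorithm with $s_0=O(k\log n/\epsilon^3)$. The paper cites \cite{LSS22} and sets $p$ from $\lambda_2^+(G_1)$; this is equivalent to your $\lambda_k$-based choice and to your Karger-style rederivation.

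The one thing to fix is the initial estimate. A $(1+\epsilon)$-approximation is circular here. Guessing powers of $1+\epsilon$ needs $O(\epsilon^{-1}\log m)$ guesses, which puts an $\epsilon^{-1}$ factor on the $\widetilde O(k^{O(1)}m)$ term. A constant-factor estimate suffices wherever you use $\widetilde\lambda_k$, so guessing powers of $2$ would also do. The paper obtains a $2$-approximation in $\widetilde O(k^{O(1)}m)$ time by greedy splitting with near-linear minimum cut \cite{SV95,Kar00}.
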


Finally, we combine the weighted sparse-graph algorithm, the approximation and FPT algorithm, and a refined version of the border/island framework of \cite{HL22}. The algorithm first computes a constant-factor estimate \(s=\Theta(\lambda_k)\). If \(s\) is small, we run the FPT algorithm. Otherwise, we apply border-preserving sparsification, obtaining an NI certificate \(H\), a partition \(\mathcal P\), and a contracted weighted multigraph \(G'\) with only \(\widetilde O(n/s)\) vertices. Every optimum cut has a low-weight border respected by \(\mathcal P\). This means that every optimal cut can combine some of its singletons with its other sides and then respect $\mathcal P$, thereby surviving in $G'$.

The algorithm then guesses the number \(i\) of singleton islands. If \(i=0\), the entire optimum cut is represented inside \(G'\), and we solve the corresponding weighted Min \(k\)-Cut instance on \(G'\). If \(i>0\), we enumerate the \((k-i)\)-cut border in \(G'\), lift it, and complete it by extracting the best \(i\) singleton islands. The cases \(i=1,2\) are handled by specialized routines; the cases \(i\ge3\) are handled using minimum vertex-weighted triangle routines and rectangular matrix multiplication.

\begin{theorem}[Informal]
Minimum \(3\)-Cut on simple unweighted \(n\)-vertex graphs can be solved with high probability in time
\[
    \widetilde O(n^2).
\]
Minimum \(4\)-Cut on simple unweighted \(n\)-vertex graphs can be solved with high probability in time
\[
    \widetilde O(n^{55/19}).
\]
Minimum \(5\)-Cut on simple unweighted \(n\)-vertex graphs can be solved with high probability in time
\[
    \widetilde O(n^{4.112007}).
\]
For every \(k\ge6\), Minimum \(k\)-Cut on simple unweighted \(n\)-vertex graphs can be solved with high probability in time
\[
    k^{O(k^2)} n^{ 1+(6k-6)\frac{k-1.749614}{7k-10} } (\log n)^{O(k^2)}.
\]
\end{theorem}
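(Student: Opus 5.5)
The plan is to prove the main simple-graph theorem by balancing two regimes, split by a constant-factor estimate $s=\Theta(\lambda_k)$ obtained from the near-linear approximation algorithm. First I would run that approximation with $\epsilon=1/2$ to get $s$ in $\widetilde O(k^{O(1)}m)+k^{O(k)}n\log^{O(k)}n$ time. I then fix a threshold $s^\ast=n^{\alpha}$ with $\alpha$ to be optimized.

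\emph{Small regime, $s\le s^\ast$.} I build the edge-unbreakable decomposition with $\Lambda=O(s\log^2n\log\log n)$ and run the FPT algorithm. This costs $O(m)+k^{O(k)}s^{6k-6}n\,(\log n)^{O(k)}$. Since the graph is simple, $m\le n^2$, and this term dominates only for small $k$. So this regime costs roughly $n^{1+(6k-6)\alpha}$.

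\emph{Large regime, $s>s^\ast$.} I apply border-preserving sparsification to obtain the NI certificate $H$, the partition $\mathcal P$, and the contracted weighted multigraph $G'$ with $n'=\widetilde O(n/s)$ vertices. Simplicity gives $\delta\ge\lambda_k\cdot\Omega(1/k)$ on non-singleton sides, which is what makes the sparsification lemmas of \cite{HL22} apply. I then guess the number $i$ of singleton islands.
\begin{enumerate}[(i)]
\item For $i=0$, I invoke the weighted algorithm on $G'$. This takes $k^{O(k^2)}n'^{k-2}(m'+n')$ time, with $m'\le \widetilde O(n'\cdot s\cdot\mathrm{poly}(k))$ or at most $n'^2$ after merging parallel edges. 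That is about $(n/s)^{k}$ up to polylogs.
\item For $i\ge1$, I enumerate all $(k-i)$-cut borders in $G'$ of weight at most $(1+\epsilon)\lambda_k$. By \cite{GHLL21}, there are about $(n/s)^{k-i}$ of them. For each, I lift it to $G$ and extract the best $i$ singleton islands.
\item The cases $i=1,2$ use the specialized routines, costing roughly $n$ or $n^{2}$ per border, amortized.
\item For $i\ge3$, I split the $i$ singletons into three groups. I reduce to minimum vertex-weighted triangle detection on a graph whose parts are $(i/3)$-tuples of vertices, solved with rectangular matrix multiplication $n^{\omega(a,b,c)}$.
\end{enumerate}
The total is then a sum over $i$ of $(n/s)^{k-i}\cdot T_i(n)$, where $T_i(n)\approx n^{\omega\text{-type exponent}\cdot i/3}$.

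\emph{Balancing.} Writing $s=n^{\alpha}$, the overall exponent is
\[
\max\Bigl\{1+(6k-6)\alpha,\ \max_i\bigl[(1-\alpha)(k-i)+e_i\bigr]\Bigr\}.
\]
I would set $\alpha$ to equalize the FPT term with the worst island term. For large $k$, the worst $i$ is governed by the best achievable per-singleton exponent $e_i/i\approx c$, with $c\approx 0.749\ldots$ coming from current rectangular matrix multiplication bounds. Then $(1-\alpha)k-ci+\ldots$, and optimizing over $i$ gives
\[
\alpha=\frac{k-1.749614}{7k-10},
\]
matching the claimed exponent $1+(6k-6)\alpha$. For $k=3,4,5$, I would do the case analysis explicitly with the concrete $\omega$ values to get $2$, $55/19$, and $4.112007$. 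Correctness holds with high probability by a union bound over the polynomially many randomized subroutines. The $k^{O(k^2)}(\log n)^{O(k^2)}$ factors come from the weighted algorithm and the enumeration bounds.

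The main obstacle is the $i\ge3$ island extraction. I must show that, given a lifted border, the cost of choosing $i$ singletons decomposes additively enough (the edges among the singletons plus the edges to the border sides) to become a triangle-type problem. That problem must be solvable in the stated rectangular matrix multiplication time, amortized across all borders rather than paid per border. I would first try batching all borders with the same partial structure into one matrix product, as in \cite{HL22}. I expect the precise constant $1.749614$ to fall out of that amortized bound.
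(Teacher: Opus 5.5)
Your architecture is the paper's: FPT when $s\le n^{\theta_k}$, He--Li border-preserving sparsification otherwise, a split on the number $i$ of islands, the sparse weighted algorithm for $i=0$, and a triangle reduction with rectangular matrix multiplication for $i\ge3$. The gap is that the per-branch costs you feed into the balance are too weak. With them, the maximum you write down exceeds the claimed exponent for every $k$.

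(a) You summarize the no-island branch as $(n/s)^k$. At $s=n^{\theta}$ this has exponent $k(1-\theta)$, and $k(1-\theta_k)>1+(6k-6)\theta_k$ for all $k$. For example, $2.75>2$ for $k=3$, and about $5.20>4.985$ for $k=6$. Avoiding $N^k$ is exactly why the paper needs \cref{thm:weighted_k_cut_sparse}. Its cost is $N^{k-2}(|E(G')|+N)$, and with the $|E(G')|=O(sn)$ guaranteed by \cref{lem:border_preserving_sparsification} this is $\widetilde O(n^{k-1}/s^{k-3})$, i.e.\ exponent $k-1-(k-3)\theta_k$. For $k=4$ this branch is the binding one and produces $55/19$.

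(b) The count $(n/s)^{k-i}$ requires enumerating $(k-i)$-cuts of weight at most $\beta_i\lambda_k$, with $\beta_i=1-(1-2/\log n)\,i/k$; this is property~6 of the sparsification lemma. Only then does \cref{lem:border_enumeration} give $N^{\beta_i k}=N^{k-i+o(1)}$ candidates. This also uses $\lambda_k(G')\ge\lambda_k(G)$ via the NI certificate, and absorbs the $\exp\bigl(O(k\log(k+2)\sqrt{\log(N+2)})\bigr)$ overhead of the $2\le p<3$ regime into slack. At threshold $(1+\epsilon)\lambda_k$ the bound is $N^{(1+\epsilon)k}$, and this is not an artifact: in a cycle, the $2$-cuts of weight at most $\lambda_k=k$ number $N^{\Theta(k)}$.

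(c) Per-border costs of $n$ and $n^2$ for $i=1,2$ give exponents $k-(k-1)\theta_k$ and (for $k\ge4$) $k-(k-2)\theta_k$. Both exceed $E_k$ because $\theta_k<1/7$; for instance $2.83>2$ for $k=3$, $i=1$. The paper never materializes these lifts. It maintains minimum-degree representatives and optimal pairs through the contractions (\cref{lem:enhanced_border_enumeration_representatives}) and touches only the at most $s$ crossing edges of $H$. This gives $O(s)$ and $O(s^2)$ per border (\cref{lem:few_islands}), i.e.\ exponents $k-1-(k-2)\theta_k$ and $k-2-(k-4)\theta_k$.

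The obstacle you single out is also misdiagnosed: no batching across borders is needed for $i\ge3$. \cref{lem:border_coupled_islands} pays $n^{\widehat\Phi(i)}$ per border. It guesses how many islands come from each border component and splits each component's islands into groups of sizes $\lfloor r/3\rfloor,\lfloor(r+1)/3\rfloor,\lceil r/3\rceil$. It then guesses the $O(k^2)$ cross-edge counts and solves a rectangular minimum vertex-weighted triangle instance. The total $(n/s)^{k-i}n^{\widehat\Phi(i)}$ is exactly the per-border form already in your balancing formula.

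The constant $1.749614$ is not a per-singleton rate. It equals $5-3.250386$, where $3.250386$ is the bound on $\widehat\Phi(4)>\omega(1,1,2)$. The denominator $7k-10=(6k-6)+(k-4)$ shows that $\theta_k$ comes from equating the FPT exponent $1+(6k-6)\theta$ with the four-island exponent $k-4+\widehat\Phi(4)-(k-4)\theta$.

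The proof must then verify that every other branch stays below this value:
\begin{itemize}
\item the no-island branch;
\item $i=1,2,3$;
\item $5\le i\le k-2$, using $\widehat\Phi(r)\le\widehat\Phi(4)+6(r-4)/7$;
\item $i=k-1$, which has a single trivial border and costs $n^{\widehat\Phi(k-1)}$.
\end{itemize}
For small $k$ the binding branch changes: it is the no-island branch for $k=3,4$ and the three-island branch for $k=5$. This is why those three exponents are computed separately rather than from the $k\ge6$ formula.
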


The exponent for \(k\ge6\) comes from balancing the FPT branch
\[
    ns^{6k-6}
\]
against the four-island branch after sparsification. The balance gives
\[
    E_k = 1+(6k-6)\frac{k-1.749614}{7k-10} = \frac67k-0.132\ldots+O(1/k).
\]

The following table lists representative exponents proved here.
\[
\begin{array}{c|c}
k & \text{Exponent proved here} \\
\hline
3  & 2 \\
4  & 55/19 \approx 2.89474 \\
5  & <4.112007 \\
6  & <4.984737 \\
7  & <5.846511 \\
8  & <6.706875 \\
10 & <8.425348 \\
20 & <17.004185
\end{array}
\]

\section{Preliminaries}

We use \(\widetilde O(\cdot)\) to suppress factors polylogarithmic in \(n\) and \(s\). All graphs are finite undirected multigraphs unless explicitly stated otherwise. Self-loops are ignored, since they do not cross any cut. A graph is simple if it is unweighted and has at most one edge between any pair of vertices.

\paragraph{Basic graph notation.}
For a weighted graph \(Q=(V(Q),E(Q),c)\) and
\(X\subseteq V(Q)\), let
\[
    \delta_Q(X):=
    \{uv\in E(Q):|\{u,v\}\cap X|=1\},
    \qquad
    c_Q(X):=
    \sum_{e\in\delta_Q(X)}c(e).
\]
All weighted graphs in the algorithms have nonnegative polynomial-bit integer capacities. For disjoint vertex sets \(A,B\subseteq V\), let
\[
    E_G(A,B):=\{uv\in E: u\in A,\ v\in B\}.
\]
For \(X\subseteq V\), write
\[
    E_G(X):=\{uv\in E: u,v\in X\}.
\]
For a vertex \(v\), we abbreviate \(E_G(v,X):=E_G(\{v\},X)\). For an edge set \(F\subseteq E(G)\), let $\End_G(F)$ be the set of endpoints of the edges in \(F\), and let \(V(F)\) be the set of vertices incident with edges of \(F\).

For \(X\subseteq V(G)\), let
\[
    N_G(X):=\{v\in V(G)\setminus X: v \text{ has a neighbor in } X\}.
\]

\paragraph{Cuts and partitions.}

For a partition \(\Pi\) of a vertex set \(X\), define its crossing edge set by
\[
    \delta_G(\Pi) := \{uv\in E(G): u \text{ and } v \text{ lie in different parts of } \Pi\}.
\]
The value of \(\Pi\) in an unweighted graph is \(|\delta_G(\Pi)|\); in a weighted graph it is the total weight of \(\delta_G(\Pi)\).

For a partition \(\Pi\) of a set \(X\) and a subset \(Y\subseteq X\), let
\[
    \Pi|_Y:=\{P\cap Y:P\in\Pi,\ P\cap Y\ne\emptyset\}
\]
be the partition of \(Y\) induced by \(\Pi\). A partition \(\Pi\) refines a partition \(\Pi'\) if every part of \(\Pi\) is contained in a part of \(\Pi'\); equivalently, \(\Pi'\) is a coarsening of \(\Pi\).

A family \(\mathcal L\) of vertex subsets is laminar if for every \(A,B\in\mathcal L\), either
\[
    A\cap B=\emptyset,\qquad A\subseteq B,\qquad \text{or}\qquad B\subseteq A.
\]

\begin{definition}[\(s\)-cut]
\label{def:s_cut}
An \(s\)-cut in a graph \(G\) is a bipartition \((A,V(G)\setminus A)\) whose crossing edge set has size at most \(s\), i.e.
\[
    |\delta_G(A)|\le s.
\]
\end{definition}

\section{Weighted Minimum \(k\)-Cut for Sparse Graphs}
\label{sec:weighted_sparse_k_cut}

We give randomized and deterministic versions of the same recursive
algorithm.  Both compute an approximate solution to the dual \(k\)-cut LP,
viewed as a weighted packing of spanning trees.  The randomized algorithm
samples a logarithmic number of trees from the packing with probability proportional to weight, whereas
the deterministic algorithm processes every tree in its support.  Once a
suitable tree is found, both algorithms enumerate its respecting cuts by
choosing their crossing tree edges, evaluate them using orthogonal range
sums, and recursively separate the remaining \(k-1\) parts.

The other difference is the perturbation.  The randomized algorithm uses
small independent perturbations to ensure with high probability that an
optimum \(k\)-cut has a strict-light side.  The deterministic algorithm
uses a lexicographic perturbation based on the cuts themselves with the same property.  Thus all
of the structural and recursive arguments are shared by the two
algorithms.

Throughout, capacities are nonnegative integers of polynomial bit length.
For a top-level randomized call, let \(N\) and \(K\) denote its number of
vertices and its parameter.  Every recursive call retains \(N,K\) solely
as amplification parameters; its local order and parameter are denoted by
\(n,k\).  All randomized primitives are amplified to failure probability
at most \(N^{-\Omega(K^2)}\).  At the top level, of course, \(N=n\) and
\(K=k\).

\begin{definition}[Minimum \(k\)-Cut] \label{def:min_k_cut}
A \(k\)-cut of a graph \(G=(V,E)\) is a partition of \(V\) into exactly
\(k\) nonempty parts.  Its capacity is the total capacity of the edges
whose endpoints lie in different parts.  The Minimum \(k\)-Cut problem
asks for a \(k\)-cut of minimum capacity; its value is denoted by
\(\lambda_k(G)\).  We use the convention
\[
    \lambda_j(F)=+\infty
    \qquad\text{when }|V(F)|<j.
\]
Allowing at least \(k\) parts gives the same optimum for nonnegative
capacities, since parts may be merged without increasing the capacity.
\end{definition}

\begin{theorem}[Weighted global minimum cut {\cite{Kar00}}]
\label{thm:weighted_global_mincut}
Let \(Q\) be an \(n\)-vertex, \(m\)-edge weighted graph.  A weighted global minimum
cut can be found with high probability in
\[
    O(m\log^3(n))
\]
time.
\end{theorem}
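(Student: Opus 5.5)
This statement is Karger's near-linear-time minimum cut algorithm~\cite{Kar00}. It is a cited result, so the paper uses it as a black box. If I had to reconstruct its proof, I would follow Karger's two-phase tree-packing approach: find a small family of spanning trees such that, with high probability, some tree in the family crosses the minimum cut in at most two edges; then, for each tree, compute the minimum cut that 2-respects it.

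\emph{Phase one (packing).}
\begin{enumerate}
\item Sample edges with probability proportional to weight, scaled so that the sampled graph has minimum cut $\Theta(\log n)$. By Karger's cut-sampling theorem, every cut value concentrates around its expectation, and the sample has $O(n\log n)$ edges.
\item In the sampled graph, compute an approximately maximum tree packing using Gabow's algorithm or a multiplicative-weights greedy packing. The packing has value at least roughly $\lambda/2$ in sampled units and uses $O(\log n)$ trees.
\item Apply Nash-Williams' bound. Because the minimum cut crosses every tree at least once and has total crossing at most $\lambda$ (in sampled units), averaging over the packing shows that a constant fraction of the trees cross it at most twice.
\item Sample $O(\log n)$ trees from the packing. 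With high probability, at least one of them 2-respects the minimum cut.
\end{enumerate}

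\emph{Phase two (2-respecting cuts).} Fix a tree $T$.
\begin{enumerate}
\item The cuts that 1-respect $T$ are handled in $O(m)$ time by subtree-sum computations, using LCA queries to charge each edge to its endpoints.
\item For cuts that 2-respect $T$, I would take a heavy-path decomposition of $T$. A pair of cut tree edges then lies on at most $O(\log n)$ combinations of heavy paths. For a pair of paths, the cut value as a function of the two chosen edges can be maintained with a dynamic-tree or range-add/range-min structure: walk one edge along its path and update the contributions of incident non-tree edges.
\item Each non-tree edge is touched $O(\log n)$ times, and each touch costs $O(\log n)$ for the data structure. This gives $O(m\log^2 n)$ per tree.
\end{enumerate}

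\emph{Total time.} Over the $O(\log n)$ trees, phase two costs $O(m\log^3 n)$. Phase one is dominated by this.

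\emph{Main obstacle.} I expect the 2-respecting phase to be the bottleneck: keeping each tree at $O(m\log^2 n)$ requires the careful bough/heavy-path accounting. The packing argument, by contrast, is a direct averaging calculation.
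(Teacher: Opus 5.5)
The paper gives no proof of this statement. It is imported from Karger~\cite{Kar00} and used only as a black box, for instance as the $k=2$ base case of \cref{alg:weighted_k_cut}. So your opening identification is exactly what the paper does, and nothing more is required.

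Your optional reconstruction, however, does not establish the stated $O(m\log^3 n)$ bound as written. The problem is the accounting in phase two.

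\emph{The heavy-path count is the wrong one.} A pair of tree edges lies on exactly one pair of heavy paths, so ``$O(\log n)$ combinations'' does not measure anything useful. The relevant count is the following. A non-tree edge $xy$ contributes to the value $C(e,f)$ of the cut crossing $T$ exactly at $e,f$ whenever, in the incomparable case, $e$ is on the root-to-$x$ path and $f$ is on the root-to-$y$ path. That is up to $\Theta(\log^2 n)$ pairs of heavy paths.

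\emph{Neither version of your scheme gives $O(\log n)$ per touch.}
\begin{itemize}
\item If you process each pair of heavy paths with one range update per relevant edge, that costs $O(\log^3 n)$ per non-tree edge.
\item If you instead sweep $e$ up one heavy path at a time against a global structure, each non-tree edge is indeed touched only $O(\log n)$ times. But each touch updates the whole root-to-$y$ path. In a range-add/range-min segment tree over heavy-path order, that path is $O(\log n)$ intervals, so one touch costs $O(\log^2 n)$, not $O(\log n)$.
\end{itemize}
Either way, the straightforward accounting gives $O(m\log^3 n)$ per tree and $O(m\log^4 n)$ overall.

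\emph{What is actually needed.} To reach $O(\log n)$ per touch you need two things. First, a dynamic-tree structure that supports root-path additions in $O(\log n)$. Second, a way to take the minimum only over $f$ incomparable to $e$, excluding ancestors and the descendants already absorbed; the nested case also needs its own formula. A plain path-aggregate structure does not provide this restricted minimum. Providing it is the job of Karger's bough decomposition combined with Sleator--Tarjan dynamic trees, and it is exactly the step your sketch leaves unjustified.

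\emph{A smaller point in phase one.} The skeleton need not have $O(n\log n)$ edges. Sampling at rate $p=\Theta(\log n/\lambda)$ keeps about $p\,c(E)$ edges, which can be far more, for example when a dense region hangs off a sparse minimum cut. Karger first computes a constant-factor estimate of $\lambda$ to choose $p$. He then sparsifies the skeleton with a Nagamochi--Ibaraki certificate of connectivity $O(\log n)$ before packing trees; this preserves every cut that matters for the averaging argument.
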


\begin{theorem}[Deterministic weighted global minimum cut
{\cite{HLRW24}}]
\label{thm:deterministic_weighted_global_mincut}
A weighted global minimum cut can be found deterministically in
\[
    \widetilde O(m)
\]
time.
\end{theorem}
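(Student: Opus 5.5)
This statement is quoted from~\cite{HLRW24}, and within this paper we use it only as a black box. If we had to justify it ourselves, we would follow Karger's two-phase framework and remove the randomness from the one step that uses it. The first phase constructs a small family of spanning trees such that some tree in the family \(2\)-respects a weighted minimum cut, meaning the cut crosses at most two of that tree's edges. The second phase computes, for each tree, the minimum cut that \(2\)-respects it. The second phase is already deterministic: Karger's dynamic-tree method, or the simpler approach of Mukhopadhyay--Nanongkai, runs in \(\widetilde O(m)\) time per tree. Hence it suffices to produce deterministically \(\widetilde O(1)\) trees, one of which \(2\)-respects a minimum cut.

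In Karger's argument, randomness enters only through skeleton sampling. The weighted graph is sparsified to a skeleton whose minimum cut is \(O(\log n)\) and whose cuts approximate the original cuts within a factor \(1\pm\epsilon\). A greedy tree packing on the skeleton then contains a tree that \(2\)-respects the minimum cut. We would replace the sampled skeleton by a deterministic sparsifier that only needs to preserve the minimum cut and all near-minimum cuts up to a small constant factor. Our plan is as follows.

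First, we compute a \((1+\epsilon)\)-approximation \(\tilde\lambda\) of the minimum cut deterministically. Standard tools suffice here, for example a deterministic \(\widetilde O(m)\) approximate minimum cut algorithm based on expander decomposition.

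Second, we compute a deterministic expander decomposition with conductance \(\phi=1/\mathrm{polylog}(n)\). Any cut of value \(O(\tilde\lambda)\) can cut only a small volume from each expander cluster. We then trim and shave each cluster so that every near-minimum cut either leaves it intact or is trivial on it.

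Third, we contract each cluster and recurse on the contracted graph. Alternatively, we construct the skeleton explicitly: inside each cluster we replace the edges by a deterministic low-degree expander-based sparsifier with rescaled weights, and we keep the \(\widetilde O(n)\) inter-cluster edges. This gives a graph with \(\widetilde O(n)\) edges in which every near-minimum cut is preserved up to a factor \(1\pm\epsilon\).

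Fourth, we run Karger's greedy tree packing on this skeleton for \(\mathrm{polylog}(n)\) rounds. Every step of the packing is deterministic. The standard packing argument then guarantees that some packed tree \(2\)-respects the original minimum cut.

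The main obstacle is the sparsification step. We must certify deterministically that every near-minimum cut, and not only the minimum cut itself, is approximated well by the skeleton, because the packing argument relies on the cut values of the skeleton. Expanders make this plausible, since a small cut cannot split a cluster in a balanced way. The delicate part is the accounting for the parts of clusters that a small cut removes. Here we would first try the boundary-trimming argument, which shows that a cut of value \(O(\lambda)\) removes from each trimmed cluster only vertices of degree \(O(\lambda)\) in total, and we would charge that error to the cut itself. All other components (approximate minimum cut, expander decomposition, tree packing, and \(2\)-respecting minimum cut) run in \(\widetilde O(m)\) time, which gives the claimed bound.
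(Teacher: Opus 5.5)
Your first sentence is exactly what the paper does. \cref{thm:deterministic_weighted_global_mincut} is imported from \cite{HLRW24} with no proof and serves only as the $k=2$ base case of \cref{alg:deterministic_weighted_k_cut}, so the citation is the whole justification. Your self-contained sketch, however, has a genuine gap at the step you yourself flag, and should not be offered as a proof.

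Trimming and shaving in the style of \cite{KT18} guarantee that nontrivial near-minimum cuts leave cores intact only by using simplicity, and the guarantee fails for weighted graphs. Take a weighted graph with minimum cut value $\lambda$, and let an expander cluster contain $u,v$ joined by an edge of weight $\lambda$. Give each of $u,v$ further edges of total weight $\lambda/2$ to other cluster vertices, and no other edges. Then $\{u,v\}$ has conductance $1/3$ in the cluster. Neither trimming nor shaving removes $u$ or $v$, because their entire degree lies inside the cluster. Yet $\{u,v\}$ is a nontrivial minimum cut that splits the core, so contracting cores loses it. Charging the loss ``to the cut itself'' does not rescue this. An error of order $\lambda$ is a constant-factor distortion, while the $2$-respecting argument needs the skeleton value of the minimum cut to be within a factor below $3/2$ of the skeleton's minimum cut (and within $1+\epsilon$ in Karger's analysis).

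The alternative skeleton route has the same hole. Karger's packing step needs a skeleton whose minimum cut is only polylogarithmic relative to its edge weights, so that greedy packing succeeds within polylogarithmically many rounds. The skeleton must also lower-bound \emph{every} cut by $(1-\epsilon)$ times its scaled value, not merely the near-minimum ones. A cut that is large in $G$ but small in the skeleton lowers the skeleton's minimum cut, and can push the average crossing number of the true minimum cut above $3$.

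Replacing each expander cluster by an expander-based graph with rescaled weights approximates its cuts only up to factors depending on the conductance. These factors are polylogarithmic for $\phi=1/\mathrm{polylog}(n)$, not $1\pm\epsilon$. Moreover, with arbitrary rescaled weights the number of greedy packing rounds is no longer polylogarithmically bounded. Constructing such a skeleton deterministically in near-linear time, or directly a polylogarithmic family of trees one of which $2$-respects a minimum cut, is precisely the nontrivial content of \cite{HLRW24}. Your plan identifies this obstacle but does not overcome it.
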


\begin{lemma}[Disconnected and zero-capacity cases]
\label{lem:zero_capacity_reduction_kcut}
Let \(k\ge3\).  Given algorithms for weighted Minimum \(j\)-Cut for all
\(2\le j<k\), one can either find a minimum \(k\)-cut or reduce to an
instance whose positive-capacity subgraph is connected.  The reduction
uses \(O(k^2)\) recursive calls, where for each fixed parameter
the instances are disjoint; summing over $O(k)$ parameters is absorbed by $k^{O(k)}$, an
\(O(k^3)\)-time dynamic program, and \(O(m+n)\) additional time.
\end{lemma}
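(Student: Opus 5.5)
We first delete all zero-capacity edges in \(O(m+n)\) time, since they never contribute to the capacity of any partition, and compute the connected components \(C_1,\dots,C_r\) of the remaining positive-capacity subgraph. If \(r=1\), the positive-capacity subgraph is connected and we return the instance unchanged. If \(r\ge k\), then merging components into exactly \(k\) nonempty groups gives a \(k\)-cut of capacity \(0\), which is optimal since capacities are nonnegative. So the interesting case is \(2\le r\le k-1\).

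The structural fact we use is that an optimal \(k\)-cut may be assumed to \emph{refine} the component partition \(\{C_1,\dots,C_r\}\). Given any \(k\)-cut \(\Pi\), intersect each part with each component to obtain a partition \(\Pi'\) with at least \(k\) nonempty parts. No edge joins different components, so the crossing edges of \(\Pi'\) are exactly the edges of \(\Pi\) that lie inside components. Hence \(\Pi'\) has capacity at most that of \(\Pi\), and by \cref{def:min_k_cut} parts can be merged back down to exactly \(k\) without increasing capacity. It follows that
\[
    \lambda_k(G)=\min\Bigl\{\sum_{i=1}^r \lambda_{j_i}(C_i) : j_i\ge1,\ \sum_i j_i = k\Bigr\},
\]
with \(\lambda_1(C_i)=0\) and \(\lambda_{j}(C_i)=+\infty\) when \(|C_i|<j\). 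Any optimal cut must use \(j\ge2\) on at least one component only when \(r<k\), and each \(j_i\le k-r+1\le k-1\). So we only ever need \(\lambda_j(C_i)\) for \(2\le j\le k-1\), which the assumed algorithms provide; each component is itself a smaller instance whose positive-capacity subgraph is connected.

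We call the given algorithm for each \(j\in\{2,\dots,k-1\}\) on each component \(C_i\) with \(|C_i|\ge j\), recording both the value and a witness partition. That is at most \(O(k)\) values of \(j\) times \(r\le k\) components, i.e.\ \(O(k^2)\) calls. For a fixed \(j\) the instances \(C_1,\dots,C_r\) are vertex- and edge-disjoint, so their combined running time is at most that of one call on a graph of total size \(n+m\), provided the running time is superadditive (as the polynomial bounds are). Summing over the \(O(k)\) parameters costs a factor absorbed in \(k^{O(k)}\).

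The combination is a knapsack-style dynamic program over components and the number of parts used so far. The state \(D[i][t]\) is the minimum cost of splitting \(C_1,\dots,C_i\) into \(t\) parts in total, with transitions over \(j_i\). This has \(O(k^2)\) states with \(O(k)\) transitions each, giving \(O(k^3)\) time. Backtracking and concatenating the recorded witness partitions yields a minimum \(k\)-cut in \(O(n)\) time.

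The main obstacle is justifying the refinement claim cleanly, together with the accounting that disjointness makes the per-\(j\) recursive cost no more than a single call of the full size.
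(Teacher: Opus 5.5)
Your proposal is correct and follows the paper's proof essentially step for step. Both delete zero-capacity edges, dispose of the cases $r=1$ and $r\ge k$, normalize an optimum to refine the components with exactly $k$ parts and $1\le j_i\le k-r+1$, make $O(k^2)$ calls that are disjoint for each fixed parameter, and combine with an $O(k^3)$ dynamic program. The one step to state explicitly, as the paper does, is that when merging back down to exactly $k$ parts you merge two parts \emph{inside the same component}; this is possible because the part count exceeds $k>r$, and it keeps the normalized optimum a refinement of $\{C_1,\dots,C_r\}$.
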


\begin{proof}
Delete the zero-capacity edges and let \(C_1,\ldots,C_h\) be the connected
components of the remaining graph.  If \(h\ge k\), any grouping into
\(k\) nonempty unions of components gives a zero-capacity \(k\)-cut.  If
\(h=1\), the positive-capacity subgraph is connected.

It remains to consider \(2\le h<k\).  A global \(k\)-cut may induce
\(q_a\) parts inside \(C_a\) with \(\sum_aq_a>k\), because one global
part may meet several positive-capacity components.  Nevertheless, some
optimum satisfies
\[
    1\le q_a\le k-h+1,
    \qquad
    \sum_{a=1}^h q_a=k.
\]
Indeed, starting from any optimum, repeatedly merge two induced parts
inside one component while the total number of induced parts exceeds
\(k\).  Such a merge cannot increase the capacity.  Once exactly \(k\)
local parts remain, make them the \(k\) distinct global parts.

For every \(a\) and every
\[
    2\le p\le\min\{k-h+1,|C_a|\},
\]
recursively compute \(\lambda_p(Q[C_a])\), and put
\(\lambda_1(Q[C_a])=0\).  Every recursive parameter is at most \(k-1\).
The optimum is therefore
\[
    \min_{\substack{q_1,\ldots,q_h\ge1\\
                    \sum_aq_a=k}}
    \sum_{a=1}^h\lambda_{q_a}(Q[C_a]).
\]
Conversely, every solution to this minimization gives a valid global
\(k\)-cut by taking its local parts as distinct global parts.  The stated
dynamic program finds the minimum and the corresponding partition.  For
each fixed \(p\), the induced instances are vertex- and edge-disjoint.
\end{proof}

\subsection{Tree Packings and Respecting Cuts}

We first record the common ingredients.  Both algorithms use the same
deterministically constructed approximate dual packing.  The randomized
algorithm samples a logarithmic number of records from its normalized
weights, whereas the deterministic algorithm processes the entire support.
The latter has
\[
    O\!\left(m\eta^{-2}\log^3(2n)\right)
\]
records.

\begin{lemma}[Approximate dual tree packing {\cite{CQX19,Qua22}}]
\label{lem:dual_tree_packing}
Let \(F=(V,E,c)\) be a connected weighted graph, put
\(n:=|V|\), \(m:=|E|\), and let \(2\le k\le n\).  For every
\(0<\eta<1/2\), one can deterministically compute in
\[
    O\!\left(m\eta^{-2}\log^3(2n)\right)
\]
time an implicit finitely supported family of nonnegative tree weights
\((y_T)\) and nonnegative edge slacks \((z_e)\).  Writing
\[
    \tau:=\sum_Ty_T,
\]
they satisfy
\begin{align}
    \sum_{T\ni e}y_T
    &\le c(e)+z_e
        \qquad (e\in E),
        \label{eq:dual_packing_load}\\
    (k-1)\tau
    &\ge(1-\eta)\frac{\lambda_k(F)}2+z(E).
        \label{eq:dual_packing_approx}
\end{align}
The support has
\[
    O\!\left(m\eta^{-2}\log^3(2n)\right)
\]
records.

The records can be sampled with probabilities proportional to their
weights.  Given \(r\) sampled record indices, their spanning-tree
extensions can be materialized in
\[
    O\!\left(m\eta^{-2}\log^3(2n)+rn\right)
\]
additional time.  The entire support can likewise be streamed with
\(O(n)\) output work per materialized tree.
\end{lemma}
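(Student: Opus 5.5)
This lemma is cited from \cite{CQX19,Qua22}, so the plan is to derive it from their multiplicative-weights packing of spanning trees rather than prove it from scratch. The LP to target is the relaxation of Minimum \(k\)-Cut whose dual is a fractional tree packing with slacks: maximize \((k-1)\tau-z(E)\) subject to each edge carrying load at most \(c(e)+z_e\). The starting point is the known Lagrangian fact that its optimum is at least \(\lambda_k(F)/2\). This follows by complementary slackness against the partition LP of Chekuri--Quanrud--Xu: every \(k\)-cut of value \(\lambda_k\) is dominated by an LP solution of half its value, and LP duality with the relaxed packing constraint gives the dual value. So an exact dual would satisfy \eqref{eq:dual_packing_load} and \eqref{eq:dual_packing_approx} with \(\eta=0\). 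The remaining task is to obtain a \((1-\eta)\)-approximate dual deterministically and fast, with an implicit, sampleable representation.

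\textbf{Step 1 (the packing).} Run the width-independent multiplicative-weights framework of \cite{CQX19} on the packing problem. Maintain edge lengths \(\ell_e\), initialized to \(1/c(e)\). In each round, the oracle is a minimum spanning tree with respect to \(\ell\), or the option of paying slack on expensive edges, which handles the \(z_e\) terms. Add a small amount of weight to the returned tree or slack, and multiply the lengths of the loaded edges by \(1+\eta\)-type factors. The standard analysis gives \(O(m\eta^{-2}\log n)\) phases. Each phase ends when the minimum spanning tree value exceeds \((1+\eta)\) times its value at the start of the phase. Scaling the final weights down by the maximum congestion yields \eqref{eq:dual_packing_load}. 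Weak duality against the fractional lengths then gives \eqref{eq:dual_packing_approx}.

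\textbf{Step 2 (running time and implicit records).} The key point is that one must not recompute an MST from scratch in every iteration. \cite{Qua22} handles this with a dynamic MST and a "lazy" increment scheme. Consecutive trees differ by edge swaps, so the family can be stored as a persistent sequence of updates. Each update is one record: a tree index together with its weight. The number of updates, and hence of records, matches the \(O(m\eta^{-2}\log^3(2n))\) bound once the polylogarithmic cost of dynamic MST is charged to each update.

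\textbf{Step 3 (sampling and materialization).} To sample, build prefix sums of the record weights and binary-search, which costs \(O(\log)\) per sample. To materialize \(r\) sampled trees, sort the sample indices and replay the update log once. Whenever the replay reaches a sampled index, output the current tree in \(O(n)\) time. This gives \(O(m\eta^{-2}\log^3(2n)+rn)\) total. Streaming the whole support is the same replay with an output at every index.

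\textbf{Main obstacle.} I expect the hardest part to be verifying that the implementation of \cite{Qua22} really yields the \(\log^3\) total and a support of the same size while keeping slacks explicit. If that accounting fails, I would fall back on the simpler \(O(m\eta^{-2}\log n)\)-iteration MWU with a fresh MST per iteration. That version has a weaker but still polynomial bound, which suffices for the deterministic use of the lemma.
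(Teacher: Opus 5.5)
The gap is in Step 1. Your Step 3 is exactly the paper's argument: prefix sums for sampling, then sorting the sampled indices and replaying the deterministic update sequence once, at \(O(n)\) per output tree.

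The LP you target, \(\max\,(k-1)\tau-z(E)\) subject to load \(\le c(e)+z_e\), has negatively weighted variables \(z\), so it is not a positive packing LP. As a result, ``an MST, or the option of paying slack on expensive edges'' is not a specified oracle for the width-independent packing framework. The guarantee you invoke (scale by the maximum congestion, then compare with the bound certified by the lengths) is proved for positive packing. It therefore does not control the term \(-z(E)\) that \eqref{eq:dual_packing_approx} is about.

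The missing idea is to eliminate \(z\) before solving anything. Deleting an edge \(e\) from a tree of weight \(y\) lowers its objective coefficient by one and lowers the load on \(e\) by \(y\). So it trades exactly against \(y\) units of \(z_e\), and the LP is equivalent to the positive forest-packing LP
\[
    \max\sum_J(|J|+k-n)\,y_J
    \quad\text{subject to}\quad
    \sum_{J\ni e}y_J\le c(e)\quad(e\in E).
\]
This is what Quanrud's algorithm solves, to within a factor \(1-\eta\) of \(\operatorname{OPT}_{\mathrm{LP}}\ge\lambda_k(F)/2\). Its oracle returns a prefix of the current MST, which is the precise form of your ``slack on expensive edges''.

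The slacks are then defined afterwards, not computed by the algorithm. Extend each support forest \(J\) to the MST \(T_J\) of which it is a prefix, move \(y_J\) to \(T_J\), and put \(z_e:=\sum_{J:\,e\in T_J\setminus J}y_J\). Splitting \(e\in T_J\) into \(e\in J\) and \(e\in T_J\setminus J\) gives \eqref{eq:dual_packing_load}. The bound \eqref{eq:dual_packing_approx} follows from the exact identity
\[
    (k-1)\tau-z(E)=\sum_J\bigl(k-1-|T_J\setminus J|\bigr)y_J=\sum_J(|J|+k-n)\,y_J .
\]
In particular, the slacks never have to be stored, contrary to your worry about keeping them explicit. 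A record is just an insertion index, a prefix length, and a weight.

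Two smaller points. First, your justification of \(\operatorname{OPT}_{\mathrm{LP}}\ge\lambda_k(F)/2\) is garbled. Complementary slackness does not give it, and exhibiting an LP solution of half a cut's value would bound the LP from above. What is needed is the Chekuri--Quanrud--Xu integrality-gap bound \(\lambda_k\le2(1-1/n)\operatorname{OPT}_{\mathrm{LP}}\), which is a rounding statement. Second, your fallback with a fresh MST in every iteration does not prove the lemma as stated. It loses the \(O(m\eta^{-2}\log^3(2n))\) bound that the randomized algorithm relies on (\cref{lem:randomized_light_candidate_selection}), and would only serve the deterministic variant.
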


\begin{proof}
Quanrud's algorithm approximately solves the positive forest-packing
dual
\[
    \max
    \sum_J(|J|+k-n)y_J
    \quad\text{subject to}\quad
    \sum_{J\ni e}y_J\le c(e)
    \quad(e\in E),
\]
where \(J\) ranges over forests of \(F\).  After changing its internal
accuracy by a constant factor, it returns a feasible packing satisfying
\[
    \sum_J(|J|+k-n)y_J
    \ge(1-\eta)\operatorname{OPT}_{\mathrm{LP}}
    \ge(1-\eta)\frac{\lambda_k(F)}2.
\]

For every support forest \(J\), choose a spanning-tree extension
\(T_J\supseteq J\), transfer the weight \(y_J\) to \(T_J\), and define
\[
    z_e:=\sum_{J:\,e\in T_J\setminus J}y_J.
\]
Then
\[
    \sum_{T\ni e}y_T
    =
    \sum_{J:\,e\in J}y_J+
    \sum_{J:\,e\in T_J\setminus J}y_J
    \le c(e)+z_e.
\]
Moreover,
\[
\begin{aligned}
    (k-1)\tau-z(E)
    &=
    \sum_J
    \bigl(k-1-|T_J\setminus J|\bigr)y_J
    =
    \sum_J
    \bigl(k-1-(n-1-|J|)\bigr)y_J\\
    &=
    \sum_J(|J|+k-n)y_J
    \ge(1-\eta)\frac{\lambda_k(F)}2.
\end{aligned}
\]

In Quanrud's implementation, every inserted forest is a prefix of the
current maintained minimum spanning tree.  We choose that tree as
\(T_J\).  Record the insertion index, prefix length, and inserted weight.
Prefix sums support weighted sampling of record indices.  Because the
packing algorithm is deterministic, sorting the sampled indices and
replaying its update sequence once reconstructs the corresponding MST
states; writing one tree explicitly costs \(O(n)\).  The same replay
streams the complete support.
\end{proof}

\begin{lemma}[A sampled tree respects a light side]
\label{lem:sampled_light_side_respecting}
Use the approximate packing in \cref{lem:dual_tree_packing} with
\(\eta=1/100\), and sample \(T\) with probability \(y_T/\tau\).  If
\(A\) is a nontrivial cut satisfying
\[
    c_F(A)<\frac{2}{k}\lambda_k(F),
\]
then
\[
    \Pr\bigl[|\delta_T(A)|\le k-1\bigr]\ge\frac1{4k}.
\]
\end{lemma}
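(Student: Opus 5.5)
We argue by averaging over the packing. For each support tree \(T\), the loads give
\[
\sum_T y_T\,|\delta_T(A)| = \sum_{e\in\delta_F(A)}\sum_{T\ni e}y_T \le c_F(A)+z(\delta_F(A)) \le c_F(A)+z(E),
\]
using \eqref{eq:dual_packing_load} and \(z\ge0\). Dividing by \(\tau\), the expected number of tree edges crossing \(A\) under sampling with probability \(y_T/\tau\) satisfies
\[
\mathbb E\bigl[|\delta_T(A)|\bigr]\le \frac{c_F(A)+z(E)}{\tau}.
\]
Since \(A\) is nontrivial and \(T\) spans the connected graph, \(|\delta_T(A)|\ge1\) always. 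Set \(X:=|\delta_T(A)|-1\ge0\), an integer. The event \(|\delta_T(A)|\ge k\) is exactly \(X\ge k-1\), so by Markov,
\[
\Pr\bigl[|\delta_T(A)|\ge k\bigr]\le \frac{\mathbb E[X]}{k-1}.
\]
It therefore suffices to show \(\mathbb E[X]\le (k-1)\bigl(1-\tfrac1{4k}\bigr)\).

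We bound \(\tau\) using \eqref{eq:dual_packing_approx} with \(\eta=1/100\):
\[
(k-1)\tau\ge 0.99\,\lambda_k/2+z(E).
\]
Write \(c_F(A)=\theta\cdot 2\lambda_k/k\) with \(\theta<1\). In the favorable case \(z(E)=0\),
\[
\mathbb E[X]\le \frac{(k-1)\,2\theta\lambda_k/k}{0.99\lambda_k/2}-1=\frac{4\theta(k-1)}{0.99k}-1.
\]
This is weaker than what we need when \(k\) is large (about \(4\theta-1\) versus \(k-1\)); for small \(k\) the constant \(4\) looks problematic, so the constants must be checked carefully. When \(z(E)>0\), the slack enters both the numerator \(c_F(A)+z(E)\) and the term \((k-1)\tau\). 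Because \(z(E)\le (k-1)\tau\), we have
\[
\frac{z(E)}{\tau}\le k-1,
\]
which contributes at most \(k-1\) to the expectation of \(|\delta_T(A)|\). That is too crude, so the slack must be treated jointly with the light-side term rather than bounded separately.

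The cleaner approach is to work with \(X\) directly:
\[
\tau\,\mathbb E[X]\le c_F(A)+z(E)-\tau.
\]
We then need
\[
c_F(A)+z(E)-\tau\le\Bigl(k-1-\tfrac{k-1}{4k}\Bigr)\tau,
\]
that is,
\[
c_F(A)+z(E)\le\Bigl(k-\tfrac{k-1}{4k}\Bigr)\tau.
\]
Using \((k-1)\tau\ge 0.99\lambda_k/2+z(E)\), we have \(z(E)\le(k-1)\tau\), and \(c_F(A)<2\lambda_k/k\le 4(k-1)\tau/(0.99k)\). The inequality follows if
\[
\frac{4(k-1)}{0.99k}\tau\le\Bigl(1-\tfrac{k-1}{4k}\Bigr)\tau
\]
after absorbing \(z(E)\) into the \(k-1\) term. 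This fails as written, so the Markov step needs refinement.

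The main obstacle, then, is the constant bookkeeping. The fix I would try first is to use the strictness \(c_F(A)<2\lambda_k/k\) together with the fact that \(\lambda_k\) relates to \(\tau\) with the exact factor \(2(k-1)\) coming from the LP. Substituting \(\lambda_k/2\le (k-1)\tau/0.99 - z(E)/0.99\), the \(z(E)\) terms nearly cancel:
\[
c_F(A)+z(E)<\frac{4(k-1)}{0.99k}\tau-\frac{4}{0.99k}z(E)+z(E).
\]
I would then carry out a Markov bound on \(X\) weighted appropriately, possibly applying Markov to \(|\delta_T(A)|\) and \(X\) simultaneously, or splitting the packing into trees with \(|\delta_T(A)|\ge k\) and using that each such tree contributes at least \(k\) to the sum. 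This gives
\[
k\,\tau\,(1-p)+\tau\,p\le c_F(A)+z(\delta(A)),
\]
where \(p\) is the target probability. The goal is to arrange the constants so that \(p\ge 1/(4k)\) follows, using \(\eta=1/100\) to absorb the \(0.99\) loss into the gap between \(1/(4k)\) and the natural bound.
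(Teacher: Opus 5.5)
Your setup matches the paper's. The load constraint gives $\tau\,\mathbb E[D]\le c_F(A)+z(E)$ for $D:=|\delta_T(A)|$. Since $D\ge1$, Markov on $D-1$ gives $\Pr[D\le k-1]\ge (k-\mathbb E[D])/(k-1)$, which is exactly the step the paper uses.

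However, the proposal never finishes the proof; it ends with ``arrange the constants so that $p\ge 1/(4k)$ follows'' without doing it. Three points of diagnosis are off along the way:
\begin{enumerate}
\item The Markov step needs no refinement. Your closing inequality $k\tau(1-p)+\tau p\le c_F(A)+z(\delta_F(A))$ is the same bound rewritten.
\item Your failed attempt lost because it used $(k-1)\tau\ge 0.99\lambda_k/2+z(E)$ twice. Once you discarded $z(E)$ to bound $c_F(A)$, and once you discarded $\lambda_k$ to bound $z(E)$, so the budget $(k-1)\tau$ was spent twice.
\item Your remark that the $z(E)=0$ bound is ``weaker than needed for large $k$'' is backwards. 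There $\mathbb E[X]<4/0.99-1$ is a constant, far below the linearly growing target $(k-1)(1-\tfrac1{4k})$. The ``constant $4$'' is also fine for $k=3,4$.
\end{enumerate}

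The missing step is a case split on the sign of the $z(E)$ coefficient in the inequality you did derive:
\[
  c_F(A)+z(E)<\frac{4(k-1)}{0.99k}\,\tau+\Bigl(1-\frac{4}{0.99k}\Bigr)z(E).
\]
For $k\ge5$ the coefficient is positive. Substituting $z(E)\le(k-1)\tau$ gives $c_F(A)+z(E)<(k-1)\tau$, so $\mathbb E[D]<k-1$ and $\Pr[D\le k-1]\ge 1/(k-1)$.

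For $k\in\{3,4\}$ the coefficient is negative, so you may drop that term. This yields $\mathbb E[D]<4(k-1)/(0.99k)$, which is your $z(E)=0$ computation. The resulting probabilities are at least $0.91/5.94$ and $0.96/2.97$ respectively, both above $1/(4k)$. The case $k=2$ is vacuous, since no nontrivial cut has value below $\lambda_2$.

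The paper packages the same computation differently. It writes $\mathbb E[D]\le(k-1)\frac{2/k+x}{b+x}$ with $x=z(E)/\lambda_k$ and $b=(1-\eta)/2$. This ratio is monotone in $x$: it is at most $1$ when $2/k<b$, and maximized at $x=0$ otherwise.
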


\begin{proof}
Write \(\lambda:=\lambda_k(F)\), \(D:=|\delta_T(A)|\),
\(x:=z(E)/\lambda\), and \(b:=(1-\eta)/2\).

By \eqref{eq:dual_packing_load} and
\eqref{eq:dual_packing_approx},
\begin{equation}
\label{eq:expected_packed_crossings}
    \mathbb E[D]
    \le
    (k-1)\frac{2/k+x}{b+x}.
\end{equation}
Since \(T\) is spanning and \(A\) is nontrivial, \(D\ge1\).  Thus
\[
    \Pr[D\le k-1]
    \ge\frac{k-\mathbb E[D]}{k-1}.
\]
For \(k\ge5\), we have \(2/k<b\), so the ratio in
\eqref{eq:expected_packed_crossings} is at most one and the last display
is at least \(1/(k-1)\).  For \(k=3\) and \(k=4\), the ratio is maximized
at \(x=0\), giving respectively
\[
    \Pr[D\le2]
    \ge\frac{1-9\eta}{6(1-\eta)},
    \qquad
    \Pr[D\le3]
    \ge\frac{1-4\eta}{3(1-\eta)}.
\]
For \(\eta=1/100\), all three bounds are at least \(1/(4k)\).
\end{proof}

Fix a root \(r\).  A cut is \emph{rooted} if its selected side does not
contain \(r\).  If \(R\) is any rooted tree on the graph's vertex set,
we say that a rooted cut \(S\) \(p\)-respects \(R\) when
\(|\delta_R(S)|\le p\).  The tree need not be a subgraph of the graph in
which the cut is evaluated.

\begin{lemma}[Enumeration of respecting cuts]
\label{lem:respecting_enumeration}
Let \(H=(W,E,c)\) be an \(n\)-vertex, \(m\)-edge weighted multigraph,
let \(R\) be an arbitrary rooted tree on \(W\), and let \(p,M\ge1\).
After
\[
    O((m+n)\log(2n))
\]
preprocessing, one can enumerate every nonempty rooted cut
\(S\subseteq W\setminus\{r\}\) with \(|\delta_R(S)|\le p\), compute
its capacity, and return the \(M\) lightest such cuts in
\[
    p^{O(1)}n^p\log(2n)
\]
additional time and
\[
    O((m+n)\log(2n)+pM)
\]
space.  Each returned cut is represented by its at most \(p\) crossing
tree edges.
\end{lemma}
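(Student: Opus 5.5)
We will reduce the computation of the capacity of a tree-respecting cut to a sum over pairs of crossing tree edges, using orthogonal range sums over an Euler-tour (DFS) numbering of \(R\).

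\textbf{Preprocessing.} Root \(R\) at \(r\) and compute preorder indices \(\mathrm{pre}(v)\). Then every subtree \(R_v\) occupies a contiguous interval \(I_v=[\mathrm{pre}(v),\mathrm{pre}(v)+|R_v|-1]\). Map each graph edge \(uw\in E\) to the two points \((\mathrm{pre}(u),\mathrm{pre}(w))\) and \((\mathrm{pre}(w),\mathrm{pre}(u))\) of weight \(c(uw)\). Build a static 2D orthogonal range-sum structure over these \(2m\) points, for example a wavelet tree or a persistent segment tree. It takes \(O((m+n)\log(2n))\) time and space and answers a rectangle query in \(O(\log(2n))\) time. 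Define \(\sigma(X,Y)\) as the total weight of points in \(X\times Y\) for intervals \(X,Y\).

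\textbf{Structure of a respecting rooted cut.} Let \(S\) be a rooted cut with \(\delta_R(S)=\{e_1,\dots,e_q\}\), \(q\le p\), where \(e_i\) joins \(v_i\) to its parent. Membership in \(S\) flips exactly when one crosses an \(e_i\), and the root lies outside \(S\). Hence
\[
\mathbf 1_S=\sum_i(-1)^{d_i}\mathbf 1_{R_{v_i}},
\]
where \(d_i\) is the number of selected \(v_j\) that are proper ancestors of \(v_i\). Equivalently, \(S\) is the disjoint union of the ``odd-depth'' subtree differences, and \(\mathbf 1_S\) is a signed sum of \(q\) interval indicators.

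Conversely, every set of at most \(p\) distinct non-root vertices \(\{v_i\}\) determines one rooted cut, namely the parity cut. Its crossing tree-edge set is exactly \(\{e_i\}\), and it is nonempty whenever \(q\ge1\). So the enumeration runs over all \(\sum_{q\le p}\binom{n-1}{q}=O(n^p)\) subsets, and distinct subsets give distinct cuts.

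\textbf{Capacity formula.} Write \(s_i=(-1)^{d_i}\). Then
\[
c_H(S)=\sum_{uw\in E}c(uw)\,\mathbf 1_S(u)\,(1-\mathbf 1_S(w))\cdot\text{(counted over both orientations)}=\sum_{i}s_i\,\sigma(I_{v_i},[1,n])-\sum_{i,j}s_is_j\,\sigma(I_{v_i},I_{v_j}).
\]
Here \(\sigma(I,[1,n])\) is the weighted degree sum, which can also be precomputed by prefix sums. This uses \(O(q^2)\) range queries. Ancestor relations, and hence the parities \(d_i\), come from interval containment in \(O(q^2)\) time.

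\textbf{Cost and selection.} Each subset costs \(p^{O(1)}\log(2n)\), so enumeration takes \(p^{O(1)}n^p\log(2n)\) time. To return the \(M\) lightest cuts in \(O(pM)\) extra space, keep a max-heap of size \(M\) keyed by capacity, storing each cut as its \(\le p\) edges. Each insertion costs \(O(\log M)\). This is within the bound when \(M\le n^{O(p)}\); if \(M\) is larger than the number of candidates, keep everything. Alternatively, use a buffer of size \(2M\) with linear-time selection whenever it fills, which gives amortized \(O(p)\) per candidate.

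\textbf{Main obstacle.} The main difficulty is getting the signed inclusion–exclusion for nested subtrees right. The identity \(\mathbf 1_S(u)(1-\mathbf 1_S(w))\) counted over both orientations must expand correctly into pairwise rectangle sums, and nested or equal intervals must not be double counted. Checking this algebraically, via bilinearity of \(\sigma\) in the indicator vectors, settles it. The range-sum structure itself is standard.
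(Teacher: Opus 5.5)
Your proposal is correct and takes essentially the same route as the paper: DFS-ordered two-dimensional range sums over both orientations of each edge, the parity (symmetric-difference) bijection between nonempty rooted cuts and nonempty sets of at most \(p\) tree edges, \(O(q^2)\) rectangle queries per cut, and a \(2M\)-buffer with linear-time selection. The only cosmetic difference is in computing the capacity: you use the signed bilinear expansion \(c_H(S)=\sigma(\mathbf 1_S,\mathbf 1)-\sigma(\mathbf 1_S,\mathbf 1_S)\) over subtree indicators, whereas the paper sweeps the \(2q\) interval endpoints to write \(S\) and its complement as disjoint unions of at most \(q+1\) intervals and sums the rectangles between them.
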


\begin{proof}
Order \(W\) by a DFS traversal of \(R\).  For every non-loop edge
\(uv\in E(H)\), insert the two weighted points
\[
    (\operatorname{tin}(u),\operatorname{tin}(v))
    \quad\text{and}\quad
    (\operatorname{tin}(v),\operatorname{tin}(u)).
\]
A standard static two-dimensional orthogonal range-sum structure, as in
the range-counting formulation of \cite[Lemma~1]{GMW20b}, can be built in
\(O((m+n)\log(2n))\) time and space and answers a rectangle query in
\(O(\log(2n))\) time.

For a tree edge \(f\), let \(R_f\) be the rooted subtree below \(f\).
Every rooted cut has the unique representation
\begin{equation}
\label{eq:rooted_cut_parity}
    S=\mathop{\triangle}_{f\in\delta_R(S)}R_f.
\end{equation}
Indeed, membership changes along a root-to-vertex path precisely at the
crossing tree edges.  Conversely, if \(B\subseteq E(R)\), then
\(\triangle_{f\in B}R_f\) crosses exactly the edges in \(B\).  It
therefore suffices to enumerate the sets
\[
    B\subseteq E(R),
    \qquad 1\le|B|\le p.
\]

Each \(R_f\) is one DFS interval.  Given \(q:=|B|\), sort the \(2q\)
interval endpoints and sweep their parity.  This expresses both
\(\triangle_{f\in B}R_f\) and its complement as disjoint unions of at
most \(q+1\) intervals.  Its capacity is consequently the sum of at most
\((q+1)^2\) rectangle sums and is computed in
\(p^{O(1)}\log(2n)\) time.  Since
\[
    \sum_{q=1}^p\binom{n-1}{q}=p^{O(1)}n^p,
\]
the claimed enumeration bound follows.

Keep candidates in a buffer of size \(2M\).  Whenever the buffer fills,
use linear-time selection to retain its \(M\) lightest members, breaking
ties by a fixed order on their boundary-edge sets.  A pruning costs
\(O(M)\) and occurs only after \(M\) new candidates have arrived, so the
selection work is linear in the number of enumerated cuts.
\end{proof}

\begin{definition}[Canonical cut representative]
\label{def:canonical_cut_representative}
For the fixed root \(r\in V\) and every nontrivial cut \(S\subsetneq V\),
put
\[
    \kappa(S):=
    \begin{cases}
        S,&r\notin S,\\
        V\setminus S,&r\in S.
    \end{cases}
\]
\end{definition}

\begin{lemma}[Number of strict-light cuts]
\label{lem:light_cut_counting_bound}
Let \(Q\) be a connected weighted graph.  The number of cuts
\(S\subsetneq V(Q)\) satisfying
\[
    c_Q(S)<\frac{2}{k}\lambda_k(Q)
\]
is at most \(k^{O(k)}|V(Q)|\).
\end{lemma}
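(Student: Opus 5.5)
The plan is to derive the bound from the structural results of \cite{GHLL21} on near-minimum \(k\)-cuts, which is where the ``strict-light'' counting statement quoted in the introduction comes from. Where their theorem is not stated in exactly this form, I would re-derive it through a laminarity-plus-packing argument. Throughout, write \(\lambda:=\lambda_k(Q)\), call \(S\) \emph{light} if \(c_Q(S)<2\lambda/k\), and note that \(S\) is light iff \(V\setminus S\) is. Hence it suffices to count the canonical representatives \(\kappa(S)\) of \cref{def:canonical_cut_representative}, losing a factor \(2\).

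\textbf{Step 1: a packing obstruction.} If \(P_1,\ldots,P_k\) is any \(k\)-cut, then its value is
\[
    \tfrac12\sum_{i}c_Q(P_i)\ge\lambda .
\]
So no \(k\)-partition of \(V\) consists entirely of light parts. More generally, if \(S_1,\ldots,S_j\) are pairwise disjoint light sets whose union \(U\) leaves \(V\setminus U\) with boundary at most the average, the same averaging gives a contradiction. I would use this to show that the light sets cannot contain a large \emph{disjoint} subfamily, apart from a controlled exceptional structure. This is the ``\(<\) average'' form of the \cite{GHLL21} argument.

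\textbf{Step 2: uncrossing.} For two crossing light sets \(A,B\), submodularity and posimodularity give
\[
    c(A\cap B)+c(A\cup B)\le c(A)+c(B),
    \qquad
    c(A\setminus B)+c(B\setminus A)\le c(A)+c(B).
\]
Moreover, the four atoms of \(A,B\) form a \(4\)-cut of value at most \(c(A)+c(B)\). I would use these to build, by iterated uncrossing, a laminar family \(\mathcal L\) of size \(O(n)\). The aim is that every light set is a union of at most \(k^{O(1)}\) atoms of the Venn diagram generated by \(k^{O(1)}\) members of \(\mathcal L\), chosen from a bounded-size ``window'' in the laminar tree.

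\textbf{Step 3: local counting.} Step~1 bounds the number of children involved in each window, because too many disjoint light pieces together would form a \(k\)-cut of value \(<\lambda\). Each window therefore contributes \(k^{O(k)}\) light sets, and charging each light set to a node of \(\mathcal L\) gives \(k^{O(k)}\cdot O(n)=k^{O(k)}|V(Q)|\).

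\textbf{Main obstacle.} The hard part is Step~2: showing that crossing light sets are confined to bounded windows, rather than merely controlled up to an \(n^{O(1)}\) overhead. The naive tree-packing argument (\cref{lem:sampled_light_side_respecting} together with \cref{lem:respecting_enumeration}) only gives \(k^{O(1)}n^{k-1}\). Strictness is essential here, since the cycle example shows the count is false at equality. I would first try to reproduce the \cite{GHLL21} argument: show that a chain of \(k\) mutually crossing light sets yields a \(k\)-cut of value strictly below \(\lambda\), using the strict slack accumulated in each uncrossing. If that cannot be made self-contained, I would simply invoke their theorem as a black box.
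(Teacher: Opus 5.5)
Your fallback of invoking \cite{GHLL21} as a black box is exactly the paper's proof. The paper splits the range at $\lambda_k/(k-1)$: \cite[Lemma~7]{GHLL21} gives fewer than $2^{k-2}$ cuts of value below $\lambda_k/(k-1)$, and \cite[Theorem~9]{GHLL21} gives at most $k^{O(k)}n$ cuts with value in $[\lambda_k/(k-1),2\lambda_k/k)$. As a proof, only that fallback carries weight. The self-contained route you put first does not work as written.

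The concrete failure is the obstruction that Step~3 borrows from Step~1, namely that ``too many disjoint light pieces together would form a $k$-cut of value $<\lambda$.''
\begin{itemize}
\item Take pairwise disjoint light sets $S_1,\dots,S_{k-1}$ with nonempty remainder $R$. The $k$-cut $(S_1,\dots,S_{k-1},R)$ has value $\frac12\bigl(\sum_i c_Q(S_i)+c_Q(R)\bigr)\le\sum_i c_Q(S_i)<\frac{2(k-1)}{k}\lambda$.
\item For $k\ge3$ this bound exceeds $\lambda$, so there is no contradiction. The averaging of Step~1 only bites when every part of a $k$-partition is light, and in Step~3 you have dropped the condition on the remainder.
\item The unit-weight star $K_{1,n-1}$ shows this is not a technicality. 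There $\lambda_k=k-1$, and each single leaf is light, since $1<2(k-1)/k$. So there are $n-1$ pairwise disjoint light sets and no contradiction.
\end{itemize}
Consequently a laminar node can have unboundedly many light children, and a $k^{O(k)}$ bound per window cannot come from disjointness. A linear bound must instead charge light sets individually to vertices or laminar nodes, with $k^{O(k)}$ sets per charge. Proving that the crossing structure is bounded enough for such a charging is your Step~2, which you flag yourself as the main obstacle. That step is the substance of the cited result, and your proposal leaves it unproved.
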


\begin{proof}
Gupta, Harris, Lee, and Li prove that there are fewer than \(2^{k-2}\)
cuts of value below \(\lambda_k/(k-1)\), and at most \(k^{O(k)}n\)
cuts in the remaining range below \(2\lambda_k/k\); see
\cite[Lemma~7 and Theorem~9]{GHLL21}.  The claim follows.
\end{proof}

Fix \(b_k=k^{O(k)}\) for which the bound in
\cref{lem:light_cut_counting_bound} is at most \(b_kn\), and write
\[
    M_k:=b_kn
\]
for an \(n\)-vertex instance with parameter \(k\).

\subsection{Randomized Algorithm}
\label{subsec:randomized_weighted_kcut}

We perturb the capacities independently in every recursive invocation.
The ambient parameters \(N,K\) make all isolation and sampling failures
small enough to union-bound over the entire recursion.

\begin{lemma}[Random perturbation for \(k\)-cut]
\label{lem:perturbation_kcut}
Let \(Q=(V,E,c)\) be a connected positive-capacity graph with
\(n:=|V|\ge k\ge3\), after parallel edges have been aggregated.  Set
\[
    R:=N^{c_{\mathrm{iso}}K^2}
\]
for a sufficiently large absolute constant \(c_{\mathrm{iso}}\), choose
independent \(\rho(e)\in[R]\), and define
\[
    B:=|E|R+1,
    \qquad
    c^\star(e):=Bc(e)+\rho(e).
\]
With probability at least \(1-N^{-\Omega(K^2)}\), every minimum
\(k\)-cut of \(Q^\star=(V,E,c^\star)\) is a minimum \(k\)-cut of
\(Q\), and every minimum \(k\)-cut of \(Q^\star\) has a side \(A\)
satisfying
\[
    c_{Q^\star}(A)<\frac{2}{k}\lambda_k(Q^\star).
\]
The perturbation adds \(O(K^2\log(2N))\) bits to every capacity.
\end{lemma}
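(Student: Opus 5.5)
The plan is to prove the two assertions separately: first that the scaling by \(B\) preserves optimality deterministically, then that a strict-light side exists with high probability, via an isolation-lemma argument applied one edge at a time.

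\emph{Preservation of optimality.} For a \(k\)-cut \(\Pi\) write \(c(\Pi)\) and \(\rho(\Pi)\) for the sums of \(c\) and \(\rho\) over \(\delta_Q(\Pi)\). Then \(c^\star(\Pi)=Bc(\Pi)+\rho(\Pi)\), and
\[
    0\le\rho(\Pi)\le|E|R<B .
\]
Capacities are integers, so \(c(\Pi)<c(\Pi')\) implies \(c^\star(\Pi)<c^\star(\Pi')\). Hence every minimizer of \(c^\star\) minimizes \(c\), whatever the choice of \(\rho\). The bit bound is also immediate: \(c^\star(e)\le (B+1)\max c\), and \(\log(BR)=O(\log|E|+K^2\log N)=O(K^2\log(2N))\).

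\emph{Strict-light side.} Let \(\Pi=\{A_1,\dots,A_k\}\) be a minimum \(k\)-cut of \(Q^\star\). Since \(\sum_i c_{Q^\star}(A_i)=2\lambda_k(Q^\star)\), the average side boundary is exactly \(2\lambda_k(Q^\star)/k\). So some side is strictly below the average unless all \(c_{Q^\star}(A_i)\) are equal; call such a \(\Pi\) \emph{balanced}. It therefore suffices to show that, with high probability, the minimum \(k\)-cut of \(Q^\star\) is unique and not balanced.

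Uniqueness comes from the isolation lemma applied to the family of crossing edge sets of \(k\)-cuts, with weights \(\rho\) added on top of the fixed offsets \(Bc\). It fails with probability at most \(|E|/R\).

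For non-balance, fix an edge \(e\) and condition on \(\rho_{-e}\). Let \(\Pi_e\) be a (canonically tie-broken) \(c^\star\)-minimizer among the \(k\)-cuts whose crossing set contains \(e\). Its identity does not depend on \(\rho(e)\), because \(\rho(e)\) adds the same constant to every such cut. Suppose \(e\) joins parts \(A_a\) and \(A_b\) of \(\Pi_e\). Since \(k\ge3\), there is a third part \(A_c\), and \(e\notin\delta(A_c)\). The equation \(c^\star(A_a)=c^\star(A_c)\) is then a linear equation in \(\rho(e)\) with coefficient \(1\), so it holds for at most one value of \(\rho(e)\), i.e.\ with probability at most \(1/R\). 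Thus \(\Pr[\Pi_e\text{ balanced}]\le 1/R\).

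Now, on the uniqueness event, let \(\Pi\) be the unique minimum \(k\)-cut. Because \(Q\) is connected with positive capacities, \(\delta(\Pi)\neq\emptyset\); pick any \(e\in\delta(\Pi)\). Then \(\Pi\) is the unique minimizer among cuts containing \(e\), so \(\Pi=\Pi_e\). Hence
\[
    \{\text{unique minimizer balanced}\}\subseteq\bigcup_e\{\Pi_e\text{ balanced}\},
\]
and a union bound gives total failure probability at most \(2|E|/R\le N^{-\Omega(K^2)}\) once \(c_{\mathrm{iso}}\) is large.

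\emph{Main obstacle.} The delicate step is the dependence between the identity of the optimal cut and the variable \(\rho(e)\) that is meant to break the balance. The per-edge conditional minimizer \(\Pi_e\), combined with uniqueness (so that \(\Pi=\Pi_e\)), is how I plan to decouple them. The hypothesis \(k\ge3\) is essential here: it supplies a part \(A_c\) whose boundary avoids \(e\).
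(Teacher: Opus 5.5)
Your proof is correct, but it takes a different route from the paper's. The paper never establishes uniqueness of the perturbed optimum. Instead:
\begin{enumerate}
\item It quotes \cite[Corollary~19]{GHLL21} for the bound $n^kk^{O(k^2)}$ on the number of minimum $k$-cuts of $Q$, and fixes one such cut $(A_1,\ldots,A_k)$ before $\rho$ is drawn.
\item Connectivity together with $k\ge3$ forces $\delta_Q(A_i)\ne\delta_Q(A_j)$, since equality would isolate $A_i\cup A_j$ from the remaining parts. So $c_{Q^\star}(A_i)=c_{Q^\star}(A_j)$ is a nontrivial linear equation in the independent $\rho(e)$, and it holds with probability at most $1/R$.
\item A union bound over all original optima and all pairs of sides follows. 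Since every optimum of $Q^\star$ is an optimum of $Q$, every optimum of $Q^\star$ has pairwise distinct side capacities, and hence a strict-light side.
\end{enumerate}
Because the cuts are fixed in advance, the paper has no dependence issue to resolve. The price is the external counting theorem, and $R$ must dominate the number of optimal cuts.

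Your argument replaces that count with two ingredients: the isolation lemma with offsets $Bc$, and the conditional minimizer $\Pi_e$. The latter decouples the identity of the optimum from the variable $\rho(e)$ that breaks the balance. This makes your proof self-contained, and its failure probability $2|E|/R$ does not depend on how many optima $Q$ has. In exchange, it yields only non-balance of the unique optimum rather than pairwise distinctness for all optima, but that is all the lemma needs.

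Two details should be added.
\begin{itemize}
\item The isolation lemma gives a unique minimizing crossing set $F$, but your step $\Pi=\Pi_e$ needs a unique minimizing partition. This holds because $Q^\star$ is connected with positive capacities. Suppose a part of a minimum $k$-cut induced a disconnected subgraph. Splitting it gives a $(k+1)$-partition of the same value, and merging two adjacent parts of that partition gives a strictly cheaper $k$-cut, a contradiction. So the parts are exactly the components of $Q-F$, and $F$ determines the partition.
\item The bound $c^\star(e)\le(B+1)\max c$ is false when $\rho(e)>\max c$. Use $c^\star(e)\le Bc(e)+R<2Bc(e)$ instead; it gives the same $O(K^2\log(2N))$ bit bound.
\end{itemize}
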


\begin{proof}
For every \(F\subseteq E\),
\[
    c^\star(F)=Bc(F)+\rho(F),
    \qquad 0\le\rho(F)<B.
\]
Thus perturbed capacities compare edge sets first by original capacity.

There are at most \(n^kk^{O(k^2)}\) original minimum \(k\)-cuts by
\cite[Corollary~19]{GHLL21}.  Fix one, say
\((A_1,\ldots,A_k)\).  Connectivity implies
\(\delta_Q(A_i)\ne\delta_Q(A_j)\) for \(i\ne j\), since equality would
leave \(A_i\cup A_j\) disconnected from all other parts.  Hence
\(c_{Q^\star}(A_i)=c_{Q^\star}(A_j)\) is a nontrivial linear equation
in the independent perturbations and holds with probability at most
\(1/R\).  The choice of \(R\), a union bound over all optimum cuts and
all pairs of their sides, and \(n\le N\), \(k\le K\), give the claimed
failure probability.

For every optimum \(k\)-cut,
\[
    \sum_{i=1}^k c_{Q^\star}(A_i)=2\lambda_k(Q^\star).
\]
Once the side capacities are pairwise distinct, at least one is strictly
below their average.
\end{proof}

\begin{lemma}[Packed light-side candidates]
\label{lem:randomized_light_candidate_selection}
Let \(Q\) be a connected positive-capacity \(n\)-vertex, \(m\)-edge
graph with parameter \(k\ge3\).  With probability at least
\(1-N^{-\Omega(K^2)}\), one can construct a family \(\mathcal F\) of
at most \(M_k\) distinct canonical cuts containing the canonical
representative of every strict-light cut of \(Q\).  The running time is
\[
    K^{O(K)}n^{k-2}(m+n)\log^3(2N).
\]
\end{lemma}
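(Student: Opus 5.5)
We build \(\mathcal F\) with the dual tree packing and the respecting-cut enumerator.

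\textbf{Step 1: the packing and the samples.} Apply \cref{lem:dual_tree_packing} to \(Q\) with \(\eta=1/100\). This costs \(O(m\log^3(2n))\) time. Sample
\[
    r:=c_0K^3\log(2N)
\]
record indices independently, each with probability \(y_T/\tau\), and materialize the \(r\) sampled trees in \(O(m\log^3(2n)+rn)\) time. Root every tree at the fixed root \(r\) of \cref{def:canonical_cut_representative}. Then \(\kappa(A)\) is a rooted cut, and \(|\delta_T(\kappa(A))|=|\delta_T(A)|\).

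\textbf{Step 2: each strict-light cut is respected by some sample.} Fix a strict-light cut \(A\). By \cref{lem:sampled_light_side_respecting}, each sampled tree \((k-1)\)-respects \(A\) with probability at least \(1/(4k)\), independently across samples. So all \(r\) samples miss \(A\) with probability at most
\[
    (1-1/(4k))^r\le N^{-\Omega(K^2)}
\]
once \(c_0\) is large. By \cref{lem:light_cut_counting_bound}, there are at most \(M_k\le K^{O(K)}N\) strict-light cuts. A union bound over them gives the stated success probability.

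\textbf{Step 3: enumerate and prune per tree.} For each sampled tree \(T\), run \cref{lem:respecting_enumeration} on \(H=Q\), \(R=T\), \(p=k-1\), \(M=M_k\). This returns the \(M_k\) lightest rooted cuts crossing at most \(k-1\) edges of \(T\).

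Suppose \(T\) \((k-1)\)-respects a strict-light \(A\). The cut \(\kappa(A)\) is among those enumerated. Every enumerated cut that is at most as light as \(\kappa(A)\), or tied with it, is itself strict-light. These are distinct rooted cuts, since each rooted cut has a unique boundary representation \eqref{eq:rooted_cut_parity}. So together with \(\kappa(A)\) they number at most \(M_k\), and the top-\(M_k\) list keeps \(\kappa(A)\) regardless of how ties are broken.

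\textbf{Step 4: merge the lists.} The union over samples may have up to \(rM_k\) entries. Every strict-light canonical cut is present in it. Select the \(M_k\) lightest distinct cuts of the union. The strict-light canonical cuts are at most \(M_k\) in number and lie strictly below every other cut, so the same counting argument shows that all of them survive.

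Deduplication is the main obstacle. The same set \(S\) appears under different boundary sets in different trees. I would fingerprint each candidate by a random linear hash \(\sum_{v\in S}h(v)\bmod P\) with \(P=N^{\Theta(K^2)}\). For the at most \(p+1\) DFS intervals of a candidate, this hash is computable from per-tree prefix sums of \(h\) in \(p^{O(1)}\) time. Collisions among the \(rM_k\) candidates occur with probability \(N^{-\Omega(K^2)}\). After hashing, sort by (capacity, hash) and keep the first \(M_k\) distinct hashes.

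\textbf{Running time.} Each tree costs
\[
    O((m+n)\log(2n))+k^{O(1)}n^{k-1}\log(2n).
\]
Because \(Q\) is connected, \(n\le m+1\), so \(n^{k-1}\le n^{k-2}(m+n)\). Multiplying by \(r=O(K^3\log(2N))\) and adding the packing time and the \(O(rM_k\log)\) selection gives
\[
    K^{O(K)}n^{k-2}(m+n)\log^3(2N).
\]
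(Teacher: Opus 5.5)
Your proof is correct and follows the paper's argument essentially step for step. The shared steps are: the same packing with $\eta=1/100$; $\Theta(K^3\log(2N))$ independent samples with a union bound over the at most $M_k$ strict-light cuts; per-tree truncation to the $M_k$ lightest $(k-1)$-respecting rooted cuts, justified by \cref{lem:light_cut_counting_bound}; and the same counting argument for the final truncation.

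The only divergence is deduplication, which is less of an obstacle than you suggest. The paper simply materializes the at most $K^{O(K)}n\log(2N)$ local survivors and radix-sorts their incidence vectors in $K^{O(K)}n^2\log(2N)$ time. This already fits the budget because $n^2\le n^{k-2}(m+n)$ for $k\ge3$. Your random linear fingerprint is also valid for any modulus $P$: the difference of two distinct sets is a $\pm1$-coefficient form, which vanishes with probability exactly $1/P$. It is cheaper and keeps the cuts implicit, at the cost of one extra failure event that is absorbed into $N^{-\Omega(K^2)}$.
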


\begin{proof}
Compute the approximate packing of \cref{lem:dual_tree_packing} with
\(\eta=1/100\), and independently sample
\[
    s:=C K^3\log(2N)
\]
support trees, for a sufficiently large constant \(C\).
Sample all support-record indices first and materialize their
spanning-tree extensions at once, as guaranteed by
\cref{lem:dual_tree_packing}.
Root every
sampled tree at the fixed vertex \(r\).  By
\cref{lem:sampled_light_side_respecting}, a fixed strict-light cut is
\((k-1)\)-respected by a sampled tree with probability at least
\(1/(4k)\).  Therefore the probability that none of the \(s\) trees
respects it is at most \(N^{-\Omega(K^2)}\).  A union bound over the at
most \(M_k\) strict-light cuts proves simultaneous coverage.

For every sampled tree, apply \cref{lem:respecting_enumeration} with
\(p=k-1\), retaining its \(M_k\) lightest rooted cuts.  If the rooted
representative \(\kappa(A)\) of a covered strict-light cut were discarded
from its witnessing tree, that tree's local output would contain \(M_k\)
distinct cuts of capacity at most \(c_Q(A)\).  Together with
\(\kappa(A)\), this would contradict
\cref{lem:light_cut_counting_bound}.  Thus every strict-light cut
survives at least one local truncation.

Materialize the at most \(sM_k\) local survivors, orient them by
\(\kappa\), and deduplicate them by radix sorting their incidence
vectors.  Retain the \(M_k\) lightest distinct cuts using the same
linear-selection buffer.  The preceding counting argument also shows
that no strict-light cut is lost in this final truncation.

The packing costs \(O(m\log^3(2N))\).  The fixed-tree structures and
enumerations cost
\[
    K^{O(1)}\bigl(m+n+n^{k-1}\bigr)
    \log(2n)\log(2N).
\]
Materializing the local survivors costs
\(K^{O(K)}n^2\log(2N)\).  Since
\[
    m+n+n^{k-1}
    \le O\!\left(n^{k-2}(m+n)\right)
    \qquad(k\ge3),
\]
all terms are bounded by the claimed running time.
\end{proof}

\begin{algorithm}[H]
\caption{\(\textsc{WeightedKCut}(Q,k)\)}
\label{alg:weighted_k_cut}
\begin{algorithmic}[1]
\Require An undirected capacitated multigraph \(Q\) and an integer
         \(2\le k\le |V(Q)|\)
\Ensure A minimum \(k\)-cut of \(Q\)
\If{\(k=2\)}
    \State \Return a weighted global minimum cut using
           \cref{thm:weighted_global_mincut}, amplifying success probability.
\EndIf
\State Apply \cref{lem:zero_capacity_reduction_kcut}; if it returns a
       cut, return it
\State Aggregate parallel edges and form \(Q^\star\) using
       \cref{lem:perturbation_kcut}
\State Construct \(\mathcal F^\star\) from \(Q^\star\) using
       \cref{lem:randomized_light_candidate_selection}
\State \(\mathcal P_{\mathrm{best}}\gets\bot\)
\ForAll{\(S\in\mathcal F^\star\)}
    \If{\(|S|\ge k-1\)}
        \State Recursively compute a minimum \((k-1)\)-cut of
               \(Q^\star[S]\), add \(V(Q)\setminus S\) as one part,
               and update \(\mathcal P_{\mathrm{best}}\)
    \EndIf
    \If{\(|V(Q)\setminus S|\ge k-1\)}
        \State Recursively compute a minimum \((k-1)\)-cut of
               \(Q^\star[V(Q)\setminus S]\), add \(S\) as one part,
               and update \(\mathcal P_{\mathrm{best}}\)
    \EndIf
\EndFor
\State \Return \(\mathcal P_{\mathrm{best}}\), interpreted in \(Q\)
\end{algorithmic}
\end{algorithm}

\begin{theorem}[Weighted Minimum \(k\)-Cut on Sparse Graphs]
\label{thm:weighted_k_cut_sparse}
For every \(k\ge2\), \cref{alg:weighted_k_cut} returns a minimum
\(k\)-cut with high probability and runs in
\begin{equation}
\label{eq:weighted_kcut_randomized_time}
    O(m)+k^{O(k^2)}n^{k-2}(m+n)\log^3(2n)
\end{equation}
time.  On a connected positive-capacity instance this is
\[
    O(m)+k^{O(k^2)}n^{k-2}m\log^3(2n).
\]
\end{theorem}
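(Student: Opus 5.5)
We argue by induction on \(k\), proving correctness and the running-time bound \eqref{eq:weighted_kcut_randomized_time} together. The base case \(k=2\) is \cref{thm:weighted_global_mincut}, amplified to failure probability \(N^{-\Omega(K^2)}\) by \(O(K^2)\) independent repetitions. For \(k\ge3\), \cref{lem:zero_capacity_reduction_kcut} either returns a zero-capacity cut or reduces to \(O(k^2)\) calls with parameters below \(k\) on disjoint subinstances, plus a small dynamic program. Disjointness together with superadditivity of \(n^{j-2}(m+n)\) bounds these calls by the inductive bound. So the main case is a connected positive-capacity instance \(Q\).

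\textbf{Correctness.} Condition on the success events of \cref{lem:perturbation_kcut} and \cref{lem:randomized_light_candidate_selection}. Every minimum \(k\)-cut \((A_1,\dots,A_k)\) of \(Q^\star\) is also optimal in \(Q\), and it has a side \(A=A_i\) with \(c_{Q^\star}(A)<\frac2k\lambda_k(Q^\star)\). Hence \(\kappa(A)\in\mathcal F^\star\). If \(\kappa(A)=A\), then \(|V\setminus A|\ge k-1\), and the remaining parts form a \((k-1)\)-cut of \(Q^\star[V\setminus A]\). The recursive call on \(Q^\star[V\setminus A]\), with \(A\) added as a part, therefore yields a \(k\)-cut of value at most \(\lambda_k(Q^\star)\). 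If instead \(\kappa(A)=V\setminus A\), the symmetric branch gives the same conclusion. Every candidate the algorithm evaluates is a genuine \(k\)-cut, so the minimum over candidates equals \(\lambda_k(Q^\star)\). By the lexicographic property, this output is a minimum \(k\)-cut of \(Q\).

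\textbf{Failure probability.} The concern is a union bound over the whole recursion tree. Its size is at most \(\prod_{j\le K}(2M_j+O(j^2))\le K^{O(K^2)}N^{K}\) calls. Each call's randomized primitives fail with probability \(N^{-\Omega(K^2)}\), where the constant is large because \(c_{\mathrm{iso}}\) and \(C\) are taken large. This absorbs the count for \(N\) larger than some \(K^{O(1)}\); smaller instances are handled by brute force in \(k^{O(k)}\) time.

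\textbf{Running time.} This is the main obstacle, because the recursion must stay at \(n^{k-2}(m+n)\) and not pick up an extra factor of \(n\). Let \(T_k(n,m)\) denote the cost on a connected instance. The local work is \(K^{O(K)}n^{k-2}(m+n)\log^3(2N)\), by \cref{lem:randomized_light_candidate_selection} plus the \(O(m)\) perturbation. On top of this, the call makes at most \(2M_k=2b_kn\) recursive \((k-1)\)-calls, each on an induced subgraph with at most \(n\) vertices and \(m\) edges. By induction each such call costs \(O(m)+(k-1)^{O((k-1)^2)}n^{k-3}(m+n)\log^3(2N)\), after absorbing the reduction lemma's calls into the exponent. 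Multiplying by \(2b_kn=k^{O(k)}n\) gives \(k^{O(k)}(k-1)^{O((k-1)^2)}n^{k-2}(m+n)\log^3(2N)\). The constants are consistent, since \(k^{O(k)}\cdot k^{O((k-1)^2)}\le k^{O(k^2)}\), so the recurrence closes. One point needs care: the \(O(m)\) term in the subcalls is multiplied by \(n\), and \(nm\le n^{k-2}m\) holds for \(k\ge3\), so it is absorbed.

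\textbf{Bit lengths.} The perturbation adds \(O(K^2\log N)\) bits to every capacity, once per recursion level, so after \(K\) levels capacities have \(O(K^3\log N)+\mathrm{poly}\log\) bits. I would argue that word operations on such numbers cost \(K^{O(1)}\) each, which is absorbed into \(k^{O(k^2)}\). The alternative is to note that the perturbations in different recursive calls can share a single scale. Finally, the top level uses \(N=n\) and \(K=k\), and aggregating parallel edges costs \(O(m)\); this gives both stated bounds, with \(n\le m+1\) on connected instances.
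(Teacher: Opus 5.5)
Your proposal is correct and follows essentially the same route as the paper. Both arguments use induction on $k$, and both get correctness from the strict-light side $\kappa(A)\in\mathcal F^\star$, completed by a recursive $(k-1)$-cut on the side holding the other parts. Both derive the recurrence $A_k\le K^{O(K)}(A_{k-1}+1)$ from $M_k=k^{O(k)}n$ candidates and finish with a union bound over $K^{O(K^2)}N^{O(K)}$ invocations. The paper uses superadditivity, $\sum_a n_a^{k-3}(m_a+n_a)\le n^{k-3}(m+n)$, where you bound each completion crudely; that only costs a factor $2$ per level.

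Your brute-force fallback for small $N$ is unnecessary, and its claimed $k^{O(k)}$ cost does not hold for $N$ as large as $K^{O(1)}$. The fallback is never needed: since $N\ge K$, the per-call bound $N^{-\Omega(K^2)}$ already absorbs the $K^{O(K^2)}N^{K}$ count once $c_{\mathrm{iso}}$ and $C$ are taken large.
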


\begin{proof}
We prove correctness by induction on \(k\).  The case \(k=2\) is
\cref{thm:weighted_global_mincut};
\cref{lem:zero_capacity_reduction_kcut} handles the disconnected and
zero-capacity cases.  Thus assume that the positive-capacity graph is
connected.

Condition on the success of the perturbation, packing, tree sampling,
and recursive calls.  By \cref{lem:perturbation_kcut}, a minimum
\(k\)-cut of \(Q^\star\) has a strict-light side \(A\).  By
\cref{lem:randomized_light_candidate_selection},
\(S:=\kappa(A)\) belongs to \(\mathcal F^\star\).  If \(S=A\), the
other \(k-1\) optimum parts lie in \(V\setminus S\); if
\(S=V\setminus A\), they lie in \(S\).  The corresponding recursive
call therefore completes the candidate to a \(k\)-cut of value
\(\lambda_k(Q^\star)\).  Every candidate constructed by the algorithm is
a valid \(k\)-cut, so the returned one is optimum for \(Q^\star\), and
hence for \(Q\).

It remains to prove the running time.  Let
\(L:=\log(2N)\), and let \(T_j(n,m)\) denote the worst-case time of a
local call with parameter \(j\), excluding the one-time input scan.
We claim
\[
    T_j(n,m)
    \le A_j n^{j-2}(m+n)L^3,
    \qquad
    A_j=K^{O(Kj)}.
\]
The base case follows from \cref{thm:weighted_global_mincut}.  For
\(j\ge3\), candidate construction is covered by
\cref{lem:randomized_light_candidate_selection}.  There are at most
\(M_j=j^{O(j)}n\) candidates.  For one candidate, let the two induced
instances have orders \(n_1,n_2\) and edge counts \(m_1,m_2\).  They
satisfy \(n_1+n_2=n\) and \(m_1+m_2\le m\), and hence
\[
    \sum_{a=1}^2
    n_a^{j-3}(m_a+n_a)
    \le n^{j-3}(m+n).
\]
By induction, the two recursive completions for one candidate therefore
cost at most
\[
    A_{j-1}n^{j-3}(m+n)L^3.
\]
Constructing their induced graphs costs \(O(m+n)\).  Multiplying by
\(M_j\) gives
\[
    T_j(n,m)
    \le K^{O(K)}(A_{j-1}+1)
       n^{j-2}(m+n)L^3.
\]
Thus \(A_j\le K^{O(K)}(A_{j-1}+1)\), which yields
\(A_K=K^{O(K^2)}\).

For the disconnected reduction, fix a recursive parameter \(p<j\).
Its induced instances are vertex- and edge-disjoint, so
\[
    \sum_a |C_a|^{p-2}
       \bigl(|E(Q[C_a])|+|C_a|\bigr)
    \le n^{p-2}(m+n).
\]
The \(O(j^2)\) parameter--component calls and the composition dynamic
program are absorbed into the same bound.

Finally, the recursion contains at most
\(K^{O(K^2)}N^{K-2}\) randomized invocations.  By the amplification convention, each fails with probability at most
\(N^{-\Omega(K^2)}\); a union bound proves the high-probability claim.
The initial aggregation and zero-capacity scan take \(O(m)\) time.
\end{proof}

\subsection{Deterministic Algorithm}
\label{subsec:deterministic_weighted_kcut}

The randomized algorithm has only two sources of randomness: the
perturbation and the sampling of trees from the packing.  We remove the
first using a different lexicographic perturbation and the second by processing
the entire support of the same approximate packing.

We first record the deterministic perturbation.  Parallel edges are
aggregated before it is applied.

\begin{lemma}[Deterministic perturbation]
\label{lem:deterministic_perturbation_kcut}
Let \(Q=(V,E,c)\) be a connected positive-capacity graph with
\(n:=|V|\ge k\ge3\) and \(m:=|E|\).  Order
\[
    E=\{e_1,\ldots,e_m\},
\]
set
\begin{equation}
\label{eq:deterministic_kcut_perturbation}
    B:=2^{m+1},
    \qquad
    c^\star(e_i):=Bc(e_i)+2^i,
\end{equation}
and let \(Q^\star=(V,E,c^\star)\).  Every minimum \(k\)-cut of
\(Q^\star\) is a minimum \(k\)-cut of \(Q\).  Moreover, every minimum
\(k\)-cut of \(Q^\star\) has a side \(A\) satisfying
\[
    c_{Q^\star}(A)
    <\frac{2}{k}\lambda_k(Q^\star).
\]
If the original capacities have \(L\) bits, the perturbed capacities have
\(O(L+m)\) bits.
\end{lemma}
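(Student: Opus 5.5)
The proof mirrors \cref{lem:perturbation_kcut}, with the random perturbation replaced by distinct powers of two so that distinct edge sets receive distinct perturbations.

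\textbf{Step 1: lexicographic comparison.} For every \(F\subseteq E\) write
\[
    c^\star(F)=Bc(F)+\pi(F),
    \qquad
    \pi(F):=\sum_{e_i\in F}2^i .
\]
Then \(0\le\pi(F)\le 2^{m+1}-2<B\). Capacities are nonnegative integers, so \(c(F)<c(F')\) implies
\[
    c^\star(F)\le Bc(F)+B-1<Bc(F')\le c^\star(F').
\]
Hence \(c^\star\) orders edge sets first by \(c\), with ties broken by \(\pi\).

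Now let \(\mathcal P\) be a minimum \(k\)-cut of \(Q^\star\) and \(\mathcal P'\) any \(k\)-cut. Apply this comparison to \(F=\delta(\mathcal P)\) and \(F'=\delta(\mathcal P')\). If \(c(\delta(\mathcal P))>c(\delta(\mathcal P'))\), then \(c^\star(\delta(\mathcal P))>c^\star(\delta(\mathcal P'))\), contradicting minimality of \(\mathcal P\) in \(Q^\star\). So \(\mathcal P\) is a minimum \(k\)-cut of \(Q\).

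\textbf{Step 2: distinct side capacities.} The map \(F\mapsto\pi(F)\) is injective, since \(\pi(F)\) is the binary encoding of the incidence vector of \(F\). Fix a minimum \(k\)-cut \((A_1,\ldots,A_k)\) of \(Q^\star\).

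\begin{itemize}
\item Suppose \(\delta_Q(A_i)=\delta_Q(A_j)\) for some \(i\ne j\). Then no edge joins \(A_i\cup A_j\) to the remaining parts. Since \(k\ge3\), the remaining parts are nonempty, so the positive-capacity graph would be disconnected, a contradiction. This is the same argument used in \cref{lem:perturbation_kcut}.
\item So the edge sets \(\delta_Q(A_i)\) and \(\delta_Q(A_j)\) differ. If \(c(\delta(A_i))\ne c(\delta(A_j))\), Step 1 already gives \(c^\star(A_i)\ne c^\star(A_j)\).
\item If \(c(\delta(A_i))=c(\delta(A_j))\), injectivity of \(\pi\) gives \(\pi(\delta(A_i))\ne\pi(\delta(A_j))\), hence again \(c^\star(A_i)\ne c^\star(A_j)\).
\end{itemize}

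\textbf{Step 3: a strict-light side.} Each crossing edge has endpoints in exactly two parts, so
\[
    \sum_{i=1}^k c_{Q^\star}(A_i)=2\lambda_k(Q^\star).
\]
The \(k\) summands are pairwise distinct by Step 2, so they are not all equal to their average \(2\lambda_k(Q^\star)/k\). Hence the smallest one is strictly below the average, which gives the required side \(A\).

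\textbf{Step 4: bit length.} We have \(c^\star(e_i)\le 2^{m+1}(2^L-1)+2^m<2^{L+m+1}\). So every perturbed capacity has \(O(L+m)\) bits.

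There is no substantive obstacle. The only points needing care are the bound \(\pi(F)<B\) in Step 1 and the use of \(k\ge3\) in Step 2, which is needed so that equal boundaries for two sides would actually isolate them from a nonempty rest of the graph.
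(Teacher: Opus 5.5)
Your proof is correct and follows essentially the same route as the paper: the lexicographic comparison via $0\le\pi(F)<B$, distinct side boundaries from connectivity and $k\ge3$, injectivity of the binary term, and the strict-below-average argument. The only cosmetic differences are two. You split into cases on whether $c(\delta(A_i))=c(\delta(A_j))$, whereas the paper concludes directly from the distinct binary terms, which are the residues of $c^\star$ modulo $B$. You also check the $O(L+m)$ bit-length bound explicitly, which the paper leaves implicit.
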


\begin{proof}
For every \(F\subseteq E\),
\[
    c^\star(F)
    =
    Bc(F)+\sum_{e_i\in F}2^i,
    \qquad
    0\le\sum_{e_i\in F}2^i<B.
\]
Thus \(c^\star\) compares edge sets first by their original capacities and
then by their incidence vectors.  In particular, every perturbed optimum
is an original optimum.

Let \((A_1,\ldots,A_k)\) be any \(k\)-cut.  For \(p\ne q\), connectivity
implies
\[
    \delta_Q(A_p)\ne\delta_Q(A_q).
\]
Indeed, equality would imply that no edge joins
\(A_p\cup A_q\) to any of the other nonempty parts.  Hence the binary
terms in the capacities
\[
    c_{Q^\star}(A_1),\ldots,c_{Q^\star}(A_k)
\]
are pairwise distinct.  For an optimum \(k\)-cut these capacities sum to
\(2\lambda_k(Q^\star)\), so one of them is strictly smaller than their
average.
\end{proof}

We define the deterministic algorithm by modifying
\cref{alg:weighted_k_cut} as follows.

\begin{algorithm}[H]
\caption{\(\textsc{DeterministicWeightedKCut}(Q,k)\)}
\label{alg:deterministic_weighted_k_cut}
\begin{algorithmic}[1]
\Require An undirected weighted multigraph \(Q\) and an integer
         \(2\le k\le|V(Q)|\)
\Ensure A minimum \(k\)-cut of \(Q\)
\State Run the recursive framework of \cref{alg:weighted_k_cut}, with
       the following modifications
\State When \(k=2\), use
       \cref{thm:deterministic_weighted_global_mincut}       
\State Replace the random perturbation by
       \cref{lem:deterministic_perturbation_kcut}
\State Compute the packing of \cref{lem:dual_tree_packing}
       deterministically with \(\eta=1/100\)
\State Process every tree \(T\) in its support: apply
       \cref{lem:respecting_enumeration} with
       \(p=k-1\), retaining the \(M_k\) lightest rooted cuts for \(T\)
\State Materialize the locally retained cuts, canonicalize and
       deduplicate them, and retain the \(M_k\) lightest distinct
       cuts overall
\State Complete these candidates recursively exactly as in
       \cref{alg:weighted_k_cut}, using the deterministic algorithm in
       every recursive call
\end{algorithmic}
\end{algorithm}

\begin{theorem}[Deterministic Weighted Minimum \(k\)-Cut]
\label{thm:deterministic_weighted_kcut}
For every fixed \(k\ge2\),
\cref{alg:deterministic_weighted_k_cut} deterministically returns a
minimum \(k\)-cut.  If the input capacities have \(L=n^{O(1)}\) bits,
its running time is
\[
    O(m)+k^{O(k^2)}n^{k+O(1)}.
\]
\end{theorem}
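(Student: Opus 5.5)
Correctness goes by induction on \(k\), mirroring the proof of \cref{thm:weighted_k_cut_sparse} with every probabilistic step replaced by a deterministic one. For \(k=2\) we invoke \cref{thm:deterministic_weighted_global_mincut}. For \(k\ge3\), \cref{lem:zero_capacity_reduction_kcut} is already deterministic and reduces to a connected positive-capacity instance; its recursive calls have parameter at most \(k-1\) and are covered by the induction hypothesis.

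On the connected instance we apply \cref{lem:deterministic_perturbation_kcut}. Every optimum of \(Q^\star\) is an optimum of \(Q\), and it has a strict-light side \(A\). The key point replacing sampling is an averaging argument. Take the packing of \cref{lem:dual_tree_packing} with \(\eta=1/100\) and consider \(T\) drawn with probability \(y_T/\tau\). By \cref{lem:sampled_light_side_respecting}, \(\Pr[|\delta_T(A)|\le k-1]\ge 1/(4k)>0\). Since this is a finite distribution over the support, some support tree \(T\) actually satisfies \(|\delta_T(A)|\le k-1\). Rooting \(T\) at \(r\), the cut \(\kappa(A)\) is enumerated by \cref{lem:respecting_enumeration} with \(p=k-1\).

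Survival under the truncations follows exactly as in \cref{lem:randomized_light_candidate_selection}. Suppose \(\kappa(A)\) were discarded, either locally or after global deduplication. Then there would be \(M_k\) distinct cuts of capacity at most \(c_{Q^\star}(\kappa(A))\), and together with \(\kappa(A)\) these give more than \(b_kn\) strict-light cuts. This contradicts \cref{lem:light_cut_counting_bound}, which applies to \(Q^\star\) since \(Q^\star\) is connected. Ties do not matter, because every cut of capacity at most that of a strict-light cut is itself strict-light. The recursive completion of \(\kappa(A)\) or its complement then yields a \(k\)-cut of value \(\lambda_k(Q^\star)\). Every output is a valid \(k\)-cut, so optimality follows.

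For the running time, the main obstacle is accounting for the polynomial overheads the deterministic version introduces.
\begin{itemize}
\item The perturbed capacities have \(O(L+m)\) bits. Arithmetic on them therefore costs \(O((L+m)/\log n)\) word operations, which is \(n^{O(1)}\) when \(L=n^{O(1)}\).
\item The support has \(O(m\log^3(2n))\) trees. Processing each costs \(k^{O(1)}(m+n+n^{k-1})\log n\) operations.
\item Streaming the support costs \(O(n)\) per tree.
\item Materializing, canonicalizing and deduplicating the at most \(M_k\) candidates per tree costs \(k^{O(k)}n^2\) per tree.
\end{itemize}
Together, candidate construction costs \(k^{O(k)}n^{k-1}\cdot n^{O(1)}=k^{O(k)}n^{k+O(1)}\), using \(m\le n^2\) after aggregation.

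Letting \(T_j(n)\le A_jn^{j+c}\) for a suitable fixed constant \(c\), each of the \(M_j=j^{O(j)}n\) candidates spawns two recursive calls of sizes \(n_1+n_2=n\). These cost at most \(A_{j-1}n^{j-1+c}\) by convexity. Multiplying by \(M_j\) gives \(A_j\le k^{O(k)}(A_{j-1}+1)\), and hence \(A_k=k^{O(k^2)}\).

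One subtlety needs care: the perturbation is reapplied in each recursive call to an induced subgraph, so bit lengths must not compound. To handle this, each call can perturb the original aggregated capacities rather than the already perturbed ones. This works because \(Q^\star[S]\) optima with respect to \(c^\star\) include optima of \(Q[S]\) restricted to \(c\). Alternatively, the \(O(m)\)-bit growth across \(k\) levels remains \(n^{O(1)}\) bits, with the \(k\) factor absorbed into \(k^{O(k^2)}\). Finally, the one-time input scan contributes the additive \(O(m)\).
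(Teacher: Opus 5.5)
Your proposal is correct and follows the paper's proof essentially step for step. It has the same induction on \(k\), the same lexicographic perturbation, the same averaging argument that some support tree \((k-1)\)-respects the strict-light side \(A\), the same counting argument that \(\kappa(A)\) survives both the local and global truncations, and the same recurrence over the \(M_k=k^{O(k)}n\) completions with nonrecursive work \(k^{O(k)}n^{k+O(1)}\).

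On the bit-length issue, the paper uses exactly your second option: each level adds \(O(n^2)\) bits and there are at most \(k\) levels, giving \(D_k(n,L)\le k^{O(k)}nD_{k-1}(n,L+n^2)+k^{O(k)}n^{k+C}(L+n)^C\). Your first option changes the algorithm, which recurses on \(Q^\star[S]\). Its stated justification is also not the relevant fact: the true relation is that perturbed optima are original optima, not the reverse inclusion your wording suggests. It is best dropped.
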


\begin{proof}
We prove correctness by induction on \(k\).  The base case is
\cref{thm:deterministic_weighted_global_mincut} and
\cref{lem:zero_capacity_reduction_kcut} handles the disconnected and
zero-capacity cases.  Thus consider a connected positive-capacity
instance, after parallel edges have been aggregated.

Form \(Q^\star\) using
\cref{lem:deterministic_perturbation_kcut}.  Let \(A\) be a strict-light
side of a minimum \(k\)-cut of \(Q^\star\), whose existence is guaranteed
by that lemma.  Compute the approximate dual packing from
\cref{lem:dual_tree_packing}.  If a tree \(T\) is sampled from its support
with probability \(y_T/\tau\), then
\cref{lem:sampled_light_side_respecting} gives
\[
    \Pr\bigl[|\delta_T(A)|\le k-1\bigr]\ge\frac1{4k}>0.
\]
Consequently, at least one tree in the support
\((k-1)\)-respects \(A\).  Because the deterministic algorithm processes
the entire support, it processes such a tree.

For each support tree, the algorithm retains its \(M_k\) lightest rooted
\((k-1)\)-respecting cuts.  Suppose that the rooted representative
\(\kappa(A)\) were discarded from a tree that respects it.  That tree's
local list would then contain \(M_k\) distinct cuts of capacity at most
\[
    c_{Q^\star}(A)
    <\frac{2}{k}\lambda_k(Q^\star).
\]
Together with \(\kappa(A)\), these would contradict
\cref{lem:light_cut_counting_bound}.  Thus \(\kappa(A)\) survives some
local truncation.  The same argument shows that it survives the final
deduplication and global truncation to the \(M_k\) lightest distinct cuts.

When \(\kappa(A)\) is evaluated, one of its two sides is the union of the
other \(k-1\) parts of the optimum \(k\)-cut.  By induction, the
corresponding recursive call finds a minimum \((k-1)\)-cut of that side
and therefore completes the candidate to a \(k\)-cut of value
\(\lambda_k(Q^\star)\).  Every candidate considered by the algorithm is
a valid \(k\)-cut, so the returned one is optimum for \(Q^\star\), and
hence for \(Q\).

It remains to bound the running time.  For \(\eta=1/100\), the packing
has
\[
    s_{\mathrm{pack}}
    =
    O\!\left(m\log^3(2n)\right)
\]
support trees.  For one tree,
\cref{lem:respecting_enumeration} with \(p=k-1\) takes
\[
    O\!\left(
        (m+n)\log(2n)
        +k^{O(1)}n^{k-1}\log(2n)
    \right)
\]
time.  Processing the entire support therefore takes
\[
    k^{O(1)}
    m\bigl(m+n+n^{k-1}\bigr)\log^4(2n).
\]
The algorithm retains at most \(M_k\) cuts per support tree.  Thus at
most
\[
    s_{\mathrm{pack}}M_k
    =
    k^{O(k)}mn\log^3(2n)
\]
compact cuts are materialized.  Exact canonicalization and deduplication
of their incidence vectors costs
\[
    k^{O(k)}mn^2\log^3(2n),
\]
which is covered by the preceding bound because \(k\ge3\).

After parallel edges have been aggregated, \(m\le n^2\).  Hence all
nonrecursive work in a parameter-\(k\) invocation is
\[
    k^{O(k)}n^{k+O(1)}\operatorname{poly}(L+n).
\]
There are at most
\[
    M_k=k^{O(k)}n
\]
recursive candidate completions.  If \(D_k(n,L)\) denotes the worst-case
running time after aggregation, then for an absolute constant \(C\),
\[
    D_k(n,L)
    \le
    k^{O(k)}nD_{k-1}(n,L+n^2)
    +
    k^{O(k)}n^{k+C}(L+n)^C.
\]
The bit length increases by at most \(O(n^2)\) at each of the at most
\(k\) recursion levels, and therefore remains polynomial.  Starting from
the polynomial-time deterministic \(k=2\) algorithm and using
\[
    \prod_{j=3}^k j^{O(j)}=k^{O(k^2)},
\]
the recurrence gives
\[
    D_k(n,L)
    \le k^{O(k^2)}n^{k+O(1)}.
\]
The disconnected reduction is absorbed by the same bound because, for
each fixed recursive parameter, its induced instances are
vertex-disjoint.  The initial aggregation costs \(O(m)\).
\end{proof}

\section{Unbreakable Decomposition}

Throughout this subsection \(G=(V,E)\) is an undirected unweighted multigraph, \(n=|V|\), and \(s\ge 1\) is the edge-cut parameter. We assume self-loops, if present, have been deleted, since they do not cross any cut and do not affect tree decompositions. We use near-linear computation of a tree cut-sparsifier by deleting edges below a threshold to construct our unbreakable decomposition.

\begin{definition}[Edge unbreakability]
\label{def:unbreakability}
A vertex set \(X\subseteq V(G)\) is \((q,s)\)-edge-unbreakable in \(G\) if every \(s\)-cut \((L,V(G)\setminus L)\) satisfies
\[
    |L\cap X|\le q \qquad\text{or}\qquad |(V(G)\setminus L)\cap X|\le q.
\]
A \((q,s)\)-breakable witness for \(X\) is an \(s\)-cut for which both quantities are larger than \(q\). Thus \(X\) is \((q,s)\)-edge-unbreakable iff it has no \((q,s)\)-breakable witness. Any set \(X\) with \(|X|\le q\) is vacuously \((q,s)\)-edge-unbreakable.
\end{definition}

\begin{definition}[Tree Decomposition]
\label{def:tree_decomposition}
A tree decomposition of a graph \(G\) is a pair \((\tau,\beta)\), where \(\tau\) is a tree and $\beta:V(\tau)\to 2^{V(G)}$ assigns a bag \(\beta(t)\subseteq V(G)\) to every node \(t\in V(\tau)\), such that:
\begin{itemize}
    \item for each vertex \(v\in V(G)\), the set $\{t\in V(\tau): v\in\beta(t)\}$ induces a connected subtree of \(\tau\).
    \item for each edge \(uv\in E(G)\), there is a node \(t\in V(\tau)\) such that \(u,v\in\beta(t)\).
\end{itemize}

A rooted tree decomposition is a tree decomposition together with a root \(r\in V(\tau)\). If \(t\ne r\) and \(p(t)\) is the parent of \(t\), the adhesion of \(t\) is
\[
    \sigma(t):=\beta(t)\cap\beta(p(t)).
\]
We set \(\sigma(r):=\emptyset\). The adhesion of the decomposition is $\max_{t\in V(\tau)}|\sigma(t)|$.

For a node \(t\), let \(\tau_t\) be the subtree of \(\tau\) rooted at \(t\), and define
\[
    \gamma(t):=\bigcup_{u\in V(\tau_t)}\beta(u), \qquad \alpha(t):=\gamma(t)\setminus\sigma(t),
\]
and
\[
    G_t:=G[\gamma(t)]\setminus E_G(\sigma(t)).
\]
\end{definition}

\begin{definition}[Compact Tree Decomposition]
\label{def:compact_tree_decomposition}
A rooted tree decomposition \((\tau,\beta)\) is compact if, for every node \(t\) with \(\sigma(t)\ne\emptyset\), the graph \(G[\alpha(t)]\) is connected and $N_G(\alpha(t))=\sigma(t)$.
\end{definition}

\begin{definition}[Subtree Unbreakability]
\label{def:subtree_unbreakability}
A rooted tree decomposition \((\tau,\beta)\) satisfies \((q,s)\)-subtree unbreakability if, for every node \(t\in V(\tau)\), the bag \(\beta(t)\) is \((q,s)\)-edge-unbreakable in the subgraph \(G_t\).
\end{definition}

\begin{definition}[Unbreakable Decomposition] (Definition 3.3 in \cite{ALL+25}) \label{def:unbreakable_decomposition}
A $(q, s)$-unbreakable decomposition of $G$ is a rooted tree decomposition $(\tau, \beta)$ where each bag $\beta(t)$ is $(q, s)$-unbreakable in $G$. The decomposition admits the stronger subtree unbreakability property if each bag $\beta(t)$ is $(q, s)$-unbreakable in $G_t$.
\end{definition}

\begin{theorem}[Tree Cut-Sparsifier~\cite{ADK26}] \label{thm:adk_tree_sparsifier}
There is a randomized algorithm which, given an undirected unweighted graph \(G=(V,E)\), runs in time \(\widetilde O(|E|+|V|)\) and with high probability outputs a weighted hierarchical tree $\mathcal H=(V_{\mathcal H},E_{\mathcal H},w)$ with \(V\subseteq V_{\mathcal H}\), such that \(\mathcal H\) is a tree cut-sparsifier of \(G\) of quality $\alpha(n)=O(\log^2 n\log\log n)$. That is, for every \(X\subseteq V\), $|\delta_G(X)| \le \mincut_{\mathcal H}(X,V\setminus X) \le \alpha(n)\,|\delta_G(X)|$, where
\[
    \mincut_{\mathcal H}(X,V\setminus X)
    :=
    \min_{\substack{U\subseteq V_{\mathcal H}\\
                    X\subseteq U,\; V\setminus X\subseteq V_{\mathcal H}\setminus U}}
        w(\delta_{\mathcal H}(U)).
\]
Moreover, the tree is hierarchical: every edge \(e\in E_{\mathcal H}\) corresponds to a laminar side \(S_e\subseteq V\), and its weight is $w(e)=|\delta_G(S_e)|$. The hierarchy has depth \(O(\log n)\).
\end{theorem}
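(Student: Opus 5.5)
Since this statement is imported from \cite{ADK26}, my plan is not to rebuild the construction from scratch. I would extract it from their near-linear-time expander-hierarchy (Räcke-type) tree construction and then check the two cut inequalities and the hierarchical properties in the form stated. Concretely, \cite{ADK26} computes in $\widetilde O(|E|+|V|)$ time, with high probability, a laminar family $\mathcal L$ of vertex sets of depth $O(\log n)$. Each level is obtained by decomposing the contracted graph of the previous level into (boundary-linked) expanders, so the per-level loss is $O(\log n\log\log n)$ and, after folding in the $O(\log n)$ levels, the total loss is $O(\log^2 n\log\log n)$. The tree $\mathcal H$ is the rooted tree of $\mathcal L$: there is one internal node per set, the original vertices are attached as leaves, and the edge above $S\in\mathcal L$ is given weight $|\delta_G(S)|$. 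This gives $V\subseteq V_{\mathcal H}$, laminar sides $S_e$, weights $w(e)=|\delta_G(S_e)|$ and depth $O(\log n)$ directly.

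\emph{Lower bound.} Fix $X\subseteq V$ and any $U\subseteq V_{\mathcal H}$ separating $X$ from $V\setminus X$, and let $F=\delta_{\mathcal H}(U)$. Walking from the root to a leaf, membership in $U$ flips exactly at edges of $F$, exactly as in \eqref{eq:rooted_cut_parity}. Hence $X$ or its complement equals $\mathop{\triangle}_{e\in F}S_e$. The symmetric difference satisfies $\delta_G(A\triangle B)\subseteq\delta_G(A)\cup\delta_G(B)$. Therefore
\[
|\delta_G(X)|\le\sum_{e\in F}|\delta_G(S_e)|=w(F).
\]
This direction is purely combinatorial and needs no property of the hierarchy.

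\emph{Upper bound.} This is the substantive direction. I would argue through flows. It suffices to show that every edge of $G$ can be charged to the tree edges on its $\mathcal H$-path so that each tree edge $e$ receives total charge at most $\alpha(n)\,w(e)$. Then the cut that places each internal node on the side where the corresponding set $S$ has the majority of its boundary yields an $\mathcal H$-cut of weight at most $\alpha(n)|\delta_G(X)|$.

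The standard route is to route $G$ into $\mathcal H$ and then embed $\mathcal H$ back into $G$. By expansion at each level, any demand in which each cluster sends an amount proportional to its boundary can be routed inside the parent cluster with congestion equal to the per-level loss. Composing over the levels gives an embedding of $\mathcal H$ into $G$ with congestion $\alpha(n)$. Finally, apply the usual argument that an $\mathcal H$-min-cut of value $c$ certifies a $G$-cut of value at least $c/\alpha(n)$: the embedding maps any $\mathcal H$-cut separating $X$ onto a flow crossing $\delta_G(X)$ with congestion $\alpha(n)$.

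The main obstacle is this quality bound together with the running time. Both rest entirely on the near-linear-time expander-hierarchy machinery of \cite{ADK26}, with $O(\log^2 n\log\log n)$ quality and $O(\log n)$ depth. I would cite that result for the congestion guarantee rather than reprove it, and only verify the translation described above. I would check two things in particular: that the weights $w(e)$ may be taken to be exactly $|\delta_G(S_e)|$, since the lower bound relies on this, and that adding the original vertices as leaves does not affect depth or quality.
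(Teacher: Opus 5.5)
The paper gives no proof of this statement. It imports it wholesale from \cite{ADK26}, so citing that construction is the same approach as the paper. Your combinatorial check of the lower bound is correct: membership in $U$ flips exactly on $F=\delta_{\mathcal H}(U)$, so $X$ or $V\setminus X$ equals $\triangle_{e\in F}S_e$. Then $\delta_G(A\triangle B)\subseteq\delta_G(A)\cup\delta_G(B)$ gives $|\delta_G(X)|\le w(F)$, using only $w(e)=|\delta_G(S_e)|$.

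However, the reduction you state for the upper bound is false and should be dropped. The path between leaves $u,v$ uses a tree edge $e$ iff $uv\in\delta_G(S_e)$. So charging each edge of $G$ to the tree edges on its $\mathcal H$-path loads every $e$ with exactly $|\delta_G(S_e)|=w(e)$. Your criterion therefore holds with $\alpha=1$ for every hierarchy and cannot imply anything.

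Concretely, take the depth-one star with all of $V$ as leaves. Then $\mincut_{\mathcal H}(X,V\setminus X)=\min\{\sum_{v\in X}\deg_G(v),\sum_{v\notin X}\deg_G(v)\}$. If $G$ is two cliques joined by one edge and $X$ is one clique, this is $\Theta(n^2)$ while $|\delta_G(X)|=1$.

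That charging is just the congestion-one routing of $G$ into $\mathcal H$, i.e.\ it is the lower bound again. No majority rounding of the internal nodes can extract the upper bound from it.

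The upper bound needs the reverse direction, which your next paragraph gestures at but states garbled. Take a maximum flow in $\mathcal H$ from $X$ to $V\setminus X$; by max-flow/min-cut its value is $\mincut_{\mathcal H}(X,V\setminus X)$. Route it in $G$ with congestion $\alpha(n)$; this is the step where the expansion of the clusters is used. Since all of this flow must cross $\delta_G(X)$, you get $\mincut_{\mathcal H}(X,V\setminus X)\le\alpha(n)|\delta_G(X)|$.

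Since the paper takes the two-sided cut guarantee itself from \cite{ADK26}, you can cite it in that form directly. There is no need to pass through an intermediate congestion statement.
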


Let $\alpha:=\alpha(n)=O(\log^2 n\log\log n)$, and $\Lambda:=\lceil \alpha s\rceil$. We enlarge the hidden constant in \(\alpha\), if necessary, so that \(\alpha\ge 1\). Run \cref{thm:adk_tree_sparsifier} on \(G\), obtaining a hierarchical tree cut-sparsifier $\mathcal H=(V_{\mathcal H},E_{\mathcal H},w)$ of quality \(\alpha\). We identify each graph vertex \(v\in V\) with its corresponding singleton leaf in \(\mathcal H\).

\paragraph{Low tree edges.}
Call a tree edge \(e\in E_{\mathcal H}\) low if $w(e)\le \Lambda$. Let $F:=\{e\in E_{\mathcal H}: w(e)\le \Lambda\}$.

For \(e\in F\), let \(S_e\subseteq V\) be the side corresponding to \(e\) in the hierarchy. Either side induced by deleting \(e\) may be used, since the two sides define the same cut. Define its middle set in the original graph \(G\) by $\mu(e):=\End_G(\delta_G(S_e))$.

\begin{lemma}[Small Middle Sets]
\label{lem:low_edge_middle_small}
For every \(e\in F\), $|\mu(e)|\le 2w(e)\le 2\Lambda$.
\end{lemma}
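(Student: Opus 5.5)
The bound is a direct counting argument. By definition, $\mu(e)=\End_G(\delta_G(S_e))$ is the set of endpoints of the edges crossing the cut $(S_e,V\setminus S_e)$. Each edge of $\delta_G(S_e)$ contributes at most two endpoints, so
\[
    |\mu(e)|\le 2\,|\delta_G(S_e)|.
\]
Self-loops have been deleted, and loops never lie in $\delta_G(S_e)$ anyway, so no degenerate edges need separate handling. Parallel edges only make this bound looser, since they share endpoints.

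Next we use the hierarchical structure from \cref{thm:adk_tree_sparsifier}. The weight of a tree edge is exactly $w(e)=|\delta_G(S_e)|$. Substituting gives $|\mu(e)|\le 2w(e)$. Because $e\in F$ is low, $w(e)\le\Lambda$, which yields the second inequality $2w(e)\le 2\Lambda$.

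The only point needing care is that $\mu(e)$ is well defined regardless of which side of $e$ is chosen as $S_e$. This holds because $\delta_G(S_e)=\delta_G(V\setminus S_e)$, so the endpoint set and the weight are the same for either choice. There is no real obstacle here; the lemma follows directly from the definitions and the weight identity in \cref{thm:adk_tree_sparsifier}.
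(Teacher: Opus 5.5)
Your proposal is correct and matches the paper's proof: both bound $|\mu(e)|=|\End_G(\delta_G(S_e))|\le 2|\delta_G(S_e)|$, substitute the hierarchical weight identity $w(e)=|\delta_G(S_e)|$, and use $w(e)\le\Lambda$ for low edges. Your remarks about self-loops and about the choice of side of $e$ are harmless extra care that the paper leaves implicit.
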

\begin{proof}
The weight of \(e\) is $w(e)=|\delta_G(S_e)|$. Therefore
\[
    |\mu(e)| = |\End_G(\delta_G(S_e))| \le 2|\delta_G(S_e)| = 2w(e) \le 2\Lambda.
\]
\end{proof}

\paragraph{Raw Quotient.}
Delete all low edges \(F\) from \(\mathcal H\). Let \(\mathcal C\) be the set of connected components of $\mathcal H-F$. Contracting each component \(C\in\mathcal C\) gives a quotient tree $Q_{\rm raw}$. The edges of \(Q_{\rm raw}\) are in one-to-one correspondence with the low tree edges \(F\). For a component \(C\in\mathcal C\), let $V_C:=V\cap C$ be the set of original graph vertices whose singleton leaves lie in \(C\).

Define the raw bag
\[
    \beta_{\rm raw}(C) := V_C \cup \bigcup_{\substack{e\in F:\\ e\text{ incident with }C\text{ in }Q_{\rm raw}}} \mu(e).
\]

\begin{lemma}[Raw Quotient is a Tree Decomposition]
\label{lem:raw_td}
\((Q_{\rm raw},\beta_{\rm raw})\) is an ordinary tree decomposition of \(G\). Moreover, for every edge \(CD\in E(Q_{\rm raw})\) corresponding to the low tree edge \(e\in F\),
\[
    \beta_{\rm raw}(C)\cap \beta_{\rm raw}(D) \subseteq \mu(e).
\]
Consequently, the raw decomposition has adhesion at most \(2\Lambda\).
\end{lemma}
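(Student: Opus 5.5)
The plan is to work entirely inside the tree \(\mathcal H\) and its contraction \(Q_{\rm raw}\). Every \(v\in V\) is a leaf of \(\mathcal H\), so it lies in exactly one component \(C_v\in\mathcal C\). Deleting a low edge \(e\in F\) from \(\mathcal H\) splits the leaves \(V\) into \(S_e\) and \(V\setminus S_e\); equivalently, deleting the corresponding edge of \(Q_{\rm raw}\) splits the components. The basic observation I will use throughout is the following: for \(u,v\in V\), the edge \(e\in F\) separates \(u\) from \(v\) (i.e.\ \(|\{u,v\}\cap S_e|=1\)) if and only if \(e\) lies on the \(Q_{\rm raw}\)-path \(P(C_u,C_v)\). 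This holds because the \(\mathcal H\)-path between the leaves \(u\) and \(v\) projects onto that quotient path. Consequently, for \(uv\in E(G)\) and \(e\in F\), we have \(u,v\in\mu(e)\) exactly when \(e\in P(C_u,C_v)\).

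\textbf{Edge coverage.} Let \(uv\in E(G)\). If \(C_u=C_v\), both endpoints lie in \(V_{C_u}\subseteq\beta_{\rm raw}(C_u)\). Otherwise, let \(e\) be the first edge of \(P(C_u,C_v)\), which is incident with \(C_u\). By the observation, \(uv\in\delta_G(S_e)\), so \(u,v\in\mu(e)\subseteq\beta_{\rm raw}(C_u)\). Vertex coverage is immediate, since \(v\in V_{C_v}\).

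\textbf{Connectivity (the main step).} Fix \(v\) and let \(\mathcal O_v\) be the set of nodes whose bag contains \(v\). I will show that
\[
    \mathcal O_v=\{C_v\}\cup\bigcup_{u:\,uv\in E(G)}V\bigl(P(C_u,C_v)\bigr).
\]
For the inclusion \(\subseteq\): \(v\) enters a bag either through \(V_{C_v}\), or through some \(\mu(e)\) with \(e\) incident to that node. In the latter case \(v\) has a neighbor \(u\) with \(e\in P(C_u,C_v)\), so both endpoints of \(e\) lie on that path. For \(\supseteq\): every node on \(P(C_u,C_v)\) is incident to some edge of that path, and \(v\) lies in the \(\mu\)-set of each such edge. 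The right-hand side is a union of paths that all contain \(C_v\), so it is connected. This is the step that needs care, because one must check both inclusions to get an exact description of \(\mathcal O_v\), not merely containment.

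\textbf{Adhesion.} Let \(CD\) correspond to \(e\in F\), and let \(x\in\beta_{\rm raw}(C)\cap\beta_{\rm raw}(D)\). Then \(C,D\in\mathcal O_x\). Since \(\mathcal O_x\) is a subtree of \(Q_{\rm raw}\), it contains the edge \(CD\), that is, \(e\). By the description of \(\mathcal O_x\) above, \(e\) lies on some path \(P(C_u,C_x)\) with \(ux\in E(G)\). Hence \(x\in\mu(e)\). Applying \cref{lem:low_edge_middle_small} gives \(|\mu(e)|\le2\Lambda\), which is the claimed adhesion bound.
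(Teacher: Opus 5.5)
Your proposal is correct and follows the paper's proof essentially step for step. It uses the same explicit description of the occurrence set \(\{C_v\}\cup\bigcup_{uv\in E(G)}V(P(C_u,C_v))\), the same two-inclusion argument based on the fact that a low edge separates \(u\) from \(v\) exactly when it lies on the quotient path between their components, the same edge-coverage argument, and the same adhesion argument. Your adhesion step is slightly more explicit than the paper's, since you note that the edge \(CD\) of the occurrence subtree must lie on one of the paths \(P(C_u,C_x)\), which is exactly why \(x\in\mu(e)\).
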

\begin{proof}
We first prove connectedness of vertex occurrences. Fix \(v\in V\), and let \(C_v\in V(Q_{\rm raw})\) be the component containing the singleton leaf \(v\). Let
\[
    T_v := \{C_v\} \cup \bigcup_{vu\in E(G)} V\bigl(P_{Q_{\rm raw}}(C_v,C_u)\bigr),
\]
where \(P_{Q_{\rm raw}}(C_v,C_u)\) denotes the unique path in \(Q_{\rm raw}\) from \(C_v\) to \(C_u\). We prove that \(T_v\) is exactly the set of raw bags containing \(v\).

First let \(C\in T_v\). If \(C=C_v\), then $v\in V_{C_v}\subseteq \beta_{\rm raw}(C_v)$. Otherwise, there is an edge \(vu\in E(G)\) such that \(C\) lies on the path \(P_{Q_{\rm raw}}(C_v,C_u)\). Then \(C\) is incident, along this path, with some low edge \(e\in F\). The corresponding tree edge lies on the unique path in \(\mathcal H\) between the singleton leaves \(v\) and \(u\), and therefore the side \(S_e\) separates \(v\) and \(u\). Thus $vu\in\delta_G(S_e)$, so $v\in\mu(e)\subseteq\beta_{\rm raw}(C)$.

Conversely, suppose \(v\in\beta_{\rm raw}(C)\). If \(v\in V_C\), then \(C=C_v\), so \(C\in T_v\). Otherwise, \(v\in\mu(e)\) for some low edge \(e\in F\) incident with \(C\) in \(Q_{\rm raw}\). By the definition of \(\mu(e)\), there is an edge \(vu\in E(G)\) crossing the cut \(S_e\). Equivalently, \(e\) lies on the path in \(\mathcal H\) between the singleton leaves \(v\) and \(u\), and therefore the quotient edge corresponding to \(e\) lies on \(P_{Q_{\rm raw}}(C_v,C_u)\). Since \(C\) is incident with this quotient edge, \(C\in T_v\).

Thus the raw bags containing \(v\) are exactly \(T_v\). Since \(T_v\) is a union of paths all containing \(C_v\), it is connected.

Next we prove edge coverage. Let \(uv\in E(G)\). If \(C_u=C_v\), then $u,v\in V_{C_u}\subseteq \beta_{\rm raw}(C_u)$, so the edge is covered. Otherwise, the path from \(C_u\) to \(C_v\) in \(Q_{\rm raw}\) contains at least one low edge \(e\in F\). The corresponding tree edge separates the singleton leaves \(u\) and \(v\), so $uv\in\delta_G(S_e)$, and therefore $u,v\in\mu(e)$. Since \(\mu(e)\) is included in the two bags incident with \(e\), the edge \(uv\) is covered.

It remains to prove the adhesion statement. Let \(CD\in E(Q_{\rm raw})\), and let \(e\in F\) be the corresponding low tree edge. Suppose $v\in \beta_{\rm raw}(C)\cap\beta_{\rm raw}(D)$. By the connectedness just proved, the raw bags containing \(v\) form a connected subtree of \(Q_{\rm raw}\). Since this subtree contains the adjacent nodes \(C\) and \(D\), it crosses the edge \(CD\). Equivalently, \(v\in\mu(e)\). Thus $\beta_{\rm raw}(C)\cap\beta_{\rm raw}(D)\subseteq\mu(e)$. By \cref{lem:low_edge_middle_small}, $|\beta_{\rm raw}(C)\cap\beta_{\rm raw}(D)| \le |\mu(e)| \le 2\Lambda$.
\end{proof}

\paragraph{Compression.}
We compress the raw quotient \(Q_{\rm raw}\) by repeatedly deleting empty leaves and suppressing empty degree-two nodes. Let the resulting tree be \(Q\). Thus every node of \(Q\) either contains at least one original graph vertex, or has degree at least \(3\).

Each edge \(\varepsilon=CD\in E(Q)\) represents a path in the raw quotient whose edges correspond to low edges of the hierarchy. For each endpoint \(C\) of \(\varepsilon\), we select the first raw low edge on this path incident with \(C\). For a retained node \(C\in V(Q)\), let \(I(C)\) be the set of selected first raw low edges over all compressed edges incident with \(C\). Define
\[
    \beta(C) := V_C \cup \bigcup_{e\in I(C)}\mu(e).
\]
The resulting compressed quotient is \((Q,\beta)\). The detailed compression construction and the proof of the following lemma are deferred to \cref{app:compression}.

\begin{lemma}[Compressed Quotient is a Tree Decomposition]
\label{lem:compressed_td}
We have $|V(Q)|+|E(Q)|=O(n)$ and $\beta(C)\subseteq \beta_{\rm raw}(C)$ for all $C$. \((Q,\beta)\) is a tree decomposition of \(G\) and its adhesion is at most \(2\Lambda\).
\end{lemma}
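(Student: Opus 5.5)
The plan is to prove the three claims separately, reusing \cref{lem:raw_td}. The basic device is that a compressed edge \(\varepsilon=CD\in E(Q)\) corresponds to a path \(P_\varepsilon\) in \(Q_{\rm raw}\). Its internal raw nodes were deleted or suppressed, so they are empty (\(V_X=\emptyset\)). We also use the fact that raw nodes removed as empty leaves together with their subtrees hang off retained nodes or off suppressed paths.

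\textbf{Size bound.} After compression every leaf of \(Q\) is nonempty, so \(Q\) has at most \(n\) leaves. Every empty node of \(Q\) has degree at least \(3\). Nonempty nodes are at most \(n\), since the sets \(V_C\) are disjoint. In a tree, the number of nodes of degree \(\ge 3\) is less than the number of leaves. Hence \(|V(Q)|\le 2n\) and \(|E(Q)|=|V(Q)|-1\).

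\textbf{Containment \(\beta(C)\subseteq\beta_{\rm raw}(C)\).} A selected first raw low edge of a compressed edge incident with \(C\) is a raw edge incident with \(C\) in \(Q_{\rm raw}\). Therefore \(\mu(e)\subseteq\beta_{\rm raw}(C)\).

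\textbf{Tree decomposition, edge coverage.} Take \(uv\in E(G)\). If \(C_u=C_v\) the edge is covered, since retained nodes keep \(V_C\) and nonempty nodes are never removed. Otherwise the raw path \(C_u\to C_v\) passes only through retained nodes and through suppressed paths; deleted leaf subtrees are never interior to a path between nonempty nodes. Consider the first compressed edge \(\varepsilon\) on this path, starting at \(C_u\). Its selected first raw low edge \(e\) at \(C_u\) lies on the raw \(u\)–\(v\) path, so \(uv\in\delta_G(S_e)\). Hence \(u,v\in\mu(e)\subseteq\beta(C_u)\).

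\textbf{Tree decomposition, vertex connectivity.} This is the main obstacle. Fix \(v\), and let \(T'_v\) be the set of retained nodes whose bag contains \(v\). We show that \(T'_v\) equals the set of retained nodes \(C\) that either equal \(C_v\) or have a selected edge \(e\in I(C)\) lying on a raw path \(P(C_v,C_u)\) for some neighbor \(u\) of \(v\).

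Suppose \(C\neq C_v\) is such a node, and consider the compressed path from \(C\) toward \(C_v\). Each intermediate retained node \(D\) lies on the raw path from \(C\) to \(C_v\), and in fact on \(P(C_v,C_u)\). Along this path, the first raw edges out of \(D\) in the direction of \(C\) are selected edges of \(D\). Those edges also lie on \(P(C_v,C_u)\) and thus separate \(v\) from \(u\). So \(v\) belongs to every bag on the compressed path, and \(T'_v\) is connected because it contains \(C_v\).

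One subtlety remains. The selected edge at \(C\) lies on \(P(C_v,C_u)\) only if \(C_u\) is not inside the component on \(C\)'s side. If \(C_u\) sits in a subtree that was deleted as an empty leaf, this cannot happen, since \(C_u\) is nonempty. So the path argument goes through, and this case analysis is the part I would write out carefully.

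\textbf{Adhesion.} For adjacent retained \(C,D\), both bags are contained in the raw bags. By connectivity of the raw occurrence subtrees, any \(v\in\beta(C)\cap\beta(D)\) lies in every raw bag along \(P_\varepsilon\). In particular \(v\) lies in \(\beta_{\rm raw}(C)\cap\beta_{\rm raw}(C')\), where \(C'\) is the next raw node. By \cref{lem:raw_td}, \(v\in\mu(e)\) for the first raw edge \(e\). Hence the adhesion is at most \(|\mu(e)|\le 2\Lambda\) by \cref{lem:low_edge_middle_small}.
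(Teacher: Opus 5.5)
Your proposal is correct and is essentially the paper's proof: the same leaf/branching-node count for the size bound, the same first-raw-low-edge argument for edge coverage and for connectivity of each vertex's occurrence set, and the same appeal to raw connectivity plus the adhesion statement of \cref{lem:raw_td} for the \(2\Lambda\) bound. For connectivity, the paper phrases it as the identity that the bags containing \(v\) are exactly \(\{C_v\}\cup\bigcup_{vu\in E(G)}V(P_Q(C_v,C_u))\), whereas you show directly that each occurrence node is joined to \(C_v\) by a compressed path of occurrence nodes. Your closing ``subtlety'' is moot: that the selected edge \(e\in I(C)\) lies on \(P(C_v,C_u)\) is exactly what \(v\in\mu(e)\) gives you, so no case analysis about deleted leaf subtrees is needed there.
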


\begin{lemma}[Bag Unbreakability]
\label{lem:bags_unbreakable}
Every bag \(\beta(C)\), \(C\in V(Q)\), is $(2\Lambda,s)$-edge-unbreakable in $G$.
\end{lemma}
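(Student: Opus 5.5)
The plan is to translate the small cut in \(G\) into a small cut of the hierarchy \(\mathcal H\) via \cref{thm:adk_tree_sparsifier}, observe that such a cut uses only low edges, and then charge every bag vertex on the ``wrong'' side to a middle set \(\mu(d)\) of one of those cut edges. Fix \(C\in V(Q)\) and an \(s\)-cut \((L,V\setminus L)\) of \(G\). Since \(\beta(C)\subseteq\beta_{\rm raw}(C)\) by \cref{lem:compressed_td}, it suffices to show that one side of the cut meets \(\beta_{\rm raw}(C)\) in at most \(2\Lambda\) vertices.

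\emph{Step 1: the cut becomes a set of low edges.} By \cref{thm:adk_tree_sparsifier},
\[
\mincut_{\mathcal H}(L,V\setminus L)\le\alpha|\delta_G(L)|\le\alpha s\le\Lambda .
\]
Take a minimizing \(U\) and put \(D:=\delta_{\mathcal H}(U)\), so \(w(D)\le\Lambda\). Every edge of \(D\) then has weight at most \(\Lambda\), so \(D\subseteq F\). Because \(D\) separates every leaf of \(L\) from every leaf of \(V\setminus L\), each component of \(\mathcal H-F\) lies inside a single component of \(\mathcal H-D\). In particular, each \(V_{C'}\) lies entirely on one side of the cut. The raw node \(C\) (possibly with \(V_C=\emptyset\)) lies on the \(U\)-side or on the complement side; say it lies on the \(U\)-side, so any \(C'\) on the \(U\)-side has \(V_{C'}\subseteq L\).

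\emph{Step 2: charge wrong-side vertices to middle sets of \(D\).} Let \(w\in\beta_{\rm raw}(C)\setminus L\). Then \(C_w\) lies on the non-\(U\) side, so the path in \(Q_{\rm raw}\) from \(C_w\) to \(C\) contains a quotient edge corresponding to some \(d\in D\). The proof of \cref{lem:raw_td} shows that the raw bags containing \(w\) form a connected subtree \(T_w\) containing both \(C_w\) and \(C\). Hence \(T_w\) uses the quotient edge of \(d\), and the adhesion part of that proof gives \(w\in\mu(d)\). Combining this with \cref{lem:low_edge_middle_small},
\[
|\beta_{\rm raw}(C)\setminus L|\le\sum_{d\in D}|\mu(d)|\le 2\sum_{d\in D}w(d)\le 2\Lambda .
\]
The symmetric case, where \(C\) lies on the complement side, bounds \(|\beta_{\rm raw}(C)\cap L|\) in the same way. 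Therefore the cut is not a \((2\Lambda,s)\)-breakable witness.

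The main obstacle is Step 2. A single middle set \(\mu(e)\) for \(e\) incident to \(C\) need not be small relative to the cut, so the argument must route through the cut edges \(D\) rather than through the edges incident to \(C\). The connectivity of \(T_w\) from \cref{lem:raw_td} is exactly what makes this routing work. A minor point to handle is that \(C\) may have no original vertices. This is harmless because the side assignment is by the component of \(\mathcal H-D\) containing \(C\), not by \(V_C\).
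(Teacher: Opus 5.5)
Your proof is correct and follows the paper's argument essentially step for step. You lift the $s$-cut to a tree cut $U$ with $w(\delta_{\mathcal H}(U))\le\Lambda$, note that it uses only low edges so every raw component is monochromatic, and charge each wrong-side bag vertex to $\mu(h)$ for a low edge $h\in\delta_{\mathcal H}(U)$ on the raw path, using connectivity of raw occurrence sets and the adhesion bound of \cref{lem:raw_td}; bounding $\beta_{\rm raw}(C)\supseteq\beta(C)$ first is the same containment the paper relies on.

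One wording fix: components of $\mathcal H-F$ lie inside components of $\mathcal H-D$ because $D\subseteq F$, not because $D$ separates $L$ from $V\setminus L$.
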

\begin{proof}
Let \(Z\subseteq V\) be any cut with $|\delta_G(Z)|\le s$. By the upper side of the cut-sparsifier guarantee of \cref{thm:adk_tree_sparsifier}, there exists a tree cut \(U\subseteq V_{\mathcal H}\) such that $U\cap V=Z$ and
\[
    w(\delta_{\mathcal H}(U)) \le \alpha |\delta_G(Z)| \le \alpha s \le \Lambda.
\]

Every tree edge inside a raw component of \(\mathcal H-F\) has weight strictly larger than \(\Lambda\). Since the total \(\mathcal H\)-capacity of \(\delta_{\mathcal H}(U)\) is at most \(\Lambda\), the cut \(U\) cannot cross any such high edge. Therefore every raw component \(C\in\mathcal C\) is monochromatic with respect to \(U\): either $C\subseteq U$ or $C\cap U=\emptyset$. In particular, every retained node \(C\in V(Q)\) is monochromatic.

Fix \(C\in V(Q)\). We prove that one side of \(Z\) contains at most \(2\Lambda\) vertices of \(\beta(C)\).

First suppose $C\subseteq U$, then $V_C\subseteq Z$. We claim that $|\beta(C)\setminus Z|\le 2\Lambda$. Take \(v\in\beta(C)\setminus Z\). Since \(V_C\subseteq Z\), we have \(v\notin V_C\). Hence \(v\in\mu(e)\) for some first raw low edge \(e\in I(C)\). In particular, by \(\beta(C)\subseteq\beta_{\rm raw}(C)\), $v\in\beta_{\rm raw}(C)$. Let \(C_v\) be the raw component containing the singleton leaf \(v\). Since \(v\notin Z=U\cap V\), and raw components are monochromatic, we have $C_v\cap U=\emptyset$. By \cref{lem:raw_td}, the raw bags containing \(v\) form a connected subtree of \(Q_{\rm raw}\). This subtree contains both \(C\) and \(C_v\). The raw path from \(C\) to \(C_v\) must cross the tree cut \(U\). Let \(h\) be a raw quotient edge on this path crossing \(U\). Since both endpoints of \(h\) lie in the raw occurrence subtree of \(v\), the adhesion statement of \cref{lem:raw_td} gives $v\in\mu(h)$. Moreover, $h\in F\cap \delta_{\mathcal H}(U)$. Consequently,
\[
    \beta(C)\setminus Z \subseteq \bigcup_{h\in F\cap\delta_{\mathcal H}(U)}\mu(h).
\]
Using \cref{lem:low_edge_middle_small}, we get
\[
    |\beta(C)\setminus Z| \le \sum_{h\in F\cap\delta_{\mathcal H}(U)}|\mu(h)| \le 2\sum_{h\in F\cap\delta_{\mathcal H}(U)}w(h) \le 2w(\delta_{\mathcal H}(U)) \le 2\Lambda.
\]

The case \(C\cap U=\emptyset\) is symmetric. Then $V_C\cap Z=\emptyset$, and the same argument gives $|\beta(C)\cap Z|\le 2\Lambda$. Thus for every \(s\)-cut \(Z\), one of the two sides contains at most \(2\Lambda\) vertices of \(\beta(C)\). Hence \(\beta(C)\) is \((2\Lambda,s)\)-edge-unbreakable in \(G\).
\end{proof}

\begin{lemma}[Decomposition Depth]
\label{lem:decomposition_depth}
The decomposition tree \(Q\) has depth \(O(\log n)\).
\end{lemma}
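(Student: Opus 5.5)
The plan is to root \(Q\) at the node containing the root of the hierarchy \(\mathcal H\) and show that every root-to-leaf path in \(Q\) has length bounded by the depth of \(\mathcal H\). By \cref{thm:adk_tree_sparsifier}, that depth is \(O(\log n)\).

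First, root \(\mathcal H\) at its hierarchy root \(r_{\mathcal H}\). Deleting the low edges \(F\) splits \(\mathcal H\) into components, and each component \(C\in\mathcal C\) is a connected subtree of the rooted tree \(\mathcal H\). It therefore has a unique topmost node \(\operatorname{top}(C)\). Root \(Q_{\rm raw}\) at the component containing \(r_{\mathcal H}\).

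A raw quotient edge \(CD\) with \(D\) the child corresponds to the low edge \(e\in F\) from \(\operatorname{top}(D)\) to its parent in \(\mathcal H\), and that parent lies in \(C\). Consequently, a root-to-node path \(C_0,C_1,\ldots,C_\ell\) in \(Q_{\rm raw}\) lifts to a monotone downward path in \(\mathcal H\): from \(r_{\mathcal H}\), go down within \(C_0\) to the parent of \(\operatorname{top}(C_1)\), cross the low edge, and continue. This lifted path uses at least \(\ell\) distinct tree edges of \(\mathcal H\), one low edge per quotient edge. Hence \(\ell\le\operatorname{depth}(\mathcal H)=O(\log n)\), so \(Q_{\rm raw}\) has depth \(O(\log n)\).

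Second, I will argue that compression does not increase depth, and I expect this to be the main bookkeeping obstacle. Deleting empty leaves only removes nodes. Suppressing an empty degree-two node \(C\) replaces the two edges at \(C\) by one edge. If \(C\) is not the root, its two neighbours are its parent and its unique child, so every root-to-leaf path through \(C\) shrinks by one and no other path changes. If \(C\) is the root of degree two, one of its neighbours becomes the new root. The depth measured from that neighbour is at most the old depth, because every node lies below one of the two neighbours and was previously one step farther from the root.

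Induction over the sequence of compression operations then shows that the depth of \(Q\) is at most the depth of \(Q_{\rm raw}\). To be careful, I would use the description of compression in \cref{app:compression}: each compressed edge of \(Q\) represents a raw path. I would check that, under the induced rooting, each such path is vertical, meaning it goes from an ancestor to a descendant in \(Q_{\rm raw}\). This holds because the suppressed interior nodes have degree two and lie on a single root-ward chain. Root-to-leaf paths in \(Q\) then correspond injectively to subsequences of root-to-leaf paths in \(Q_{\rm raw}\), which gives \(\operatorname{depth}(Q)\le\operatorname{depth}(Q_{\rm raw})=O(\log n)\).
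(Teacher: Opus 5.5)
Your proof is correct, but it takes a different route from the paper. The paper never roots \(Q_{\rm raw}\) via \(\mathcal H\). It observes that every simple path in \(Q_{\rm raw}\) lifts to a simple path in \(\mathcal H\), so it has at most \(\operatorname{diam}(\mathcal H)\le 2d\) edges. It then notes that deleting empty leaves and suppressing empty degree-two nodes never lengthens a path between retained nodes. Hence every path of \(Q\) has at most \(2d\) edges, and \(Q\) has depth at most \(2d\) under \emph{any} choice of retained root.

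You instead root \(Q_{\rm raw}\) at the component of \(r_{\mathcal H}\), identify each non-root component's parent edge as the low edge above \(\operatorname{top}(D)\), and lift root paths to downward paths in \(\mathcal H\). This gives the sharper bound \(d\), but only for one particular root, which you then have to carry through the compression. The diameter argument is rooting-independent, so the case where the raw root is empty and disappears never arises. It also directly supports ``root \(Q\) as in \cref{lem:decomposition_depth}'' in \cref{thm:unbreakable_decomp} with an arbitrary retained root.

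The root-tracking is also where your write-up is slightly off, in two places.
\begin{itemize}
\item The raw root may be an empty node of degree one, which the compression deletes as a leaf. You need the same re-rooting there: its unique neighbour becomes the root, and the depth drops.
\item More importantly, your ``to be careful'' verticality claim is false exactly in the re-rooting case. Suppose the empty raw root \(C_0\) ends up with two neighbours \(A,B\) and is suppressed. The compressed edge \(AB\) then represents the raw path \(A,C_0,B\), which goes up and then down. Paths of \(Q\) from the new root \(A\) into the subtree of \(B\) are not subsequences of root-to-leaf paths of \(Q_{\rm raw}\).
\end{itemize}
Your primary step-by-step argument (depth from the new root is at most the old depth) handles this case correctly, so the conclusion stands. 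Drop the verticality claim, or restrict it to the case where \(C_0\) is retained.
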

\begin{proof}
Let the rooted hierarchy \(\mathcal H\) have depth
\(d=O(\log n)\).  Every simple path in the raw quotient lifts to a
simple path in \(\mathcal H\): expand each contracted high-edge
component along the unique hierarchy path between the incident low
edges.  Consequently, every raw-quotient path has at most
\(\operatorname{diam}(\mathcal H)\le2d\) edges.

Deleting empty leaves and suppressing empty degree-two nodes cannot
increase the number of edges on a path between retained nodes.
Therefore every path in \(Q\) has length at most \(2d\).  Rooting \(Q\)
at any retained node gives depth \(O(\log n)\).
\end{proof}

\begin{lemma}[Output Size]
\label{lem:output_size}
The compressed decomposition \((Q,\beta)\) has
\[
    \sum_{C\in V(Q)}|\beta(C)|=O(\Lambda n)
\]
total bag incidences. Moreover,
\[
    \sum_{C\in V(Q)} |I(C)|=O(n),
\]
and therefore
\[
    \left|\bigcup_{C\in V(Q)} I(C)\right|=O(n).
\]
\end{lemma}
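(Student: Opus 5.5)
We prove the second bound first and then derive the first from it. By \cref{lem:compressed_td}, \(|E(Q)|=O(n)\). Each compressed edge \(\varepsilon=CD\in E(Q)\) contributes exactly one selected first raw low edge to \(I(C)\) and one to \(I(D)\). Hence
\[
    \sum_{C\in V(Q)}|I(C)|\le 2|E(Q)|=O(n).
\]
The union \(\bigcup_C I(C)\) is no larger than this sum, which gives the third claim immediately.

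For the total bag size, we bound \(|\beta(C)|\) directly from its definition:
\[
    |\beta(C)|\le |V_C|+\sum_{e\in I(C)}|\mu(e)|.
\]
The sets \(V_C\) are disjoint over the components \(C\in\mathcal C\), because each original vertex's singleton leaf lies in exactly one component of \(\mathcal H-F\). Consequently \(\sum_C|V_C|\le n\). Every \(e\in I(C)\) is a low edge, so \(|\mu(e)|\le 2\Lambda\) by \cref{lem:low_edge_middle_small}. Summing over \(C\) and applying the bound \(\sum_C|I(C)|=O(n)\) gives
\[
    \sum_{C}|\beta(C)|\le n+2\Lambda\cdot O(n)=O(\Lambda n),
\]
where we use \(\Lambda\ge1\).

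The only step that needs care is \(|E(Q)|=O(n)\). It is stated in \cref{lem:compressed_td}, but the reason it holds is worth recording. After compression, every node of \(Q\) either contains an original vertex or has degree at least \(3\). Every leaf is therefore nonempty, so there are at most \(n\) leaves. A tree whose internal nodes of degree at most \(2\) are all nonempty has \(O(\#\text{leaves}+\#\text{nonempty nodes})=O(n)\) nodes. Once this is in hand, the rest is the straightforward counting above.
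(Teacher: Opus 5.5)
Your proposal is correct and follows the paper's proof essentially step for step. Like the paper, you use the disjointness of the sets \(V_C\), the bound \(\sum_C|I(C)|\le 2|E(Q)|=O(n)\), and the middle-set bound \(|\mu(e)|\le 2\Lambda\); your remark on why \(|E(Q)|=O(n)\) restates the counting in the appendix proof of \cref{lem:compressed_td}.
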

\begin{proof}
The sets \(V_C\), over retained nodes \(C\in V(Q)\), are pairwise disjoint subsets of \(V\). Therefore $\sum_{C\in V(Q)} |V_C| \le n$. Also \(|E(Q)|=O(n)\). Each compressed edge contributes at most two first raw low edges, one to each endpoint. Hence $\sum_{C\in V(Q)} |I(C)| \le 2|E(Q)|=O(n)$. This also gives
\[
    \left|\bigcup_{C\in V(Q)} I(C)\right|=O(n).
\]
By \cref{lem:low_edge_middle_small}, every middle set included in a compressed bag has size at most \(2\Lambda\). Therefore
\[
    \sum_{C\in V(Q)}|\beta(C)| \le n + \sum_{C\in V(Q)}\sum_{e\in I(C)}|\mu(e)| \le n+2\Lambda\sum_{C\in V(Q)}|I(C)| = O(\Lambda n).
\]
\end{proof}

\noindent \textbf{Note:} after constructing the compressed quotient decomposition, we apply a cleanup routine which compactifies the decomposition and suppresses redundant bags. Compactification is useful because it simplifies the interaction of a bag with its parent, making the dynamic programming of \cref{alg:fpt_min_k_cut} cleaner. Redundancy removal is useful because it keeps the decomposition size linear and prevents the DP from spending time on nodes whose states are identical to, or subsumed by, their parent states. The statement of this cleanup routine is given below, and its proof is deferred to \cref{app:compactification}. The theorem is stated for the final decomposition after this cleanup has been applied.

\begin{theorem}[Compactification and Redundancy Removal]
\label{thm:compactify_and_suppress}
Let \((\tau,\beta)\) be a rooted tree decomposition of \(G\), with root adhesion empty. Let
\[
    M:=|V(\tau)|+|E(G)|+\sum_{t\in V(\tau)}|\beta(t)|, \qquad \eta:=\max\{|\sigma_\tau(t)|:t\ne r\},
\]
with \(\eta:=0\) if \(\tau\) has only one node. There is an algorithm which runs in
\[
    \widetilde O(M+\eta n)
\]
time and outputs a compact rooted tree decomposition \((\widehat\tau,\widehat\beta)\) of \(G\) such that:
\begin{enumerate}
    \item every output bag is a subset of some input bag.
    \item every output adhesion is either empty or a subset of some input adhesion.
    \item no non-root output bag is contained in its parent bag.
    \item if the input tree has depth \(d\), then the output tree has depth at most \(d+1\).
    \item
    \[
        |V(\widehat\tau)|\le n+1, \qquad \sum_{x\in V(\widehat\tau)}|\widehat\beta(x)| = O((\eta+1)n).
    \]
\end{enumerate}
Consequently, if every input bag is \((q,s)\)-edge-unbreakable and every input adhesion has size at most \(\eta\), then every output bag is \((q,s)\)-edge-unbreakable and every output adhesion has size at most \(\eta\).
\end{theorem}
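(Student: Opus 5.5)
We will obtain \((\widehat\tau,\widehat\beta)\) in two passes: first a compactification pass that splits each subtree into connected pieces and trims adhesions, then a suppression pass that contracts bags contained in their parents. The final "consequently" clause will follow from properties 1 and 2. A subset of a \((q,s)\)-edge-unbreakable set is again \((q,s)\)-edge-unbreakable, since \(|L\cap X'|\le|L\cap X|\) for \(X'\subseteq X\). Adhesion sizes can only shrink, because every output adhesion is empty or contained in an input adhesion.

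\textbf{Compactification.} Process \(\tau\) top-down, using the standard construction of Bodlaender et al./Cygan et al. For a non-root node \(t\), let \(D_1,\dots,D_j\) be the connected components of \(G[\alpha(t)]\). For each \(D_i\), create a copy \(t_i\) of the subtree \(\tau_t\). In the copy, intersect every bag with \(D_i\cup N_G(D_i)\), and set the new adhesion of \(t_i\) to \(N_G(D_i)\subseteq\sigma(t)\). We then recurse inside each copy.

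The tree decomposition axioms are preserved. Every edge with an endpoint in \(D_i\) has its other endpoint in \(D_i\cup N_G(D_i)\). Intersecting all bags of a subtree with a fixed set preserves connectivity of occurrence sets. A vertex of \(N_G(D_i)\) lies in \(\sigma(t)\), hence in the parent bag, so its occurrences stay connected across the new edge. Compactness holds by construction, since \(G[D_i]\) is connected and \(N_G(D_i)\) is the adhesion. Properties 1 and 2 hold since we only intersect. Empty copies are discarded.

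Copying whole subtrees naively could blow up the size, so we implement it implicitly. Each vertex \(v\) lies in exactly one component \(D\) at each node whose \(\alpha\) contains it. A node copy is therefore determined by a pair (original node, component), and only pairs meeting some vertex of \(\alpha\) or of its adhesion need to be materialized. We would compute the components bottom-up by merging with a union-find over the edges of \(G_t\). Each vertex then appears in \(O(1+\eta)\) materialized bags per level of the relevant chain. This is the step I expect to be the main obstacle: bounding the total work and output size by \(\widetilde O(M+\eta n)\), and not by something multiplied by depth or by the number of copies.

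\textbf{Redundancy removal.} Whenever a non-root bag satisfies \(\widehat\beta(x)\subseteq\widehat\beta(p(x))\), contract \(x\) into its parent, reattaching its children to \(p(x)\). Occurrence sets stay connected, edges stay covered, and the children's adhesions are unchanged. For compactness, \(\alpha\) of each reattached child is unchanged, so the decomposition remains compact. The depth does not increase, and the \(+1\) in property 4 accounts for a possibly new root created to hold empty-adhesion components, e.g.\ when \(G\) is disconnected.

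\textbf{Size bounds.} After suppression, every non-root node \(x\) has a vertex in \(\widehat\beta(x)\setminus\widehat\beta(p(x))\). By compactness, that vertex lies in \(\alpha(x)\setminus\sigma(x)\). By the connectivity property, each vertex is introduced (appears without its parent) at exactly one node. This gives \(|V(\widehat\tau)|\le n+1\).

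For the bag sizes, write \(\widehat\beta(x)=\sigma(x)\cup(\text{vertices introduced at }x)\). Then
\[
\sum_x|\widehat\beta(x)|\le n+\sum_x|\sigma(x)|\le n+(n+1)\eta=O((\eta+1)n).
\]
The containment tests use hashing or sorting of bags and run in time linear in the total bag size, which fits within the stated budget.
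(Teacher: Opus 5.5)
Your structural plan is sound, and once the copies are identified correctly it produces the paper's decomposition. The copy of $u$ for a component $D$ of $G[\alpha(u)]$ has bag $(\beta(u)\cap D)\cup N_G(D)$ with $N_G(D)\subseteq\sigma(u)$, which is the paper's event bag. Your counting for $|V(\widehat\tau)|\le n+1$ and for the $O((\eta+1)n)$ volume (each vertex introduced at one node, adhesions of size at most $\eta$) is also the paper's argument.

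\textbf{The gap.} The $\widetilde O(M+\eta n)$ running time is an explicit part of the statement, and you do not prove it. As described, your construction cannot meet it:
\begin{itemize}
\item Explicit copying creates one copy of $u$ per component of $G[\alpha(u)]$. For an edgeless graph and a path-shaped $\tau$ introducing one vertex per node, that is $\Theta(n^2)$ nodes while $M=O(n)$.
\item Your restriction to ``pairs meeting some vertex of $\alpha$ or of its adhesion'' removes nothing, since every pair $(u,D)$ meets $\alpha(u)$.
\item Even with the right set of copies, computing each $N_G(D)$ by scanning $D$ costs $\sum\mathrm{vol}(D)$ over nested components. This is already $\Theta(n^2)$ for a path graph with its natural path decomposition ($M=O(n)$, $\eta=1$). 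Your estimate of $O(1+\eta)$ bags ``per level of the chain'' carries exactly this depth factor.
\item The union--find must track components of $G[\alpha(t)]$, not of $G_t$. The edges of $G_t$ into $\sigma(t)$ glue distinct components together.
\end{itemize}

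\textbf{What is missing.}
\begin{enumerate}
\item[(i)] The right materialization criterion. Keep $(u,D)$ only if $D\cap\beta(u)\ne\emptyset$, i.e.\ $D$ contains a vertex whose topmost bag is $u$; there are at most $n$ such pairs.
\begin{itemize}
\item Every other non-root copy has bag $N_G(D)$. This lies in its parent copy's bag, because $N_G(D)\subseteq\sigma(u)\subseteq\beta(p(u))$ and $N_G(D)\subseteq D'\cup N_G(D')$ for the parent component $D'\supseteq D$.
\item So dropping such copies and attaching kept copies to their nearest kept ancestor is exactly a sequence of your suppressions.
\item Conversely, a kept copy contains a vertex absent from every ancestor bag, so it is never suppressed, and property~3 comes for free.
\end{itemize}
\item[(ii)] A single postorder pass that produces these components, their boundaries and their nearest kept ancestors.
\begin{itemize}
\item Use union--find, activating $v$ at $\operatorname{top}(v)$ and $uv$ at the shallower of $\operatorname{top}(u),\operatorname{top}(v)$.
\item Each component carries a boundary dictionary of not-yet-activated neighbours and a list of its maximal kept descendants.
\item Meld these small-to-large, deleting lazily the entries that have become active.
\end{itemize}
This costs $\widetilde O(M)$ plus $\sum|N_G(D)|\le\eta n$ for output. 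The total kept volume is then at most $n+\eta n$, so any remaining containment check is linear in it.
\end{enumerate}
This is precisely the paper's event-forest construction (\cref{alg:compactify_and_suppress} and \cref{lem:compactify_time}). Without (i) and (ii), your argument establishes compactness and properties 1--5, but not the claimed running time.
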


\begin{theorem}[Near-Linear Edge-Unbreakable Decomposition]
\label{thm:unbreakable_decomp}
Let \(G=(V,E)\) be an unweighted undirected multigraph on \(n\) vertices and \(m\) edges, and let \(s\ge 1\). There is a randomized algorithm which, with high probability, constructs a compact rooted tree decomposition \((\tau,\chi)\) of \(G\) with adhesion at most $2\Lambda = O(s\log^2 n\log\log n)$, such that every bag is $(2\Lambda,s)$-edge-unbreakable. Moreover, the output decomposition satisfies
\[
    |V(\tau)|=O(n), \qquad \sum_{C\in V(\tau)}|\chi(C)| = O(\Lambda n) = O(s n\log^2 n\log\log n).
\]
The rooted decomposition tree has depth $O(\log n)$. The construction time is
\[
    \widetilde O(m+\Lambda n) = \widetilde O(m+s n).
\]
\end{theorem}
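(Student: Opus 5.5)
The proof is an assembly of the pieces already established, in the order the construction runs. First, run \cref{thm:adk_tree_sparsifier} on \(G\) in \(\widetilde O(m+n)\) time, which gives the hierarchy \(\mathcal H\) with quality \(\alpha=O(\log^2 n\log\log n)\) and depth \(O(\log n)\). Next, mark the low edges \(F\), form the components of \(\mathcal H-F\) and the raw quotient \(Q_{\rm raw}\), and compress it to \(Q\) as in \cref{app:compression}. Only the selected first raw low edges \(e\in\bigcup_C I(C)\) need their middle sets \(\mu(e)\). By \cref{lem:output_size} there are \(O(n)\) such edges, each with \(|\mu(e)|\le 2\Lambda\) by \cref{lem:low_edge_middle_small}. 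So the bags \(\beta(C)\) have total size \(O(\Lambda n)\), provided each \(\mu(e)\) can be produced in time \(\widetilde O(|\mu(e)|)\) plus amortized \(\widetilde O(m)\) overall.

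\Cref{lem:compressed_td} gives that \((Q,\beta)\) is a tree decomposition with adhesion at most \(2\Lambda\). \Cref{lem:bags_unbreakable} gives that every bag is \((2\Lambda,s)\)-edge-unbreakable, and \cref{lem:decomposition_depth} gives depth \(O(\log n)\).

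We then root \(Q\) at a retained node. Its root adhesion is empty by definition, so \cref{thm:compactify_and_suppress} applies with \(\eta\le 2\Lambda\) and \(M=O(n)+m+O(\Lambda n)\). It runs in \(\widetilde O(m+\Lambda n)\) time. The output is compact, has at most \(n+1\) nodes, total bag size \(O((\Lambda+1)n)=O(\Lambda n)\), and depth at most \(O(\log n)+1\). By the ``consequently'' clause of that theorem, it preserves \((2\Lambda,s)\)-unbreakability and adhesion at most \(2\Lambda\). The high-probability guarantee comes only from the sparsifier; all later steps are deterministic given \(\mathcal H\).

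The main obstacle is computing the \(\mu(e)\) within \(\widetilde O(m+\Lambda n)\). Since \(\mu(e)\) is defined through \(\delta_G(S_e)\), computing it naively would cost \(\Theta(m)\) per edge. My plan is as follows. Index the leaves of \(\mathcal H\) by a DFS order, so that each \(S_e\) is a contiguous interval. A \(G\)-edge \(uv\) crosses \(S_e\) exactly when \(e\) lies on the \(\mathcal H\)-path between the leaves \(u\) and \(v\). Because the hierarchy has depth \(O(\log n)\), each \(G\)-edge lies on the paths of only \(O(\log n)\) tree edges. We therefore walk each \(uv\) up to the LCA of \(u\) and \(v\) and append \(u,v\) to the list of every selected low edge encountered. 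This costs \(O(m\log n)\) in total, and deduplicating the endpoints yields \(\mu(e)\). If the depth bound only held for \(Q\) and not for \(\mathcal H\), I would fall back on the orthogonal range structure of \cref{lem:respecting_enumeration} to report the crossing edges of each interval in output-sensitive time \(\widetilde O(w(e))\). This is also \(\widetilde O(\Lambda n)\) in total, since there are \(O(n)\) selected edges, each of weight at most \(\Lambda\).
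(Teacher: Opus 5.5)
Your proposal is correct and follows the paper's proof step for step. You build \(\mathcal H\), threshold at \(\Lambda\), form and compress the raw quotient, and invoke \cref{lem:compressed_td}, \cref{lem:bags_unbreakable}, \cref{lem:decomposition_depth} and \cref{lem:output_size}; you then materialize only the \(O(n)\) selected middle sets and finish with \cref{thm:compactify_and_suppress} using \(M=O(m+\Lambda n)\) and \(\eta=2\Lambda\). The only difference is how the middle sets are computed: the paper marks the selected edges in \(\mathcal H\) and reports the marked edges on each leaf-to-leaf path via LCA queries plus heavy-light decomposition, in \(O(\log^2 n+r)\) time per \(G\)-edge, charging the reports to \(\sum_{e\in I_0}w(e)\le\Lambda|I_0|\). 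You instead use the \(O(\log n)\) depth of \(\mathcal H\) to walk each path directly, in \(O(m\log n)\) total time; this is simpler and legitimate, since the paper itself relies on that depth bound in \cref{lem:decomposition_depth}. The paper's method has the advantage that it would still work for a hierarchy of large depth.
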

\begin{proof}
Recall $\Lambda:=\lceil \alpha s\rceil$, where \(\alpha=O(\log^2 n\log\log n)\) is the quality parameter of \cref{thm:adk_tree_sparsifier}. The algorithm first constructs the hierarchical tree cut-sparsifier of \(G\), thresholds its tree edges at value \(\Lambda\), forms the corresponding raw quotient tree decomposition, and compresses it. Let \((Q,\beta)\) denote the compressed quotient.

The raw quotient and its compression are computed from \(\mathcal H\) in \(\widetilde O(|V(\mathcal H)|)=\widetilde O(m+n)\) time, which is absorbed by \(\widetilde O(m+\Lambda n)\).

Correctness as an ordinary tree decomposition, the adhesion bound, the depth bound, and the output-size bound for the compressed quotient follow from \cref{lem:compressed_td}, \cref{lem:decomposition_depth}, and \cref{lem:output_size}. In particular, \((Q,\beta)\) has adhesion at most \(2\Lambda\), depth \(O(\log n)\), and total bag volume \(O(\Lambda n)\).

Bag unbreakability for the compressed quotient follows from \cref{lem:bags_unbreakable}: every bag of \((Q,\beta)\) is \((2\Lambda,s)\)-edge-unbreakable.

We first materialize the middle sets needed by the compressed quotient. Let
\[
    I_0:=\bigcup_{C\in V(Q)} I(C)
\]
be the set of selected raw low edges whose middle sets appear in compressed quotient bags. By \cref{lem:output_size}, \(|I_0|=O(n)\).

We mark the edges of \(I_0\) in the hierarchy tree \(\mathcal H\). Root \(\mathcal H\) arbitrarily and identify each tree edge with its deeper endpoint. Thus a marked edge is represented by a marked non-root vertex. Preprocess \(\mathcal H\) for LCA queries using binary lifting (compute the $2^i$th ancestor of every vertex for all $i$), and compute a heavy-light decomposition (identify the largest-subtree child of every vertex). For each heavy path \(P\), order its vertices from top to bottom and store the marked vertices on \(P\), sorted by their positions on \(P\).

Given two nodes \(x,y\in V(\mathcal H)\), let \(z=\operatorname{lca}(x,y)\). The marked edges on the path \(x\leadsto y\) are exactly the marked child endpoints on the two upward paths \(x\leadsto z\) and \(y\leadsto z\), excluding \(z\) itself. Each upward path is decomposed by heavy-light decomposition into \(O(\log n)\) heavy-path intervals. On each interval, two binary searches in the sorted marked list for that heavy path report all marked vertices in the interval. Thus a path query takes $O(\log^2 n+r)$ time, where \(r\) is the number of reported marked edges. The preprocessing takes \(\widetilde O(|V(\mathcal H)|)\) time and space, plus \(O(|I_0|)\) storage for the marked vertices.

For every edge \(uv\in E(G)\), let \(\ell_u,\ell_v\) be the singleton leaves of \(\mathcal H\) corresponding to \(u\) and \(v\). We report the marked edges on the path between \(\ell_u\) and \(\ell_v\). For each reported marked edge \(e\), add \(u\) and \(v\) to \(\mu(e)\). This constructs the desired middle sets: for \(e\in I_0\), the edge \(uv\) is reported for \(e\) exactly when the two singleton leaves lie on opposite sides of \(e\) in \(\mathcal H\), equivalently exactly when \(uv\) crosses the cut \(S_e\) represented by \(e\). Therefore \(u\) and \(v\) are added to \(\mu(e)\) exactly for the selected cuts whose boundary contains \(uv\).

The total number of reports is
\[
    \sum_{e\in I_0} |\delta_G(S_e)| = \sum_{e\in I_0} w(e) \le \Lambda |I_0| = O(\Lambda n),
\]
because every selected edge \(e\in I_0\) is low. Therefore the total time spent by all path-reporting queries is $\widetilde O(m+\Lambda n)$, and the compressed quotient \((Q,\beta)\) can be built explicitly in $\widetilde O(m+\Lambda n)$ time.

Root \(Q\) as in \cref{lem:decomposition_depth}. Now run \cref{alg:compactify_and_suppress} on \((Q,\beta)\), and let the resulting decomposition be \((\tau,\chi)\). By \cref{lem:compactify_correctness}, \((\tau,\chi)\) is a compact rooted tree decomposition, every output bag is a subset of an input bag, every output adhesion is a subset of an input adhesion, no non-root output bag is contained in its parent bag, and the depth remains \(O(\log n)\). By \cref{cor:compactify_preserves_parameters}, every output bag remains \((2\Lambda,s)\)-edge-unbreakable and every output adhesion has size at most \(2\Lambda\).

Finally, by \cref{lem:compactify_time}, applied with \(M=O(m+\Lambda n)\) and \(\eta=2\Lambda\), the cleanup takes $\widetilde O(m+\Lambda n)$ time and outputs a decomposition satisfying
\[
    |V(\tau)|=O(n), \qquad
    \sum_{C\in V(\tau)}|\chi(C)|=O(\Lambda n).
\]
Combining the tree cut-sparsifier construction, middle-set construction, and cleanup gives total time
\[
    \widetilde O(m+\Lambda n) = \widetilde O(m+s n).
\]
\end{proof}

\section{FPT Algorithm for Min \texorpdfstring{$k$}{k}-Cut}
\label{sec:fpt_mincut}

We modify the FPT algorithm of Lokshtanov, Saurabh, and Surianarayanan~\cite{LSS22}.  First, we use the
near-linear edge-unbreakable decomposition of
\cref{thm:unbreakable_decomp}, whose adhesion and unbreakability
parameters are
\[
    \Lambda=O(s\log^2 n\log\log n).
\]
Second, virtual trees let us compute the projection of a spanning tree
onto a bag efficiently.  Third, the large-bag dynamic program uses
randomized subset families to guess the relevant projected tree edges and
small components.  Finally, an approximate dual
\(k\)-cut packing supplies a random tree that respects an optimum cut with
good probability.  These changes give near-linear
dependence on the graph size and reduce the dependence on \(s\) to
\(s^{6k-6}\).

Throughout this section \(G=(V,E)\) is an unweighted multigraph, \(|V|=n\), \(2\le k\le n\), and \(s\) is a parameter satisfying \(\lambda_k(G)\le s\). Storing the minimizing choices returns the corresponding cut with the same asymptotic running time.

\begin{definition}[Respecting spanning tree]
\label{def:respecting_tree}
Let \(T\) be a spanning tree of \(G\), and let \(A\subseteq E(G)\) be an edge set. We say that \(T\) \(h\)-respects \(A\) if
\[
    |E(T)\cap A|\le h.
\]
If \(\Pi\) is a partition of \(V(G)\), we say that \(T\) \(h\)-respects \(\Pi\) when \(T\) \(h\)-respects \(\delta_G(\Pi)\).
\end{definition}

\begin{definition}[Nagamochi--Ibaraki sparsification~\cite{NI92}] \label{def:ni_sparsification}
For an undirected graph $G=(V,E)$, a Nagamochi--Ibaraki forest decomposition is a sequence of edge-disjoint spanning forests
\[
    F_1,F_2,\ldots
\]
where $F_i$ is a maximal spanning forest of
\[
    G\setminus \bigcup_{j<i} F_j .
\]
For an integer $r\ge 1$, the union
\[
    G_r := \bigcup_{i=1}^r F_i
\]
has at most $r(|V|-1)$ edges. Moreover, for every cut $X\subseteq V$,
\[
    |\delta_{G_r}(X)| \ge \min\{|\delta_G(X)|,r\}.
\]
In particular, every cut of value at most $r$ is preserved exactly in $G_r$, and every cut of value greater than $r$ has value at least $r$ in $G_r$. For a prescribed threshold $r$, the certificate $G_r$ can be computed in $O(|E|)$ time.
\end{definition}

The edge-unbreakable decomposition used below has adhesion and unbreakability parameter
\[
    \Lambda=O(s\log^2 n\log\log n).
\]
We keep \(\Lambda\) explicit throughout the algorithm and substitute this value only in the final theorem.

\subsection{Preliminaries and Decomposition}

Let $m_0:=|E(G)|$ denote the number of edges in the input graph. We first apply Nagamochi--Ibaraki sparsification (\cref{def:ni_sparsification}, \cite{NI92}) with threshold \(s+1\). Thus all cuts of value at most \(s+1\) are preserved exactly, cuts of value larger than \(s+1\) remain larger than \(s+1\), and the resulting graph has \(m=O(sn)\) edges, this takes $O(m_0)$ time. We replace \(G\) by this sparse graph for the remainder of the algorithm.

If \(G\) has at least \(k\) connected components, then \(\lambda_k(G)=0\).

If \(G\) has \(1<h<k\) connected components
\(G_1,\ldots,G_h\), reduce to the connected case as follows.  For each
component \(G_j\) and each
\[
    1\le p\le\min\{k-h+1,|V(G_j)|\},
\]
run the connected algorithm as a truncated solver with threshold \(s\).
Every recursive parameter is at most \(k-1\).

As in \cref{lem:zero_capacity_reduction_kcut}, an optimum global
\(k\)-cut can be normalized so that the total number of induced
component-parts is exactly \(k\): while there are more than \(k\), merge
two induced parts inside one component.  This cannot increase the cut
value.  Therefore the optimum is obtained from
\[
    \min_{\substack{p_1+\cdots+p_h=k\\p_j\ge1}}
        \sum_{j=1}^h\lambda_{p_j}(G_j),
\]
which is evaluated by a dynamic program.

The connected-case calls are amplified so that all calls used by this
dynamic program succeed with high probability.  Each component occurs
for at most \(k\) requested parameters, so the aggregate input volume is
\(O(kn)\) vertices and \(O(km)\) edges.  This additional factor is
absorbed by \(k^{O(k)}\).  Hence, from now on, assume that \(G\) is
connected. Hence, from now on, assume that \(G\) is connected.

We then run the near-linear edge-unbreakable decomposition algorithm with cut parameter \(s\). By \cref{thm:unbreakable_decomp}, this computes a compact rooted tree decomposition \((\tau,\beta)\) in time
\[
    \widetilde O(m+\Lambda n)=\widetilde O(\Lambda n),
\]
because \(m=O(sn)\) and \(\Lambda\ge s\). Set
\[
    q:= 2 \Lambda,
\]
to be the unbreakability parameter from \cref{thm:unbreakable_decomp}. The decomposition satisfies:
\begin{enumerate}
    \item every adhesion has size \(O(\Lambda)\)
    \item every bag \(\beta(t)\) is \((q,s)\)-edge-unbreakable
    \item \(|V(\tau)|=O(n)\)
    \item \(\sum_{t\in V(\tau)}|\beta(t)|=O(\Lambda n)\)
    \item \(\sum_{t\in V(\tau)}|\sigma(t)|=O(\Lambda n)\)
\end{enumerate}
Here \(\sigma(t)\) denotes the adhesion between \(t\) and its parent, with \(\sigma(r)=\emptyset\) for the root \(r\), and \(\gamma(t)\) denotes the union of bags in the subtree of \(\tau\) rooted at \(t\).

\subsection{Tree-Feasible States}

Fix a spanning tree \(T\) of \(G\). The dynamic-programming tables are indexed only by adhesion states that can arise from cutting at most \(2k-2\) edges of \(T\).

A labeled partition of a set \(X\) is a function \(\phi:X\to[k]\). Its underlying unlabeled partition is the partition into the nonempty preimages of labels. Let \(\lambda(\phi):=\phi(X)\subseteq[k]\) be the set of labels used by \(\phi\). The restriction of \(\phi\) to \(Y\subseteq X\) is denoted by \(\phi|_Y\); if \(Y=\emptyset\), this is the unique empty labeled partition \(\phi_\emptyset\).

\begin{definition}[Tree-Feasible Labeled Partition]
Let \(T'\) be a tree. A labeled partition \(\phi\) of \(V(T')\) is \(c\)-admissible with respect to \(T'\) if its underlying unlabeled partition has at most \(c\) edges of \(T'\) with endpoints in distinct parts.

For \(X\subseteq V(T')\), a labeled partition \(\phi_X:X\to[k]\) is \(T'\)-feasible if it is the restriction to \(X\) of a \((2k-2)\)-admissible labeled partition of \(V(T')\). Let \(\mathcal L_{T'}^X\) denote the family of all such labeled partitions. If \(X=\emptyset\), we include \(\phi_\emptyset\).
\end{definition}

For \(X\subseteq V(T)\), let \(T_X:=\operatorname{proj}(T,X)\) be the tree obtained from \(T\) by repeatedly deleting leaves outside \(X\) and smoothing degree-\(2\) vertices outside \(X\).

\begin{definition}[Virtual Tree]
\label{def:virtual_tree}
Let \(\mathcal T\) be a rooted tree, and let \(S\subseteq V(\mathcal T)\). The virtual tree, or auxiliary tree, on \(S\) is the minimal tree \(\mathcal T(S)\) whose vertex set contains \(S\) and is closed under lowest common ancestors in \(\mathcal T\). Its edges correspond to maximal paths of \(\mathcal T\) between consecutive retained vertices, and therefore preserve the ancestor-descendant relations among vertices of \(S\) and their LCAs.
\end{definition}

\begin{lemma}[Virtual Tree and Labeled State Bound]
\label{lem:virtual_tree_and_states}
After one \(O(n\log n)\) preprocessing step on \(T\), the tree \(T_X\) can be computed in \(O(1+|X|\log |X|)\) time. Moreover,
\[
    \mathcal L_T^X=\mathcal L_{T_X}^X,
\]
\[
    |V(T_X)|\le \max\{2|X| - 1, 1\},
\]
and
\[
    |\mathcal L_T^X| \le k^{O(k)}\max\{|X|,1\}^{2k-2}.
\]
In particular, for every adhesion \(\sigma(t)\),
\[
    |\mathcal L_T^{\sigma(t)}| \le k^{O(k)}\Lambda^{2k-2}.
\]
\end{lemma}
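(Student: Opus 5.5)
We prove the four claims in turn: the construction of \(T_X\), the identity \(\mathcal L_T^X=\mathcal L_{T_X}^X\), the size bound on \(T_X\), and the count of labeled states.

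\textbf{Computing \(T_X\).} Root \(T\) at an arbitrary vertex. In \(O(n\log n)\) time we compute DFS entry times and an LCA structure; binary lifting suffices, or a sparse table giving \(O(1)\) queries. Given \(X\), we proceed as follows.
\begin{enumerate}
\item Sort \(X\) by entry time.
\item Add the LCAs of consecutive pairs to obtain the LCA-closed set \(X'\supseteq X\).
\item Build the virtual tree \(T(X')\) of \cref{def:virtual_tree} with the standard stack sweep.
\end{enumerate}
The set \(X'\) has at most \(2|X|-1\) vertices. The tree \(T_X\) is then \(T(X')\), possibly with its root suppressed. Repeatedly deleting leaves outside \(X\) leaves exactly the union of \(T\)-paths between vertices of \(X\), the Steiner tree. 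Smoothing its degree-2 non-\(X\) vertices leaves \(X\) together with its branching vertices, which are LCAs of pairs in \(X\). The only extra vertex \(T(X')\) can carry is a top LCA of degree 2 outside \(X\), and we suppress it. This also gives \(|V(T_X)|\le\max\{2|X|-1,1\}\): a tree whose non-\(X\) vertices all have degree at least 3 has fewer than \(|X|\) such vertices. The total time is \(O(1+|X|\log|X|)\).

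\textbf{Identity of state families.} Each edge of \(T_X\) corresponds to a path of \(T\), and these paths are edge-disjoint. Vertices of \(T\) off the Steiner tree hang in subtrees attached to it.

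For \(\mathcal L_T^X\subseteq\mathcal L_{T_X}^X\): take a \((2k-2)\)-admissible \(\phi\) on \(V(T)\) and restrict it to \(V(T_X)\). If the endpoints of an edge of \(T_X\) get different labels, then some \(T\)-edge on the corresponding path is cut. By edge-disjointness, the number of cut \(T_X\)-edges is at most the number of cut \(T\)-edges, so the restriction is admissible. Its restriction to \(X\) equals \(\phi|_X\).

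For the reverse inclusion: take an admissible \(\psi\) on \(V(T_X)\). Extend it to \(V(T)\) by giving each internal vertex of a path the label of the path's upper endpoint. Each cut \(T_X\)-edge then cuts exactly one \(T\)-edge. Every vertex outside the Steiner tree gets the label of its attachment point, which cuts nothing. So the extension is admissible and restricts to the same map on \(X\).

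\textbf{Counting states.} By the identity, it suffices to count restrictions to \(X\) of admissible labelings of \(T_X\). Such a labeling is determined by two choices.
\begin{enumerate}
\item A set \(B\) of at most \(2k-2\) cut edges of \(T_X\). There are at most \(\sum_{j\le 2k-2}\binom{2|X|}{j}\le k^{O(k)}\max\{|X|,1\}^{2k-2}\) choices, using \((2|X|)^{j}/j!\) and absorbing \(2^{2k}\) into \(k^{O(k)}\).
\item A label for each of the at most \(2k-1\) components of \(T_X-B\), that is, a map from components to \([k]\). There are at most \(k^{2k-1}=k^{O(k)}\) choices.
\end{enumerate}
Multiplying the two counts gives \(|\mathcal L_T^X|\le k^{O(k)}\max\{|X|,1\}^{2k-2}\). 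The adhesion bound follows by plugging in \(|\sigma(t)|=O(\Lambda)\); the constant raised to the power \(2k-2\) is absorbed into \(k^{O(k)}\).

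The step needing the most care is showing that the smoothed projection equals the LCA virtual tree up to a degree-2 root, so that the running time and the size bound hold simultaneously.
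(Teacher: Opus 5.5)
Your proof is correct and follows essentially the paper's route. You use the same Euler-order/consecutive-LCA virtual-tree construction followed by pruning, the same identity $\mathcal L_T^X=\mathcal L_{T_X}^X$ (argued globally by restriction/extension along the edge-disjoint $T$-paths, where the paper instead checks invariance under each single leaf deletion and smoothing step), and the same count of at most $2k-2$ deleted edges of $T_X$ times $k^{O(k)}$ component labelings; one small correction is that binary lifting answers LCA queries in $O(\log n)$ time, giving $O(|X|\log n)$ rather than $O(1+|X|\log|X|)$, so you must commit to the $O(1)$-query sparse-table option to get the stated bound.
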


\begin{proof}
The virtual tree on \(X\) is computed by sorting the vertices of \(X\) by Euler-tour order and adding the LCAs of consecutive vertices. This takes \(O(n\log n)\) preprocessing and \(O(1+|X|\log |X|)\) query time, and results in at most $\max\{2|X|-1,1\}$ vertices \cite{BF00}. Thus deleting remaining non-\(X\) leaves and smoothing remaining non-\(X\) degree-\(2\) vertices takes $O(|X|)$ time.

Deleting a leaf outside \(X\) has no effect on the induced labels on \(X\). Smoothing a degree-\(2\) vertex \(v\notin X\) with neighbors \(a,b\) preserves the induced labels on \(X\): a deleted edge among \(av,vb\) separates the two sides of the path if and only if the smoothed edge \(ab\) is deleted. This transformation never increases the number of deleted tree edges, and it is reversible by replacing a deleted smoothed edge by one of the two original edges. Applying these operations throughout the construction of \(T_X\) gives \(\mathcal L_T^X=\mathcal L_{T_X}^X\).

It remains to count feasible labelings on \(T_X\). Choose \(j\le 2k-2\) deleted edges of \(T_X\). This gives \(j+1\) components. Then assign labels from \([k]\) to the resulting components, allowing components with the same label to be merged. Thus the number of possibilities is at most
\[
    \sum_{j=0}^{2k-2}\binom{\max\{2|X|-1,1\}}{j}k^{j+1} \le k^{O(k)}\max\{|X|,1\}^{2k-2}.
\]
Since every adhesion has size \(O(\Lambda)\), the adhesion-state bound follows.
\end{proof}

\subsection{Dynamic Programming for One Tree}

Fix the decomposition \((\tau,\beta)\) and a spanning tree \(T\). For each vertex \(v\in V(G)\), let \(\operatorname{top}(v)\) be the minimum-depth node whose bag contains \(v\). For an edge \(uv\in E(G)\), the nodes \(\operatorname{top}(u)\) and \(\operatorname{top}(v)\) are comparable, and the minimum-depth bag containing both endpoints is the deeper of them. We assign \(uv\) to this node \(\mu(uv)\), and define
\[
    E_t^\circ:=\{e\in E(G):\mu(e)=t\}.
\]
The sets \(E_t^\circ\) partition \(E(G)\). Let \(E_{\le t}^\circ:=\bigcup_{u\in V(\tau_t)}E_u^\circ\).

For a labeled partition \(\phi\) of a vertex set containing the endpoints of an edge set \(F\), let \(\operatorname{cost}_F(\phi)\) be the number of edges in \(F\) whose endpoints receive distinct labels. Write \(\operatorname{cost}_t(\phi):=\operatorname{cost}_{E_t^\circ}(\phi)\).

For every node \(t\), every \(\phi_{\sigma(t)}\in\mathcal L_T^{\sigma(t)}\), and every label set \(L\subseteq[k]\) with (letting $\lambda$ denote used labels) \(\lambda(\phi_{\sigma(t)})\subseteq L\), the algorithm maintains an entry
\[
    f_t(\phi_{\sigma(t)},L)\in\{0,1,\ldots,s\}\cup\{\infty\}.
\]
A finite value is stored only together with a witness labeled partition \(\phi:\gamma(t)\to[k]\) such that
\[
    \phi|_{\sigma(t)}=\phi_{\sigma(t)}, \qquad \lambda(\phi)=L,
\]
and the stored value is exactly \(\operatorname{cost}_{E_{\le t}^\circ}(\phi)\). Entries larger than \(s\) are truncated to \(\infty\).

A transition at node \(t\) consists of a labeled bag partition \(\psi:\beta(t)\to[k]\) and, for every child \(u\) of \(t\), a label set \(L_u\subseteq[k]\) such that
\[
    \psi|_{\sigma(u)}\in\mathcal L_T^{\sigma(u)} \quad\text{and}\quad f_u(\psi|_{\sigma(u)},L_u)<\infty.
\]
The transition glues the child witnesses to the bag labeling \(\psi\) along the equal labeled adhesion states. It has label set
\[
    L=\lambda(\psi)\cup\bigcup_{u\in\operatorname{children}(t)}L_u
\]
and value
\[
    \operatorname{cost}_t(\psi) + \sum_{u\in\operatorname{children}(t)} f_u(\psi|_{\sigma(u)},L_u).
\]
The transition contributes to \(f_t(\psi|_{\sigma(t)},L)\), provided \(\psi|_{\sigma(t)}\in\mathcal L_T^{\sigma(t)}\). At the root \(r\), the answer for this tree is \(f_r(\phi_\emptyset,[k])\).

\subsection{Small and Giant Parts}

In this subsection and the next two subsections, fix a labeled global optimum \(k\)-cut \(\phi^\star:V(G)\to[k]\) of value at most \(s\), and fix a node \(t\). Let \(\Pi_t^\star\) be the underlying unlabeled partition induced by \(\phi^\star|_{\beta(t)}\).

A bag \(t\) is large if \(|\beta(t)|>kq\). If \(t\) is large, then, since \(\beta(t)\) is \((q,s)\)-edge-unbreakable and \(\phi^\star\) has total cut value at most \(s\), at most one part of \(\Pi_t^\star\) contains more than \(q\) vertices. Since there are at most \(k\) parts and \(|\beta(t)|>kq\), such a part exists. We call it the giant part. Let \(S_t\) be the union of all non-giant parts of \(\Pi_t^\star\). Then
\[
    |S_t|\le (k-1)q=O(k\Lambda).
\]

Assume now that \(T\) crosses \(\phi^\star\) in at most \(2k-2\) edges. Since \(T\) crosses \(\phi^\star\) in at most \(2k-2\) edges, delete from \(T_{\beta(t)}\) every projected edge whose corresponding path in \(T\) contains an edge crossing \(\phi^\star\). The projected-edge paths are edge-disjoint in \(T\), so at most \(2k-2\) projected edges are deleted. Every remaining projected edge corresponds to a path in \(T\) whose vertices all have the same \(\phi^\star\)-label. Therefore the component partition of the resulting projected tree refines \(\Pi_t^\star\). Let \(C_t^\star\subseteq E(T_{\beta(t)})\) be a set of at most \(2k-2\) projected-tree edges such that the component partition of \(T_{\beta(t)}-C_t^\star\) refines \(\Pi_t^\star\). Let \(\mathcal R_t^\star\) be this component partition. Let \(S_t^V\) be the union of components of \(\mathcal R_t^\star\) that contain vertices of \(S_t\), and let \(S_t^E\) be the set of projected-tree edges in \(E(T_{\beta(t)})\setminus C_t^\star\) whose endpoints both lie in \(S_t^V\).

\begin{lemma}[Small-Side Projected Size]
\label{lem:small_side_projected_size}
If \(t\) is large, then
\[
    |S_t^V|=O(k\Lambda), \qquad |S_t^E|=O(k\Lambda).
\]
\end{lemma}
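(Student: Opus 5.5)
The plan is to bound \(S_t^V\) by splitting its vertices into bag vertices and Steiner (projection) vertices of \(T_{\beta(t)}\). Then \(S_t^E\) follows from the fact that it lives inside a forest on \(S_t^V\). Throughout, \(X:=\beta(t)\), \(T_X=T_{\beta(t)}\), and \(F\) denotes the forest \(T_X-C_t^\star\) restricted to the components that make up \(S_t^V\).

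\emph{Bag vertices.} First I show that \(S_t^V\cap X\subseteq S_t\). Take a component \(D\) of \(\mathcal R_t^\star\) that contains a vertex \(x\in S_t\). Since \(\mathcal R_t^\star\) refines \(\Pi_t^\star\) on \(X\), all bag vertices of \(D\) lie in the same part of \(\Pi_t^\star\) as \(x\). That part is non-giant, so every bag vertex of \(D\) lies in \(S_t\). The large-bag argument preceding the lemma then gives
\[
    |S_t^V\cap X|\le|S_t|\le(k-1)q=O(k\Lambda).
\]

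\emph{Steiner vertices.} Every vertex of \(T_X\) outside \(X\) survived the projection, so it is neither a leaf nor of degree \(2\) in \(T_X\); hence its \(T_X\)-degree is at least \(3\). I classify each Steiner vertex \(v\) of \(F\) by its degree in \(F\).
\begin{itemize}
    \item If \(\deg_F(v)\le2\), then \(v\) is incident to an edge of \(C_t^\star\). Since \(|C_t^\star|\le2k-2\), there are at most \(2(2k-2)\) such vertices.
    \item If \(\deg_F(v)\ge3\), I use the standard forest fact that the number of vertices of degree at least \(3\) is less than the number of leaves. Every leaf or isolated vertex of \(F\) is either a bag vertex, hence in \(S_t\), or a Steiner vertex incident to a cut edge. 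So there are at most \(|S_t|+O(k)\) such vertices.
\end{itemize}
Summing the cases gives
\[
    |S_t^V|\le 2|S_t|+O(k)=O(k\Lambda).
\]

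\emph{Edges.} By definition, \(S_t^E\) consists of edges of the forest \(T_X-C_t^\star\) with both endpoints in \(S_t^V\). Therefore
\[
    |S_t^E|\le|S_t^V|=O(k\Lambda).
\]

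The main obstacle is the Steiner-vertex count. One must be careful that deleting up to \(2k-2\) edges of \(C_t^\star\) can lower degrees and create new leaves. This is exactly why the degree-\(\le2\) and leaf cases are charged to cut edges, at cost \(O(k)\), rather than to bag vertices.
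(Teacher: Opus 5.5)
Your proof is correct and is essentially the paper's argument. The paper runs the same degree count component by component: it shows \(|P|\le 2\bigl(|P\cap\beta(t)|+|\partial_{C_t^\star}P|\bigr)\) for each component \(P\) of \(T_{\beta(t)}-C_t^\star\) meeting \(S_t\), then sums to get \(|S_t^V|\le 2|S_t|+4|C_t^\star|\). That is exactly the bound your global classification yields, with low-degree Steiner vertices charged to \(C_t^\star\) and branching vertices charged to leaves. Your explicit check that the bag vertices of \(S_t^V\) lie in \(S_t\) is a step the paper uses only implicitly, when it bounds \(\sum_P|P\cap\beta(t)|\) by \(|S_t|\).
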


\begin{proof}
In \(T_{\beta(t)}\), every vertex outside \(\beta(t)\) has degree at least \(3\). After deleting \(C_t^\star\), only vertices incident with deleted edges can become leaves or degree-\(2\) vertices outside \(\beta(t)\).

Let \(\mathcal P_{\mathrm{small}}\) be the set of components of \(T_{\beta(t)}-C_t^\star\) that meet \(S_t\). For each \(P\in\mathcal P_{\mathrm{small}}\), let \(\partial_{C_t^\star}P\) be the set of deleted projected-tree edges incident with \(P\). Counting degrees in the tree gives
\[
    |P|\le 2\bigl(|P\cap\beta(t)|+|\partial_{C_t^\star}P|\bigr).
\]
Summing over \(P\in\mathcal P_{\mathrm{small}}\),
\[
    |S_t^V|\le 2|S_t|+4|C_t^\star|=O(k\Lambda).
\]
The relevant projected edges form a forest on \(S_t^V\), so \(|S_t^E|=O(k\Lambda)\).
\end{proof}

\subsection{Random Separating Families}

Let
\[
    L_\Lambda:=\left\lceil (k+1)\log\!\bigl(k(n+\Lambda+2)\bigr)\right\rceil .
\]

\begin{lemma}[Random Separating Family]
\label{lem:random_separating_family}
Let \(\mathcal U\) be a finite ordered universe, and let \(a,b\ge 1\). Put \(p:=a/(a+b)\), and let
\[
    R:=\left\lceil c_0\left(\frac{e(a+b)}{a}\right)^a L_\Lambda \right\rceil
\]
for a sufficiently large absolute constant \(c_0\). Generate \(R\) independent random subsets \(H_1,\ldots,H_R\subseteq\mathcal U\) by including each element independently with probability \(p\).

Then, for every fixed pair of disjoint sets \(A,B\subseteq\mathcal U\) with \(|A|\le a\) and \(|B|\le b\), w.h.p. some \(H_j\) satisfies \(A\subseteq H_j\) and \(H_j\cap B=\emptyset\).
\end{lemma}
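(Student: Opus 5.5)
For a fixed pair \((A,B)\), I would bound the probability that a single random set \(H_j\) is good, meaning \(A\subseteq H_j\) and \(H_j\cap B=\emptyset\). I would then use independence across the \(R\) trials.

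Since elements are included independently with probability \(p=a/(a+b)\),
\[
    \Pr[A\subseteq H_j,\ H_j\cap B=\emptyset]=p^{|A|}(1-p)^{|B|}\ge p^a(1-p)^b .
\]
The monotonicity step uses \(|A|\le a\), \(|B|\le b\) and \(p,1-p\le1\).

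Next I would lower-bound \(p^a(1-p)^b\). By definition,
\[
    (1-p)^b=\Bigl(1-\frac{a}{a+b}\Bigr)^b=\Bigl(1+\frac ab\Bigr)^{-b}\ge e^{-a},
\]
using \(1+x\le e^x\). This gives
\[
    p^a(1-p)^b\ge\Bigl(\frac{a}{e(a+b)}\Bigr)^a=:\pi .
\]

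With \(R\ge c_0\pi^{-1}L_\Lambda\), the probability that all \(R\) independent trials fail is at most
\[
    (1-\pi)^R\le e^{-\pi R}\le e^{-c_0L_\Lambda}.
\]
Since \(L_\Lambda\ge(k+1)\log(k(n+\Lambda+2))\), this is at most \((k(n+\Lambda+2))^{-c_0(k+1)}\). That is inverse-polynomial in \(n\) with exponent \(\Omega(c_0k)\), which is high probability. Enlarging \(c_0\) makes the exponent as large as needed for the later union bounds over the polynomially many fixed pairs the algorithm cares about.

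No step is a real obstacle; the only point needing care is the inequality \((1+a/b)^{-b}\ge e^{-a}\), which is where the factor \(e\) in \(R\) comes from. I would also note that the guarantee is per fixed pair. The "w.h.p." in the statement is to be read in terms of \(L_\Lambda\), i.e.\ failure probability \((k(n+\Lambda+2))^{-\Omega(c_0k)}\), rather than simultaneously over all pairs \((A,B)\).
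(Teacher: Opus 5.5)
Your proposal is correct and follows the paper's proof: you bound a single trial by $p^{|A|}(1-p)^{|B|}\ge p^a(1-p)^b\ge \bigl(a/(e(a+b))\bigr)^a$ and then bound the probability that all $R$ independent trials fail by $\exp\!\bigl(-R\,(a/(e(a+b)))^a\bigr)$. As in the paper, $c_0$ is then taken large enough for the later union bounds; you also spell out the step $(1+a/b)^{-b}\ge e^{-a}$, which the paper leaves implicit.
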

\begin{proof}
For one random set \(H\),
\[
    \Pr[A\subseteq H,\ H\cap B=\emptyset] =p^{|A|}(1-p)^{|B|} \ge p^a(1-p)^b \ge \left(\frac{a}{e(a+b)}\right)^a .
\]
Thus the probability that none of the \(R\) samples separates \(A\) from \(B\) is at most
\[
    \exp\!\left(-R\left(\frac{a}{e(a+b)}\right)^a\right),
\]
which is small enough for union bounds over \(k^{O(k)}\Lambda^{O(k)}n\) generated objects, by choosing \(c_0\) sufficiently large.
\end{proof}

\subsection{First-Level Guess}
Fix a large bag \(t\), and keep the fixed optimum cut \(\phi^\star\) and respecting tree \(T\) from above. Let \(\mathcal H_1(t)\) be the family obtained from \cref{lem:random_separating_family} with universe \(\mathcal U=E(T_{\beta(t)})\) and parameters
\[
    a=2k-2, \qquad b=c_1k\Lambda,
\]
where \(c_1\) is a sufficiently large absolute constant. Then
\[
    R_1:=|\mathcal H_1(t)| = k^{O(k)}\Lambda^{2k-2}L_\Lambda.
\]

\begin{definition}[Good First-Level Guess]
A first-level guess \(\widehat C\in\mathcal H_1(t)\) is good for \(\phi^\star\) at \(t\) if
\[
    C_t^\star\subseteq\widehat C \qquad\text{and}\qquad \widehat C\cap S_t^E=\emptyset.
\]
\end{definition}

\begin{lemma}[Existence of a Good First-Level Guess]
\label{lem:good_first_guess}
W.h.p., for every large bag \(t\), some \(\widehat C\in\mathcal H_1(t)\) is good for \(\phi^\star\) at \(t\).
\end{lemma}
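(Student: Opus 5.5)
The plan is to apply \cref{lem:random_separating_family} once per large bag and then union bound over the large bags. Fix a large bag \(t\) and set \(A:=C_t^\star\) and \(B:=S_t^E\), both subsets of the universe \(\mathcal U=E(T_{\beta(t)})\). They are disjoint by definition, since \(S_t^E\) consists only of projected-tree edges in \(E(T_{\beta(t)})\setminus C_t^\star\). The size bounds are where the earlier results enter. First, \(|C_t^\star|\le 2k-2=a\), by the choice of \(C_t^\star\); this uses the standing assumption that \(T\) crosses \(\phi^\star\) in at most \(2k-2\) edges and that projected-edge paths are edge-disjoint in \(T\). Second, \cref{lem:small_side_projected_size} gives \(|S_t^E|=O(k\Lambda)\). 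Its proof actually yields \(|S_t^E|\le|S_t^V|\le 2|S_t|+4|C_t^\star|\le 2(k-1)q+8k=O(k\Lambda)\), with \(q=2\Lambda\), so choosing the absolute constant \(c_1\) large enough ensures \(|S_t^E|\le c_1k\Lambda=b\).

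With these inputs, \cref{lem:random_separating_family} says that for this fixed \(t\), with high probability some \(H_j\in\mathcal H_1(t)\) satisfies \(C_t^\star\subseteq H_j\) and \(H_j\cap S_t^E=\emptyset\). That is exactly a good first-level guess. There is one subtlety: \(C_t^\star\) and \(S_t^E\) must be fixed before the randomness of \(\mathcal H_1(t)\) is drawn. This holds because both are determined by \(\phi^\star\), \(T\) and the decomposition, and the family \(\mathcal H_1(t)\) is sampled independently of them. If the tree \(T\) is itself random, I would condition on \(T\), or on the event that \(T\) respects \(\phi^\star\), before sampling the families.

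To finish, I take a union bound over all large bags. There are at most \(|V(\tau)|=O(n)\) of them, by the decomposition properties. The failure probability for a single bag is
\[
\exp\!\Bigl(-R\bigl(\tfrac{a}{e(a+b)}\bigr)^a\Bigr)\le\exp(-c_0L_\Lambda)\le\bigl(k(n+\Lambda+2)\bigr)^{-c_0(k+1)},
\]
which is polynomially small in \(n\) with an exponent that grows with \(c_0\). Taking \(c_0\) large absorbs the factor \(O(n)\), and indeed the larger \(k^{O(k)}\Lambda^{O(k)}n\) count anticipated in the proof of \cref{lem:random_separating_family}.

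The main obstacle is the bound \(|S_t^E|\le b\) with an absolute constant \(c_1\). The bound must not hide a dependence on \(k\) beyond linear, because a worse dependence would change \(R_1=k^{O(k)}\Lambda^{2k-2}L_\Lambda\). I would get it by tracing the explicit constant in the degree-counting argument of \cref{lem:small_side_projected_size}, as done above, rather than citing its \(O(\cdot)\) form. Everything else is a direct instantiation of the separating-family lemma.
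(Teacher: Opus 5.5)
Your proposal is correct and follows the paper's proof. Both instantiate \cref{lem:random_separating_family} with $A=C_t^\star$ and $B=S_t^E$, bound $|A|\le 2k-2$, use \cref{lem:small_side_projected_size} with $c_1$ large to get $|B|\le c_1k\Lambda$, and union-bound over the $O(n)$ large bags. Your extra bookkeeping only makes explicit what the paper leaves implicit: the disjointness of $A$ and $B$, the explicit bound $|S_t^E|\le 4(k-1)\Lambda+8k$, and fixing $T$ and $\phi^\star$ before the families are sampled.
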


\begin{proof}
Apply \cref{lem:random_separating_family} with \(A=C_t^\star\) and \(B=S_t^E\). We have \(|A|\le 2k-2\), and \cref{lem:small_side_projected_size} gives \(|B|=O(k\Lambda)\). Choosing \(c_1\) large enough makes \(|B|\le c_1k\Lambda\). A union bound over the \(O(n)\) bags proves the claim.
\end{proof}

For a good \(\widehat C\), let \(\mathcal R(\widehat C)\) be the set of connected components of \(T_{\beta(t)}-\widehat C\) whose projection to \(\beta(t)\) is nonempty. Empty projected components are ignored throughout. Let \(\mathcal S_t\subseteq\mathcal R(\widehat C)\) be the set of components that intersect \(S_t^V\).

\begin{lemma}[Number of Small-Side Components]
\label{lem:small_side_components}
If \(\widehat C\) is good, then \(|\mathcal S_t|\le 2k-2\).
\end{lemma}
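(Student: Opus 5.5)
The plan is to compare the components of \(T_{\beta(t)}-\widehat C\) with those of \(T_{\beta(t)}-C_t^\star\). By goodness, \(C_t^\star\subseteq\widehat C\), so \(\mathcal R(\widehat C)\) refines the component partition of \(T_{\beta(t)}-C_t^\star\). Also, \(\widehat C\) avoids \(S_t^E\). We need to bound how many components of \(\mathcal R(\widehat C)\) meet \(S_t^V\).

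First, fix a component \(P\) of \(T_{\beta(t)}-C_t^\star\) with \(P\subseteq S_t^V\). Such a \(P\) is a subtree. Every projected edge with both endpoints in \(P\) lies in \(E(T_{\beta(t)})\setminus C_t^\star\) and has both endpoints in \(S_t^V\), so it belongs to \(S_t^E\). Since \(\widehat C\cap S_t^E=\emptyset\), no edge inside \(P\) is deleted by \(\widehat C\). Hence \(P\) remains connected in \(T_{\beta(t)}-\widehat C\).

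Next, let \(X\in\mathcal R(\widehat C)\) be a component meeting \(S_t^V\). Because of the refinement, \(X\) is contained in a single component of \(T_{\beta(t)}-C_t^\star\). That component meets \(S_t^V\), and since \(S_t^V\) is a union of such components, it is one of the components \(P\) above. By the previous step, \(X=P\). So \(X\mapsto P\) is injective, and \(|\mathcal S_t|\) is at most the number of components of \(T_{\beta(t)}-C_t^\star\) contained in \(S_t^V\).

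It remains to count those components.
\begin{itemize}
\item Deleting \(|C_t^\star|\le 2k-2\) edges from the tree yields at most \(2k-1\) components.
\item The giant part of \(\Pi_t^\star\) is nonempty and disjoint from \(S_t\). Since \(\mathcal R_t^\star\) refines \(\Pi_t^\star\), some component \(P_g\) containing a giant-part vertex contains no vertex of \(S_t\), so \(P_g\) is not in \(S_t^V\).
\end{itemize}
This leaves at most \(2k-2\) components in \(S_t^V\), giving \(|\mathcal S_t|\le 2k-2\).

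The step that needs the most care is the last one, excluding one component to get \(2k-2\) instead of \(2k-1\). It relies on the giant part existing, which holds because \(t\) is large, and on the fact that every component of \(\mathcal R_t^\star\) lies inside a single part of \(\Pi_t^\star\). A secondary point is that components with empty projection are ignored; this only helps the bound.
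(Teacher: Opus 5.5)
Your proof is correct and follows the paper's argument. Goodness guarantees that the extra deletions in \(\widehat C\) avoid \(S_t^E\), so the components meeting \(S_t^V\) are exactly those of \(T_{\beta(t)}-C_t^\star\). There are at most \(2k-1\) such components in total, and at least one lies on the giant side, which leaves at most \(2k-2\); you simply spell out the refinement and injectivity step more explicitly than the paper does.
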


\begin{proof}
Since \(\widehat C\) is good, it contains \(C_t^\star\) and is disjoint from \(S_t^E\). Thus passing from \(T_{\beta(t)}-C_t^\star\) to \(T_{\beta(t)}-\widehat C\) can only delete additional projected-tree edges outside the small side. In particular, no component of \(T_{\beta(t)}-C_t^\star\) that meets \(S_t^V\) is split by the additional deletions. Therefore the components of \(T_{\beta(t)}-\widehat C\) that meet \(S_t^V\) are exactly the components of \(T_{\beta(t)}-C_t^\star\) that meet \(S_t^V\).

Deleting \(C_t^\star\) creates at most \(2k-1\) components. Since the giant part is nonempty and disjoint from the small side, at least one of these components belongs to the giant side. Hence at most \(2k-2\) components meet the small side.
\end{proof}

\subsection{Second-Level Guess and Nice Decompositions}

For fixed \(t\) and \(\widehat C\), define the auxiliary graph \(\mathcal A_t(\widehat C)\) as follows. Its vertices are the components in \(\mathcal R(\widehat C)\). Two components are adjacent if either there is an owned edge \(xy\in E_t^\circ\) with endpoints in the two components, or some adhesion \(\sigma(u)\), for \(u\in\operatorname{children}(t)\cup\{t\}\), intersects both components. Let \(N_{\mathcal A_t(\widehat C)}(\mathcal S_t)\) be the set of auxiliary neighbors of \(\mathcal S_t\) outside \(\mathcal S_t\).

This auxiliary graph captures all relevant interactions inside \(\beta(t)\). Indeed, if an edge \(xy\in E(G[\beta(t)])\) is not owned at \(t\), then its owner is a proper ancestor of \(t\), and the connectedness axiom implies \(x,y\in\sigma(t)\).

\begin{lemma}[Auxiliary-Neighborhood Bound]
\label{lem:aux_neighborhood_bound}
For the fixed optimum cut \(\phi^\star\) and every good first-level guess \(\widehat C\),
\[
    |N_{\mathcal A_t(\widehat C)}(\mathcal S_t)|=O(k\Lambda^2).
\]
\end{lemma}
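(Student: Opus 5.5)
The plan is to charge every auxiliary neighbor of \(\mathcal S_t\) to either an owned edge or an adhesion that touches the small side, and to bound each source separately.

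Fix a component \(D\in N_{\mathcal A_t(\widehat C)}(\mathcal S_t)\). By definition, \(D\) is adjacent to some \(P\in\mathcal S_t\) for one of two reasons: an owned edge \(xy\in E_t^\circ\) with \(x\in P\), \(y\in D\), or an adhesion \(\sigma(u)\), \(u\in\operatorname{children}(t)\cup\{t\}\), meeting both \(P\) and \(D\). We bound the number of \(D\) arising each way.

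\emph{Edge witnesses.} Each component \(P\in\mathcal S_t\) meets \(\beta(t)\) only inside \(S_t^V\cap\beta(t)\). By Lemma~\ref{lem:small_side_components}, the components in \(\mathcal S_t\) are exactly the components of \(T_{\beta(t)}-C_t^\star\) meeting \(S_t^V\). Their bag vertices therefore lie in parts refining \(\Pi_t^\star\), and since each such component meets \(S_t\) it lies entirely in a non-giant part. So \(x\in S_t\).

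The edge \(xy\) either crosses \(\phi^\star\) or has \(y\in S_t\) as well. In the second case \(y\) lies in some component meeting \(S_t\), so \(D\in\mathcal S_t\), contrary to \(D\) being a neighbor outside \(\mathcal S_t\). Hence \(xy\) is a cut edge of \(\phi^\star\), and there are at most \(s\le\Lambda\) such edges. This gives \(O(\Lambda)\) neighbors of this type.

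\emph{Adhesion witnesses.} This is the main obstacle. The number of children \(u\) may be large, so we cannot charge per child. Instead we charge per adhesion that actually meets the small side.

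If \(\sigma(u)\) meets \(P\subseteq S_t\) and also a vertex \(y\) of a giant-side component, consider \(\alpha(u)\). By compactness (Definition~\ref{def:compact_tree_decomposition}), \(G[\alpha(u)]\) is connected and \(N_G(\alpha(u))=\sigma(u)\). Hence every vertex of \(\sigma(u)\) has a neighbor in \(\alpha(u)\), and \(\alpha(u)\) connects them. So \(\sigma(u)\cup\alpha(u)\) induces a connected subgraph containing both a small-side vertex and a giant-side vertex. Therefore some edge of \(G[\gamma(u)]\) (owned in \(\tau_u\) or incident to \(\sigma(u)\)) is cut by \(\phi^\star\).

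The subgraphs \(G[\alpha(u)]\) plus the edges from \(\alpha(u)\) to \(\sigma(u)\) are edge-disjoint over distinct children \(u\), since the sets \(\alpha(u)\) are disjoint. Hence at most \(s\) children can have an adhesion mixing the small side with anything outside \(S_t\). We also include \(u=t\), one more adhesion.

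Each such adhesion has size \(O(\Lambda)\), so it meets at most \(O(\Lambda)\) components \(D\). This contributes \((s+1)\cdot O(\Lambda)=O(\Lambda^2)\) neighbors. The remaining subtlety is adhesions meeting \(S_t\) only together with vertices of \(S_t\); those contribute neighbors only inside \(\mathcal S_t\) and are excluded.

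Summing the two types of witnesses gives \(O(\Lambda^2)\) neighbors overall, with an additional factor \(k\) allowed by the statement to absorb the \(|\mathcal S_t|\le 2k-2\) components if the charging is done per component of \(\mathcal S_t\). This yields \(|N_{\mathcal A_t(\widehat C)}(\mathcal S_t)|=O(k\Lambda^2)\).
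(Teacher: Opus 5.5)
Your proof is correct and follows the paper's argument essentially step for step. Owned-edge neighbors are charged to the at most \(s\) cut edges of \(\phi^\star\), and the parent adhesion \(\sigma(t)\) contributes \(O(\Lambda)\) components. Compactness together with disjointness of the sets \(\alpha(u)\) shows that at most \(s\) child adhesions meet both \(\mathcal S_t\) and an outside component, and each such adhesion contributes \(O(\Lambda)\) neighbors. You give a little more detail than the paper where it only asserts: an outside component avoids \(S_t^V\supseteq S_t\), so its bag vertices lie in the giant part. Your conclusion \(O(\Lambda^2)\subseteq O(k\Lambda^2)\) matches the paper's.
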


\begin{proof}
Owned-edge neighbors are charged to cut edges of \(\phi^\star\). If an owned edge has one endpoint in a component of \(\mathcal S_t\) and the other endpoint outside \(\mathcal S_t\), then its endpoints lie in different parts of the optimum bag partition. Since \(\phi^\star\) has value at most \(s\), owned-edge neighbors contribute at most \(s\le\Lambda\) components.

Parent-adhesion neighbors contribute \(O(\Lambda)\), since \(|\sigma(t)|=O(\Lambda)\).

It remains to bound neighbors arising from child adhesions. Call a child \(u\) affected if \(\sigma(u)\) intersects a component of \(\mathcal S_t\) and also intersects a component outside \(\mathcal S_t\). Then \(\sigma(u)\) contains vertices on both sides of the optimum bag partition. Since the decomposition is compact, \(G[\alpha(u)]\) is connected and \(N_G(\alpha(u))=\sigma(u)\). Hence at least one edge incident with \(\alpha(u)\) is cut by \(\phi^\star\). The sets \(\alpha(u)\) over different children of \(t\) are disjoint, so these charges are distinct. Thus at most \(s\le\Lambda\) children are affected.

Each affected adhesion has size \(O(\Lambda)\), and therefore intersects at most \(O(\Lambda)\) components of \(\mathcal R(\widehat C)\). Hence child adhesions contribute \(O(\Lambda^2)\) auxiliary neighbors. Altogether,
\[
    |N_{\mathcal A_t(\widehat C)}(\mathcal S_t)|=O(\Lambda^2) \subseteq O(k\Lambda^2).
\]
\end{proof}

For a good first-level guess \(\widehat C\), choose an arbitrary component \(R_t^{\mathrm{cen}}\in\mathcal R(\widehat C)\setminus\mathcal S_t\) whose projection to \(\beta(t)\) is contained in the giant part. Such a component exists: the giant part contains a vertex of \(\beta(t)\), and \(\widehat C\supseteq C_t^\star\), so every component of \(T_{\beta(t)}-\widehat C\) projects into a single part of \(\Pi_t^\star\).

Let \(\mathcal H_2(t,\widehat C)\) be the family obtained from \cref{lem:random_separating_family} with universe \(\mathcal U=\mathcal R(\widehat C)\) and parameters
\[
    a=2k-2, \qquad b=c_2k\Lambda^2,
\]
for a sufficiently large absolute constant \(c_2\). Then
\[
    R_2:=|\mathcal H_2(t,\widehat C)| = k^{O(k)}\Lambda^{4k-4}L_\Lambda.
\]

\begin{definition}[Good Second-Level Guess]
A second-level guess \(\widehat{\mathcal S}\in\mathcal H_2(t,\widehat C)\) is good for \(\phi^\star\) at \(t\) if
\[
    \mathcal S_t\subseteq\widehat{\mathcal S}
\]
and
\[
    \widehat{\mathcal S}\cap \bigl(N_{\mathcal A_t(\widehat C)}(\mathcal S_t) \cup\{R_t^{\mathrm{cen}}\}\bigr) = \emptyset.
\]
\end{definition}

\begin{lemma}[Existence of a Good Second-Level Guess]
\label{lem:good_second_guess}
Condition on the generated first-level families. W.h.p., for every large bag \(t\) and every good first-level sample \(\widehat C\in\mathcal H_1(t)\), some \(\widehat{\mathcal S}\in\mathcal H_2(t,\widehat C)\) is good for \(\phi^\star\) at \(t\).
\end{lemma}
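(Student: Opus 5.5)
We apply \cref{lem:random_separating_family} directly, in the same way \cref{lem:good_first_guess} is proved. Fix a large bag \(t\) and a good first-level sample \(\widehat C\in\mathcal H_1(t)\). The first-level families are fixed by the conditioning, so \(\mathcal R(\widehat C)\), \(\mathcal S_t\), the auxiliary graph \(\mathcal A_t(\widehat C)\), and the chosen centre component \(R_t^{\mathrm{cen}}\) are all determined. They depend only on \(\phi^\star\), \(T\), the decomposition, and \(\widehat C\), not on the second-level randomness. The family \(\mathcal H_2(t,\widehat C)\) is generated independently after \(\widehat C\) is fixed. Hence the sets
\[
    A:=\mathcal S_t,
    \qquad
    B:=N_{\mathcal A_t(\widehat C)}(\mathcal S_t)\cup\{R_t^{\mathrm{cen}}\}
\]
are fixed subsets of the universe \(\mathcal U=\mathcal R(\widehat C)\) before \(\mathcal H_2(t,\widehat C)\) is sampled.

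Next we check the hypotheses. First, \(A\) and \(B\) are disjoint. By definition, the auxiliary neighborhood consists of neighbors outside \(\mathcal S_t\). Also, \(R_t^{\mathrm{cen}}\) was chosen in \(\mathcal R(\widehat C)\setminus\mathcal S_t\).

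Second, \(|A|\le 2k-2=a\) by \cref{lem:small_side_components}, which applies because \(\widehat C\) is good.

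Third, \cref{lem:aux_neighborhood_bound} gives \(|N_{\mathcal A_t(\widehat C)}(\mathcal S_t)|=O(k\Lambda^2)\), so \(|B|\le O(k\Lambda^2)+1\). Since \(\Lambda\ge1\) and \(k\ge2\), choosing \(c_2\) large enough gives \(|B|\le c_2k\Lambda^2=b\).

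If \(\mathcal S_t=\emptyset\), the inclusion condition is vacuous. Any sampled set avoiding \(B\) still suffices, and the lemma's bound \(p^{|A|}(1-p)^{|B|}\) covers this case as well. \cref{lem:random_separating_family} then says that, with high probability, some \(\widehat{\mathcal S}\in\mathcal H_2(t,\widehat C)\) satisfies \(\mathcal S_t\subseteq\widehat{\mathcal S}\) and \(\widehat{\mathcal S}\cap B=\emptyset\). This is exactly the definition of a good second-level guess.

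It remains to take a union bound over all pairs \((t,\widehat C)\). There are \(O(n)\) bags, and each has \(|\mathcal H_1(t)|=R_1=k^{O(k)}\Lambda^{2k-2}L_\Lambda\) first-level samples. The number of pairs is therefore \(k^{O(k)}\Lambda^{O(k)}n\), up to the \(L_\Lambda\) factor. The failure probability in \cref{lem:random_separating_family} is \(\exp(-\Omega(c_0L_\Lambda))\) with \(L_\Lambda=\Theta((k+1)\log(k(n+\Lambda)))\). This is at most \((k(n+\Lambda))^{-\Omega(c_0(k+1))}\), which beats that count once \(c_0\) is large.

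The only delicate point is the independence needed to apply the lemma: the sets \(A,B\) must not depend on \(\mathcal H_2\). They do not, because \(\phi^\star\) and \(T\) are fixed, \(R_t^{\mathrm{cen}}\) is chosen by a fixed rule, and \(\widehat C\) is already fixed by the conditioning.
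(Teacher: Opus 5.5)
Your proposal is correct and matches the paper's proof. Both apply \cref{lem:random_separating_family} on the universe $\mathcal R(\widehat C)$ with $A=\mathcal S_t$ and $B=N_{\mathcal A_t(\widehat C)}(\mathcal S_t)\cup\{R_t^{\mathrm{cen}}\}$, bound $|A|$ and $|B|$ via \cref{lem:small_side_components} and \cref{lem:aux_neighborhood_bound}, and union-bound over large bags and first-level samples. Your extra checks (disjointness of $A$ and $B$, independence from the second-level randomness, the case $\mathcal S_t=\emptyset$, and the explicit failure-probability arithmetic) are details the paper leaves implicit.
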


\begin{proof}
For fixed \(t\) and \(\widehat C\), apply \cref{lem:random_separating_family} with
\[
    A=\mathcal S_t, \qquad B=N_{\mathcal A_t(\widehat C)}(\mathcal S_t)\cup\{R_t^{\mathrm{cen}}\}.
\]
\cref{lem:small_side_components} gives \(|A|\le 2k-2\), and \cref{lem:aux_neighborhood_bound} gives \(|B|=O(k\Lambda^2)\). Choosing \(c_2\) sufficiently large gives \(|B|\le c_2k\Lambda^2\). A union bound over all large bags and first-level samples proves the claim.
\end{proof}

For this subsection, write
\[
    \mathsf{Adh}(t):=\operatorname{children}(t)\cup\{t\}.
\]

\begin{definition}[Nice Decomposition]
\label{def:nice_decomposition}
A triple
\[
    D=(\Pi^{\mathrm{nice}}_{\beta(t)},\Theta_{\beta(t)},O)
\]
is a nice decomposition of \(\beta(t)\) if \(\Theta_{\beta(t)}\) refines \(\Pi^{\mathrm{nice}}_{\beta(t)}\), \(O\) is one part of \(\Pi^{\mathrm{nice}}_{\beta(t)}\), and:
\begin{enumerate}
    \item \(\Theta_{\beta(t)}|_O=\{O\}\)
    \item for every part \(P\in\Pi^{\mathrm{nice}}_{\beta(t)}\setminus\{O\}\), the restricted partition \(\Theta_{\beta(t)}|_P\) has at most \(2k-1\) parts
    \item no edge of \(G[\beta(t)]\) has endpoints in two distinct parts of \(\Pi^{\mathrm{nice}}_{\beta(t)}\setminus\{O\}\)
    \item no adhesion \(\sigma(u)\), for \(u\in\mathsf{Adh}(t)\), intersects two distinct parts of \(\Pi^{\mathrm{nice}}_{\beta(t)}\setminus\{O\}\)
\end{enumerate}
\end{definition}

For a fixed pair \((t,\widehat C)\), we use the quotient auxiliary graph on the components of \(\mathcal R(\widehat C)\) constructed in \cref{lem:fast_compatibility}. A component of \(\mathcal R(\widehat C)\) is called \emph{heavy} if its degree in this quotient auxiliary graph is larger than $\Delta:=c k\Lambda^2$ for a sufficiently large constant \(c\). Heavy components are determined during the preprocessing for \((t,\widehat C)\), independently of the second-level guess \(\widehat{\mathcal S}\). In the algorithm below, a component is \emph{selected} if it belongs to \(\widehat{\mathcal S}\).

\begin{algorithm}[H]
\caption{\textsc{NiceDecomposition}$(t,T,\widehat C,\widehat{\mathcal S})$}
\label{alg:nice_decomposition}
\begin{algorithmic}[1]
\State Let \(\mathcal R(\widehat C)\) be the nonempty projected components of \(T_{\beta(t)}-\widehat C\).
\State Start with one atom for each component of \(\mathcal R(\widehat C)\).
\State Merge every component not in \(\widehat{\mathcal S}\) into a center atom \(\Theta_0\).
\State Send every selected heavy component to \(\Theta_0\).
\State In the induced auxiliary graph on the remaining selected components, compute the connected components of size at most \(2k-1\) using \cref{lem:fast_compatibility}.
\State Merge every remaining selected component not lying in one of these small auxiliary connected components into \(\Theta_0\).
\If{\(\Theta_0=\emptyset\)}
    \State \Return \(\bot\).
\EndIf
\State Let \(\Theta'\) be the resulting atom partition.
\State Form \(\Pi^{\mathrm{nice}}\) by merging each small auxiliary connected component into one non-center part.
\State Add the center atom \(\Theta_0\) to \(\Pi^{\mathrm{nice}}\), and set \(O\gets\Theta_0\).
\State Scan the component--adhesion incidence lists of the non-center components. Using an array indexed by \(\mathsf{Adh}(t)\) with timestamps, record for each touched adhesion the unique non-center part it touches.
\State Project \(\Theta'\), \(\Pi^{\mathrm{nice}}\), and \(O\) onto \(\beta(t)\), remove empty parts, and return
\[
    D=(\Pi^{\mathrm{nice}}_{\beta(t)},\Theta_{\beta(t)},O)
\]
together with the recorded adhesion-incidence data.
\end{algorithmic}
\end{algorithm}

A returned value \(\bot\) is ignored by the dynamic program. The timestamped array in Line~12 only avoids clearing an array over all of \(\mathsf{Adh}(t)\); only adhesions actually touched by the incidence scan are initialized.

The recorded adhesion-incidence data is well-defined. If an adhesion \(\sigma(u)\) touched two distinct non-center parts, then two selected components in those parts would be adjacent in $\mathcal A_t(\widehat C)[\widehat{\mathcal S}]$, and hence would lie in the same auxiliary connected component. Thus a fixed adhesion touches at most one non-center part. Since every non-center part contains at most \(2k-1\) components of \(\mathcal R(\widehat C)\), each adhesion appears in the scanned incidence lists of only \(O(k)\) non-center components. Therefore the incidence scan costs
\[
    k^{O(1)}(1+\deg_\tau^+(t))
\]
time.

\begin{lemma}[Fast Compatibility Structure]
\label{lem:fast_compatibility}
Fix a node \(t\), a first-level guess \(\widehat C\), and a second-level guess
\[
    \widehat{\mathcal S}\subseteq\mathcal R(\widehat C).
\]
After preprocessing \((t,\widehat C)\), the small connected components of \(\mathcal A_t(\widehat C)[\widehat{\mathcal S}]\) needed by \cref{alg:nice_decomposition}, together with the component--adhesion incidence lists needed to assign adhesions to non-center parts, can be computed in time
\[
    O\!\left( k^{O(1)} (1+\deg_\tau^+(t)+|\widehat{\mathcal S}|\Lambda^2) L_\Lambda^{O(1)} \right),
\]
after an additional preprocessing cost of
\[
    O\!\left( k^{O(1)} \left( |\beta(t)|+|E_t^\circ| + \sum_{u\in\mathsf{Adh}(t)} |\sigma(u)|^2 \right) L_\Lambda^{O(1)} \right).
\]
The preprocessing also stores owned-edge multiplicities between components of \(\mathcal R(\widehat C)\). Moreover, for a good pair \((\widehat C,\widehat{\mathcal S})\), no true small-side component is discarded by the heavy-component rule or by the size cutoff.
\end{lemma}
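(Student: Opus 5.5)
The plan has three parts: build a quotient auxiliary graph once per pair \((t,\widehat C)\), answer each second-level guess by a degree-bounded search on that graph, and then verify that a good guess never loses a true small-side component.

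\textbf{Preprocessing.} Label every vertex of \(T_{\beta(t)}\) by its component in \(\mathcal R(\widehat C)\) using one DFS of \(T_{\beta(t)}-\widehat C\). This costs \(O(|\beta(t)|)\), since \(|V(T_{\beta(t)})|\le 2|\beta(t)|\) by \cref{lem:virtual_tree_and_states}.

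Next, collect the owned-edge adjacencies. For each owned edge \(xy\in E_t^\circ\), emit the component pair \(\{\mathrm{comp}(x),\mathrm{comp}(y)\}\) whenever the two components differ.

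Then collect the adhesion adjacencies. For each \(u\in\mathsf{Adh}(t)\), compute the set of components meeting \(\sigma(u)\), emit all pairs among them, and append \(u\) to the incidence list of each component it meets. This costs \(O(|\sigma(u)|^2)\) per adhesion.

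Radix-sort and deduplicate the emitted pairs, which gives the quotient auxiliary graph \(\mathcal A_t(\widehat C)\). During deduplication, record the owned-edge multiplicities. From the result, compute the degree of every component and mark a component heavy when its degree exceeds \(\Delta=ck\Lambda^2\). For each light component, keep its adjacency list, which has at most \(\Delta\) entries. The sorting and hashing overhead is absorbed by \(k^{O(1)}L_\Lambda^{O(1)}\), and the whole step matches the stated preprocessing cost.

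\textbf{Per-guess query.} Given \(\widehat{\mathcal S}\), first drop the heavy members. Then, from each unvisited light selected component, run a BFS in \(\mathcal A_t(\widehat C)[\widehat{\mathcal S}]\) that only follows neighbours lying in \(\widehat{\mathcal S}\) and not heavy; membership is tested with a timestamped marker array. Abort the BFS once it has found \(2k\) vertices. Whether it finishes or aborts, mark every vertex it reached as visited.

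Each BFS touches at most \(2k\) vertices, and each of them scans at most \(\Delta\) neighbours. The total work is therefore \(O(|\widehat{\mathcal S}|\cdot k\Lambda^2)\). This stays within the bound because every scanning BFS is started at a distinct selected component and explores only \(O(k)\) of them.

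One point needs care here. A component reached by an aborted search may belong to a large auxiliary component, and restarting from it must not be charged twice. Marking such vertices visited handles this, since any auxiliary component containing an aborted search has more than \(2k-1\) vertices and is itself large. Finally, output the incidence lists restricted to components of the small auxiliary components. Because each adhesion meets only \(O(k)\) non-center components, as argued above, this adds \(k^{O(1)}(1+\deg_\tau^+(t))\).

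\textbf{Correctness for a good pair.} Take a true small-side component \(R\in\mathcal S_t\). Since \(\widehat{\mathcal S}\) is good, every auxiliary neighbour of \(R\) that lies in \(\widehat{\mathcal S}\) belongs to \(\mathcal S_t\), because \(\widehat{\mathcal S}\) avoids \(N_{\mathcal A_t(\widehat C)}(\mathcal S_t)\). So the auxiliary component of \(R\) in \(\mathcal A_t(\widehat C)[\widehat{\mathcal S}]\) lies inside \(\mathcal S_t\), and by \cref{lem:small_side_components} it has at most \(2k-2\le 2k-1\) vertices. The size cutoff therefore never discards it.

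For the heavy rule, I need \(\deg(R)\le\Delta\). The neighbours of \(R\) are either other members of \(\mathcal S_t\), of which there are at most \(2k-3\), or members of \(N_{\mathcal A_t(\widehat C)}(\mathcal S_t)\), of which there are \(O(k\Lambda^2)\) by \cref{lem:aux_neighborhood_bound}. Choosing the constant \(c\) large enough gives \(\deg(R)\le ck\Lambda^2\), so \(R\) is not heavy.

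The main obstacle is the query-time accounting: the truncated BFS must be charged so that large auxiliary components never cost more than \(O(k\Lambda^2)\) per selected component.
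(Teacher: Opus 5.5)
Your proposal is correct and follows essentially the paper's proof. Like the paper, you build the deduplicated quotient auxiliary graph with owned-edge multiplicities and component--adhesion incidence lists within the stated preprocessing time, mark as heavy every component of quotient degree above $\Delta=ck\Lambda^2$, and search only selected non-heavy components. Your correctness argument also matches the paper's: goodness, \cref{lem:small_side_components} and \cref{lem:aux_neighborhood_bound} show that no true small-side component is heavy or lies in an auxiliary component of size above $2k-1$.

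The one difference is your truncation at $2k$ vertices, which is unnecessary. The paper runs an untruncated BFS with global visited marks, so each selected non-heavy component is scanned exactly once and the total cost is $O(\Delta|\widehat{\mathcal S}|)$ with no charging subtlety. If you keep the truncation, later searches must be allowed to re-enter vertices reached by an aborted search rather than treating them as walls; otherwise a fragment of a large component could be reported as a small one. Re-entering raises the cost to $O(k^2\Lambda^2|\widehat{\mathcal S}|)$ rather than your claimed $O(k\Lambda^2|\widehat{\mathcal S}|)$, which is still within the $k^{O(1)}$ factor.
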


\begin{proof}
First compute the nonempty projected components \(\mathcal R(\widehat C)\) of \(T_{\beta(t)}-\widehat C\). For every \(u\in\mathsf{Adh}(t)\), list the distinct components of \(\mathcal R(\widehat C)\) intersecting \(\sigma(u)\), and insert \(u\) into the incidence list of each such component. These same lists are used to add the adhesion-clique edges in \(\mathcal A_t(\widehat C)\).

Build the quotient auxiliary graph on \(\mathcal R(\widehat C)\). For each owned edge \(xy\in E_t^\circ\), add its multiplicity to the pair of components containing \(x\) and \(y\), if these components are distinct. For each adhesion \(\sigma(u)\), add a clique on the distinct components it intersects. This costs \(O(|\sigma(u)|^2)\) for adhesion \(u\). Duplicate quotient edges are removed by hashing, while owned-edge multiplicities are retained.

After the quotient graph is built, mark as heavy every component whose quotient degree is larger than $\Delta=c k\Lambda^2$, where \(c\) is a sufficiently large constant. This marking is independent of \(\widehat{\mathcal S}\). For a good pair, no true small-side component is heavy. Indeed, \cref{lem:aux_neighborhood_bound} gives
\[
    |N_{\mathcal A_t(\widehat C)}(\mathcal S_t)|=O(k\Lambda^2).
\]
For every \(P\in\mathcal S_t\),
\[
    \deg_{\mathcal A_t(\widehat C)}(P) \le |\mathcal S_t|-1+ |N_{\mathcal A_t(\widehat C)}(\mathcal S_t)| = O(k\Lambda^2).
\]
Choosing \(c\) large enough therefore ensures that no component of \(\mathcal S_t\) is marked heavy.

For the query \(\widehat{\mathcal S}\), mark the selected components, send selected heavy components to the center, and run breadth-first search in the induced graph on the remaining selected components. Whenever the search explores a selected component, it scans its quotient adjacency list and follows only selected non-heavy neighbors. Since every non-heavy component has quotient degree at most \(\Delta\), this takes
\[
    O(k\Lambda^2|\widehat{\mathcal S}|L_\Lambda^{O(1)})
\]
time. Connected components of size at most \(2k-1\) are output as small auxiliary connected components; larger selected components are merged into the center.

During the same scans, aggregate the owned-edge multiplicities needed by the dynamic program. Multiplicities to unselected, heavy, or large discarded components are added to the center atom, while multiplicities between selected components in the same small auxiliary component are kept between the corresponding atoms.

Finally, for every component in an output non-center part, scan its component--adhesion incidence list. The timestamped array records, for each touched adhesion, the unique non-center part touched by that adhesion. This uniqueness follows from the paragraph before the lemma. Since every non-center part contains at most \(2k-1\) components, and each adhesion can be recorded for at most one non-center part, the total incidence-scan time is
\[
    k^{O(1)}(1+\deg_\tau^+(t)).
\]
Combining the preprocessing, graph search, multiplicity aggregation, and incidence scan gives the claimed bounds. The heavy-component argument above and \cref{lem:small_side_components} imply that, for a good pair, no true small-side component is discarded.
\end{proof}

\begin{lemma}[Generated Decomposition Refines the Optimum]
\label{lem:nice_contains_optimum}
If \(\widehat C\) and \(\widehat{\mathcal S}\) are good for \(\phi^\star\) at \(t\), then \cref{alg:nice_decomposition} does not return \(\bot\). It returns a nice decomposition
\[
    D=(\Pi^{\mathrm{nice}}_{\beta(t)},\Theta_{\beta(t)},O)
\]
such that \(\Theta_{\beta(t)}\) refines \(\Pi_t^\star\).
\end{lemma}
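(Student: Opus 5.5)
We trace \cref{alg:nice_decomposition} on a good pair \((\widehat C,\widehat{\mathcal S})\) and show three things: every component of \(\mathcal S_t\) survives as an atom in a non-center part; every component sent to \(\Theta_0\) projects into the giant part; and the four conditions of \cref{def:nice_decomposition} hold.

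\textbf{Step 1: \(\Theta_0\) lies inside the giant part, so the algorithm does not return \(\bot\).} Since \(\widehat C\supseteq C_t^\star\), every component of \(\mathcal R(\widehat C)\) projects into a single part of \(\Pi_t^\star\). A component not in \(\mathcal S_t\) misses \(S_t^V\), hence misses \(S_t\), and so projects into the giant part.

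\begin{itemize}
\item Unselected components are outside \(\widehat{\mathcal S}\supseteq\mathcal S_t\), so they project into the giant part.
\item By \cref{lem:fast_compatibility}, no component of \(\mathcal S_t\) is heavy. So selected heavy components are not in \(\mathcal S_t\) and also project into the giant part.
\item Goodness gives \(R_t^{\mathrm{cen}}\notin\widehat{\mathcal S}\). Hence \(R_t^{\mathrm{cen}}\) is merged into \(\Theta_0\), which is therefore nonempty.
\end{itemize}

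\textbf{Step 2 (main obstacle): components of \(\mathcal S_t\) form whole small auxiliary components.} Goodness says \(\widehat{\mathcal S}\) avoids \(N_{\mathcal A_t(\widehat C)}(\mathcal S_t)\). Hence in the induced graph \(\mathcal A_t(\widehat C)[\widehat{\mathcal S}]\), with heavy components removed (none of which lie in \(\mathcal S_t\)), no edge leaves \(\mathcal S_t\). So every connected component of this graph that meets \(\mathcal S_t\) is contained in \(\mathcal S_t\). By \cref{lem:small_side_components}, \(|\mathcal S_t|\le 2k-2\le 2k-1\), so each such component passes the size cutoff. It becomes its own non-center part, and each of its members stays a separate atom.

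Any other selected component either lies in a small auxiliary component disjoint from \(\mathcal S_t\) or is merged into \(\Theta_0\). In both cases it is not in \(\mathcal S_t\), so it projects into the giant part. This is the delicate bookkeeping: the argument relies on heavy marking being applied before the BFS, and on the size cutoff.

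\textbf{Step 3: \(\Theta_{\beta(t)}\) refines \(\Pi_t^\star\).} Atoms outside \(\Theta_0\) are single components of \(\mathcal R(\widehat C)\), and each lies in one part of \(\Pi_t^\star\). The atom \(\Theta_0\) lies in the giant part by Steps 1 and 2. Projecting onto \(\beta(t)\) and removing empty parts preserves refinement.

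\textbf{Step 4: the nice-decomposition conditions.}
\begin{enumerate}
\item Conditions 1 and 2 hold by construction. \(O=\Theta_0\) is a single atom, and every non-center part is a small auxiliary component with at most \(2k-1\) atoms.
\item For conditions 3 and 4, take two distinct non-center parts. They are distinct connected components of \(\mathcal A_t(\widehat C)[\widehat{\mathcal S}]\) minus heavy components, so no auxiliary edge joins them. An edge of \(G[\beta(t)]\) between them would be either owned at \(t\), giving an auxiliary edge, or lie inside \(\sigma(t)\), since unowned bag edges lie in the parent adhesion; that adhesion would then touch both parts, again giving an auxiliary edge. Likewise, an adhesion \(\sigma(u)\) with \(u\in\mathsf{Adh}(t)\) meeting both parts creates an auxiliary edge. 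Either case is a contradiction.
\end{enumerate}
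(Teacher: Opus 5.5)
Your proposal is correct and follows essentially the same argument as the paper. Both proofs use the same ingredients: \(\widehat C\supseteq C_t^\star\) makes \(\mathcal R(\widehat C)\) refine \(\Pi_t^\star\); goodness together with \cref{lem:fast_compatibility} and \cref{lem:small_side_components} keeps the components of \(\mathcal S_t\) as whole small non-center parts; the anchor \(R_t^{\mathrm{cen}}\) rules out \(\bot\); and the fact that unowned bag edges lie in \(\sigma(t)\) gives niceness. The only difference is that you place \(\Theta_0\) in the giant part before proving that the small-side components survive, which does not matter.
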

\begin{proof}
Since \(\widehat C\) contains \(C_t^\star\), the projected components \(\mathcal R(\widehat C)\) refine \(\Pi_t^\star\). Since \(\widehat{\mathcal S}\) contains \(\mathcal S_t\) and avoids \(N_{\mathcal A_t(\widehat C)}(\mathcal S_t)\), every connected component of $\mathcal A_t(\widehat C)[\widehat{\mathcal S}]$ that intersects \(\mathcal S_t\) is contained entirely in \(\mathcal S_t\). By \cref{lem:small_side_components}, such a component has size at most \(2k-2\). By \cref{lem:fast_compatibility}, no true small-side component is discarded by the heavy-component rule. Therefore the algorithm never merges a true small-side component into the center.

Every component of \(\mathcal R(\widehat C)\) not selected by \(\widehat{\mathcal S}\) is outside \(\mathcal S_t\), and hence projects into the giant optimum part. Similarly, every selected auxiliary connected component merged into the center is either heavy or has size greater than \(2k-1\); by the previous paragraph, it is disjoint from \(\mathcal S_t\), and so it also projects into the giant optimum part. Since \(\widehat{\mathcal S}\) avoids \(R_t^{\mathrm{cen}}\), at least one central component is merged into \(\Theta_0\). Thus the algorithm does not return \(\bot\), and the center part \(O\) is contained in the giant optimum part.

All remaining atoms of \(\Theta_{\beta(t)}\) are projected components of \(\mathcal R(\widehat C)\). Each such component is contained in a single part of \(\Pi_t^\star\), because \(\mathcal R(\widehat C)\) refines \(\Pi_t^\star\). Hence \(\Theta_{\beta(t)}\) refines \(\Pi_t^\star\).

It remains to verify that the returned triple is nice. The center condition \(\Theta_{\beta(t)}|_O=\{O\}\) holds by construction, since all central atoms are merged into one part. Every non-center part of \(\Pi^{\mathrm{nice}}_{\beta(t)}\) comes from a connected component of \(\mathcal A_t(\widehat C)[\widehat{\mathcal S}]\) of size at most \(2k-1\), so it contains at most \(2k-1\) parts of \(\Theta_{\beta(t)}\).

Distinct non-center parts of \(\Pi^{\mathrm{nice}}_{\beta(t)}\) come from distinct connected components of the selected auxiliary graph after heavy and large components have been sent to the center. Hence no auxiliary edge joins two distinct non-center parts. In particular, no owned edge joins two distinct non-center parts, and no adhesion \(\sigma(u)\), for \(u\in\mathsf{Adh}(t)\), intersects two distinct non-center parts. Finally, by the observation in the definition of \(\mathcal A_t(\widehat C)\), every edge of \(G[\beta(t)]\) that is not owned at \(t\) is represented by the parent adhesion \(\sigma(t)\). Since \(\sigma(t)\) also cannot intersect two distinct non-center parts, no edge of \(G[\beta(t)]\) joins two distinct non-center parts. Therefore \(D\) is a nice decomposition.
\end{proof}

\subsection{Evaluating Nice Decompositions}

For a nice decomposition
\(D=(\Pi^{\mathrm{nice}}_{\beta(t)},\Theta_{\beta(t)},O)\), let
\[
    \Pi^{\mathrm{nice}}_{\beta(t)}\setminus\{O\} = \{P_1,\ldots,P_{\rho(D)}\}
\]
be its non-center parts. Set
\[
    B_0:=O, \qquad B_\ell:=O\cup P_\ell \quad\text{for }1\le \ell\le \rho(D).
\]

A labeled coarsening of \(\Theta_{\beta(t)}\) is an assignment of labels in \([k]\) to the parts of \(\Theta_{\beta(t)}\); parts assigned the same label are merged. It is compatible with \(D\) if it can be obtained by independently choosing labeled coarsenings of
\[
    \Theta_{\beta(t)}|_{B_\ell}
\]
for \(\ell=0,\ldots,\rho(D)\), with all choices assigning the same label to the common atom \(O\). When \(\rho(D)=0\), there is only the center block \(B_0=O=\beta(t)\).

\begin{lemma}[Solving a Nice Decomposition]
\label{lem:solve_nice_decomposition}
Given a nice decomposition
\[
    D=(\Pi^{\mathrm{nice}}_{\beta(t)},\Theta_{\beta(t)},O),
\]
together with the incidence records produced by \cref{alg:nice_decomposition}, all compatible labeled transitions whose bag labeling coarsens \(\Theta_{\beta(t)}\) can be evaluated in time
\[
    k^{O(k)}(1+\rho(D)+\deg_\tau^+(t)).
\]
Moreover, if \(\Theta_{\beta(t)}\) refines \(\Pi_t^\star\), then this evaluation includes the transition induced by \(\phi^\star\) at \(t\).
\end{lemma}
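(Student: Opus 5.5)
The proof treats each block \(B_\ell=O\cup P_\ell\) as an independent subproblem sharing only the label of the center atom \(O\). Niceness is what makes this decomposition exact, and a small dynamic program over the shared label and the label-set bookkeeping then combines the blocks.

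\emph{Step 1: the cost is additive over blocks once the center label is fixed.} Fix the label \(a\in[k]\) of \(O\). By condition~3 of \cref{def:nice_decomposition}, every owned edge of \(E_t^\circ\) lies inside a single block \(B_\ell\), or inside \(O\) itself. Edges inside \(O\) cost nothing, because \(\Theta_{\beta(t)}|_O=\{O\}\). So \(\operatorname{cost}_t(\psi)\) splits as a sum of block costs. Each block cost is determined by the owned-edge multiplicities between atoms of \(B_\ell\), which \cref{lem:fast_compatibility} stores, and this includes the multiplicities to the center atom. The children behave the same way. By condition~4 and the incidence records, each child adhesion \(\sigma(u)\) meets at most one non-center part, or none. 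The restriction \(\psi|_{\sigma(u)}\) is therefore determined by \(a\) together with the labeling of that single block. So each child term \(\min_{L_u} f_u(\psi|_{\sigma(u)},L_u)\) attaches to a unique block, or to the center block when \(\sigma(u)\subseteq O\). The parent adhesion \(\sigma(t)\) likewise attaches to a single block.

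\emph{Step 2: enumerate labelings block by block.} For each block \(\ell\) and each center label \(a\), enumerate the at most \(k^{2k-1}\) labelings of the at most \(2k-1\) atoms of \(P_\ell\). For each one, compute the block cost, its used-label set, and, for each child attached to the block, the state \(\psi|_{\sigma(u)}\). For each child we then need the best choice over its label sets \(L_u\), and there are only \(2^k\) of these. So each block produces a table indexed by \((a,\text{used labels})\) that stores the best value together with the realized label set \(L\). One point needs care: computing \(\psi|_{\sigma(u)}\) must take \(O(k^{O(1)})\) per child, not \(O(|\sigma(u)|)\) time. To achieve this, I will have the precomputed incidence data represent \(\sigma(u)\) as a map from its at most \(2k\) touched atoms to vertex lists. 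The DP lookup then keys \(f_u\) by the induced atom labeling, which I intend to precompute once per \((t,\widehat C)\) and absorb into the preprocessing cost.

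\emph{Step 3: merge the block tables.} Combine the tables with a subset-union DP over label sets \(L\subseteq[k]\): each block is folded in at \(2^{O(k)}\) cost. Only states with \(\psi|_{\sigma(t)}\in\mathcal L_T^{\sigma(t)}\) are kept. The parent-adhesion state is contributed by its unique block. The block count is \(1+\rho(D)\), and each child is handled once, so the total time is \(k^{O(k)}(1+\rho(D)+\deg_\tau^+(t))\).

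\emph{Step 4: the optimum is among the evaluated transitions.} If \(\Theta_{\beta(t)}\) refines \(\Pi_t^\star\), then \(\phi^\star|_{\beta(t)}\) gives a label to each atom. This labeling is a labeled coarsening of \(\Theta_{\beta(t)}\) that agrees on \(O\), so it is compatible, and it is enumerated in Step~2. Its child states are feasible, because \(T\) respects \(\phi^\star\), and its child witnesses give finite values. So the transition induced by \(\phi^\star\) is among those evaluated.

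\emph{Main obstacle.} The hard part is keeping the per-child cost independent of \(|\sigma(u)|\). Handling the parent adhesion is a related difficulty, since it may be large and may straddle the center and one block. I would first try to resolve both through the atom-level keying described in Step~2.
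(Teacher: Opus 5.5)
Your outline matches the paper's proof. You fix the center label and enumerate the \(k^{O(k)}\) labelings of the at most \(2k\) atoms of each block \(B_\ell=O\cup P_\ell\). You attach every owned edge and every child to a unique block using conditions~3--4 of \cref{def:nice_decomposition}, combine blocks by a knapsack over label sets, and recover the \(\phi^\star\)-transition by labeling atoms with \(\phi^\star\).

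Read literally, however, Step~2 under-indexes its tables in two ways, and each would break Step~4.
\begin{enumerate}
\item A child's label set cannot be fixed to its cheapest option, as your \(\min_{L_u}f_u\) and ``best choice over its label sets'' suggest. The \(\phi^\star\)-transition needs the child entry at exactly \(\lambda(\phi^\star|_{\gamma(u)})\). That entry may be costlier, but it is needed, for example, to reach \([k]\) at the root. So for each block and each fixed \(\psi_\ell\), the children must be folded in one at a time through a table over \(L\subseteq[k]\), as in the paper's \(K^{j}_{\ell,\psi_\ell}(L)\).
\item The table of the block met by \(\sigma(t)\) must also be indexed by the induced parent state \(\eta=\psi_\ell|_{\sigma(t)\cap B_\ell}\), as in the paper's \(G^q_\ell(\eta,L)\). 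Different \(\eta\) feed different entries \(f_t(\eta,L)\). If you minimize over \(\psi_\ell\) per \((a,L)\) first, the \(\phi^\star\)-induced entry of \(f_t\) may never receive the value of the \(\phi^\star\)-transition.
\end{enumerate}
Both fixes cost only \(k^{O(k)}\) per child and per block.

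Your ``main obstacle'' is real, and the paper's proof passes over it. The paper charges only a linear scan over the children and treats \(f_u(\psi_\ell|_{\sigma(u)},L_u)\) as a constant-time lookup. Yet \(|\sigma(u)|\) can be \(\Theta(\Lambda)\), and an extra factor \(\Lambda\) per child per joint guess would raise \(\Lambda^{6k-6}\) to \(\Lambda^{6k-5}\) in the final bound.

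Your remedy does not work as stated, because the atoms depend on \(\widehat{\mathcal S}\), in particular the center atom \(O\). So nothing keyed by atoms can be precomputed once per \((t,\widehat C)\), and building a map from atoms to vertex lists costs \(\Theta(|\sigma(u)|)\) per guess.

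Key instead on the components of \(\mathcal R(\widehat C)\). These are fixed for \((t,\widehat C)\), and each non-center atom is one such component. For instance:
\begin{itemize}
\item Store every \(f_u\) in a dictionary under the linear hash \(h(\varphi)=\sum_{v\in\sigma(u)}\varphi(v)r_v\) with random \(r_v\).
\item While building the incidence lists, record \(\sum_{v\in\sigma(u)\cap R}r_v\) for each component \(R\) meeting \(\sigma(u)\). This costs \(O(|\sigma(u)|)\), which is within the preprocessing bound of \cref{lem:fast_compatibility}.
\item Then \(h(\psi_\ell|_{\sigma(u)})\) equals the center label \(a\) times (the total sum minus the non-center sums), plus the \(\psi_\ell\)-weighted non-center sums. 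This takes \(O(k)\) time.
\end{itemize}
The same device keys the output entries \(f_t(\eta,L)\) and the test \(\eta\in\mathcal L_T^{\sigma(t)}\). Children with \(\sigma(u)\subseteq O\) need none of this: their state is the constant labeling \(a\), whose table row can be cached once per child.
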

\begin{proof}
For each block \(B_\ell\), let
\[
    \mathcal A_\ell:=\Theta_{\beta(t)}|_{B_\ell}
\]
be its local atom set. By niceness,
\[
    |\mathcal A_0|=1, \qquad |\mathcal A_\ell|\le 2k \quad\text{for }1\le \ell\le \rho(D),
\]
because \(O\) is one atom and each non-center part contains at most \(2k-1\) atoms of \(\Theta_{\beta(t)}\). Hence all local labelings
\[
    \psi_\ell:\mathcal A_\ell\to [k]
\]
can be enumerated in \(k^{O(k)}\) time per block. We enumerate the common center label \(q\in[k]\), and only consider local labelings satisfying
\[
    \psi_\ell(O)=q.
\]

Using the incidence records, assign each child \(u\in\operatorname{children}(t)\) to exactly one block. If \(\sigma(u)\) touches a non-center part \(P_\ell\), assign \(u\) to \(B_\ell\); otherwise assign \(u\) to the center block \(B_0\). This is well-defined by niceness, since no adhesion touches two distinct non-center parts. Let $\mathcal U_\ell$ be the set of children assigned to \(B_\ell\). The incidence records allow these sets to be formed in $k^{O(1)}(1+\deg_\tau^+(t))$ time.

Similarly, assign each owned edge in \(E_t^\circ\) to exactly one block. If both endpoints lie in \(O\), assign the edge to \(B_0\). Otherwise, if at least one endpoint lies in \(P_\ell\), assign the edge to \(B_\ell\). This is well-defined because no graph edge joins two distinct non-center parts. The stored owned-edge multiplicities between atoms of \(\Theta_{\beta(t)}\) give, for each local labeling \(\psi_\ell\), the local owned-edge cost
\[
    c_\ell(\psi_\ell) := \sum_{\{A,A'\}\in \mathcal E_\ell} m_t(A,A')\, \mathbf 1[\psi_\ell(A)\ne \psi_\ell(A')],
\]
where \(\mathcal E_\ell\) is the set of unordered atom pairs whose owned edges are assigned to \(B_\ell\), and \(m_t(A,A')\) is the stored multiplicity of owned edges between atoms \(A\) and \(A'\). Loops inside a single atom contribute zero and may be ignored.

Fix a block \(B_\ell\), a common center label \(q\), and a local labeling \(\psi_\ell\) with \(\psi_\ell(O)=q\). Order the children assigned to this block as
\[
    \mathcal U_\ell=\{u_1,\ldots,u_m\}.
\]
For each child \(u_j\), the local labeling fixes the labeled adhesion state
\[
    \varphi_{u_j}:=\psi_\ell|_{\sigma(u_j)}.
\]
If \(\varphi_{u_j}\notin\mathcal L_T^{\sigma(u_j)}\), then this local labeling \(\psi_\ell\) is infeasible for the block. Otherwise the child may contribute any label set \(L_{u_j}\subseteq[k]\) with
\[
    f_{u_j}(\varphi_{u_j},L_{u_j})<\infty.
\]

Define the child-knapsack table
\[
    K_{\ell,\psi_\ell}^{j}(L)
\]
for \(0\le j\le m\) and \(L\subseteq[k]\). Its value is the minimum cost of the local owned edges together with the first \(j\) child subtrees, using overall label set \(L\). Initialize
\[
    K_{\ell,\psi_\ell}^{0}(L) =
    \begin{cases}
        c_\ell(\psi_\ell), & L=\lambda(\psi_\ell),\\
        \infty, & \text{otherwise},
    \end{cases}
\]
where
\[
    \lambda(\psi_\ell):=\{\psi_\ell(A):A\in\mathcal A_\ell\}.
\]
For \(j=1,\ldots,m\), set
\[
    K_{\ell,\psi_\ell}^{j}(L) =
    \min_{\substack{ L'\subseteq[k],\ L_u\subseteq[k]\\ L=L'\cup L_u }}
    \left(
        K_{\ell,\psi_\ell}^{j-1}(L') + f_{u_j}(\varphi_{u_j},L_u)
    \right).
\]
If \(\varphi_{u_j}\notin\mathcal L_T^{\sigma(u_j)}\), equivalently all terms involving \(u_j\) are \(\infty\).

The block table records the best value for each induced parent-adhesion restriction. Let
\[
    \eta_\ell(\psi_\ell):=\psi_\ell|_{\sigma(t)\cap B_\ell}.
\]
Define
\[
    G_\ell^q(\eta,L) :=
    \min_{\substack{
        \psi_\ell:\mathcal A_\ell\to[k]\\
        \psi_\ell(O)=q\\
        \eta_\ell(\psi_\ell)=\eta
    }}
    K_{\ell,\psi_\ell}^{|\mathcal U_\ell|}(L).
\]
The number of possible pairs \((\eta,L)\) is \(k^{O(k)}\). Indeed, \(L\subseteq[k]\), and by niceness the parent adhesion \(\sigma(t)\) intersects at most one non-center part; inside any block, its restriction is therefore determined by labels on at most \(2k\) atoms.

It remains to combine the block tables. For the fixed center label \(q\), define
\[
    H_0^q(\eta,L):=G_0^q(\eta,L).
\]
For \(j=1,\ldots,\rho(D)\), define
\[
    H_j^q(\eta,L) =
    \min_{\substack{
        \eta',\eta'',\, L',L''\\
        \eta=\eta'\cup\eta''\\
        L=L'\cup L''
    }}
    \left(
        H_{j-1}^q(\eta',L') + G_j^q(\eta'',L'')
    \right).
\]
The union \(\eta'\cup\eta''\) is only meaningful when the two partial parent-adhesion labelings agree wherever both are defined. In the present setting different non-center blocks are disjoint on \(\sigma(t)\), and their only possible overlap is on the center atom \(O\), whose label is fixed to be the common value \(q\).

After all blocks are combined, the entries
\[
    H_{\rho(D)}^q(\eta,L)
\]
with
\[
    \eta\in\mathcal L_T^{\sigma(t)}
\]
contribute candidates to
\[
    f_t(\eta,L).
\]
Taking the minimum over all common center labels \(q\in[k]\) and all compatible decompositions gives the value for the table of \(t\).

Every transition produced by these recurrences is sound. It chooses a compatible bag labeling, chooses child solutions whose labeled adhesion states agree with the restriction of that bag labeling, pays exactly the owned-edge cost at \(t\), and unions the child label sets with the labels used on the bag. Since the child adhesions partition among the blocks and owned edges are assigned to exactly one block, no child contribution or owned-edge contribution is omitted or counted twice.

Now assume \(\Theta_{\beta(t)}\) refines \(\Pi_t^\star\). Label every atom of \(\Theta_{\beta(t)}\) by its label under \(\phi^\star\). This labeling is compatible with \(D\): the center atom \(O\) receives the giant optimum label, and the restrictions to the blocks \(O\cup P_\ell\) agree on \(O\). For every child \(u\), the induced adhesion state is exactly
\[
    \phi^\star|_{\sigma(u)},
\]
and the chosen child label set is
\[
    \lambda(\phi^\star|_{\gamma(u)}).
\]
By the inductive completeness invariant, the corresponding child table entry is finite and has the correct value. Therefore the local block knapsacks and the final block knapsack include exactly the transition induced by \(\phi^\star\) at \(t\).

The running time is $k^{O(k)}$ per local block, plus a linear scan over the children assigned to that block. Each child is assigned to exactly one block, so the total child-processing time is
\[
    k^{O(k)}(1+\deg_\tau^+(t)).
\]
The final block knapsack has \(k^{O(k)}\) states per block and \(\rho(D)\) blocks, hence costs
\[
    k^{O(k)}(1+\rho(D)).
\]
Thus all compatible labeled transitions for \(D\) are evaluated in time
\[
    k^{O(k)}(1+\rho(D)+\deg_\tau^+(t)).
\]
\end{proof}

\subsection{The Fixed-Tree Algorithm}

Let \(c_{\mathrm{gen}}\) be a sufficiently large constant depending only on the fixed implementations of the routines below, and set
\[
    B_{\mathrm{gen}} := k^{c_{\mathrm{gen}} k} \Lambda^{6k-6}nL_\Lambda^{c_{\mathrm{gen}}}.
\]
During \textsc{FixedTreeDP}, we count the total random-generation and selected-component scanning work: geometric-gap operations, generated first-level elements, and \(\Lambda^2\) times the number of generated second-level components. If this counter exceeds \(B_{\mathrm{gen}}\), the run returns \(\infty\).

\begin{algorithm}[H]
\caption{\textsc{FixedTreeDP}$(G,k,s,\Lambda,(\tau,\beta),T)$}
\label{alg:fixed_tree_dp}
\begin{algorithmic}[1]
\State Preprocess \(T\) for LCA queries and Euler intervals.
\State Compute the owned edge sets \(E_t^\circ\) for all \(t\in V(\tau)\).
\State Initialize the generation counter \(Z\gets 0\).
\For{nodes \(t\in V(\tau)\) in postorder}
    \State Compute \(T_{\beta(t)}=\operatorname{proj}(T,\beta(t))\).
    \State Initialize all entries \(f_t(\phi_{\sigma(t)},L)\) to \(\infty\).
    \If{\(|\beta(t)|\le kq\)}
        \State Enumerate all labeled partitions \(\psi\in\mathcal L_T^{\beta(t)}\).
        \State Evaluate all standard transitions based on these bag labelings.
    \Else
        \State Generate \(R_1\) first-level samples \(\widehat C\subseteq E(T_{\beta(t)})\), increasing \(Z\) by the number of generation operations.
        \If{\(Z>B_{\mathrm{gen}}\)}
            \State \Return \(\infty\).
        \EndIf
        \For{each first-level sample \(\widehat C\)}
            \State Compute \(\mathcal R(\widehat C)\), build the quotient auxiliary graph, store owned-edge multiplicities, build component--adhesion incidence lists, and mark heavy components.
            \State Generate \(R_2\) second-level samples \(\widehat{\mathcal S}\subseteq\mathcal R(\widehat C)\), increasing \(Z\) by the number of generation operations and by \(\Lambda^2|\widehat{\mathcal S}|\).
            \If{\(Z>B_{\mathrm{gen}}\)}
                \State \Return \(\infty\).
            \EndIf
            \For{each second-level sample \(\widehat{\mathcal S}\)}
                \State \(D\gets \textsc{NiceDecomposition}(t,T,\widehat C,\widehat{\mathcal S})\), using the quotient auxiliary graph and sending all selected heavy components to the center.
                \If{\(D\ne\bot\)}
                    \State Evaluate \(D\) using \cref{lem:solve_nice_decomposition}.
                \EndIf
            \EndFor
        \EndFor
    \EndIf
    \State Truncate every entry larger than \(s\) to \(\infty\).
\EndFor
\State \Return \(f_r(\phi_\emptyset,[k])\).
\end{algorithmic}
\end{algorithm}

The standard transition for a small bag is the labeled gluing transition defined above: for each enumerated \(\psi\in\mathcal L_T^{\beta(t)}\), choose compatible child states, union their label sets with \(\lambda(\psi)\), add the owned-edge cost, and write the resulting value into the appropriate parent-adhesion state.

\begin{lemma}[Fixed-Tree Correctness]
\label{lem:fixed_tree_algorithm_correctness}
For a fixed tree \(T\), \cref{alg:fixed_tree_dp} never returns a finite value below \(\lambda_k(G)\). If \(T\) crosses some optimum \(k\)-cut \(\phi^\star\) in at most \(2k-2\) edges, then \cref{alg:fixed_tree_dp} returns \(\lambda_k(G)\) w.h.p., unless the generation counter overflows.
\end{lemma}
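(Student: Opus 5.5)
The proof splits into two parts. The first is a soundness invariant showing the algorithm never returns a finite value below \(\lambda_k(G)\). The second is a completeness invariant along the fixed optimum \(\phi^\star\). Both are proved by induction over \(\tau\) in postorder.

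\emph{Soundness.} The invariant is the one stated with the tables. Every finite entry \(f_t(\phi_{\sigma(t)},L)\) has a witness \(\phi:\gamma(t)\to[k]\) with \(\phi|_{\sigma(t)}=\phi_{\sigma(t)}\), \(\lambda(\phi)=L\), and value exactly \(\operatorname{cost}_{E^\circ_{\le t}}(\phi)\). For small bags this follows directly from the gluing definition. For large bags it follows from the soundness paragraph of \cref{lem:solve_nice_decomposition}: every evaluated transition glues a bag labeling with child witnesses agreeing on the labeled adhesions.

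The glued map is well defined by the tree-decomposition axioms. Each vertex's occurrence set is connected, so two children's subtrees meet only inside \(\beta(t)\), where the labels agree. Its cost is \(\operatorname{cost}_t(\psi)\) plus the children's costs, because the sets \(E^\circ_u\) partition \(E^\circ_{\le t}\). At the root, \(\gamma(r)=V\) and \(E^\circ_{\le r}=E(G)\), and \(L=[k]\) forces all \(k\) labels to be used. So a finite root value is the value of a genuine \(k\)-cut and is at least \(\lambda_k(G)\). Truncation only replaces values with \(\infty\), which does not affect soundness.

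\emph{Completeness.} The claim is that for every node \(t\), w.h.p.,
\[
f_t\bigl(\phi^\star|_{\sigma(t)},\lambda(\phi^\star|_{\gamma(t)})\bigr)\le\operatorname{cost}_{E^\circ_{\le t}}(\phi^\star|_{\gamma(t)}),
\]
and this value is at most \(\lambda_k(G)\le s\), so truncation does not remove it.

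First, the required states must be feasible. Since \(T\) crosses \(\phi^\star\) in at most \(2k-2\) edges, \(\phi^\star\) is a \((2k-2)\)-admissible labeling of \(V(T)\). Hence every restriction \(\phi^\star|_{\sigma(u)}\) and \(\phi^\star|_{\beta(t)}\) lies in the corresponding \(\mathcal L_T\).

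For a small bag, the enumeration over \(\mathcal L_T^{\beta(t)}\) includes \(\psi=\phi^\star|_{\beta(t)}\). Combined with the inductive child entries, this yields the transition induced by \(\phi^\star\).

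For a large bag, the argument chains four earlier results:
\begin{itemize}
\item \cref{lem:good_first_guess} gives, w.h.p., a good \(\widehat C\).
\item \cref{lem:good_second_guess} then gives a good \(\widehat{\mathcal S}\).
\item \cref{lem:nice_contains_optimum} shows that \textsc{NiceDecomposition} returns a nice \(D\) whose \(\Theta_{\beta(t)}\) refines \(\Pi^\star_t\).
\item \cref{lem:solve_nice_decomposition} then evaluates the transition induced by \(\phi^\star\), using the inductive child entries.
\end{itemize}
The high-probability events are taken over \(O(n)\) bags and \(k^{O(k)}\Lambda^{O(k)}\) samples per bag. The single-sample failure probabilities from \cref{lem:random_separating_family} are small enough that a union bound over all of them still leaves failure probability polynomially small. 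Unless the counter overflows, every loop runs in full, so these good samples are actually processed.

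\emph{Main obstacle.} The delicate step is checking that the conditioning is legitimate. The second-level families must be generated independently, given each good first-level sample, so that \cref{lem:good_second_guess} applies to every first-level sample simultaneously.

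A second point needs care. The fixed optimum \(\phi^\star\) used in the completeness argument is the one respected by \(T\), not an arbitrary optimum. The good/bad events depend only on \(\phi^\star\), \(T\) and \((\tau,\beta)\), which are fixed before any randomness, so no adaptivity issue arises.

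Finally, an overflow simply returns \(\infty\), which is consistent with soundness. This is exactly the exception stated in the lemma.
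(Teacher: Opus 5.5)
Your proposal is correct and follows essentially the same route as the paper. Soundness is a postorder induction on stored witnesses using the partition of $E(G)$ by ownership. Completeness is a postorder induction that chains the good first- and second-level guesses through the nice-decomposition lemmas, followed by a union bound, with counter overflow as the stated exception. You also spell out two points the paper leaves implicit: that the glued map is well defined, and that the optimum-induced entries survive truncation at $s$.
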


\begin{proof}
Soundness follows by induction over the postorder traversal. Every finite entry is stored with an actual witness labeling of \(\gamma(t)\). At an internal node, a transition glues actual child witness labelings to an actual bag labeling, and all labels agree on the adhesions. Since the edge sets \(E_u^\circ\) partition \(E(G)\) by ownership, the transition value is exactly the number of cut edges owned in the subtree rooted at \(t\). Therefore every finite root value with label set \([k]\) is the value of an actual \(k\)-cut of \(G\), and is at least \(\lambda_k(G)\). If the generation counter overflows, the algorithm returns \(\infty\), which is also not a finite underestimate.

Now suppose that \(T\) crosses an optimum labeled \(k\)-cut \(\phi^\star\) in at most \(2k-2\) edges. For every node \(t\), the restriction \(\phi^\star|_{\sigma(t)}\) lies in \(\mathcal L_T^{\sigma(t)}\). We prove by induction over the postorder traversal that the optimum-induced entry is retained at every node, provided the required good guesses exist and the counter does not overflow.

At a small bag, \(\phi^\star|_{\beta(t)}\in\mathcal L_T^{\beta(t)}\), so the standard restricted transition enumerates the optimum bag labeling. At a large bag, if the first-level separating-family event succeeds, \cref{lem:good_first_guess} gives a first-level guess \(\widehat C\) that contains the true projected cut set and avoids the projected small-side internal edges. If the second-level event also succeeds, \cref{lem:good_second_guess} gives a second-level guess \(\widehat{\mathcal S}\) containing the true small-side components and avoiding both their auxiliary neighborhood and the chosen center anchor. The heavy-component rule does not remove any true small-side component by \cref{lem:fast_compatibility}. For this good pair, \cref{lem:nice_contains_optimum} returns a nice decomposition whose refinement \(\Theta_{\beta(t)}\) refines the optimum bag partition, and \cref{lem:solve_nice_decomposition} evaluates the optimum-induced transition.

Thus the optimum-induced entry is retained at every node. At the root this gives a value at most \(\lambda_k(G)\). Together with soundness, the root value is exactly \(\lambda_k(G)\). The separating-family events hold simultaneously w.h.p., so the claim follows.
\end{proof}

\begin{lemma}[Fixed-Tree Running Time]
\label{lem:fixed_tree_running_time}
For a fixed tree \(T\), \cref{alg:fixed_tree_dp} runs in time
\[
    O\!\left((m+k^{O(k)}\Lambda^{6k-6}n)L_\Lambda^{O(1)}\right).
\]
\end{lemma}
\begin{proof}
The bound is deterministic, because the generation counter is charged for the actual work spent on random generation and selected-component scanning. If the counter exceeds \(B_{\mathrm{gen}}\), the algorithm stops.

Preprocessing \(T\) costs \(O(nL_\Lambda^{O(1)})\). Computing \(\operatorname{top}(v)\) for all vertices and then assigning each edge \(uv\) to the deeper of \(\operatorname{top}(u)\) and \(\operatorname{top}(v)\) costs
\[
    O\!\left((m+\sum_t|\beta(t)|)L_\Lambda^{O(1)}\right) = O((m+\Lambda n)L_\Lambda^{O(1)}).
\]
Since the graph is sparsified, \(m=O(sn)\le O(\Lambda n)\).

The projected trees \(T_{\beta(t)}\) are computed using \cref{lem:virtual_tree_and_states}. Their total construction time is
\[
    O\!\left(\sum_t (1+|\beta(t)|\log|\beta(t)|)\right) = O(\Lambda n\,L_\Lambda^{O(1)}),
\]
which is dominated by the claimed bound.

For a fixed large bag \(t\) and first-level guess \(\widehat C\), \cref{lem:fast_compatibility} gives quotient-construction cost
\[
    O\!\left( k^{O(1)} \left( |\beta(t)|+|E_t^\circ| + \sum_{u\in\operatorname{children}(t)\cup\{t\}}|\sigma(u)|^2 \right) L_\Lambda^{O(1)} \right).
\]
There are \(R_1=k^{O(k)}\Lambda^{2k-2}L_\Lambda\) first-level guesses per large bag. Summing over all nodes and using
\[
    \sum_t|\beta(t)|=O(\Lambda n), \qquad \sum_t|E_t^\circ|=m=O(sn)\le O(\Lambda n),
\]
and
\[
    \sum_t\sum_{u\in\operatorname{children}(t)\cup\{t\}}|\sigma(u)|^2 \le O(\Lambda) \sum_t\sum_{u\in\operatorname{children}(t)\cup\{t\}}|\sigma(u)|
    = O(\Lambda^2 n),
\]
the total quotient-construction cost is
\[
    O(k^{O(k)}n\Lambda^{2k}L_\Lambda^{O(1)}),
\]
which is dominated by \(O(k^{O(k)}\Lambda^{6k-6}nL_\Lambda^{O(1)})\) for \(k\ge2\).

For a joint guess \((\widehat C,\widehat{\mathcal S})\), \cref{lem:fast_compatibility} gives post-quotient cost
\[
    O\!\left( k^{O(1)} (1+\deg_\tau^+(t)+\Lambda^2|\widehat{\mathcal S}|) L_\Lambda^{O(1)} \right).
\]
The term \(\Lambda^2|\widehat{\mathcal S}|\) is charged to the generation counter.

The resulting nice decomposition is evaluated, by \cref{lem:solve_nice_decomposition}, in time
\[
    k^{O(k)}(1+\rho(D)+\deg_\tau^+(t)).
\]
Every non-center part contains at least one selected component, and distinct non-center parts contain disjoint selected components, so \(\rho(D)\le|\widehat{\mathcal S}|\). Thus the \(\rho(D)\) term is also charged to the counter. The uncharged work per joint guess is \(k^{O(k)}(1+\deg_\tau^+(t))\). There are
\[
    R_1R_2 = k^{O(k)}\Lambda^{6k-6}L_\Lambda^2
\]
joint guesses per large bag. Hence the total uncharged large-bag work is
\[
    O\!\left( k^{O(k)}\Lambda^{6k-6}L_\Lambda^{O(1)} \sum_t(1+\deg_\tau^+(t)) \right) = O(k^{O(k)}\Lambda^{6k-6}nL_\Lambda^{O(1)}).
\]

For small bags, \cref{lem:virtual_tree_and_states} gives at most \(k^{O(k)}\Lambda^{2k-2}\) labeled feasible bag partitions. For a fixed bag labeling \(\psi\) and child \(u\), the adhesion state \(\psi|_{\sigma(u)}\) is fixed, and only the \(2^k\) possible label sets \(L_u\) need be considered. Thus the standard transition for a fixed \(\psi\) is evaluated by a knapsack over the children of \(t\), in time
\[
    k^{O(k)}(1+\deg_\tau^+(t)+|E_t^\circ|),
\]
where \(|E_t^\circ|\) accounts for computing \(\operatorname{cost}_t(\psi)\). Hence the total small-bag cost is
\[
    O\!\left( k^{O(k)}\Lambda^{2k-2} \sum_t (1+\deg_\tau^+(t)+|E_t^\circ|) \right) = O(k^{O(k)}\Lambda^{2k-1}n),
\]
which is dominated by \(O(k^{O(k)}\Lambda^{6k-6}n)\) for \(k\ge2\). Combining all terms proves the claimed running time.
\end{proof}

\subsection{Generation Overflow Probability}

\begin{lemma}[Generation Overflow Probability]
\label{lem:generation_overflow_probability}
For a fixed tree \(T\), \cref{alg:fixed_tree_dp} does not return \(\infty\) because of the generation counter w.h.p.
\end{lemma}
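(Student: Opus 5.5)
The plan is to show that the total random generation and scanning work $Z$ has expectation at most a constant fraction of $B_{\mathrm{gen}}$, divided by $L_\Lambda^{O(1)}$. We then prove concentration: $Z\le B_{\mathrm{gen}}$ with probability $1-n^{-\Omega(1)}$ (indeed $1-(k(n+\Lambda))^{-\Omega(k)}$). Since $Z$ is a sum over bags of independently generated quantities, we bound it per bag and per first-level sample, and then sum.

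\textbf{Step 1 (generation operations).} Each random family from \cref{lem:random_separating_family} with inclusion probability $p$ is generated by geometric-gap sampling. One subset $H\subseteq\mathcal U$ costs $O(1+|H|)$ operations, where $|H|\sim\mathrm{Bin}(|\mathcal U|,p)$.

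\textbf{Step 2 (first level).} Here $|\mathcal U|=|E(T_{\beta(t)})|\le 2|\beta(t)|$ and $p=(2k-2)/(2k-2+c_1k\Lambda)=O(1/\Lambda)$. So $\mathbb E|\widehat C|=O(|\beta(t)|/\Lambda)+O(k)$. Over the $R_1$ samples and all bags, using $\sum_t|\beta(t)|=O(\Lambda n)$ and $|V(\tau)|=O(n)$, the expected first-level generation work is
\[
    R_1\cdot O\Bigl(\textstyle\sum_t(1+|\beta(t)|/\Lambda)\Bigr)=k^{O(k)}\Lambda^{2k-2}nL_\Lambda .
\]

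\textbf{Step 3 (second level).} For a fixed $\widehat C$, the universe is $\mathcal R(\widehat C)$ with $|\mathcal R(\widehat C)|\le|\beta(t)|$, and $p=O(1/\Lambda^2)$. So a single second-level sample has expected size $O(1+|\beta(t)|/\Lambda^2)$. The counter charges $\Lambda^2|\widehat{\mathcal S}|$, whose expectation is $O(\Lambda^2+|\beta(t)|)$. Multiplying by $R_1R_2=k^{O(k)}\Lambda^{6k-6}L_\Lambda^2$ and summing over bags gives expectation
\[
    k^{O(k)}\Lambda^{6k-6}L_\Lambda^2\sum_t(\Lambda^2+|\beta(t)|)=k^{O(k)}\Lambda^{6k-4}nL_\Lambda^2 .
\]
This overshoots the target $\Lambda^{6k-6}$. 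I expect this to be the main obstacle. The remedy is to charge only large bags and refine the count: the second-level work is incurred only at bags with $|\beta(t)|>kq$. I would either use a sharper $p$ accounting, namely $\Lambda^2\cdot p|\mathcal U|=O(|\beta(t)|/k)$, so that $\mathbb E[\Lambda^2|\widehat{\mathcal S}|]=O(k+|\beta(t)|)$ per sample, or absorb the discrepancy in the slack of $k^{c_{\mathrm{gen}}k}$ and $L_\Lambda^{c_{\mathrm{gen}}}$. In the worst case I would note that $B_{\mathrm{gen}}$'s exponent must match the intended accounting, and verify that the total expected charged work is $k^{O(k)}\Lambda^{6k-6}nL_\Lambda^{O(1)}$ after using $\sum_t|\beta(t)|=O(\Lambda n)$ against the extra $\Lambda^{-2}$ from $p$. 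The term $R_1R_2\cdot\Lambda^2\cdot p|\beta(t)|$ summed equals $R_1R_2\,O(\Lambda n)$, so the precise bookkeeping of the constant term $\Lambda^2\cdot O(1)$ per sample is what needs care. That term should be charged only when the sample is nonempty, which happens with probability at most $p|\mathcal U|$.

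\textbf{Step 4 (concentration).} $Z$ is a sum of independent nonnegative terms. Each term is bounded by the per-sample maximum $O(\Lambda^2|\beta(t)|)$, or can be truncated via Chernoff for each binomial size at $O(\mathbb E+L_\Lambda)$ with failure probability $e^{-\Omega(L_\Lambda)}$. Union-bounding these at most $k^{O(k)}\Lambda^{O(k)}n$ truncation events, exactly the union bound regime anticipated in \cref{lem:random_separating_family}, shows that each sample's size is $O(\mathbb E+L_\Lambda)$ w.h.p. Summing these deterministic bounds gives $Z\le k^{O(k)}\Lambda^{6k-6}nL_\Lambda^{O(1)}\le B_{\mathrm{gen}}$ for $c_{\mathrm{gen}}$ large enough, proving that the counter does not overflow w.h.p.
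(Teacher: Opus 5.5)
There is a genuine gap in Step~3, and none of your proposed remedies closes it. Using $|\mathcal R(\widehat C)|\le|\beta(t)|$, each second-level sample has expected charge $\Lambda^2p_2|\mathcal R(\widehat C)|=O(|\beta(t)|)$. The total is then $R_1R_2\sum_tO(|\beta(t)|)=k^{O(k)}\Lambda^{6k-5}nL_\Lambda^2$. Your own expression $R_1R_2\,O(\Lambda n)$ is exactly this quantity, and it is still a factor $\Lambda$ above $B_{\mathrm{gen}}$.

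Your $\Lambda^{6k-4}$ has a second source: you also multiplied the $O(1)$ per-sample generation overhead by $\Lambda^2$. The counter does not do that; that overhead totals only $O(R_1R_2|V(\tau)|)$. The remaining factor $\Lambda$ cannot be hidden in $k^{c_{\mathrm{gen}}k}L_\Lambda^{c_{\mathrm{gen}}}$. Here $L_\Lambda=O(k\log(k(n+\Lambda)))$, while $\Lambda=\Theta(s\log^2n\log\log n)$ can be polynomial in $n$. Charging constants only to nonempty samples also does not touch the dominant $|\beta(t)|$ term.

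The missing idea is that the second-level universe is controlled by the first-level sample, not by the bag. Deleting the edges of $\widehat C$ from the tree $T_{\beta(t)}$ leaves at most $|\widehat C|+1$ components, so $|\mathcal R(\widehat C)|\le|\widehat C|+1$. You already showed in Step~2 that $\mathbb E|\widehat C|=O(|\beta(t)|/\Lambda)$. Combining these gives $\sum_{t,\widehat C}|\mathcal R(\widehat C)|\le R_1|V(\tau)|+Z_1=k^{O(k)}\Lambda^{2k-2}nL_\Lambda^{O(1)}$ w.h.p., where $Z_1$ is the total number of selected first-level edges. Now condition on the first-level samples, and let $Z_2$ be the total number of selected second-level components. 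Then $\mathbb E[Z_2]\le R_2p_2\sum_{t,\widehat C}|\mathcal R(\widehat C)|=k^{O(k)}\Lambda^{6k-8}nL_\Lambda^{O(1)}$, so $\Lambda^2Z_2$ fits the budget.

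Your Step~4 needs the same care. Truncating each second-level sample at $O(\mathbb E+L_\Lambda)$ and summing adds $\Lambda^2L_\Lambda$ per sample. Over the up to $R_1R_2\cdot O(n)$ second-level samples, that is $k^{O(k)}\Lambda^{6k-4}nL_\Lambda^{O(1)}$. Instead, apply a multiplicative Chernoff bound to the aggregate counts: to $Z_1$, and to $Z_2$ conditionally on the first level, each being a sum of independent indicators. This two-stage argument is what the paper does.
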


\begin{proof}
All Bernoulli subsets are generated by geometric gaps between included elements, so the time spent on random generation is linear in the number of samples plus the number of generated elements.

For first-level samples, each large bag \(t\) generates \(R_1=k^{O(k)}\Lambda^{2k-2}L_\Lambda\) subsets of \(E(T_{\beta(t)})\), with inclusion probability \(p_1=\Theta(1/\Lambda)\). Let \(Z_1\) be the total number of selected first-level projected edges over all large bags and all first-level samples. Then
\[
    \mathbb E[Z_1] = R_1p_1\sum_t|E(T_{\beta(t)})| = O(k^{O(k)}\Lambda^{2k-2}nL_\Lambda),
\]
using \(\sum_t|E(T_{\beta(t)})|=O(\Lambda n)\). By a Chernoff bound, \(Z_1\le O(k^{O(k)}\Lambda^{2k-2}nL_\Lambda^{O(1)})\) w.h.p.

For a first-level sample \(\widehat C\), we have \(|\mathcal R(\widehat C)|\le|\widehat C|+1\). Therefore, on the above high-probability event,
\[
    \sum_{t,\widehat C}|\mathcal R(\widehat C)| \le R_1|V(\tau)|+Z_1 = O(k^{O(k)}\Lambda^{2k-2}nL_\Lambda^{O(1)}).
\]

Condition on the first-level samples and on this event. The second-level samples are independent Bernoulli samples over the fixed component universes \(\mathcal R(\widehat C)\). Each such sample uses inclusion probability \(p_2=\Theta(1/\Lambda^2)\), and there are \(R_2=k^{O(k)}\Lambda^{4k-4}L_\Lambda\) samples for each first-level guess. Let \(Z_2\) be the total number of selected second-level components. Its conditional expectation is at most
\[
    \mathbb E[Z_2] \le R_2p_2 \sum_{t,\widehat C}|\mathcal R(\widehat C)| = O(k^{O(k)}\Lambda^{6k-8}nL_\Lambda^{O(1)}).
\]
Thus, by another Chernoff bound, \(Z_2\le O(k^{O(k)}\Lambda^{6k-8}nL_\Lambda^{O(1)})\) w.h.p.

The overhead of geometric-gap generation is deterministic once the number of bags and samples is fixed. It is bounded by
\[
    O(R_1|V(\tau)|+R_1R_2|V(\tau)|) = O(k^{O(k)}\Lambda^{6k-6}nL_\Lambda^{O(1)}).
\]
The remaining charged work is linear in \(Z_1+\Lambda^2Z_2\). Combining the high-probability bounds above gives
\[
    Z_1+\Lambda^2Z_2 \le O(k^{O(k)}\Lambda^{6k-6}nL_\Lambda^{O(1)})
\]
w.h.p. Hence the generation counter stays below \(B_{\mathrm{gen}}\) w.h.p.
\end{proof}

\subsection{Sampling a Respecting Tree}

\begin{lemma}[Respecting-Tree Sampler]
\label{lem:respecting_tree_sampler}
In time \(\widetilde O(m/\epsilon^2)\), one can compute an implicit nonnegative tree packing \(y\) and hence a distribution
\[
    \pi(T):=\frac{y_T}{\sum_{T'}y_{T'}}
\]
over the trees in its support. A support tree can be sampled from \(\pi\) with polylogarithmic overhead beyond accessing its implicit representation. If \(\epsilon<1/(2k-1)\), then for every optimum \(k\)-cut \(A^\star\),
\[
    \Pr_{T\sim\pi}\bigl[|E(T)\cap A^\star|\le 2k-2\bigr] \ge 1-\frac{2(k-1)}{(2k-1)(1-\epsilon)}.
\]
For \(\epsilon=1/(8k)\), this probability is at least \(1/(4k)\).
\end{lemma}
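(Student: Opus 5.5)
We obtain the packing from \cref{lem:dual_tree_packing}, applied to \(G\) viewed as a unit-capacity graph and run with accuracy parameter \(\eta:=\epsilon\). That lemma computes, in \(O(m\epsilon^{-2}\log^3(2n))=\widetilde O(m/\epsilon^2)\) time, an implicit family of tree weights \(y_T\) together with slacks \(z_e\). It also supports sampling a record proportional to \(y_T\) by prefix sums and materializing the sampled tree by replay. So \(\pi\) and its sampling guarantee come directly from that lemma, and we only need the probability bound.

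Fix an optimum \(k\)-cut with crossing edge set \(A^\star\), and set \(\lambda:=\lambda_k(G)=|A^\star|\). Let \(D:=|E(T)\cap A^\star|\) for \(T\sim\pi\). Summing the load constraint \eqref{eq:dual_packing_load} over \(e\in A^\star\) gives
\[
\tau\,\mathbb E[D]=\sum_{e\in A^\star}\sum_{T\ni e}y_T\le \lambda+z(A^\star)\le\lambda+z(E).
\]
By \eqref{eq:dual_packing_approx}, \(\tau\ge\bigl((1-\epsilon)\lambda/2+z(E)\bigr)/(k-1)\). Writing \(x:=z(E)/\lambda\ge0\), this yields
\[
\mathbb E[D]\le (k-1)\frac{1+x}{(1-\epsilon)/2+x}\le \frac{2(k-1)}{1-\epsilon}.
\]
The last step holds because the ratio is nonincreasing in \(x\) (as \((1-\epsilon)/2<1\)), so it is maximized at \(x=0\). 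Markov's inequality in the form \(\Pr[D\ge 2k-1]\le\mathbb E[D]/(2k-1)\) then gives
\[
\Pr[D\le 2k-2]\ge 1-\frac{2(k-1)}{(2k-1)(1-\epsilon)}.
\]
This is the claimed bound. The bound is only nontrivial when \(\epsilon<1/(2k-1)\), since then \(2(k-1)<(2k-1)(1-\epsilon)\).

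For \(\epsilon=1/(8k)\), the bound equals \(1-\frac{2(k-1)\cdot 8k}{(2k-1)(8k-1)}\). The numerator of \(1\) minus this fraction is \((2k-1)(8k-1)-16k(k-1)=6k+1\). So the probability is at least \(\frac{6k+1}{(2k-1)(8k-1)}\ge\frac{6k}{16k^2}=\frac{3}{8k}\ge\frac1{4k}\).

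The main delicate point is the slack term. Respecting is measured against \(A^\star\subseteq E(G)\), but the spanning-tree extensions may use edges carrying slack \(z_e\). The fix is to bound \(z(A^\star)\le z(E)\) and use the monotonicity in \(x\), exactly as in \cref{lem:sampled_light_side_respecting}. The other point to check is that the running time of \cref{lem:dual_tree_packing} with \(\eta=\epsilon\) fits the stated \(\widetilde O(m/\epsilon^2)\). This requires \(\epsilon<1/2\), which holds here.
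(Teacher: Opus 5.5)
Your proposal is correct and follows the paper's proof essentially step for step. You invoke \cref{lem:dual_tree_packing} with \(\eta=\epsilon\), bound \(\mathbb E[D]\le(k-1)\frac{1+x}{(1-\epsilon)/2+x}\le\frac{2(k-1)}{1-\epsilon}\) via the load and approximation inequalities, and finish with Markov's inequality on the integral variable \(D\). Your explicit check that the bound equals \(\frac{6k+1}{(2k-1)(8k-1)}\ge\frac{3}{8k}\) at \(\epsilon=1/(8k)\) fills in a step the paper only asserts.
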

\begin{proof}
Apply \cref{lem:dual_tree_packing} with \(\eta=\epsilon\), and sample
from its normalized tree weights.

Fix an optimum \(k\)-cut \(A^\star\), let
\[
    D:=|E(T)\cap\delta_G(A^\star)|,
    \qquad
    x:=\frac{z(E)}{\lambda_k(G)}.
\]
Feasibility and \eqref{eq:dual_packing_approx} give
\[
\begin{aligned}
    \mathbb E[D]
    &\le
    \frac{\lambda_k(G)+z(E)}{\tau}\\
    &\le
    (k-1)\frac{1+x}{(1-\epsilon)/2+x}\\
    &\le\frac{2(k-1)}{1-\epsilon}.
\end{aligned}
\]
Since \(D\) is integral,
\[
    \Pr[D>h]
    \le\frac{\mathbb E[D]}{h+1}.
\]
Taking \(h=2k-2\) proves
\[
    \Pr[D\le2k-2]
    \ge
    1-\frac{2(k-1)}{(2k-1)(1-\epsilon)}.
\]
For \(\epsilon=1/(8k)\), this is at least \(1/(4k)\).
\end{proof}

\begin{corollary}[Sampled Respecting Tree]
\label{cor:sampled_respecting_tree}
Fix an optimum \(k\)-cut \(A^\star\). With \(\epsilon=1/(8k)\), sample \(c k\log n\) independent trees from the distribution of \cref{lem:respecting_tree_sampler}, for a sufficiently large constant \(c\). Then, with probability at least \(1-n^{-\Omega(c)}\), at least one sampled tree \(T\) satisfies
\[
    |E(T)\cap A^\star|\le 2k-2 .
\]
\end{corollary}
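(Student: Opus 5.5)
The corollary follows from \cref{lem:respecting_tree_sampler} by independent amplification. With \(\epsilon=1/(8k)\), the lemma says that a single tree \(T\) drawn from \(\pi\) satisfies \(|E(T)\cap A^\star|\le 2k-2\) with probability at least \(1/(4k)\), for the fixed optimum \(k\)-cut \(A^\star\). The \(ck\log n\) trees are drawn independently from the same distribution \(\pi\). So the events ``tree \(T_i\) fails to \((2k-2)\)-respect \(A^\star\)'' are independent, and each has probability at most \(1-1/(4k)\).

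The probability that every sampled tree fails is therefore at most
\[
    \left(1-\frac1{4k}\right)^{ck\log n}\le \exp\!\left(-\frac{ck\log n}{4k}\right)=n^{-c/4},
\]
using \(1-x\le e^{-x}\). This is \(n^{-\Omega(c)}\), which gives the claimed success probability.

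The one point to check is that the \(1/(4k)\) bound from \cref{lem:respecting_tree_sampler} actually holds at \(\epsilon=1/(8k)\). We have
\[
    1-\frac{2(k-1)}{(2k-1)(1-\epsilon)}=\frac{(2k-1)(1-\epsilon)-(2k-2)}{(2k-1)(1-\epsilon)}.
\]
Its numerator is \(1-(2k-1)/(8k)\ge 3/4\), and its denominator is at most \(2k-1<2k\). So the quantity is at least \(3/(8k)\ge 1/(4k)\).

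The only other point is that the dependence on a fixed \(A^\star\) is harmless, since the statement fixes \(A^\star\) before sampling. There is no real obstacle here: the argument is a direct repetition bound.
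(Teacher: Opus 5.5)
Your proof is correct and is the same argument as the paper's. The $ck\log n$ samples are independent, and each succeeds with probability at least $1/(4k)$, so the failure probability is at most $(1-1/(4k))^{ck\log n}\le n^{-c/4}$. Your check that the bound in \cref{lem:respecting_tree_sampler} is at least $3/(8k)\ge 1/(4k)$ at $\epsilon=1/(8k)$ is also right; it verifies the lemma's final claim rather than adding a new step.
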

\begin{proof}
Each sample succeeds with probability at least \(1/(4k)\). Therefore the failure probability after \(c k\log n\) independent samples is at most
\[
    \left(1-\frac{1}{4k}\right)^{c k\log n} \le \exp\!\left(-\frac{c}{4}\log n\right) = n^{-c/4}.
\]
\end{proof}

\subsection{The Full FPT Algorithm}

\begin{algorithm}[H]
\caption{\textsc{FPT-MinKCut}$(G,k,s)$}
\label{alg:fpt_min_k_cut}
\begin{algorithmic}[1]
\Require Unweighted multigraph \(G\), integer \(k\ge 2\), parameter \(s\ge\lambda_k(G)\).
\Ensure A minimum \(k\)-cut of \(G\) with high probability.
\State Apply Nagamochi--Ibaraki sparsification with threshold \(s+1\).
\If{\(G\) has at least \(k\) connected components}
    \State \Return the zero-value \(k\)-cut obtained by refining connected components.
\EndIf
\If{\(G\) has \(1<h<k\) connected components}
    \State Solve the connected components independently for all feasible requested part counts and combine them by the component DP described above.
    \State \Return the best combined cut.
\EndIf
\State Run \cref{thm:unbreakable_decomp} with cut parameter \(s\), and let \(\Lambda\) be its adhesion/unbreakability parameter.
\State Let \((\tau,\beta)\) be the resulting decomposition.
\State Set \(\epsilon\gets 1/(8k)\).
\State Compute the implicit distribution \(\pi\) from \cref{lem:respecting_tree_sampler}.
\State Set \(L\gets Ck\log n\) for a sufficiently large absolute constant \(C\).
\State \(best\gets\infty\), and \(C_{\mathrm{best}}\gets\bot\).
\For{\(j=1\) \textbf{to} \(L\)}
    \State Sample a tree \(T_j\sim\pi\).
    \State \(val_j\gets \textsc{FixedTreeDP}(G,k,s,\Lambda,(\tau,\beta),T_j)\).
    \If{\(val_j<best\)}
        \State \(best\gets val_j\), and store the corresponding witnessed cut as \(C_{\mathrm{best}}\).
    \EndIf
\EndFor
\State \Return \(C_{\mathrm{best}}\).
\end{algorithmic}
\end{algorithm}

\begin{theorem}[FPT Min \(k\)-Cut Algorithm]
\label{thm:fpt_mincut}
Given an unweighted multigraph \(G\) with $|E(G)| = m_0$ and integers \(k,s\) with \(s\ge\lambda_k(G)\), \cref{alg:fpt_min_k_cut} returns an optimum \(k\)-cut w.h.p. Its stored value is never finite and smaller than \(\lambda_k(G)\).

In terms of the decomposition parameter \(\Lambda\), its running time is
\[
    O\!\left(m_0 + k^{O(k)}\Lambda^{6k-6}nL_\Lambda^{O(1)}\right).
\]
Since \(\Lambda=O(s\log^2 n\log\log n)\), this is
\[
    O(m_0) + k^{O(k)}s^{6k-6}n \log^{12k-12}n (\log\log n)^{6k-6} \log^{O(1)}(k(n+s+2)).
\]
In particular, when \(s\le n^{O(1)}\), this is
\[
    O(m_0) + k^{O(k)}s^{6k-6}n \log^{12k+O(1)}n (\log\log n)^{6k-6}.
\]
\end{theorem}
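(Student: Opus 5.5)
The proof assembles the lemmas of this section in the order the algorithm runs.

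\textbf{Preprocessing.} Nagamochi--Ibaraki sparsification with threshold \(s+1\) costs \(O(m_0)\). It preserves every cut of value at most \(s+1\) exactly and keeps larger cuts above \(s+1\). Since \(\lambda_k(G)\le s\), every \(k\)-cut of value at most \(s\) has the same value in the certificate, and no cut drops below its original value when it is at most \(s\). Hence \(\lambda_k\) is unchanged and an optimum of the certificate is an optimum of \(G\); afterwards \(m=O(sn)\le O(\Lambda n)\). The disconnected cases are handled as already described. With at least \(k\) components we return a zero-value cut. With \(1<h<k\) components we use the component DP, justified by the merging argument of \cref{lem:zero_capacity_reduction_kcut}; there are at most \(k\) requested parameters per component, so the extra cost is absorbed into \(k^{O(k)}\). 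The success probabilities of these calls are amplified and union-bounded. So we may assume \(G\) is connected.

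\textbf{Correctness.} By \cref{lem:fixed_tree_algorithm_correctness}, no run of \textsc{FixedTreeDP} returns a finite value below \(\lambda_k(G)\), since every finite value is witnessed by a genuine \(k\)-cut. This gives the ``never finite and smaller'' claim, and also shows that the stored \(C_{\mathrm{best}}\) is a valid \(k\)-cut. Fix an optimum \(\phi^\star\). By \cref{cor:sampled_respecting_tree} with \(L=Ck\log n\), some sampled \(T_j\) crosses \(\phi^\star\) in at most \(2k-2\) edges, with high probability. For that tree, \cref{lem:fixed_tree_algorithm_correctness} returns \(\lambda_k(G)\) w.h.p. unless the counter overflows, and \cref{lem:generation_overflow_probability} rules out overflow w.h.p. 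A union bound over the \(O(k\log n)\) trees and the separating-family events finishes correctness; the amplification constants in \(L_\Lambda\) are chosen large enough for this union bound.

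\textbf{Running time.} The main obstacle is only bookkeeping; the costs are:
\begin{itemize}
\item the unbreakable decomposition: \(\widetilde O(m+\Lambda n)\) by \cref{thm:unbreakable_decomp};
\item the packing and sampler: \(\widetilde O(mk^2)\) by \cref{lem:respecting_tree_sampler} with \(\epsilon=1/(8k)\);
\item \(O(k\log n)\) calls to \textsc{FixedTreeDP}: each costs \(O((m+k^{O(k)}\Lambda^{6k-6}n)L_\Lambda^{O(1)})\) by \cref{lem:fixed_tree_running_time}, and this bound is deterministic because of the counter cutoff.
\end{itemize}
The factors \(k\log n\), \(k^2\) and all polylogarithms in \(n\) are absorbed into \(k^{O(k)}L_\Lambda^{O(1)}\). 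Here \(\log n\le L_\Lambda\), and the polylogarithmic factors hidden in \(\widetilde O(m+\Lambda n)\) are also \(L_\Lambda^{O(1)}\).

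Substituting \(\Lambda=O(s\log^2 n\log\log n)\) gives \(\Lambda^{6k-6}=O(1)^{6k-6}s^{6k-6}\log^{12k-12}n(\log\log n)^{6k-6}\), where \(O(1)^{6k}\) is absorbed into \(k^{O(k)}\). The remaining factor is \(L_\Lambda^{O(1)}=\log^{O(1)}(k(n+\Lambda+2))\), with \(k+1\le k^{O(1)}\) absorbed. We bound it by \(\log^{O(1)}(k(n+s+2))\), using \(\log\Lambda=O(\log s+\log\log n)\) and \(\log(n+\Lambda)\le O(\log(n+s+2))\). When \(s\le n^{O(1)}\), this factor is \(\log^{O(1)}(kn)\), whose \(\log k\) part is absorbed into \(k^{O(k)}\), giving the final \(\log^{12k+O(1)}n\) bound.
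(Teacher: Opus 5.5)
Your proposal is correct and follows the paper's proof essentially step for step. Both use NI sparsification, with the observation that witnessed cuts of value at most \(s\) keep their value; soundness from \cref{lem:fixed_tree_algorithm_correctness}; completeness from \cref{cor:sampled_respecting_tree} together with \cref{lem:generation_overflow_probability}; the running time from \cref{lem:fixed_tree_running_time} times the \(O(k\log n)\) samples; and the same substitution of \(\Lambda\) and \(L_\Lambda\). The one point that both you and the paper leave implicit is that the NI guarantee, stated for bipartitions, transfers to \(k\)-partitions: each part's boundary then has certificate value at most \(s<s+1\), and is therefore preserved as an edge set.
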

\begin{proof}
By \cref{def:ni_sparsification}, after sparsification we have \(m=O(sn)\) and all cut values up to \(s+1\) are preserved, this takes $O(m_0)$ time. Since every finite value maintained by the algorithm is truncated at \(s\), every finite witnessed cut has the same value in the sparsified graph and in the original graph.

The decomposition step takes
\[
    \widetilde O(m+\Lambda n)=\widetilde O(\Lambda n)
\]
time. By \cref{lem:respecting_tree_sampler}, the respecting-tree distribution is computed in time
\[
    \widetilde O(m/\epsilon^2) = \widetilde O(k^{O(1)}sn) \le \widetilde O(k^{O(1)}\Lambda n)
\]
for \(\epsilon=1/(8k)\).

Fix an optimum \(k\)-cut \(A^\star\). By \cref{cor:sampled_respecting_tree}, w.h.p. some sampled tree \(T_j\) crosses \(A^\star\) in at most \(2k-2\) edges. For this tree, \cref{lem:fixed_tree_algorithm_correctness} and \cref{lem:generation_overflow_probability} imply that \textsc{FixedTreeDP} returns \(\lambda_k(G)\) w.h.p. Every finite value returned by any fixed-tree run is the value of an actual \(k\)-cut, and is therefore at least \(\lambda_k(G)\). Thus the best witnessed cut is optimum w.h.p. A union bound over all randomized subroutines gives success w.h.p.

By \cref{lem:fixed_tree_running_time}, each fixed-tree run satisfies
\[
    O\!\left((m+k^{O(k)}\Lambda^{6k-6}n)L_\Lambda^{O(1)}\right) = O(k^{O(k)}\Lambda^{6k-6}nL_\Lambda^{O(1)}).
\]
There are \(L=O(k\log n)\) sampled trees, and this factor is absorbed into the \(k^{O(k)}\) factor and the \(L_\Lambda^{O(1)}\) factor. The disconnected-case reduction only solves vertex-disjoint and edge-disjoint components and combines their values by a \(k^{O(1)}\) dynamic program, so it does not change the asymptotic bound.

Finally, substituting \(\Lambda=O(s\log^2 n\log\log n)\) and \(L_\Lambda=O(k\log(k(n+s+2)))\) gives
\[
    k^{O(k)}s^{6k-6}n \log^{12k-12}n (\log\log n)^{6k-6} \log^{O(1)}(k(n+s+2)).
\]
If \(s\le n^{O(1)}\) (e.g. for a simple graph), this simplifies to
\[
    k^{O(k)}s^{6k-6}n \log^{12k+O(1)}n (\log\log n)^{6k-6}.
\]
\end{proof}

\section{Near-Linear Approximation for Min \texorpdfstring{$k$}{k}-Cut}

Our near-linear edge-unbreakable decomposition and the FPT algorithm of \cref{thm:fpt_mincut} imply a near-linear approximation algorithm for \(\lambda_k\). Following the framework of~\cite{LSS22}, we first remove very small internal \(2\)-cuts, then sample edges so that the sampled graph has small \(k\)-cut value, and finally solve Min \(k\)-Cut exactly on the sampled graph.

\begin{theorem}
\label{thm:k_cut_approx}
Let \(0<\epsilon\le 1\). Given an unweighted multigraph \(G=(V,E)\), there is a randomized algorithm that computes a \((1+\epsilon)\)-approximation of \(\lambda_k(G)\) with high probability in time
\[
    \widetilde O(k^{O(1)}m) + k^{O(k)} \left(\frac{k}{\epsilon^3}\right)^{6k-6} n\, \log^{18k-18} n\, (\log\log n)^{6k-6} \log^{O(1)}\!\bigl(k(n+\epsilon^{-1})\bigr).
\]
In particular, if \(\epsilon^{-1}\le n^{O(1)}\), this is
\[
    \widetilde O(k^{O(1)}m) + (k/\epsilon)^{O(k)} n\log^{18k+O(1)}n(\log\log n)^{6k-6}.
\]
\end{theorem}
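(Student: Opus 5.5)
I would follow the three-stage LSS22 framework: preprocess small cuts, sample edges, then solve the sampled instance exactly with \cref{thm:fpt_mincut}.

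\textbf{Step 1: remove small internal $2$-cuts.} Compute the global minimum cut value of $G$ in $\widetilde O(m)$ time. Let $\lambda:=\lambda_k(G)$. We have $\lambda_2(G)\le\lambda$ and, by splitting one side repeatedly, $\lambda\le(k-1)\lambda_2$ in the connected case. Hence $\lambda_2$ gives a $k$-approximation of $\lambda$. Guess $\widehat\lambda$ within a factor $2$ of $\lambda$ from $O(\log k)$ candidates, or run over all $O(\log n)$ powers of two. Fix a threshold $\theta:=\epsilon\widehat\lambda/(ck)$. While some connected piece has a $2$-cut of value below $\theta$, cut it. Each such cut increases the number of components, so at most $k-1$ cuts are made before we already have $k$ parts; in that case the cost is at most $k\theta\le\epsilon\lambda$, a $(1+\epsilon)$-approximation. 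Otherwise the pieces $G_1,\dots,G_h$ with $h<k$ each have min cut at least $\theta$. These cuts cost at most $\epsilon\lambda/2$ in total. Any optimum $k$-cut restricted to the pieces gives a solution, and conversely, so $\lambda$ is approximated within additive $\epsilon\lambda/2$ by the component DP of \cref{lem:zero_capacity_reduction_kcut} applied to the pieces. I would implement the small-cut removal with $k$ calls to near-linear min cut, costing $\widetilde O(km)$.

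\textbf{Step 2: sampling.} In each piece, keep each edge independently with probability $p:=\min\{1,\,C\log n/(\epsilon^2\theta)\}$. Because the min cut is at least $\theta$, Karger's cut-counting bound shows there are at most $n^{2\alpha}$ cuts of value $\alpha\theta$. A Chernoff bound plus a union bound then shows that w.h.p.\ every $2$-cut $X$ satisfies $p|\delta(X)|\in(1\pm\epsilon)|\delta_{H}(X)|$. The $k$-cut value is half the sum of the boundaries of its sides. For the optimum and all competing $k$-cuts I need concentration of the whole $k$-cut, not just each side; I would get it either from the sides' concentration (each side of a near-optimal $k$-cut has boundary at most $2\lambda$) or directly from the $k$-cut counting bound $n^{O(\alpha k)}$ of \cite{GHLL21} with $p$ enlarged by a factor $k$. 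Consequently $p^{-1}\lambda_k(H)\in(1\pm\epsilon)\lambda_k$. The sampled value is $s:=p\cdot2\lambda=O(k\log n/\epsilon^3)$, since $\lambda/\theta=O(k/\epsilon)$. Sampling costs $\widetilde O(m)$ via geometric gaps.

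\textbf{Step 3: exact solve.} Run \cref{thm:fpt_mincut} on the sampled pieces with parameter $s=O(k\log n/\epsilon^3)$, for all needed part counts, and combine the results with the component DP. Substituting $s^{6k-6}$ gives $(k/\epsilon^3)^{6k-6}\log^{6k-6}n$. Multiplying by $\log^{12k-12}n$ from the theorem gives the stated $\log^{18k-18}n$ factor, and $\log^{O(1)}(k(n+s))$ matches $\log^{O(1)}(k(n+\epsilon^{-1}))$. The guessing over $\widehat\lambda$ and the $k$ part counts are absorbed in $k^{O(k)}$ and polylog factors. Output $p^{-1}$ times the value, plus the Step 1 removal cost, suitably rescaled by $1\pm\epsilon$.

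The main obstacle is Step 2's uniform concentration for $k$-cuts together with the bookkeeping of the preprocessing loss. I would first try the side-wise argument, using the $2$-cut guarantee on all cuts of value at most $2\lambda=O(k/\epsilon)\cdot\theta$. Since $\epsilon^{-2}$ times this ratio gives $p\lambda=O(k\log n/\epsilon^3)$, it explains the $\epsilon^{-3}$.
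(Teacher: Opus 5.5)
Your plan matches the paper's proof. You delete at most $k-1$ minimum $2$-cuts below a threshold $\Theta(\epsilon\lambda_k/k)$. You then sample with $p=\Theta(\log n/(\epsilon^2\theta))$ so that every cut of every piece is preserved. Finally you solve the sampled graph exactly with \cref{thm:fpt_mincut} at $s=O(k\log n/\epsilon^3)$. Your running-time substitution is right, and the componentwise DP (instead of handing the disconnected $G_1$ to the FPT routine) is a harmless variation. The ``main obstacle'' you flag in Step~2 is not one. The sampling guarantee holds for \emph{all} cuts of each piece, not only those of value at most $2\lambda$. A $k$-cut's value is half the sum of its parts' boundaries, so every $k$-cut is preserved; this is exactly the paper's argument.

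The genuine error is in Step~1: the inequality $\lambda_k\le(k-1)\lambda_2$ is false, because after one split the remaining pieces can have much larger minimum cuts. For two copies of $K_t$ joined by a single edge, $\lambda_2=1$ but $\lambda_3=t$. Hence $\lambda_2$ is not a $k$-approximation, and $O(\log k)$ candidates for $\widehat\lambda$ do not suffice. The paper avoids this by computing a $2$-approximation $\widetilde\lambda_k$ by greedy splitting \cite{SV95,Kar00} in $\widetilde O(k^{O(1)}m)$ time, which fixes $\theta$ directly.

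Your fallback of trying all $O(\log m)$ powers of two can also be made to work, but you must add a selection rule. Take the minimum estimate over all guesses, or the best returned cut evaluated in $G$. This is safe for every guess. Either the loop already produced a genuine $k$-cut of $G$, or every surviving piece has minimum cut at least $\theta$, so the sampling guarantee holds for that guess. In addition, \cref{thm:fpt_mincut} never returns a finite value below the true optimum. So no wrong guess can produce an underestimate, and the correct guess gives a $(1+O(\epsilon))$-approximation, at the cost of an extra $O(\log m)$ factor.
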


\begin{proof}
Let
\[
    \epsilon':=\epsilon/10.
\]
If \(G\) already has at least \(k\) connected components, then \(\lambda_k(G)=0\), and we return a zero-value \(k\)-cut. Otherwise compute a \(2\)-approximation \(\widetilde\lambda_k\) to \(\lambda_k(G)\) \cite{SV95,Kar00}, so
\[
    \lambda_k(G)\le \widetilde\lambda_k\le 2\lambda_k(G).
\]
This preliminary step, and the minimum-cut computations below, take \(\widetilde O(k^{O(1)}m)\) time.

Set
\[
    \tau_0:=\frac{\epsilon'}{2k}\widetilde\lambda_k.
\]
For a graph \(H\), define
\[
    \lambda_2^+(H) := \min\{\lambda_2(K): K\text{ is a connected component of }H \text{ with }|V(K)|\ge 2\},
\]
with \(\lambda_2^+(H)=\infty\) if no such component exists.

Starting from \(G\), repeatedly find a connected component \(K\) of the current graph with
\[
    \lambda_2(K)<\tau_0
\]
and delete the edges of a minimum \(2\)-cut in \(K\). Stop when the graph has at least \(k\) connected components, or when every connected component \(K\) with \(|V(K)|\ge 2\) satisfies
\[
    \lambda_2(K)\ge \tau_0.
\]
Let the resulting graph be \(G_1\). Each deletion splits one connected component into two, and therefore increases the number of connected components by one. Hence, before the graph has \(k\) connected components, there are at most \(k-1\) deletions. The total number of deleted edges is less than
\[
    k\tau_0 = \frac{\epsilon'}2\widetilde\lambda_k \le \epsilon'\lambda_k(G).
\]

If \(G_1\) has at least \(k\) connected components, take a zero-value \(k\)-cut of \(G_1\). When evaluated in \(G\), this cut uses only deleted edges, and hence has value at most
\[
    \epsilon'\lambda_k(G)\le (1+\epsilon)\lambda_k(G).
\]
Thus assume from now on that \(G_1\) has fewer than \(k\) connected components. Then
\[
    \lambda_2^+(G_1)\ge \tau_0.
\]

Sample every edge of \(G_1\) independently with probability
\[
    p:= \min\left\{ 1,\, \frac{100\log n}{\epsilon'^2\lambda_2^+(G_1)} \right\},
\]
and let \(G_2\) be the sampled graph. By the cut-sampling lemma of~\cite{LSS22}, applied independently inside each connected component of \(G_1\) and union-bounded over the components, with high probability every cut inside every component is preserved after scaling by \(p\). Therefore, with high probability, every \(k\)-cut \(C\) of \(G_1\) satisfies
\[
    (1-\epsilon')c_{G_1}(C) \le \frac{1}{p}c_{G_2}(C) \le (1+\epsilon')c_{G_1}(C).
\]
Indeed, a \(k\)-cut value in \(G_1\) is the sum of its contributions inside the connected components of \(G_1\).

We now bound \(\lambda_k(G_2)\). Since
\[
    \lambda_2^+(G_1) \ge \tau_0 = \frac{\epsilon'}{2k}\widetilde\lambda_k \ge \frac{\epsilon'}{2k}\lambda_k(G) \ge \frac{\epsilon'}{2k}\lambda_k(G_1),
\]
if \(p<1\), then
\[
    p = \frac{100\log n}{\epsilon'^2\lambda_2^+(G_1)} \le \frac{200k\log n}{\epsilon'^3\lambda_k(G_1)}.
\]
Therefore, on the sampling event,
\[
    \lambda_k(G_2) \le (1+\epsilon')p\lambda_k(G_1) = O\!\left(\frac{k\log n}{\epsilon^3}\right).
\]
If \(p=1\), then
\[
    \lambda_2^+(G_1) \le \frac{100\log n}{\epsilon'^2},
\]
and the same lower bound
\[
    \lambda_2^+(G_1)\ge \frac{\epsilon'}{2k}\lambda_k(G_1)
\]
implies
\[
    \lambda_k(G_2)=\lambda_k(G_1) = O\!\left(\frac{k\log n}{\epsilon^3}\right).
\]
Thus, with high probability,
\[
    \lambda_k(G_2)\le s_0, \qquad s_0:=C\frac{k\log n}{\epsilon^3},
\]
for a sufficiently large absolute constant \(C\).

We solve Min \(k\)-Cut exactly on \(G_2\) with cut-size bound \(s_0\). First apply Nagamochi--Ibaraki sparsification with threshold \(s_0+1\). This preserves the minimum \(k\)-cut of \(G_2\), and the resulting graph has \(O(s_0n)\) edges. This takes $O(|E(G_2)|) \le O(m)$ time.

On this sparsified graph, compute the edge-unbreakable decomposition with cut parameter \(s_0\). Its adhesion and unbreakability parameter is
\[
    \Lambda_0 = O(s_0\log^2 n\log\log n) = O\!\left( \frac{k}{\epsilon^3} \log^3 n\log\log n \right).
\]
The decomposition construction time is
\[
    \widetilde O(s_0n+\Lambda_0n),
\]
which is dominated by the exact FPT step.

By \cref{thm:fpt_mincut}, the exact algorithm on \(G_2\) runs in time
\[
    O\!\left( k^{O(k)} \Lambda_0^{6k-6} n L_{\Lambda_0}^{O(1)} \right),
\]
where
\[
    L_{\Lambda_0} = O\!\left(k\log(k(n+\Lambda_0+2))\right).
\]
Substituting the value of \(\Lambda_0\), this is
\[
    k^{O(k)} \left(\frac{k}{\epsilon^3}\right)^{6k-6} n\, \log^{18k-18} n\, (\log\log n)^{6k-6} \log^{O(1)}\!\bigl(k(n+\epsilon^{-1})\bigr).
\]

It remains to transfer the exact solution on \(G_2\) back to \(G\). Let \(C_2\) be a minimum \(k\)-cut of \(G_2\), and let \(C_1^\star\) be a minimum \(k\)-cut of \(G_1\). On the sampling event,
\[
    c_{G_1}(C_2) \le \frac{1}{1-\epsilon'}\cdot \frac{c_{G_2}(C_2)}{p} \le \frac{1}{1-\epsilon'}\cdot \frac{c_{G_2}(C_1^\star)}{p} \le \frac{1+\epsilon'}{1-\epsilon'}\lambda_k(G_1).
\]
Since \(G_1\) is obtained from \(G\) by deleting edges,
\[
    \lambda_k(G_1)\le \lambda_k(G).
\]
Adding back the deleted edges can increase the value of any fixed cut by at most the total number of deleted edges, which is at most \(\epsilon'\lambda_k(G)\). Hence the value of \(C_2\) in \(G\) is at most
\[
    \frac{1+\epsilon'}{1-\epsilon'}\lambda_k(G) + \epsilon'\lambda_k(G) \le (1+\epsilon)\lambda_k(G),
\]
because \(\epsilon'=\epsilon/10\) and \(0<\epsilon\le 1\).

Combining the preliminary approximation, the componentwise small-cut deletions, sampling, NI sparsification, decomposition construction, and the exact FPT call gives the claimed running time.
\end{proof}

\section{Algorithm for Min \texorpdfstring{$k$}{k}-Cut}
\label{sec:min_k_cut}

We now combine the FPT algorithm from the previous section, the weighted Min \(k\)-Cut algorithm of \cref{sec:weighted_sparse_k_cut}, and the border/island framework of He and Li~\cite{HL22}. The algorithm first computes a constant-factor estimate \(s=\Theta(\lambda_k(G))\). If \(s\) is small, we run the FPT algorithm parameterized by the cut value.

Otherwise, we apply border-preserving sparsification: an NI certificate \(H\), a partition \(\mathcal P\), and a contracted weighted multigraph \(G'\) are constructed so that \(G'\) has only \(\widetilde O(n/s)\) vertices. Moreover, for every optimum \(k\)-cut \(C\), there is a subset of its singleton components that can be merged into non-singleton sides so that the resulting lower-order cut is respected by \(\mathcal P\) and appears as a low-weight cut of \(G'\). We call this lower-order cut a border of \(C\), and the merged singleton vertices are the islands.

In the large-\(s\) regime, the algorithm guesses the number \(i\) of singleton islands in the sparsified representation of an optimum cut. If \(i=0\), the entire optimum cut is represented inside \(G'\), so we solve the corresponding weighted Min \(k\)-Cut instance on \(G'\). If \(i>0\), we enumerate the \((k-i)\)-cut border in \(G'\), lift it using the stored edge bundles, and then complete the cut by extracting the best \(i\) singleton islands from the lifted border. The cases \(i=1,2\) are handled by specialized routines, while \(i\ge3\) is handled by a matrix-multiplication-based island routine. Balancing the FPT branch against the largest large-\(s\) branch yields the final exponent.

Throughout this section \(G=(V,E)\) is a simple unweighted graph on \(n\) vertices. We write \(\lambda_k=\lambda_k(G)\). We begin with some preliminaries exclusive to this section.

\begin{definition}[Border and Islands {\cite[Definition~1.2]{HL22}}]
\label{def:border_islands}
Given a \(k\)-cut \(C\) with exactly \(r\) singleton components, denote the singleton components by $S_1=\{v_1\},\ldots,S_r=\{v_r\}$, and denote the remaining components by \(S_{r+1},\ldots,S_k\). A border of \(C\) is obtained by merging some singleton components into non-singleton components. More precisely, choose a subset \(I\subseteq[r]\) and a function $\eta:I\to [k]\setminus[r]$. For each \(a\in[k]\setminus[r]\), set $S'_a:=S_a\cup\{v_j:j\in I,\ \eta(j)=a\}$. The corresponding border is the \((k-|I|)\)-cut consisting of the parts \(S'_a\) for \(a\in[k]\setminus[r]\), together with the unmerged singleton components \(S_j\) for \(j\in[r]\setminus I\). The vertices $\{v_j:j\in I\}$ are called the islands of this border.
\end{definition}

\begin{definition}[Minimum Vertex-Weighted Triangle]
\label{def:min_vertex_triangle}
Given a graph \(G=(V,E)\) and an integer vertex-weight function \(W:V\to\mathbb Z\), the \emph{Minimum Vertex-Weighted Triangle} problem is to find a triangle \(\{u,v,w\}\) minimizing
\[
    W(u)+W(v)+W(w).
\]
\end{definition}

\begin{theorem}[Minimum Vertex-Weighted Triangle {\cite[Corollary~2]{AF26}}]
\label{thm:min_vertex_weighted_triangle}
Let \(\mathrm{MM}(n)\) denote the time for multiplying two \(n\times n\) matrices. In an \(n\)-vertex graph whose integer vertex weights have absolute value at most \(W\), a minimum vertex-weighted triangle can be found in time
\[
    O(\mathrm{MM}(n)\log W).
\]
\end{theorem}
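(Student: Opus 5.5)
This statement is imported from \cite[Corollary~2]{AF26}, so within the paper it only needs to be cited. If I were to reprove it, I would use bit-scaling over the weights, with Boolean triangle detection as the only primitive. The $\log W$ factor would come from $O(\log W)$ scaling levels, each costing $O(1)$ Boolean products of $n\times n$ matrices.

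First normalize. Shift all weights by $W$ so they are nonnegative and at most $2W$; this changes every triangle's weight by exactly $3W$. Remove vertices lying in no triangle. One Boolean product of the adjacency matrix with itself, followed by a Hadamard product with the adjacency matrix, detects triangle edges in $O(\mathrm{MM}(n))$ time.

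The scaling step works as follows. For level $i$, define $W_i(v):=\lfloor W(v)/2^i\rfloor$. For every triangle $\Delta$, its weight under $W_i$ satisfies
\[
    2W_{i+1}(\Delta)\le W_i(\Delta)\le 2W_{i+1}(\Delta)+3 .
\]
Hence the level-$i$ optimum $\mathrm{OPT}_i$ lies in a window of $O(1)$ values determined by $\mathrm{OPT}_{i+1}$. I would go from the top level, where all weights are $0$ and the answer is triangle detection, down to level $0$. At each level I only need to decide, for each of the $O(1)$ candidate values $c$, whether some triangle has $W_i$-weight at most $c$.

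To answer these threshold queries with Boolean products, I would maintain for each vertex a residue relative to a per-vertex reference. The idea is to restrict attention to vertices that can still participate in a triangle whose $W_{i+1}$-weight is near $\mathrm{OPT}_{i+1}$, and to encode the fresh low-order bit of each vertex. That bit takes only two values, so a triangle's new contribution splits into $2^3$ bit-pattern classes. Each class is one Boolean triangle-detection instance between induced vertex subsets. Finally, a witness triangle is recovered from a Boolean product with witnesses, which costs only polylogarithmic overhead in $n$ via standard witness-finding, and this polylog is absorbed.

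The main obstacle is the step just described. The window on $\mathrm{OPT}_i$ constrains the sum of three weights, not the individual vertex weights. So a naive reduction to $O(1)$ weight classes per vertex is invalid, because the coarse parts of individual vertex weights range over many values. The first thing I would try is to fix the heaviest vertex of the triangle through a sorted-order decomposition of the vertex set. I would split vertices into $O(1)$ blocks by $W_{i+1}$-rank relative to the current optimum, and argue that only $O(1)$ block-triples can contain a near-optimal triangle. If that argument fails to keep the number of Boolean products per level at $O(1)$, I would fall back on the construction in \cite{AF26} for this step, since that is where the stated $O(\mathrm{MM}(n)\log W)$ bound originates.
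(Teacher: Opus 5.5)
Your normalization and scaling window are correct, and the window works as an outer loop. No triangle has $W_i$-weight below $2\mathrm{OPT}_{i+1}$, so $\mathrm{OPT}_i$ is the least $c\in\{2\mathrm{OPT}_{i+1},\ldots,2\mathrm{OPT}_{i+1}+3\}$ for which some triangle has $W_i$-weight \emph{exactly} $c$. The gap is the step you flag yourself, and it is the whole content of the theorem. Fixing the pattern of fresh low-order bits only pins down $W_i(\Delta)-2W_{i+1}(\Delta)\in\{0,1,2,3\}$; it says nothing about the coarse sum $W_{i+1}(\Delta)$. A Boolean triangle detection inside a bit-pattern class therefore also reports triangles with, say, $W_{i+1}(\Delta)=\mathrm{OPT}_{i+1}+17$, and cannot answer your threshold query. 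Restricting the vertex sets does not help either. A vertex on some near-optimal triangle can lie on many far-from-optimal triangles with other retained vertices, so the restriction must be on the triangle's sum, not on its vertices. The rank-block idea fails for the same reason. With $O(1)$ blocks there are trivially $O(1)$ block triples, and inside one of them a near-optimal triangle can be $(a,b,\mathrm{OPT}-a-b)$ for widely varying $a,b$. What each level actually needs is an \emph{exact-weight} (equivalently, zero-sum) vertex-weighted triangle test with $\Theta(\log W)$-bit weights. Deferring that step to \cite{AF26} means nothing beyond the citation has been proved.

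The missing idea is to enforce exactness by frequency bucketing rather than by bits. This is what the paper relies on via \cite{AF26}, and the rectangular version is \cref{lem:rectangular_zero_sum_triangle}. Make the instance tripartite with parts $A,B,C$, and shift one part's weights by $-c$. For a zero-sum triangle, let $U\in\{A,B\}$ be the part whose endpoint's weight class has the larger relative frequency $p$, and let $V$ be the other part; both orientations are tried. For each weight $x$ of $U$, all $pn$ vertices of $U_x$ have the same weight. Group the weight classes of $V$ with frequency at most $p$ into $O(1/p)$ groups of $O(pn)$ vertices each. The $C$-vertices that can complete a zero-sum triangle with group $P$ are those with $-x-W(z)\in P$. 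These sets are disjoint over $P$ and split into $O(1/p)$ blocks of size $O(pn)$ in total. In each subinstance, delete the $V$--$C$ edges with $x+W(v)+W(z)\ne0$; this is legitimate precisely because $U_x$ is weight-homogeneous. Then run one Boolean triangle detection of size $O(pn)$, which costs $O(p^2\,\mathrm{MM}(n))$ under the usual regularity of $\mathrm{MM}$. This gives $O(p\,\mathrm{MM}(n))$ per weight $x$. Since $\sum_x p\le1$, each exact query costs $O(\mathrm{MM}(n))$, and with your four queries per level the total is $O(\mathrm{MM}(n)\log W)$. One minor point: recover the final triangle with one Boolean product plus an $O(n)$ common-neighbour scan. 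A polylogarithmic witness-finding overhead would not be absorbed into $O(\mathrm{MM}(n)\log W)$ when $W$ is small.
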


\begin{definition}[Matrix Multiplication Exponents]
\label{def:mm_exponents}
The square matrix multiplication exponent \(\omega\) is the infimum over all real numbers \(\rho\) such that two \(n\times n\) matrices can be multiplied using
\[
    n^{\rho+o(1)}
\]
arithmetic operations.

More generally, for \(a,b,c\ge 0\), the rectangular matrix multiplication exponent \(\omega(a,b,c)\) is the infimum over all real numbers \(\rho\) such that an \(n^a\times n^b\) matrix can be multiplied by an \(n^b\times n^c\) matrix using
\[
    n^{\rho+o(1)}
\]
arithmetic operations. In particular,
\[
    \omega=\omega(1,1,1).
\]

We use the standard symmetries and homogeneity of rectangular matrix multiplication following from permutation of the indices and considering $n^\alpha$ instead of $n$ for some scalar $\alpha$:
\[
    \omega(a,b,c)
\]
is invariant under permutations of \(a,b,c\), and for every \(\alpha>0\),
\[
    \omega(\alpha a,\alpha b,\alpha c)=\alpha\,\omega(a,b,c).
\]
\end{definition}

We speed up \cref{thm:min_vertex_weighted_triangle} on tripartite graphs of different part sizes.

\begin{lemma}[Rectangular Minimum Vertex-Weighted Triangle]
\label{lem:rectangular_min_vertex_weighted_triangle}
Let \(F\) be a tripartite graph with parts \(A,B,C\), where
\[
    |A|\le n^a,\qquad |B|\le n^b,\qquad |C|\le n^c.
\]
Fix any exponent
\[
    \widehat\omega(a,b,c)>\omega(a,b,c).
\]
If the integer vertex weights have absolute value at most \(n^{O(1)}\),
then one can deterministically find a triangle minimizing the sum of its
vertex weights in time
\[
    n^{\widehat\omega(a,b,c)}.
\]
\end{lemma}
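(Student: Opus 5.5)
The plan is to reduce to a rectangular distance/min-plus style computation that matrix multiplication handles, mirroring the standard reduction behind \cref{thm:min_vertex_weighted_triangle}. By the symmetry of \(\omega(a,b,c)\) under permutations, relabel the parts so that the structure we exploit is convenient. First, for every edge \(uv\) with \(u\in A\), \(v\in B\), the best triangle through \(uv\) has weight \(W(u)+W(v)+\min\{W(w): w\in C,\ uw,vw\in E(F)\}\). So it suffices to compute, for every pair \((u,v)\in A\times B\), the minimum weight of a common neighbor in \(C\) (a ``minimum witness'' product). Then we scan the \(|A||B|\le n^{a+b}\) pairs that are edges, which costs \(n^{a+b}\le n^{\omega(a,b,c)}\) since \(\omega(a,b,c)\ge a+b\).

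To compute the minimum-weight common neighbor, sort \(C\) by weight and use the bucketing/bit-by-bit approach of the minimum-witness technique: the minimum witness of \(u,v\) is the first \(w\) in sorted order with \(X_{uw}Y_{wv}=1\), where \(X\) is the \(A\times C\) adjacency matrix and \(Y\) is the \(C\times B\) adjacency matrix. Rather than the \(n^{2+\mu}\)-type minimum witness algorithms, I would follow the approach of \cite{AF26} (the one proving \cref{thm:min_vertex_weighted_triangle}), which finds a minimum vertex-weighted triangle via \(O(\log W)\) Boolean matrix products. Concretely, binary search on a threshold per pair is done in parallel: in round \(j\) each pair \((u,v)\) holds a current interval of candidate weights, and one decides whether some common neighbor has weight in the lower half. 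Done naively this needs pair-dependent thresholds. The remedy is the standard trick that the true minimum triangle weight is a single scalar: binary search on \(\theta\), asking whether a triangle with \(W(u)+W(v)+W(w)\le\theta\) exists. For fixed \(\theta\) this is still a weighted condition, so instead I would reproduce the \cite{AF26} reduction step by step and replace each square \(n\times n\) product by a rectangular \(n^a\times n^c\) by \(n^c\times n^b\) Boolean product.

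The main obstacle is verifying that every matrix product inside the \cite{AF26} reduction is indeed a product with the three dimensions \(|A|,|C|,|B|\) (or a permutation thereof), possibly after splitting into \(O(\log W)=O(\log n)\) weight classes, and that the auxiliary work is subsumed. If some step of their reduction mixes dimensions (for example, a product of \(A\times A\) type), I would fall back to a direct scaling argument. Partition \(C\) into weight-rounded classes and recurse on bits, using that each round is one Boolean product of shape \((n^a,n^c,n^b)\) plus \(O(n^{a+b}+n^{a+c}+n^{b+c})\) bookkeeping, all bounded by \(n^{\omega(a,b,c)}\) up to \(n^{o(1)}\). The total cost is \(O(\log n)\) rectangular products costing \(n^{\omega(a,b,c)+o(1)}\) each, and the slack \(\widehat\omega>\omega\) absorbs both the \(n^{o(1)}\) and the logarithmic factors. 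Determinism follows since Boolean products are computed via deterministic integer rectangular multiplication.
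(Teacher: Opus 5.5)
There is a genuine gap: the proposal never exhibits an algorithm whose rounds are single Boolean products. Your first reformulation asks, for every edge $uv\in A\times B$, for the minimum-weight common neighbor in $C$. That is the all-pairs minimum-witness problem, for which no $n^{\omega+o(1)}$ algorithm is known even when $a=b=c=1$, so it is a dead end.

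Your description of \cite{AF26} as ``$O(\log W)$ Boolean matrix products'' is also not accurate, and you give the reason yourself. A parallel per-pair binary search needs pair-dependent thresholds. A global test $W(u)+W(v)+W(w)\le\theta$ couples the weights of all three parts. Neither can be encoded as one Boolean product. Your fallback (weight classes of $C$, recursion on bits, one product of shape $(n^a,n^c,n^b)$ per round) has exactly the same defect. So the core step is asserted rather than proved.

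In the paper, the $O(\log W)$ factor is the overhead of the \cite{AF26} reduction from minimum to zero-sum vertex-weighted triangle. That reduction keeps $A,B,C$ fixed and only reweights or deletes, so it suffices to detect zero-sum triangles in the rectangular setting. This is done by frequency bucketing:
\begin{enumerate}
    \item For each weight $x$ of relative frequency $p=n^{-d}$ in $U\in\{A,B\}$, use both orientations and charge a triangle to whichever of its $A$- and $B$-weights is more frequent.
    \item Group the weights of the other part $V$ with frequency at most $p$ into $O(1/p)$ groups.
    \item Cut the matching vertices of $C$ into $O(1/p)$ blocks.
    \item Run ordinary triangle detection on each $U_x\times V_P\times Z'$ after filtering the $V$--$Z$ edges.
\end{enumerate}
Each call is a triangle-detection instance whose three sides are $O(p)$ times $n^a,n^b,n^c$, and there are $O(1/p)$ calls per weight $x$.

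The total $\sum_x O(p\,n^{\widehat\omega})=O(n^{\widehat\omega})$ only closes if each call costs $p^2n^{\widehat\omega}$. That requires $2d+\omega(a-d,b-d,c-d)\le\omega(a,b,c)$. For $a=b=c$ this is just $\omega\ge2$. For unequal exponents it follows neither from homogeneity nor from block decomposition, which gives the opposite bound. The paper derives it from Sch\"onhage's asymptotic sum inequality together with the asymptotic subrank $q^{2d}$ of $\langle q^d,q^d,q^d\rangle$. It handles the $n$-dependence of $d$ by rounding $d$ to a multiple of a small constant.

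Your anticipated obstacle, products that ``mix dimensions,'' is therefore aimed at the wrong place. The genuinely rectangular difficulty is this uniform-scaling inequality, which the proposal does not address.
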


The proof of
\cref{lem:rectangular_min_vertex_weighted_triangle}
is given in
\cref{app:rectangular_vertex_weighted_triangles}.

For \(r\ge3\), define
\[
    r_1:=\left\lfloor\frac r3\right\rfloor,\qquad
    r_2:=\left\lfloor\frac{r+1}{3}\right\rfloor,\qquad
    r_3:=\left\lceil\frac r3\right\rceil .
\]
We fix exponents
\[
    \widehat\Phi(r)>\omega(r_1,r_2,r_3).
\]
For the three base cases, choose fixed constants satisfying
\[
\begin{aligned}
    \omega(1,1,1)
    &<2.371339<\widehat\Phi(3)<2.37134,\\
    \omega(1,1,2)
    &=\omega(1,2,1)
      <3.250385<\widehat\Phi(4)<3.250386,\\
    \omega(1,2,2)
    &=\omega(2,1,2)
      =2\omega(1,1/2,1)\\
    &\le4.085988<\widehat\Phi(5)<4.085989.
\end{aligned}
\]
The first inequality is from \cite{ADWXXZ25}; the remaining rectangular
bounds are from \cite{VWXXZ24}.

For \(r\ge6\), define
\[
    \widehat\Phi(r):=
    \widehat\Phi(r-3)+\widehat\Phi(3).
\]
The recurrence is justified by combining an algorithm for the
\((r_1,r_2,r_3)\)-rectangular product with square matrix multiplication
on the \(n\times n\) block level.  Thus
\[
    \widehat\Phi(r)>\omega(r_1,r_2,r_3)
\]
for every \(r\ge3\).  The sequence is nondecreasing by induction.

\begin{algorithm}[H]
\caption{Island Discovery}
\label{alg:improved_islands}
\begin{algorithmic}[1]
\Require A simple unweighted graph \(F\) on at most \(n\) vertices, and an integer \(r\ge3\)
\Ensure A minimum-cost set \(I\subseteq V(F)\) of \(r\) singleton islands

\State Let
\[
    r_1=\left\lfloor\frac r3\right\rfloor,\qquad r_2=\left\lfloor\frac{r+1}{3}\right\rfloor,\qquad r_3=\left\lceil\frac r3\right\rceil .
\]
\State Construct fixed vertex sets \(V_1,V_2,V_3\). For each \(j\in\{1,2,3\}\) and each set \(S\subseteq V(F)\) of size \(r_j\), create a vertex \(x_S\in V_j\).
\State Give \(x_S\) weight
\[
    W(x_S):=\sum_{v\in S}\deg_F(v)-e_F(S).
\]
\For{each triple \((e_{12},e_{23},e_{31})\) with \(0\le e_{ab}\le r_ar_b\)}
    \State Let \(F'_{e_{12},e_{23},e_{31}}\) be the tripartite graph on \(V_1,V_2,V_3\) with an edge between \(x_{S_a}\in V_a\) and \(x_{S_b}\in V_b\) iff
    \[
        S_a\cap S_b=\emptyset \quad\text{and}\quad e_F(S_a,S_b)=e_{ab}.
    \]
    \State Find a minimum vertex-weighted triangle \((x_{S_1},x_{S_2},x_{S_3})\) in \(F'_{e_{12},e_{23},e_{31}}\) using \cref{lem:rectangular_min_vertex_weighted_triangle}.
    \State Evaluate the corresponding island set \(I=S_1\cup S_2\cup S_3\) by
    \[
        \sum_{j=1}^3 W(x_{S_j})-(e_{12}+e_{23}+e_{31}).
    \]
\EndFor
\State \Return the island set of minimum value over all guesses.
\end{algorithmic}
\end{algorithm}

\begin{lemma}[Island Discovery]
\label{lem:improved_islands}
\cref{alg:improved_islands} returns an optimal set of \(r\) singleton islands in time
\[
    k^{O(k)}n^{\widehat\Phi(r)}
\]
for every \(r\ge3\).
\end{lemma}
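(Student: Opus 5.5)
The proof has two parts: correctness, which reduces to a cost identity plus an exhaustive-guessing argument, and the running time.

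\emph{Cost of an island set.} The cost of making a set \(I\) of \(r\) vertices singleton islands is the number of edges incident to \(I\):
\[
    \sum_{v\in I}\deg_F(v)-e_F(I).
\]
This counts edges with at least one endpoint in \(I\); edges inside \(I\) are counted twice by the degree sum and then once subtracted. For a partition \(I=S_1\sqcup S_2\sqcup S_3\) with \(|S_j|=r_j\) we have
\[
    e_F(I)=\sum_j e_F(S_j)+e_{12}+e_{23}+e_{31},
\]
where \(e_{ab}:=e_F(S_a,S_b)\). Hence the cost equals \(\sum_j W(x_{S_j})-(e_{12}+e_{23}+e_{31})\), which is exactly the expression evaluated in the algorithm. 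Note \(r_1+r_2+r_3=r\) for every \(r\ge3\), checked by the residue of \(r\bmod 3\).

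\emph{Correctness.} Soundness: any triangle of \(F'_{e_{12},e_{23},e_{31}}\) has pairwise disjoint \(S_a\), so \(I\) is a valid \(r\)-set. Its pairwise edge counts are exactly the guessed \(e_{ab}\), so the evaluated value is its true cost. Completeness: fix an optimal \(I^\star\) and any split into \(S_1,S_2,S_3\) of sizes \(r_1,r_2,r_3\). Its true counts satisfy \(e_{ab}\le r_ar_b\) because \(F\) is simple, so that triple is enumerated. The optimal split is then a triangle in the corresponding \(F'\). Within a fixed guess the objective is \(\sum W\) minus a constant, so the minimum-weight triangle found there costs at most \(\mathrm{cost}(I^\star)\).

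\emph{Running time.} There are at most \(\prod(r_ar_b+1)=k^{O(1)}\) guesses. Building each \(F'\) takes \(O(n^{r_a+r_b}r^{O(1)})\) per pair of parts, which is \(n^{2\lceil r/3\rceil}\cdot k^{O(1)}\). The weights are polynomially bounded, so \cref{lem:rectangular_min_vertex_weighted_triangle} applies with parts of size \(\binom{n}{r_j}\le n^{r_j}\). It runs in time \(n^{\widehat\Phi(r)}\) provided \(\widehat\Phi(r)>\omega(r_1,r_2,r_3)\), which holds by the choice of \(\widehat\Phi\) and the recurrence.

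\emph{Main obstacle.} The delicate step is checking that the explicit construction of the tripartite graphs, of size \(n^{r_a+r_b}\), does not exceed \(n^{\widehat\Phi(r)}\). I would argue it via \(\omega(r_1,r_2,r_3)\ge \max_{a\ne b}(r_a+r_b)\), since the output or input of the product already has that size, so \(\widehat\Phi(r)\) strictly exceeds the construction exponent. Alternatively, the construction can be folded into the triangle routine, which reads adjacency implicitly. Computing all \(W(x_S)\) and \(e_F(S_a,S_b)\) costs only \(r^{O(1)}\) per set or pair. Collecting these bounds gives the stated \(k^{O(k)}n^{\widehat\Phi(r)}\).
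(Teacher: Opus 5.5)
Your proposal is correct and follows the paper's proof essentially step for step: the same cost identity and three-way split of \(I\), the same soundness and completeness argument over the guessed cross-edge counts \(e_{12},e_{23},e_{31}\), and the same running-time accounting via \cref{lem:rectangular_min_vertex_weighted_triangle}. The only difference is how you bound the \(n^{r_2+r_3}\) construction cost: you use the standard lower bound \(\omega(r_1,r_2,r_3)\ge\max_{a\ne b}(r_a+r_b)\) together with \(\widehat\Phi(r)>\omega(r_1,r_2,r_3)\), whereas the paper checks \(r_2+r_3\le\widehat\Phi(r)\) numerically for \(r=3,4,5\) and extends it by induction along \(\widehat\Phi(r)=\widehat\Phi(r-3)+\widehat\Phi(3)\) with \(\widehat\Phi(3)>2\); both arguments are valid.
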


\begin{proof}
For a set \(I\subseteq V(F)\), \(|I|=r\), the additional cut cost of extracting \(I\) as singleton components is
\[
    \operatorname{cost}(I) = \sum_{v\in I}\deg_F(v)-e_F(I).
\]
Indeed, the degree sum counts every edge from \(I\) to \(V(F)\setminus I\) once and every edge inside \(I\) twice, while each edge inside \(I\) is cut only once.

Partition \(I\) into three disjoint sets
\[
    I=S_1\cup S_2\cup S_3, \qquad |S_j|=r_j.
\]
Then
\[
\begin{aligned}
    \operatorname{cost}(I) &= \sum_{j=1}^3 \left( \sum_{v\in S_j}\deg_F(v)-e_F(S_j) \right)  \\ &\qquad
    - \bigl( e_F(S_1,S_2)+e_F(S_2,S_3)+e_F(S_3,S_1) \bigr).
\end{aligned}
\]
Thus, after fixing the three cross-edge counts
\[
    e_{12},e_{23},e_{31},
\]
minimizing the island cost is exactly a minimum vertex-weighted triangle problem in the auxiliary tripartite graph. Conversely, every triangle in this auxiliary graph consists of three pairwise disjoint sets with the guessed cross-edge counts and therefore gives a valid \(r\)-island set with the displayed cost. Taking the minimum over all guesses is correct.

The auxiliary parts satisfy
\[
    |V_j|=\binom{|V(F)|}{r_j}\le n^{r_j} \qquad (j=1,2,3),
\]
and the auxiliary weights have absolute value at most \(kn\). Therefore, by \cref{lem:rectangular_min_vertex_weighted_triangle}, each triangle step runs in time $O(n^{\widehat\Phi(r)})$. The number of cross-edge guesses is
\[
    \prod_{1\le a<b\le3}(r_ar_b+1)=k^{O(1)}.
\]

It remains only to account for the construction of the auxiliary graphs. After building an adjacency table for \(F\), the disjointness and cross-edge count for a pair of sets can be computed in \(k^{O(1)}\) time. The largest pair table has size $n^{r_2+r_3}$. We claim
\[
    r_2+r_3\le \widehat\Phi(r). \tag{1} \label{eq:pair_table_bounded_by_phi}
\]
For \(r=3,4,5\), this follows directly from the chosen values:
\[
    2<\widehat\Phi(3),\qquad 3<\widehat\Phi(4),\qquad 4<\widehat\Phi(5).
\]
For \(r\ge6\), use the recurrence
\[
    \widehat\Phi(r)=\widehat\Phi(r-3)+\widehat\Phi(3).
\]
When \(r\) is increased by \(3\), the quantity \(r_2+r_3\) increases by \(2\), while \(\widehat\Phi(r)\) increases by \(\widehat\Phi(3)>2\). Hence \eqref{eq:pair_table_bounded_by_phi} follows by induction from the cases \(r=3,4,5\).

Thus all auxiliary vertices and auxiliary edges can be generated in
\[
    k^{O(k)}n^{\widehat\Phi(r)}
\]
time, and the same bound holds for the full algorithm.
\end{proof}

\begin{lemma}[Border-Coupled Island Extension]
\label{lem:border_coupled_islands}
Let \(\mathcal B=(B_1,\ldots,B_\ell)\) be a candidate \(\ell\)-cut border of \(G\), and let \(i=k-\ell\). Suppose \(i\ge3\). Among all \(k\)-cuts obtained from \(\mathcal B\) by extracting exactly \(i\) singleton islands from the border components, the optimum extension can be found deterministically in time
\[
    k^{O(k)}n^{\widehat\Phi(i)}.
\]
\end{lemma}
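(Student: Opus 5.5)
The plan is to reduce the border-coupled problem to the triangle formulation of \cref{alg:improved_islands}, with vertex weights that depend on the border. Extracting islands changes the cut value as follows. Fix the border \(\mathcal B=(B_1,\ldots,B_\ell)\) and a set \(I\) of \(i\) vertices, each lying in some border component. Let \(b(v)\) denote the index of the component containing \(v\). Removing \(I\) from its components and making each vertex a singleton changes the value of \(\mathcal B\) by
\[
    \Delta(I)=\sum_{v\in I}\bigl|E_G(v,B_{b(v)})\bigr|-e_G\bigl(I\cap \textstyle\bigcup_a\{v:b(v)=a\}\bigr),
\]
that is, a vertex-additive term minus the number of edges inside \(I\) whose endpoints lie in the same border component. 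Edges of \(I\) between different components were already cut and are not counted twice. So the total value is \(c(\mathcal B)+\sum_{v\in I}w(v)-e^{=}(I)\), where \(w(v):=|E_G(v,B_{b(v)})|\) and \(e^{=}\) counts only same-component edges inside \(I\).

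I also need validity. No border component may be emptied, so the resulting partition has exactly \(k\) nonempty parts. To handle this, I guess for each component the multiplicity \(i_a\) of islands taken from it. There are \(k^{O(k)}\) such guesses, and I require \(i_a<|B_a|\). Note that if a component becomes empty, the extension is still a partition with fewer parts, which only hurts; with the guessed counts I can simply forbid it.

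Next I reduce to the triangle algorithm. Split \(I=S_1\cup S_2\cup S_3\) with \(|S_j|=r_j\) for \(r=i\), exactly as in \cref{alg:improved_islands}. Create auxiliary vertices \(x_S\) for \(S\subseteq V\) with \(|S|=r_j\), with weight \(W(x_S)=\sum_{v\in S}w(v)-e^{=}(S)\). Guess the three cross same-component counts \(e^{=}_{ab}\) (\(k^{O(1)}\) choices), and also the per-piece component profile (how many vertices of \(S_j\) lie in each \(B_a\)), so that the \(i_a\) constraints become a local filter on vertices. This costs only \(k^{O(k)}\) more guesses. Edges join disjoint sets with the guessed cross count. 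Then \cref{lem:rectangular_min_vertex_weighted_triangle} solves each instance in \(n^{\widehat\Phi(i)}\) time. Weights are bounded by \(kn\), and the auxiliary tables have size at most \(n^{r_2+r_3}\le n^{\widehat\Phi(i)}\) by \eqref{eq:pair_table_bounded_by_phi}. The correctness argument is identical to \cref{lem:improved_islands}: every triangle yields a valid island set with exactly the displayed cost, and the optimum appears under the right guess.

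The main obstacle is verifying the cost decomposition precisely. An edge from \(v\in I\) to another island in a different component must be neither subtracted nor added. An edge to an island in the same component is counted in both \(w\) terms and is cut exactly once in the extension, hence the subtraction once. Preprocessing \(w(v)\) and the component labels takes \(O(n^2)\) time, which is dominated since \(\widehat\Phi(i)>2\).
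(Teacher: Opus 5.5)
Your proposal is correct, but it takes a different route from the paper. The paper never builds a coupled instance. It fixes the split vector $(i_1,\ldots,i_\ell)$ and discards vectors with $i_j\ge|B_j|$. It then observes that the extra cost of extracting $I_j\subseteq B_j$ depends only on $G[B_j]$, and that the costs of different components simply add. So it solves each component independently: it calls \cref{lem:improved_islands} as a black box on $G[B_j]$ when $i_j\ge3$, and uses brute force in $O(|B_j|^2)$ time when $i_j\le2$. The bound $n^{\widehat\Phi(i_j)}\le n^{\widehat\Phi(i)}$ relies on $\widehat\Phi$ being nondecreasing and on $\widehat\Phi(3)>2$. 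You instead fold everything into one global triangle instance. It uses border-aware weights $w(v)=\deg_{G[B_{b(v)}]}(v)$ and guesses of same-component cross counts. Per-piece component profiles enforce that no component is emptied. Your cost accounting is right, with one slip: your displayed $e_G\bigl(I\cap\bigcup_a\{v:b(v)=a\}\bigr)$ literally equals $e_G(I)$. What you mean, and what your prose says, is $\sum_a e_G(I\cap B_a)$. The profile and count guesses add only $k^{O(k)}$ instances, so you also obtain $k^{O(k)}n^{\widehat\Phi(i)}$. The paper's route is shorter, because nonemptiness becomes a trivial filter on the split vector and no new triangle encoding needs checking. Yours avoids the monotonicity of $\widehat\Phi$ and the small-$i_j$ case split. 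The price is re-verifying the reduction of \cref{lem:improved_islands} with modified weights and extra guesses.
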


\begin{proof}
Fix a vector
\[
    (i_1,\ldots,i_\ell)
\]
of nonnegative integers with
\[
    i_1+\cdots+i_\ell=i.
\]
Here \(i_j\) is the number of singleton islands extracted from \(B_j\). We discard vectors for which \(i_j\ge |B_j|\), since the residual component \(B_j\setminus I_j\) must remain nonempty.

For a fixed border component \(B_j\), extracting islands \(I_j\subseteq B_j\) changes the border value only through edges internal to \(G[B_j]\). The additional cost is
\[
    \sum_{v\in I_j}\deg_{G[B_j]}(v)-e_{G[B_j]}(I_j).
\]
Thus, for fixed \(i_j\), the optimal choice of \(I_j\) is exactly the \(i_j\)-island problem on \(G[B_j]\). Moreover, for the fixed vector \((i_1,\ldots,i_\ell)\), the choices in different border components are independent and their additional costs add.

If \(i_j\ge3\), \cref{lem:improved_islands} solves the subproblem in time
\[
    k^{O(k)}|B_j|^{\widehat\Phi(i_j)} \le k^{O(k)}n^{\widehat\Phi(i)}.
\]
Here we use that \(i_j\le i\) and that the chosen sequence \(\widehat\Phi(\cdot)\) is nondecreasing. If \(i_j\in\{0,1,2\}\), brute force costs at most \(O(|B_j|^2)\), which is at most
\[
    O(n^2)\le O(n^{\widehat\Phi(i)}),
\]
because \(i\ge3\) and \(\widehat\Phi(i)\ge\widehat\Phi(3)>2\).

Therefore, for one fixed vector \((i_1,\ldots,i_\ell)\), all component subproblems can be solved in time
\[
    k^{O(k)}n^{\widehat\Phi(i)}.
\]
The number of feasible vectors is at most
\[
    \binom{i+\ell-1}{\ell-1}\le k^{O(k)}.
\]
Taking the best extension over all vectors gives the claimed running time and proves correctness.
\end{proof}

\begin{lemma}[Border-Preserving Sparsification]
\label{lem:border_preserving_sparsification}
There is a randomized algorithm which, given a simple graph \(G\) and an upper estimate \(s=\Theta(\lambda_k(G))\) with \(\lambda_k(G)\le s\), computes a subgraph \(H\subseteq G\), a partition
\[
    \mathcal P=\{P_1,\ldots,P_N\}
\]
of \(V(G)\), and the contracted multigraph \(G'\) obtained by contracting the parts of \(\mathcal P\) in \(H\), with the following properties w.h.p.:
\begin{enumerate}
    \item \(H\) is an NI certificate at threshold \(s+1\). In particular, every vertex partition of \(H\)-value at most \(s\) has the same value in \(G\).
    \item \(N=\widetilde O(n/s)\).
    \item \(|E(H)|=O(sn)\).
    \item \(G'\) has \(O(sn)\) edges, counted with multiplicity.
    \item The edges of \(G'\) store edge bundles: each edge of \(G'\) stores the list of \(H\)-edges it represents.
    \item For every minimum \(k\)-cut \(C\) of \(G\), there is a border \(B_{I,\eta}\) of \(C\), with \(i:=|I|\), that is respected by \(\mathcal P\), and whose contraction is a \((k-i)\)-cut of \(G'\) of weight at most
    \[
        \left( 1-\left(1-\frac{2}{\log n}\right)\frac{i}{k} \right)\lambda_k(G).
    \]
\end{enumerate}
The running time is
\[
    \widetilde O(k^{O(1)} m) + k^{O(k)} n\log^{18k+O(1)}n (\log\log n)^{6k-6}.
\]
\end{lemma}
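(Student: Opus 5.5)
We follow the He--Li border-preserving sparsification of~\cite{HL22}, which adapts the Kawarabayashi--Thorup framework~\cite{KT18,Li19}, and substitute our near-linear routines for its polynomial-time subroutines. First compute the NI certificate \(H\) of \(G\) at threshold \(s+1\) (\cref{def:ni_sparsification}). This gives items~1 and~3 directly: \(|E(H)|\le(s+1)(n-1)\), and every partition of \(H\)-value at most \(s\) has the same value in \(G\). Item~4 follows because contraction never creates edges, and item~5 is bookkeeping done during contraction.

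Next build \(\mathcal P\) by the Kawarabayashi--Thorup expander/trimming/shaving procedure on the simple graph, with the degree parameter set to \(\Theta(s)\). Since \(\lambda_k\le s\) and \(\lambda_k=\Theta(s)\), the minimum degree is at least roughly \(s/k\); otherwise \(k\) low-degree singletons would already give a cheaper cut. In particular the graph has \(\Omega(ns)\) edges. We then compute an expander decomposition with conductance \(\phi=1/\operatorname{polylog} n\), trim each cluster so that every retained vertex keeps at least a \((1-1/\log n)\)-fraction of its degree inside, and shave one layer of low-inside-degree vertices. The standard counting bound shows that only \(\widetilde O(n/s)\) clusters survive, and all other vertices become singletons. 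That gives item~2.

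Item~6 is the structural heart, and the step I expect to be the main obstacle. Fix an optimum \(C\). Its non-singleton sides have size \(\Omega(s/k)\) in a simple graph, since a side \(S\) of size below roughly \(s/k\) has each vertex of degree at least \(s/k\) with fewer than \(|S|\) neighbours inside, so it pays almost its full degree sum. For each trimmed and shaved cluster \(X\), we show that \(X\) is either contained in one side of \(C\) or meets every side other than one only in singletons of \(C\). Expansion of \(X\) forces any side that splits \(X\) nontrivially to pay \(\phi\cdot\min\) volume. Trimming and shaving rule out a non-singleton side whose intersection with \(X\) is small but nonempty: such vertices have degree at least \(s/k\) and must send most of their edges inside \(X\), hence across the cut, which exceeds the budget.

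The singletons of \(C\) lying inside clusters that are merged into the giant side are exactly the islands \(I\), with \(\eta\) sending each to its cluster's side. The resulting border \(B_{I,\eta}\) is then respected by \(\mathcal P\). For the weight bound, each island \(v\) had contributed \(\deg(v)\) to \(C\), of which at least \((1-2/\log n)\deg(v)\) goes to edges inside its cluster, and hence inside the merged side. These edges no longer cross. Averaging \(\deg(v)\ge\) (contribution) and using that singletons in an optimum cut pay about \(\lambda_k/k\) each, we get the saving \((1-2/\log n)\,i\lambda_k/k\). This is the place where the exact \(2/\log n\) constant must be extracted carefully from the trimming threshold. The border's value is at most \(s\), so by item~1 it is measured correctly in \(H\) and hence in \(G'\).

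For the running time, the expander decomposition is near-linear on \(H\). The obstacle there is that \(s\) is needed as an accurate constant-factor estimate. We obtain it from \cref{thm:k_cut_approx} with constant \(\epsilon\), which costs exactly the stated \(\widetilde O(k^{O(1)}m)+k^{O(k)}n\log^{18k+O(1)}n(\log\log n)^{6k-6}\). The remaining steps are \(\widetilde O(m)\).
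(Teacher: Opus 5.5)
Your high-level route is the paper's. You build the Nagamochi--Ibaraki certificate \(H\), which gives items 1 and 3--5 exactly as you say. You then run the He--Li border-preserving sparsification on \(H\) with its approximation step replaced by \cref{thm:k_cut_approx}. The paper's proof stops there and inherits items 2 and 6 from He--Li. Where you try to derive item 6 yourself, however, the argument rests on two false premises.

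First, \(\lambda_k\le s=\Theta(\lambda_k)\) does not give minimum degree \(\gtrsim s/k\). Making \(k-1\) low-degree vertices singletons only shows that at most \(k-2\) vertices have degree below \(\lambda_k/(k-1)\); a pendant vertex attached to a clique has degree \(1\).

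Second, singletons of an optimum cut need not pay about \(\lambda_k/k\) each. Let \(A,B\) be cliques on \(M\gg D\) vertices joined by \(D\) edges, and add a vertex \(v\) with \(D/4\) neighbors in \(A\). For \(k=3\) the unique optimum is \(\{v\},A,B\), with \(\lambda_3=\tfrac54D\), but \(v\) pays only \(\lambda_3/5\). Since all of \(v\)'s edges go into \(A\), expansion, trimming and shaving do nothing to keep \(v\) out of \(A\)'s cluster. Once \(v\) is there, the only border respected by \(\mathcal P\) is \((A\cup\{v\},B)\), of weight \(D=\tfrac45\lambda_3>\bigl(1-\tfrac13(1-\tfrac2{\log n})\bigr)\lambda_3\). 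So item 6 needs the construction to bar low-degree vertices from clusters explicitly, and your sketch has no such step.

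Barring them is exactly where the accuracy of the estimate enters. Suppose vertices of degree below \(t\) are excluded. The example above, generalized to \(k-1\) cliques and with \(\deg v\) just below \((1-\tfrac2{\log n})\lambda_k/k\), forces \(t\) above that value. A near-regular graph whose degrees are all about \(\lambda_k/(k-1)\) forces \(t\lesssim\lambda_k/(k-1)\); otherwise every vertex becomes its own part and \(N=\widetilde O(n/s)\) fails.

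This window has ratio about \(k/(k-1)\), so a \(2\)-approximation cannot locate it. For \(k=3\), the admissible ranges of \(t\) for \(\lambda_3=s\) and for \(\lambda_3=s/2\) are roughly \([s/3,s/2]\) and \([s/6,s/4]\), which are disjoint. What is needed is a \((1+\epsilon)\)-approximation with \(\epsilon=O(1/k)\). Supplying it in near-linear time is the role of \cref{thm:k_cut_approx} in the paper's proof, which is why it is called even though the \(2\)-approximation \(s\) is already an input. It still fits the stated running time, because \((k/\epsilon)^{O(k)}=k^{O(k)}\). You call that theorem only to recover a constant-factor estimate you were already given.

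Separately, item 2 does not follow from conductance \(1/\operatorname{polylog}n\). The inter-cluster and trimmed volume can then be about \(|E(H)|/\operatorname{polylog}n=sn/\operatorname{polylog}n\). The counting bound therefore leaves roughly \(kn/\operatorname{polylog}n\) vertices outside clusters, not \(\widetilde O(n/s)\). A Kawarabayashi--Thorup-type count giving \(N=\widetilde O(n/s)\) needs conductance of order \(1/s\), up to \(k\) and polylogarithmic factors.
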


\begin{proof}
First compute a Nagamochi--Ibaraki certificate \(H\) using the first \(s+1\) forests. Thus \(H\subseteq G\), \(|E(H)|=O(sn)\), and cuts of value at most \(s\) are preserved as desired.

Then compute the approximation required by the sparsification framework of He and Li~\cite{HL22} using \cref{thm:k_cut_approx}. This takes
\[
    \widetilde O(k^{O(1)}m) + k^{O(k)}n\log^{18k+O(1)}n(\log\log n)^{6k-6}.
\]

After the NI certificate is available, apply the Kawarabayashi--Thorup partitioning~\cite{KT18} and the trimming/shaving operations of He and Li~\cite{HL22} to obtain \(\mathcal P\). Contract the parts of \(\mathcal P\) in \(H\) to obtain \(G'\). Since \(|E(H)|=O(sn)\), the contracted multigraph has \(O(sn)\) edges counted with multiplicity. During the contraction, store with each edge of \(G'\) the list of \(H\)-edges it represents.

The guarantees
\[
    N=\widetilde O(n/s)
\]
and the stated border-preservation property are exactly the guarantees of the He--Li border-preserving sparsification, with the approximation step replaced by \cref{thm:k_cut_approx}. All steps after the approximation run in near-linear time on $H$ and $H \subseteq G$, thus in $\widetilde O(m)$ time.
\end{proof}

We give a variant of the following approximate minimum cut enumeration algorithm to enumerate borders.

\begin{theorem}[Enumeration of Near-Minimum \(k\)-Cuts
{\cite[Theorem~20]{GHLL21}}]
\label{thm:optimal_k_cut}
Let \(G\) be a weighted undirected graph.  For each \(k\ge3\) and $\alpha \ge 1$, there is
an algorithm that enumerates all \(k\)-cuts of weight at most
\(\alpha\lambda_k\) in time
\[
    n^{\alpha k}(\log n)^{O(\alpha k^2)}
\]
with probability at least \(1-1/\operatorname{poly}(n)\).
\end{theorem}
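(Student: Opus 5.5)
This result is cited from \cite[Theorem~20]{GHLL21} and we use it as a black box. If I had to reprove it, I would follow the recursive structure of \cref{alg:weighted_k_cut}, with the strict-light threshold replaced by an $\alpha$-dependent one.

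The first step is a light side. For any $k$-cut $(A_1,\ldots,A_k)$ of weight at most $\alpha\lambda_k$ we have $\sum_i c(A_i)\le 2\alpha\lambda_k$. Hence some side satisfies $c(A_i)\le 2\alpha\lambda_k/k$.

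The second step is counting and covering such sides.
\begin{itemize}
\item I would use the extremal bound of \cite{GHLL21}: the number of $2$-cuts of value at most $2\alpha\lambda_k/k$ is at most $n^{\alpha}(\log n)^{O(\alpha k)}$, up to $k^{O(k)}$ factors.
\item To cover these cuts, I would use the dual tree packing of \cref{lem:dual_tree_packing}. A sampled tree crosses such a side in expectation at most about $\alpha(k-1)$ edges, as in \cref{lem:sampled_light_side_respecting}.
\item So I would enumerate cuts respecting $O(\alpha k)$ tree edges and keep only those of value at most $2\alpha\lambda_k/k$.
\item Alternatively, I would enumerate these cuts directly through the Karger--Stein contraction bound for $\alpha'$-approximate $2$-cuts, with $\alpha'=2\alpha\lambda_k/(k\lambda_2)$.
\end{itemize}

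The third step is recursion. For each such side $A$, the remaining $k-1$ parts form a $(k-1)$-cut of $G[V\setminus A]$. Its weight is at most $\alpha\lambda_k-c(A)$, which should be bounded by $\alpha'\lambda_{k-1}(G[V\setminus A])$ for a suitable $\alpha'$. The key inequality to check is $\lambda_k(G)\le c(A)+\lambda_{k-1}(G[V\setminus A])$. I would then enumerate recursively with a budgeted exponent so that the exponents telescope to $\alpha k$.

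The main obstacle is the exponent bookkeeping. I need the product of the per-level counts $n^{\alpha_j}$ to total $n^{\alpha k}$, and I need the polylogarithmic factors to total $(\log n)^{O(\alpha k^2)}$. Making this work requires the refined counting of \cite{GHLL21} for cuts in the range between $\lambda_k/(k-1)$ and $2\lambda_k/k$, rather than the crude bounds above. I would defer that step entirely to their Theorem~20.
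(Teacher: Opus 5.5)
Citing \cite[Theorem~20]{GHLL21} as a black box is exactly what the paper does; it gives no proof of this statement. The reproof you sketch, however, would not establish the theorem, and its counting step is false. In the unweighted cycle $C_n$ one has $\lambda_k=k$, so already at $\alpha=1$ all $\binom{n}{2}$ arc cuts have value exactly $2=2\alpha\lambda_k/k$; the paper points this out in the introduction. For every $1\le\alpha<2$, the number of $2$-cuts of value at most $2\alpha\lambda_k/k$ therefore stays $\Theta(n^2)$, not $n^{\alpha}(\log n)^{O(\alpha k)}k^{O(k)}$. The $k^{O(k)}n$ bound of \cite{GHLL21} behind \cref{lem:light_cut_counting_bound} applies only to cuts \emph{strictly} below $2\lambda_k/k$, so it is an $\alpha=1$ statement. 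That is why \cref{sec:weighted_sparse_k_cut} needs a perturbation to create a strict-light side at all. Without a small family of light sides, there is nothing to truncate to after the tree-respecting enumeration. Your Karger--Stein variant only guarantees $n^{2\alpha'}$ with $\alpha'=2\alpha\lambda_k/(k\lambda_2)$, which is not bounded by any function of $\alpha$ and $k$: for $K_m$ plus a pendant vertex, $\lambda_3/\lambda_2=m$.

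The recursion also does not close with the inequality you propose. If the light side has $c(A)=c$, the residual $(k-1)$-cut has weight at most $\alpha\lambda_k-c$. Your inequality only gives $\lambda_{k-1}(G[V\setminus A])\ge\lambda_k-c$, so the recursive ratio is $(\alpha\lambda_k-c)/(\lambda_k-c)$, which exceeds $\alpha$ whenever $\alpha>1$. For $k=3$, $\alpha=1.2$ and $c=2\alpha\lambda_3/3$ this ratio is $2$. The recursive call alone is then only bounded by $n^{2\cdot 2}=n^4>n^{3.6}=n^{\alpha k}$, even if there were $O(1)$ light sides. Deferring this step to ``their Theorem~20'' is circular, since that is the statement being proved.

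The argument of \cite{GHLL21}, which the paper re-runs in \cref{lem:border_enumeration}, avoids light sides entirely. It is a Karger--Stein-style recursive contraction driven by the survival bound of \cite[Theorem~17]{GHLL21}. That bound says a fixed $k$-cut of weight at most $\alpha\lambda_k$ survives random contraction from $r$ to $s=\Omega(\alpha k^2)$ vertices with probability at least $(s/r)^{\alpha k}k^{-O(\alpha k^2)}$. With level sizes $N_{j+1}\approx N_j^{2/(\alpha k)}$ and $(N_j/N_{j+1})^{\alpha k}$ children per node, each level costs $n^{\alpha k}$ up to lower-order factors, and there are $O(\log\log n)$ levels. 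The key point is that contraction can only increase $\lambda_k$ while a surviving target keeps its weight. So the target's approximation ratio never exceeds $\alpha$, which is the opposite of the inflation in your recursion.
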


\begin{lemma}[Border Enumeration]
\label{lem:border_enumeration}
Let \(Q\) be an \(N\)-vertex weighted undirected graph, let
\(2\le\ell\le k\), and put \(p:=\alpha k\).  Suppose that
\(p\ge\ell\). When \(N\ge k\), assume additionally that \(\lambda_k(Q)>0\). If \(N\ge k\), there is a randomized procedure which,
with high probability, outputs every \(\ell\)-cut of \(Q\) of weight at
most \(\alpha\lambda_k(Q)\).
Its running time is
\[
    O(|E(Q)|)+
    \begin{cases}
        k^{O(pk)}
        N^p(\log(N+2))^{O(pk)},
            & p\ge3,\\[1ex]
        N^p
        \exp\!\left(
            O\!\left(pk\log(k+2)\sqrt{\log(N+2)}\right)
        \right),
            & 2\le p<3.
    \end{cases}
\]
In particular, for fixed \(k\) and \(p\), the second bound is
\(N^{p+o(1)}\).

If \(N<k\), the procedure outputs all \(\ell\)-cuts deterministically
in \(O(|E(Q)|)+k^{O(k)}\) time.  Cuts are represented by their terminal
partitions and contraction histories.

More generally, for any \(0<\delta<1/2\), the success probability can be
made at least \(1-\delta\) by multiplying the displayed running time by
\(O(\log(1/\delta))\).

Moreover, the procedure can maintain input-edge bundles throughout the
contractions.  When requested, it can output with a listed cut the input
edges crossing that cut, in additional time linear in the size of the
requested list.
\end{lemma}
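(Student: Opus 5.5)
We plan to run the recursive contraction-based enumeration of \cite{GHLL21} underlying \cref{thm:optimal_k_cut} with a modification: stop contracting at a larger number of terminals and read off $\ell$-cuts rather than only $k$-cuts. The key structural fact we use is that every $\ell$-cut $C$ of weight at most $\alpha\lambda_k(Q)$ behaves, for the contraction analysis, like a $k$-cut of weight at most $\alpha\lambda_k$. Random contraction of an edge chosen proportionally to weight preserves $C$ with probability at least $1-w(C)/w(E)$. Moreover, the standard averaging bound $w(E)\ge\frac{N}{k}\cdot\Omega(\lambda_k)$ (more precisely, in a graph with $N'$ vertices, $w(E)\ge \frac{N'-k+1}{2(k-1)}\cdot\lambda_k$-type inequalities from \cite{GHLL21,KS96}) is monotone under contraction, since contraction never decreases $\lambda_k$. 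Thus the survival probability of $C$ down to $O(p)$ terminals is at least $N^{-p}(\log N)^{-O(pk)}$, exactly as in the $k$-cut analysis with $\alpha k=p$. The hypothesis $\lambda_k(Q)>0$ is used to ensure $w(E)>0$ whenever at least $k$ super-vertices remain, so the contraction step is well-defined.

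Concretely, the steps are as follows. (1) If $N<k$, enumerate all set partitions of the $N$ vertices into $\ell$ parts; there are $k^{O(k)}$ of them. Evaluate each in $O(|E(Q)|)$ total time after aggregating parallel edges. (2) If $N\ge k$ and $p\ge3$, invoke the GHLL recursive contraction scheme with target exponent $p$. Contract until roughly $\max\{k,p\}^{O(1)}$ super-vertices remain, using the branching factor that yields $N^{p}(\log N)^{O(pk)}$ leaves. At each leaf, brute-force all $\ell$-partitions of the terminals, which costs $k^{O(pk)}$. The correctness argument reruns their survival calculation, with the weight bound applied to $C$ regarded as a cut of weight at most $\alpha\lambda_k$. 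The number of parts $\ell\le p$ only affects the leaf enumeration. (3) For $2\le p<3$, the $\log^{O(pk)}$ overhead from GHLL's recursion is replaced with the Karger--Stein style recursion with branching tuned to depth $\sqrt{\log N}$, which gives the $\exp(O(pk\log k\sqrt{\log N}))$ factor. (4) Amplify by independent repetition: $O(\log(1/\delta))$ repetitions give failure probability $\delta$, and $O(\log N)$ repetitions give the w.h.p.\ claim. (5) Maintain edge bundles: each super-edge keeps a linked list of the input edges it represents, and concatenation on contraction takes $O(1)$ time. Thus listing the crossing edges of an output cut is linear in the list size, and each leaf cut is represented by its terminal partition together with the contraction history.

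The main obstacle is step (2). We must verify that GHLL's survival bound, stated for $k$-cuts of weight at most $\alpha\lambda_k$, really extends to $\ell$-cuts with $\ell<k$ at the same weight threshold. Their proof bounds the probability that a contracted edge lies in the fixed cut, $w(C)/w(E)$, via lower bounds on $w(E)$ in terms of $\lambda_k$ of the current graph. We would check that these lower bounds use only $\lambda_k$ and the current vertex count, and never the number of parts of $C$, so they go through verbatim. If a step in their refined analysis does exploit $C$ having exactly $k$ parts, we would first try to fall back to the cruder Karger--Stein bound: with $N'$ super-vertices, $w(E)\ge \frac{N'}{2(k-1)}\lambda_k$ up to lower-order terms, obtained by summing singleton cuts and splitting off the $k-1$ lightest vertices. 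This fallback still yields the exponent $p=\alpha k$ up to the stated polylogarithmic factors.
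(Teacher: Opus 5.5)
Your overall architecture matches the paper's. You run the recursive contraction of \cite{GHLL21} with the survival estimate applied to the boundary $J$ of the target $\ell$-cut, brute-force all $\ell$-partitions at the leaves, use a separate shallow schedule for $2\le p<3$, and carry edge bundles. The gap is in where the exponent $p$ comes from. A per-step bound $c(J)/w(E)$ combined with an averaging inequality cannot produce it. The best such inequality is $w(E)\ge\frac{N'}{2(k-1)}\lambda_k$, and $K_n$ shows the constant cannot be improved to $N'/k$ when $k\ge3$. With it you only get survival about $\prod_{N'}\bigl(1-\frac{2(k-1)\alpha}{N'}\bigr)\approx N^{-2p(k-1)/k}$, which is the Karger--Stein exponent. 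So the constant hidden in your $\Omega(\lambda_k)$ sits directly in the exponent, and your fallback fails: it gives time $N^{2p(k-1)/k}$, not $N^{p}$.

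The exponent $p$ requires the fine-grained analysis of \cite[Theorem~17]{GHLL21}, used as a black box. What must be checked is that this theorem bounds the survival of an arbitrary edge set $J$ with $c(J)\le\alpha\lambda_k$, by $(r/s)^{-p}k^{-O(pk)}$ for $r\ge s\ge 8pk+2k$. It can then be reapplied at every recursion node, because a contraction minor $H$ avoiding $J$ has $\lambda_k(H)\ge\lambda_k(Q)$ and $c(J_H)\le c(J)$. This is exactly what the paper uses. The threshold also pins the leaves at $\Theta(pk)$ vertices, which is what makes leaf enumeration cost $\ell^{O(pk)}=k^{O(pk)}$; leaves of size $\max\{k,p\}^{O(1)}$ would not.

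Three further steps do not go through as written.

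\textbf{Amplification for $2\le p<3$.} Here the schedule $N_{j+1}=N_j^{2/p}$ stops making progress. The paper instead contracts by a factor $c=\exp(\sqrt{\log(N+2)})$ per level with $b=\lceil 2\psi^{-1}(2c)^p\rceil$ children. Each level then keeps a surviving target only with probability $1-e^{-2}$, so one run lists a fixed target with probability just $\exp(-O(\sqrt{\log N}))$. Your $O(\log N)$ repetitions in step (4) therefore do not give simultaneous coverage. The paper repeats $\exp(O(\sqrt{\log N}))\,\mathrm{poly}(p,k,\log N)$ times. It then union-bounds using an a priori count $|\mathcal C|\le N^pk^{O(pk)}$ of targets, obtained from one contraction to $\tau$ vertices plus linearity of expectation; your plan never supplies such a count.

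\textbf{Running time for $2\le p<3$.} The bound needs the computation that the level-$j$ work $b^{j+1}(N/c^j)^2\le N^2Bc^p(Bc^{p-2})^j$ is dominated by the last level, which holds precisely because $p\ge2$. Saying the branching is tuned to depth $\sqrt{\log N}$ names the idea but does not establish $N^p\exp(O(pk\log(k+2)\sqrt{\log(N+2)}))$.

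\textbf{Edge bundles.} Destructive $O(1)$ linked-list concatenation is unsound in a branching recursion. Sibling contraction trials share the parent's bundles, so splicing lists in one child corrupts the others, and copying them instead costs $\Theta(|E(Q)|)$ per node. The paper creates a new binary node pointing to the two merged bundles. This persistent representation keeps merges $O(1)$ and flattening linear in the output.
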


\begin{proof}
If \(N<k\), then either \(N<\ell\), in which case there is no
\(\ell\)-cut, or we enumerate all \(\ell\)-partitions of \(V(Q)\).
There are at most
\[
    \ell^N\le k^k
\]
such partitions, proving the last assertion about this case.  Henceforth
assume \(N\ge k\).

Fix a target \(\ell\)-cut \(C\), let \(J:=\partial_Q C\), and suppose
\[
    c_Q(C)\le\alpha\lambda_k(Q)=\frac{p}{k}\lambda_k(Q).
\]
We repeatedly use the following consequence of
\cite[Theorem~17]{GHLL21}.  Suppose that \(H\) is an \(r\)-vertex
contraction minor of \(Q\) obtained without contracting an edge of \(J\),
and let \(J_H\) be the surviving image of \(J\).  Since contraction can
only restrict the collection of \(k\)-partitions,
\[
    \lambda_k(H)\ge\lambda_k(Q),
    \qquad
    c(J_H)\le c(J).
\]
Consequently, if
\[
    \alpha_H:=\frac{c(J_H)}{\lambda_k(H)},
\]
then \(\alpha_Hk\le p\).
Therefore, whenever
\[
    r\ge s\ge 8pk+2k,
\]
random contraction from \(r\) to \(s\) vertices preserves \(J_H\) with
probability at least
\begin{equation}
\label{eq:border_interval_survival}
    \left(\frac rs\right)^{-p}k^{-O(pk)}.
\end{equation}
We use here that the target \(J\) need not be the boundary of a \(k\)-cut.

We first handle \(p\ge3\).  Run the recursive-contraction algorithm in
the proof of \cite[Theorem~20]{GHLL21}, using the sizes
\[
    N_j
    =
    \left\lceil
        \max\left\{
            N^{(2/p)^j},\,20pk
        \right\}
    \right\rceil
\]
and
\[
    \left\lceil
        \left(\frac{N_j}{N_{j+1}}\right)^p
    \right\rceil
\]
independent contraction children at level \(j\).  The condition
\(p\ge3\) gives
\[
    \frac p2\ge\frac32
\]
in the depth and running-time calculation of that proof.

The proof applies without change after replacing its fixed \(k\)-cut
boundary by \(J\), because its survival estimate is
\eqref{eq:border_interval_survival}.  At a terminal graph, enumerate all
\(\ell\)-cuts instead of selecting a \(k\)-cut.  Each terminal graph has
\(O(pk)\) vertices, so this costs at most
\[
    \ell^{O(pk)}\le k^{O(pk)}
\]
per leaf.  The small-\(N\) branch and the terminal-enumeration factor are
absorbed exactly as in the proof of \cite[Theorem~20]{GHLL21}.
Accounting for the terminal enumeration gives
\[
    k^{O(pk)}
    N^p(\log(N+2))^{O(pk)}
\]
total time and high-probability coverage of every target cut.

It remains to handle \(2\le p<3\).  This is the endpoint at which the
preceding schedule ceases to make sufficiently rapid progress.

Set
\[
    \tau:=\left\lceil8pk+2k\right\rceil
\]
and choose an absolute constant \(C_0\) large enough that
\eqref{eq:border_interval_survival} is at least
\[
    \psi\left(\frac sr\right)^p,
    \qquad
    \psi:=k^{-C_0pk}.
\]
If \(N\le\tau\), enumerate all \(\ell\)-cuts directly.  Since
\(\tau=O(pk)\) and \(p<3\), this costs \(k^{O(k)}\), which is covered by
the claimed bound.

Suppose now that \(N>\tau\), and define
\[
    c:=\max\left\{
        2,\,
        \exp\!\left(\sqrt{\log(N+2)}\right)
    \right\},
    \qquad
    b:=\left\lceil
        2\psi^{-1}(2c)^p
    \right\rceil.
\]
At an \(r\)-vertex recursion node with \(r>\tau\), independently run
\(b\) contractions down to
\[
    s:=\max\left\{\tau,\left\lceil\frac rc\right\rceil\right\}
\]
vertices and recurse on all resulting graphs.  At a leaf, enumerate all
\(\ell\)-cuts of the terminal graph.

Consider a recursion node reached without contracting \(J\).  We have
\(r/s\le2c\), so every child preserves \(J\) with probability at least
\[
    \psi(2c)^{-p}.
\]
It follows that the probability that at least one of its \(b\) children
preserves \(J\) is at least
\[
\begin{aligned}
    1-
    \left(1-\psi(2c)^{-p}\right)^b
    &\ge
    1-\exp\left(-b\psi(2c)^{-p}\right)\\
    &\ge 1-e^{-2}.
\end{aligned}
\]
The recursion has depth
\[
    L
    \le
    \left\lceil\log_c\frac N\tau\right\rceil+1
    =
    O\!\left(\sqrt{\log(N+2)}\right).
\]
Thus one execution lists a fixed target \(C\) with probability at least
\begin{equation}
\label{eq:border_block_success}
    (1-e^{-2})^L
    =
    \exp\!\left(-O\!\left(\sqrt{\log(N+2)}\right)\right).
\end{equation}

We next bound the number of target cuts, in order to amplify
\eqref{eq:border_block_success} simultaneously for all of them.
Perform one ordinary contraction from \(N\) to \(\tau\) vertices.
Every target survives with probability at least
\[
    \psi\left(\frac{\tau}{N}\right)^p,
\]
whereas a terminal graph contains at most \(\ell^\tau\) distinct
\(\ell\)-cuts.  Moreover, distinct target cuts that both survive induce
distinct terminal cuts.  If \(\mathcal C\) denotes the family of target
cuts, linearity of expectation therefore gives
\[
    |\mathcal C|\,
    \psi\left(\frac{\tau}{N}\right)^p
    \le \ell^\tau.
\]
Consequently,
\begin{equation}
\label{eq:border_target_count}
    |\mathcal C|
    \le
    N^p k^{O(pk)}.
\end{equation}
Repeating the blocked recursion
\[
    \exp\!\left(
        O\!\left(\sqrt{\log(N+2)}\right)
    \right)
    \operatorname{poly}(p,k,\log(N+2))
\]
times and applying \eqref{eq:border_target_count} and a union bound
therefore lists every target with high probability.

It remains to bound the work.  Write
\[
    B:=O\!\left(\psi^{-1}2^p\right),
\]
so that \(b\le Bc^p\).  At depth \(j\), the recursion has at most
\(b^j\) nodes, each with
\[
    O\!\left(\frac{N}{c^j}+\tau\right)
\]
vertices.  A contraction trial on an \(r\)-vertex weighted graph takes
\(O(r^2)\) time after parallel capacities have been aggregated.  The
contraction work at depth \(j\) is consequently bounded by
\[
    O\!\left(
        b^{j+1}
        \left(\frac{N}{c^j}+\tau\right)^2
    \right).
\]
For the first term in the square,
\[
\begin{aligned}
    b^{j+1}\left(\frac{N}{c^j}\right)^2
    &\le
    N^2Bc^p\left(Bc^{p-2}\right)^j.
\end{aligned}
\]
Since \(p\ge2\), this sequence is maximized at the last level.  Using
\(c^L=(N/\tau)c^{O(1)}\), we obtain
\[
    \sum_{j=0}^{L-1}
    b^{j+1}\left(\frac{N}{c^j}\right)^2
    \le
    N^p
    \exp\!\left(
        O\!\left(pk\log(k+2)\sqrt{\log(N+2)}\right)
    \right).
\]
The terms involving \(\tau\), as well as the
\[
    b^L\ell^\tau
\]
terminal enumeration work, satisfy the same bound.  The repetitions
needed above only enlarge the constant hidden in the exponential.
This proves the asserted running time when \(2\le p<3\).

Finally, we maintain the edge bundles.  Initially, associate every input
edge with a singleton bundle.  Represent a merged bundle by a 
binary node whose children point to the two previous
bundles.  Thus merging parallel edges creates a new bundle in \(O(1)\)
time without copying its contents; bundles that become self-loops are
simply discarded from the current graph.  At a terminal cut, the bundles
of its crossing terminal edges form a disjoint representation of exactly
the crossing input edges.  Flattening these bundles therefore takes time
linear in the requested output size.
\end{proof}

\begin{lemma}[Enhanced border enumeration]
\label{lem:enhanced_border_enumeration_representatives}
Consider an invocation of \cref{lem:border_enumeration} on the contracted
graph \(G'\) produced by
\cref{lem:border_preserving_sparsification}.  Precompute
\[
    d_H(v):=\deg_H(v)
    \qquad(v\in V(G)).
\]
The contraction process can maintain the following information.

For every current vertex \(x\), representing a set \(X_x\subseteq V(G)\),
store:

\begin{enumerate}
    \item \(\operatorname{size}(x):=|X_x|\);
    \item the two vertices of \(X_x\) having minimum \(d_H\)-value, or all
          of \(X_x\) if \(|X_x|<2\);
    \item if \(|X_x|\ge2\), a pair of distinct vertices
          \(a_x,b_x\in X_x\) minimizing
          \[
              d_H(a_x)+d_H(b_x)-\mu_H(a_x,b_x),
          \]
          where \(\mu_H(u,v)\) is the number of \(H\)-edges between
          \(u\) and \(v\).
\end{enumerate}

For every current edge \(xy\), store, in addition to its edge
bundle, an original edge \(uv\) in that bundle minimizing
\(d_H(u)+d_H(v)-\mu_H(u,v)\).

All this information can be maintained with constant work whenever two
current vertices or two parallel current edges are merged.  At a terminal
\(\ell\)-cut, for each side one can compute its represented size, its two
minimum-degree representatives, and its optimum stored pair in
\(k^{O(1)}\) time.  The side itself need not be materialized.
\end{lemma}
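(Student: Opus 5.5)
The proof is a mergeable-summary argument: each stored quantity is a small aggregate that can be recomputed from the summaries of the two pieces being merged. Contraction in \cref{lem:border_enumeration} changes the graph by two operations. Contracting an edge \(xy\) merges two current vertices into \(z\) with \(X_z=X_x\cup X_y\). Aggregating parallel edges merges two current edges with the same endpoints, concatenating their bundles. Every piece of information in the statement is therefore a summary of \(X_x\), or of a bundle, that is monoid-like under these merges. The plan is to verify each merge rule, including the contribution of edges that become internal when vertices merge, and then read off the terminal queries.

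\emph{Initialization.} For a current vertex of \(G'\) representing a part \(P\in\mathcal P\), the summaries are computed from scratch in \(O(|P|+|E_H(P)|)\) time, using the precomputed \(d_H\) and the \(H\)-edges inside \(P\). Summed over all parts, this is \(O(|E(H)|+n)\). Each edge of \(G'\) starts with the representative from its bundle, computed by one scan of the bundle.

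\emph{Merge rules.} Size is additive. The two minimum-\(d_H\) vertices of \(X_z\) are the two smallest among the at most four stored candidates of \(x\) and \(y\). For an edge merge, the stored representative is the better of the two stored representatives. For the optimum pair of \(z\), a minimizing pair \(\{a,b\}\subseteq X_z\) has one of three forms:
\begin{enumerate}
\item both vertices lie in \(X_x\), covered by the stored pair of \(x\);
\item both vertices lie in \(X_y\), covered by the stored pair of \(y\);
\item one vertex lies in \(X_x\) and one in \(X_y\).
\end{enumerate}

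The third case is the main obstacle. I would first split it by whether \(\mu_H(a,b)=0\).
\begin{itemize}
\item If \(\mu_H(a,b)=0\), the objective is \(d_H(a)+d_H(b)\), so the best non-adjacent cross pair is no better than the minimum-degree vertex of \(X_x\) paired with that of \(X_y\).
\item If \(\mu_H(a,b)\ge1\), then \(ab\) is an \(H\)-edge between \(X_x\) and \(X_y\), so it lies in the bundle of the current edge \(xy\) being contracted. The stored edge representative of \(xy\) minimizes exactly this objective over that bundle.
\end{itemize}
The new pair is therefore the minimum over four candidates: pair\((x)\), pair\((y)\), \((\min_x,\min_y)\), and rep\((xy)\). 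In the first bullet the candidate \((\min_x,\min_y)\) may happen to be adjacent; then its true value is smaller than \(d_H(\min_x)+d_H(\min_y)\), and it is a legitimate pair whose value is at most that of any non-adjacent cross pair. So I would evaluate this candidate with its true \(\mu_H\), obtained by an \(O(1)\) hash lookup, and the minimum stays correct.

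One subtlety is that a contraction step may merge \(x\) and \(y\) while several parallel \(xy\) edges are present. Because parallel capacities are aggregated, these are a single current edge carrying the merged representative. After \(z\) is formed, edges from \(z\) to a common neighbor become parallel and are merged by the edge rule. Both rules therefore cost \(O(1)\) per merge.

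\emph{Terminal queries.} At a terminal \(\ell\)-cut, each side is a union of \(O(pk)=k^{O(1)}\) terminal vertices. I would fold the same merge rules over that side, using the stored representatives of the terminal edges internal to the side for the cross-pair case. A pair split across two terminal vertices \(x,y\) that are \(H\)-adjacent has its \(H\)-edge in the bundle of the terminal edge \(xy\). This costs \(k^{O(1)}\) time per side, and the side is never materialized.
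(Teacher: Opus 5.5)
Your proposal is correct and follows the paper's proof essentially step for step. Vertex merges use the same four-candidate rule: the two stored pairs, the best original edge in the $xy$-bundle, and the pair of minimum-degree representatives. Parallel-edge merges use the same better-of-two rule, and the terminal query folds over the $k^{O(1)}$ terminal vertices and internal terminal edges of a side in the same way.

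The only cosmetic difference is your hash lookup of $\mu_H$ for the representative pair. The paper avoids it by noting that if those two representatives are adjacent, their edge lies in the $xy$-bundle, so the bundle candidate already dominates.
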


\begin{proof}
The initial values are obtained by one scan of the vertices and edges of
\(H\).  Suppose that current vertices \(x,y\) are contracted to a vertex
\(z\).  Then
\[
    \operatorname{size}(z)
    =
    \operatorname{size}(x)+\operatorname{size}(y),
\]
and the two minimum-degree representatives of \(z\) are the two smallest
members of the two stored representative lists.

The optimum pair in \(X_x\cup X_y\) is the best of four possibilities:

\begin{enumerate}
    \item the stored optimum pair of \(x\);
    \item the stored optimum pair of \(y\);
    \item the best original \(H\)-edge in the current \(xy\)-bundle;
    \item the pair formed by a minimum-degree representative of \(x\) and
          one of \(y\).
\end{enumerate}

Indeed, a pair using two vertices of the same represented set is covered
by the first two cases.  For a cross pair \(u\in X_x\), \(v\in X_y\),
an adjacent pair is covered by the third case.  If \(u,v\) are
nonadjacent, the fourth case has degree sum no larger than
\(d_H(u)+d_H(v)\).  If its two representatives happen to be adjacent,
the corresponding edge candidate is only better after subtracting its
multiplicity.  Thus one of the four candidates is optimal.

When parallel current edges are merged, their best original-edge records
are combined by taking the better of the two.  Hence every update takes
constant time.

A terminal cut has \(O(\alpha k^2)=k^{O(1)}\) current vertices.  For one
of its sides, combine the vertex records and the current edges whose
endpoints lie on that side by the same rule.  This yields its size,
representatives, and optimum pair in \(k^{O(1)}\) time.  No original
vertex set is expanded.
\end{proof}

\begin{lemma}[Few Islands Extraction]
\label{lem:few_islands}
Let \(H\subseteq G\) be the certificate produced by
\cref{lem:border_preserving_sparsification}, and assume that the metadata
of \cref{lem:enhanced_border_enumeration_representatives} has been
initialized.  Let \(\mathcal B'\) be a candidate border in \(G'\), given
by its terminal contraction state, and let
\[
    \mathcal B=(B_1,\ldots,B_\ell)
\]
denote its implicit lift.  Suppose that the list of \(H\)-edges crossing
\(\mathcal B\) has size at most \(s\).

For \(i=1\), one can return the minimum-value one-island extension among
those whose total \(H\)-value is at most \(s\), or report that none
exists, in \(O(s+k^{O(1)})\) time.
For \(i=2\), one can return the minimum-value two-island
extension among those whose total \(H\)-value is at most \(s\), or
report that no such extension exists, in
\[
    O(s^2+k^{O(1)})
\]
time.  The returned extension is represented implicitly.

If the returned extension has \(H\)-value at most \(s\), then it has
the same value in \(G\).  It can be materialized as a partition of
\(V(G)\) in \(O(n)\) time when requested.
\end{lemma}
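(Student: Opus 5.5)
The plan is to compute the cost of an extension purely from quantities already stored. Write \(c_0:=c_H(\mathcal B)\), which is at most \(s\) by hypothesis. If we extract islands \(I\subseteq B_j\) (with \(|I|<|B_j|\)), the new \(H\)-value is \(c_0\) plus the number of \(H\)-edges with an endpoint in \(I\) that are not already crossing \(\mathcal B\). For a single island \(v\in B_j\) this is
\[
    c_0+d_H(v)-\deg_{\mathrm{cross}}(v),
\]
where \(\deg_{\mathrm{cross}}(v)\) counts the crossing \(H\)-edges at \(v\). For two islands \(u,v\) this is
\[
    c_0+d_H(u)+d_H(v)-\mu_H(u,v)-\deg_{\mathrm{cross}}(u)-\deg_{\mathrm{cross}}(v),
\]
where \(u,v\) may lie in the same side or in different sides.

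Step one flattens the crossing bundle list, which has size at most \(s\), in \(O(s)\) time using \cref{lem:border_enumeration}. We then record \(\deg_{\mathrm{cross}}\) for the set \(X\) of at most \(2s\) border endpoints, using a hash table or a timestamped array, and also record which side each endpoint lies on. Step two classifies candidate islands as either endpoints (in \(X\)) or interior (not in \(X\), so \(\deg_{\mathrm{cross}}=0\)).

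For \(i=1\), we handle the two classes separately.
\begin{itemize}
    \item Endpoint candidates are evaluated directly, checking \(\operatorname{size}(B_j)\ge2\), in \(O(s)\) time.
    \item For interior candidates, the best vertex of side \(B_j\) is its minimum-\(d_H\) vertex outside \(X\). The stored two minimum-degree representatives of each side from \cref{lem:enhanced_border_enumeration_representatives} are not quite enough, since they may lie in \(X\).
\end{itemize}
The key observation is that if a representative \(w\) lies in \(X\), then \(w\) is already evaluated as an endpoint with cost at most \(d_H(w)\), which is at most the degree of any interior vertex. So taking the minimum over all endpoint candidates and over the stored representatives (evaluated with their true \(\deg_{\mathrm{cross}}\)) dominates every interior candidate. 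Computing the side summaries costs \(k^{O(1)}\).

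For \(i=2\), we split by the classes of the two islands.
\begin{itemize}
    \item Both endpoints: enumerate all \(O(s^2)\) pairs of \(X\). The value \(\mu_H(u,v)\) is obtained from an adjacency hash of \(H\) restricted to pairs in \(X\). These edges are not crossing edges, so we look them up in \(H\) via hashing of \(H\)'s adjacency, in \(O(1)\) expected time each.
    \item One endpoint \(u\) and one interior vertex \(v\): by the same domination argument, it suffices to try \(v\) among a constant number of minimum-degree representatives per side. With ties and adjacency this may need the top few, and I would store the top three if necessary. This gives \(O(s\cdot k)\) pairs.
    \item Both interior, in different sides: pair the representatives of the two sides, at \(k^{O(1)}\) cost. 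Here \(\mu\) may be positive only via the current-edge record, which is handled by the stored best edge.
    \item Both interior, in the same side: use the stored optimum pair \((a_x,b_x)\) of that side. If it touches \(X\), it is dominated by the endpoint cases, since subtracting \(\deg_{\mathrm{cross}}\) only helps.
\end{itemize}
We discard extensions that leave a side empty (checked via \(\operatorname{size}\)) or that have value above \(s\).

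The main obstacle is this domination argument in the mixed cases: showing that restricting interior vertices to the stored representatives and optimum pair loses nothing, given that \(\mu_H\) terms can favor adjacent pairs. I would argue, exactly as in \cref{lem:enhanced_border_enumeration_representatives}, that an adjacent cross pair is either a crossing edge, and so lies in \(X\times X\) and is enumerated, or lies within one side and is covered by the optimum pair record.

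Finally, any extension of \(H\)-value at most \(s\) has the same value in \(G\) by the NI property in \cref{lem:border_preserving_sparsification}. Materialization expands the terminal vertex sets and marks the islands in \(O(n)\) time.
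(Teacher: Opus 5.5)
\textbf{There is a genuine gap, in the case $i=2$.} Your $i=1$ argument is the paper's: one minimum-degree representative per side plus every touched vertex, under adjusted costs. Enumerating all pairs of $X$ with an edge hash is also a legitimate way to treat pairs of two touched vertices.

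The gap is a touched island $u\in X$ and an untouched island $v\notin X$ in the \emph{same} side with $uv\in E(H)$. Write $c(x):=d_H(x)-\deg_{\mathrm{cross}}(x)$. Let $u$ be the only touched vertex of $B_j$, with $D>d$ crossing edges and the single internal edge $uv$. Let every other vertex of $B_j$ be untouched of degree $d\ge 2$. Then $\{u,v\}$ has additional cost $c(u)+d-1=d$, and each of your devices misses it:
\begin{itemize}
\item Your mixed case pairs $u$ only with stored minimum-degree representatives. These are all tied at degree $d$, so any constant number of them may all differ from $v$, giving cost $d+1$.
\item The stored optimum pair is optimal for the \emph{unadjusted} quantity $d_H(a)+d_H(b)-\mu_H(a,b)$. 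Under it $(u,v)$ has value $D+d>2d$, so the record is a pair of untouched vertices, of cost at least $2d-1>d$.
\item $X\times X$ never contains $v$.
\end{itemize}
So your claim that an internal adjacent pair is covered by the optimum pair record fails as soon as exactly one endpoint is touched. This is exactly what an optimum cut looks like when its two islands are adjacent and one of them also has a neighbor in another border component. Your routine can then return a non-optimal extension of the true border.

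\textbf{The missing idea} is to use the threshold $s$ in the statement to pay for an adjacency scan, rather than only to discard candidates at the end. Let $W_0$ be the border value and $R:=s-W_0$. A same-side pair containing a touched $u$ can qualify only if $c(u)\le R+1$, because $c(v)\ge0$ and $\mu_H(u,v)\le1$ in the simple graph $H$. Let $T^\star$ be the set of such touched vertices. Then
$\sum_{u\in T^\star}d_H(u)=\sum_{u\in T^\star}\bigl(\deg_{\mathrm{cross}}(u)+c(u)\bigr)\le 2W_0+(R+1)|X|=O(s^2)$.
So the paper scans the full $H$-adjacency list of every $u\in T^\star$ and evaluates all its internal adjacent pairs explicitly. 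An unpruned scan over all of $X$ would not fit in $O(s^2)$.

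Nonadjacent partners $v$ are then dominated by the two cheapest candidates of the side, taken among touched vertices and stored representatives under adjusted costs. Some retained $w\ne u$ has $c(w)\le c(v)$. Either $w$ is nonadjacent to $u$ and is evaluated directly, or it is adjacent and the scan has already produced $c(u)+c(w)-1<c(u)+c(v)$.

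\textbf{A smaller slip:} the $-\mu_H(u,v)$ term belongs only to same-side pairs. For adjacent $u,v$ in different sides, $uv$ is already a crossing edge, removed through both $\deg_{\mathrm{cross}}(u)$ and $\deg_{\mathrm{cross}}(v)$. Your formula therefore undercounts by one, and not all edges among pairs of $X$ are non-crossing. Likewise, two untouched vertices in different sides are never adjacent, so no edge record is needed there.
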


\begin{proof}
Write
\[
    W_0:=|\delta_H(\mathcal B)|.
\]
Expand the crossing bundles and mark every crossing \(H\)-edge with the
current timestamp.  Its endpoint occurrences identify the implicit
border component containing each touched vertex.  For a touched vertex
\(v\), let
\[
    b(v):=|E_H(v,V\setminus B(v))|,
\]
where \(B(v)\) is its border component, and let
\[
    T:=\{v:b(v)>0\}.
\]
Then
\[
    |T|=O(s),
    \qquad
    \sum_{v\in T}b(v)=2W_0=O(s).
\]
Define the additional singleton cost
\[
    c(v):=
    \begin{cases}
        d_H(v)-b(v),&v\in T,\\
        d_H(v),&v\notin T.
    \end{cases}
\]

\paragraph{One island.}
For every component \(X\) with \(|X|\ge2\), evaluate its stored
minimum-degree representative, using its adjusted cost if it is touched.
Also evaluate every touched vertex in \(X\).  An untouched optimum has
cost \(d_H(v)\) and is dominated by the minimum-degree representative;
a touched optimum is explicitly evaluated.  Hence the best feasible
candidate is optimal. Return the best candidate only if its total value \(W_0+c(v)\) is at
most \(s\); otherwise report that no qualifying extension exists.

\paragraph{Two islands.}
If \(W_0>s\), no qualifying extension exists.  Otherwise put
\[
    R:=s-W_0.
\]
We seek the minimum additional cost at most \(R\).

For pairs in distinct components, compute the cheapest feasible
singleton of every component \(X\) with \(|X|\ge2\), using its stored
representatives and its touched vertices.  The best distinct-component
pair consists of the two cheapest values belonging to different
components.

Now consider a component \(X\) with \(|X|\ge3\).  For distinct
\(u,v\in X\), their additional cost is
\begin{equation}
\label{eq:two_island_additional_cost}
    c(u)+c(v)-\mu_H(u,v).
\end{equation}
Evaluate the stored optimal pair of \(X\), replacing \(d_H\) by the
adjusted value \(c\) at touched endpoints.  This dominates every pair
with no touched endpoint: before adjustment it is globally optimal,
and adjustment can only decrease its value.

It remains to cover pairs with a touched endpoint.  Such a pair can
have additional cost at most \(R\) only if its touched endpoint \(u\)
satisfies
\[
    c(u)\le R+1.
\]
Let
\[
    T^\star:=\{u\in T:c(u)\le R+1\}.
\]
Then
\[
\begin{aligned}
    \sum_{u\in T^\star}d_H(u)
    =\sum_{u\in T^\star}\bigl(b(u)+c(u)\bigr)
    \le\sum_{u\in T}b(u)+(s+1)|T|
     =O(s^2).
\end{aligned}
\]

For every component \(X\), retain the two cheapest distinct vertices
among its touched vertices and its two stored minimum-degree
representatives, using adjusted costs.  For every
\(u\in T^\star\cap X\), scan the \(H\)-adjacency list of \(u\).  An
incident edge is internal to \(X\) exactly when it was not timestamped
as a crossing edge.  Evaluate every internal adjacent pair using
\eqref{eq:two_island_additional_cost}.  Also evaluate every retained
candidate different from \(u\) that was not encountered as an internal
neighbor.

These candidates dominate every omitted nonadjacent partner.  Let
\(v\ne u\) be such an omitted nonadjacent partner.  There is a retained
candidate \(w\ne u\) with \(c(w)\le c(v)\).  If \(uw\notin E(H)\), the
algorithm evaluates
\[
    c(u)+c(w)\le c(u)+c(v).
\]
If \(uw\in E(H)\), the adjacency scan evaluates
\[
    c(u)+c(w)-1<c(u)+c(v).
\]
Thus the fact that both retained representatives might be adjacent to
\(u\) causes no gap: one of those adjacent pairs is then strictly
better than the omitted nonadjacent pair.

The total adjacency-scan time is
\[
    O\left(\sum_{u\in T^\star}d_H(u)\right)=O(s^2).
\]
The algorithm therefore finds the minimum qualifying two-island
extension or correctly reports that none exists.
\end{proof}

\subsection{Main Algorithm}

We will use the following fact about \(\widehat\Phi(r)\).  For every
\(r\ge4\),
\[
    \widehat\Phi(r)
    \le
    \widehat\Phi(4)+\frac{6(r-4)}7
    <
    3.250386+\frac{6(r-4)}7 .
    \tag{1}\label{eq:phi_tail_bound_main}
\]
For \(r=4\), the first inequality is equality.  For \(r=5\), it follows
from
\[
    \widehat\Phi(5)
    <4.085989
    <3.250385+\frac67
    <\widehat\Phi(4)+\frac67.
\]
For \(r=6\), it follows from
\[
    2\widehat\Phi(3)
    <4.74268
    <3.250385+\frac{12}{7}
    <\widehat\Phi(4)+\frac{12}{7}.
\]
The remaining cases follow by induction, because
\[
    \widehat\Phi(r+3)
    =
    \widehat\Phi(r)+\widehat\Phi(3)
    <
    \widehat\Phi(r)+\frac{18}{7}.
\]

Define
\[
    \theta_k:=
    \begin{cases}
        \displaystyle \frac1{12}, & k=3,\\[1.2ex]
        \displaystyle \frac2{19}, & k=4,\\[1.2ex]
        \displaystyle \frac{1+2.37134}{26}, & k=5,\\[1.2ex]
        \displaystyle \frac{k+3.250386-5}{7k-10}, & k\ge6.
    \end{cases}
    \tag{2}\label{eq:theta_k_definition}
\]
Let
\[
    E_k:=1+(6k-6)\theta_k.
    \tag{3}\label{eq:E_k_definition}
\]

\begin{algorithm}[H]
\caption{Minimum \(k\)-Cut}
\label{alg:k_cut}
\begin{algorithmic}[1]
\Require Simple unweighted graph \(G=(V,E)\), integer \(k\ge3\)
\Ensure A minimum \(k\)-cut of \(G\)

\If{\(|V|=k\)}
    \State \Return the all-singletons \(k\)-cut.
\EndIf

\State Compute a \(2\)-approximation \(s\) of \(\lambda_k(G)\) \cite{SV95,Kar00}.
\If{\(s=0\)}
    \State \Return a zero-value \(k\)-cut obtained by grouping connected
           components
\EndIf
\State Set \(\tau:=n^{\theta_k}\).

\If{\(s\le\tau\)}
    \State \Return the output of \cref{thm:fpt_mincut} on \(G\) with cut parameter \(s\).
\EndIf

\State Apply \cref{lem:border_preserving_sparsification} to obtain \(H,\mathcal P,G'\), where \(G'\) has $N=\widetilde O(n/s)$ vertices and \(O(sn)\) edges.
\State Compute \(d_H(v)\) for all \(v\), construct the adjacency lists
       of \(H\), and initialize the vertex and edge-bundle metadata of
       \cref{lem:enhanced_border_enumeration_representatives}.
\State Initialize \(C_{\mathrm{best}}\gets\bot\) and \(w_{\mathrm{best}}\gets\infty\).

\If{\(|V(G')|\ge k\)}
    \State Run \cref{thm:weighted_k_cut_sparse} on \(G'\), amplifying the success probability, and obtain a
       minimum \(k\)-cut \(\mathcal C'\) of \(G'\)
    \State Lift \(\mathcal C'\) to a \(k\)-partition \(\mathcal C\) of \(V(G)\) and compute its \(H\)-value \(w\).
    \If{\(w\le s\)}
        \State Set \(C_{\mathrm{best}}\gets\mathcal C\) and \(w_{\mathrm{best}}\gets w\).
    \EndIf
\EndIf

\For{\(i=1\) \textbf{to} \(k-1\)}
    \State Set
    \[
        \ell:=k-i,
        \qquad
        \beta_i:=1-\left(1-\frac{2}{\log n}\right)\frac{i}{k}.
    \]
    \If{\(\ell=1\)}
        \State Let \(\mathcal L_i\) consist of the trivial one-part border of \(G'\).
    \Else
        \State Run the appropriate form of border enumeration on \(G'\), amplified
            to failure probability \(n^{-\Omega(k)}\), where \(n\) is the
            order of the original graph:
        \If{\(i\le2\)}
            \State Use
                \cref{lem:enhanced_border_enumeration_representatives}
        \Else
            \State Use \cref{lem:border_enumeration}
        \EndIf
        \State Let \(\mathcal L_i\) be the resulting family of candidate borders
    \EndIf

    \For{each candidate border \(\mathcal B'\in\mathcal L_i\)}
        \If{\(i\le2\)}
            \State Keep the lift implicit in the terminal contraction state
                and its side metadata
            \State Flatten the crossing-edge bundles, stopping after
                \(s+1\) edges
            \If{more than \(s\) crossing edges are found}
                \State \textbf{continue}
            \EndIf
            \State Apply \cref{lem:few_islands}
            \If{no qualifying extension is returned}
                \State \textbf{continue}
            \EndIf
            \State Let \(C\) be the implicit output and \(w\) its \(H\)-value
            \If{\(w>s\)}
                \State \textbf{continue}
            \EndIf
        \Else
            \State Materialize the lift
                \(\mathcal B=(B_1,\ldots,B_\ell)\)
            \State Apply \cref{lem:border_coupled_islands} to obtain \(C\),
                and set \(w:=|\delta_G(C)|\)
        \EndIf
        \If{\(w<w_{\mathrm{best}}\)}
            \State Store the representation of \(C\) and set
                \(w_{\mathrm{best}}\gets w\)
        \EndIf
    \EndFor
\EndFor

\State Materialize \(C_{\mathrm{best}}\), if necessary, and return it.
\end{algorithmic}
\end{algorithm}

\begin{theorem}[Minimum \(k\)-Cut on Simple Graphs]
\label{thm:simple_mincut}
For \(3\le k\le n\), \cref{alg:k_cut} returns a minimum \(k\)-cut of a simple unweighted \(n\)-vertex graph with high probability and runs in time
\[
    k^{O(k^2)} n^{E_k} (\log n)^{O(k^2)},
\]
where
\[
    E_3=2,\qquad E_4=\frac{55}{19},
\]
\[
    E_5
    =
    1+\frac{12}{13}(1+2.37134)
    <4.112007,
\]
and, for every \(k\ge6\),
\[
    E_k
    =
    1+(6k-6)\frac{k-1.749614}{7k-10}.
\]
\end{theorem}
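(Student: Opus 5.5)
The plan has two parts: correctness, by following a fixed optimum cut through the branches of \cref{alg:k_cut}, and the running time, by bounding each branch in terms of $n$ and $s$ and balancing the FPT branch against the sparsified branches at $\tau=n^{\theta_k}$.

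\textbf{Correctness.} The trivial cases ($|V|=k$, $s=0$) are immediate. If $s\le\tau$, we have $\lambda_k\le s$, and \cref{thm:fpt_mincut} returns an optimum w.h.p. Otherwise, fix a minimum $k$-cut $C$ and apply \cref{lem:border_preserving_sparsification}(6). This gives a border $B_{I,\eta}$ with $i=|I|$ islands, respected by $\mathcal P$, whose image in $G'$ is a $(k-i)$-cut of weight at most $\beta_i\lambda_k$.

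\begin{itemize}
\item If $i=0$, the contracted cut is a $k$-cut of $G'$ with weight at most $\lambda_k$. Every $k$-cut of $G'$ lifts to a $k$-cut of $G$ whose $H$-value is at least its $G$-value, so the weighted solver returns an optimum.
\item If $i\ge1$ and $\ell\ge2$, the border has weight at most $\beta_i\lambda_k\le \beta_i\lambda_k(G')$. This uses $\lambda_k(G')\ge\lambda_k(H)\ge\lambda_k$, which holds because cuts of value at most $s$ are preserved and contraction only restricts partitions. Hence \cref{lem:border_enumeration} with $\alpha=\beta_i$ lists the border.
\item The island step then recovers $C$ or an equally good cut: \cref{lem:few_islands} for $i\le2$, since the $H$-value of $C$ is at most $s$; \cref{lem:border_coupled_islands} for $i\ge3$. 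Every candidate the algorithm keeps is a genuine $k$-cut, so the minimum is optimal.
\end{itemize}

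\textbf{Running time, branch by branch.} The FPT branch costs $k^{O(k)}n^{1+(6k-6)\theta_k}\,\mathrm{polylog}=n^{E_k}$. With $N=\widetilde O(n/s)$ and $|E(G')|=O(sn)$, the other branches cost as follows.
\begin{itemize}
\item The $i=0$ branch uses \cref{thm:weighted_k_cut_sparse} and costs $N^{k-2}\cdot sn=\widetilde O(n^{k-1}s^{3-k})$.
\item The $i$-island branch with $\ell\ge2$ has $p=\beta_ik=k-i+2i/\log n$, so $N^p=\widetilde O((n/s)^{k-i})$ up to polylog factors.
\begin{itemize}
\item For $i\le2$, each listed candidate also costs $O(s^2)$.
\item For $i\ge3$, it is multiplied by $n^{\widehat\Phi(i)}$.
\end{itemize}
\item The $\ell=1$ case ($i=k-1$) costs $n^{\widehat\Phi(k-1)}$.
\end{itemize}
Every term except the $i\le2$ terms is decreasing in $s$, so each is worst at $s=\tau$. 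For the $i\le2$ terms, the number of listed candidates is $(n/s)^{k-i}$ up to polylog factors. The extra cost therefore behaves like $s^2(n/s)^{k-1}$ or $s^2(n/s)^{k-2}$, which is again decreasing in $s$ once $k\ge4$. For $k=3$ the $i=1$ term is $n^2$, which is fine.

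\textbf{Exponent verification (the main obstacle).} It remains to check that, at $s=n^{\theta_k}$, every branch exponent is at most $E_k$. This is a finite case analysis using \eqref{eq:phi_tail_bound_main} and the choices in \eqref{eq:theta_k_definition}. The binding constraint is the four-island branch. Its exponent is
\[
(k-4)(1-\theta_k)+\widehat\Phi(4),
\]
and setting this equal to $1+(6k-6)\theta_k$ gives
\[
\theta_k=\frac{k-5+\widehat\Phi(4)}{7k-10}.
\]
For every $i\ge4$, use $\widehat\Phi(i)\le\widehat\Phi(4)+\tfrac67(i-4)$. Since $1-\theta_k\ge 6/7$ for $k\ge6$, the exponent $(k-i)(1-\theta_k)+\widehat\Phi(i)$ is nonincreasing in $i$, so the $i=4$ branch dominates all larger $i$.

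The branches still to check are:
\begin{itemize}
\item $i=3$, with exponent $(k-3)(1-\theta_k)+\widehat\Phi(3)$; this requires $\widehat\Phi(3)\le\widehat\Phi(4)-(1-\theta_k)$, which holds numerically ($2.3714\le 3.2504-(1-\theta_k)$).
\item $i\le2$ and $i=0$, whose exponents are roughly $(k-1)(1-\theta_k)+2\theta_k$ and $k-1-(k-3)\theta_k$. These are checked against $E_k$ by linear inequalities in $\theta_k$.
\end{itemize}
The small cases $k=3,4,5$ are done by direct substitution: for $k=5$ the $i=3$ branch binds, giving $E_5$, and for $k=4$ the $i=0$, $i=2$ and $i=3$ terms are balanced at $\theta=2/19$. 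In all cases the polylogarithmic and $k^{O(k^2)}$ factors are absorbed into the stated bound.
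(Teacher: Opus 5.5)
Your correctness argument and branch-by-branch exponents track the paper's proof. The running-time analysis, however, has a genuine gap. You treat every overhead as polylogarithmic or $k^{O(k^2)}$, but \cref{lem:border_enumeration} with $2\le p<3$ costs $N^p\exp\bigl(O(pk\log(k+2)\sqrt{\log(N+2)})\bigr)$. That factor is $n^{o(1)}$ but not $(\log n)^{O(k^2)}$. Since $p_i=k-i+2i/\log n$ and $k-i$ is an integer, it occurs exactly in the branch $\ell=k-i=2$. It can only be absorbed if that branch's polynomial exponent lies strictly below $E_k$ by some $\Delta>0$, using $F\le n^{\Delta/2}\exp\bigl(O(k^2\log^2(k+2)/\Delta)\bigr)$.

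Your accounting leaves no such gap exactly where it is needed:
\begin{itemize}
\item At $k=3$, charging $O(s^2)$ per one-island candidate makes the $\ell=2$ branch exactly $n^2$, so you only get $n^{2+o(1)}$. You need the true $O(s)$ cost of \cref{lem:few_islands}, which gives exponent $2-\theta_3$.
\item At $k=5$ and $k=6$, the branches you call binding ($i=3$ and $i=4$) are precisely the $\ell=2$ branches.
\item Your balancing $\theta_k=(k-5+\widehat\Phi(4))/(7k-10)$ discards the slack the algorithm builds in. Its definition of $\theta_k$, and the stated $E_5,E_k$, use the constants $2.37134>\widehat\Phi(3)$ and $a=3.250386>\widehat\Phi(4)$. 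These fixed positive differences are what absorb the factor for $k=5,6$.
\end{itemize}

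For $k\ge7$ the endpoint branch is $i=k-2$. Your slope bound $\widehat\Phi(i)\le\widehat\Phi(4)+\frac67(i-4)$ yields there only a gap bounded by an absolute constant, about $(25-7a)/49\approx0.046$. Young's inequality then leaves a multiplier $\exp(O(k^2\log^2 k))=k^{O(k^2\log k)}$ rather than $k^{O(k^2)}$. The paper instead uses $\widehat\Phi(r)<\frac45 r+\frac1{10}$ to obtain $\Delta_k>\frac{2k}{35}-\frac23\ge k/100$ for $k\ge15$, and treats $7\le k\le14$ as finitely many fixed positive gaps.

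Two smaller slips:
\begin{itemize}
\item In the $i=0$ case, $H\subseteq G$ makes the $H$-value at most the $G$-value, not at least. What you need is the NI equality for partitions of $H$-value at most $s$. This gives $\lambda_k(G')=\lambda_k(G)$ and shows the returned lift has true value $\lambda_k(G)$.
\item At $k=4$, only the $i=0$ branch balances the FPT branch at $\theta_4=2/19$. The $i=2$ and $i=3$ branches have exponents $2$ and $\widehat\Phi(3)$.
\end{itemize}
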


\begin{proof}
The \(2\)-approximation gives
\[
    \lambda_k(G)\le s\le2\lambda_k(G),
\]
so \(s=\Theta(\lambda_k(G))\), and \(s\) is a valid cut-size upper bound for the FPT branch.

All randomized calls are amplified relative to the original order \(n\).
The border-preserving sparsification, the weighted call on \(G'\), and
each of the \(O(k)\) border-enumeration calls therefore fail with
probability at most \(n^{-\Omega(k)}\).  After increasing the amplification
constants, a union bound shows that all required events hold simultaneously
with high probability.

\paragraph{Correctness.}
If \(s\le n^{\theta_k}\), the algorithm invokes the exact FPT algorithm of \cref{thm:fpt_mincut}. Since \(\lambda_k(G)\le s\), this returns a minimum \(k\)-cut.

Assume \(s>n^{\theta_k}\), and let \(C^\star\) be a minimum \(k\)-cut of \(G\). By \cref{lem:border_preserving_sparsification}, there is a border \(B_{I,\eta}\) of \(C^\star\), with
\[
    i:=|I|,
\]
that is respected by \(\mathcal P\). Let \(\ell:=k-i\). The contraction of this border is an \(\ell\)-cut \(\mathcal B'^\star\) of \(G'\), and its weight is at most
\[
    \beta_i\lambda_k(G) = \left( 1-\left(1-\frac{2}{\log n}\right)\frac{i}{k} \right)\lambda_k(G).
\]

If \(i=0\), then \(C^\star\) itself is respected by \(\mathcal P\), so its contraction is a \(k\)-cut of \(G'\) of value \(\lambda_k(G)\). Hence \(|V(G')|\ge k\). Every \(k\)-cut of \(G'\) lifts to a \(k\)-partition of \(V(G)\) with the same value in \(H\). If \(G'\) had a \(k\)-cut of value below \(\lambda_k(G)\), the lift would have \(H\)-value below \(\lambda_k(G)\le s\), and the NI certificate would give the same value in \(G\), contradiction. Therefore
\[
    \lambda_k(G')=\lambda_k(G).
\]
The weighted algorithm on \(G'\) returns a minimum \(k\)-cut of \(G'\), whose lift has \(H\)-value \(\lambda_k(G)\le s\). The NI certificate then certifies that this lift has the same value in \(G\).

Now suppose \(i\ge1\). If \(\ell=1\), the algorithm explicitly includes the trivial one-part border. If \(\ell\ge2\), then \cref{lem:border_enumeration}, invoked with \(\alpha=\beta_i\), lists \(\mathcal B'^\star\). Indeed, if \(G'\) has fewer than \(k\) vertices, all \(\ell\)-cuts are listed. If \(G'\) has at least \(k\) vertices, the lifting argument above gives
\[
    \lambda_k(G')\ge\lambda_k(G),
\]
and hence
\[
    c_{G'}(\mathcal B'^\star) \le \beta_i\lambda_k(G) \le \beta_i\lambda_k(G').
\]

After \(\mathcal B'^\star\) is listed, its lift
\[
    \mathcal B^\star=(B_1,\ldots,B_\ell)
\]
is a border of \(C^\star\), and \(C^\star\) is obtained from it by extracting exactly \(i\) singleton islands. If \(i\in\{1,2\}\), \cref{lem:few_islands} computes the optimum extension of \(\mathcal B^\star\) with respect to \(H\). The extension corresponding to \(C^\star\) has \(H\)-value \(\lambda_k(G)\le s\), so the returned extension also has \(H\)-value at most \(s\), and the NI certificate gives the same value in \(G\). If \(i\ge3\), \cref{lem:border_coupled_islands} computes the optimum extension with respect to \(G\), and hence returns a cut of value at most \(\lambda_k(G)\).

Every stored candidate is a valid \(k\)-cut of \(G\), and the value used for comparison is its true \(G\)-value: for \(i\le2\), this follows from the check \(w\le s\) and the NI certificate; for \(i\ge3\), the value is evaluated directly in \(G\). Therefore the best stored candidate has value exactly \(\lambda_k(G)\).

\paragraph{Preprocessing and small-\(s\) branch.}
The initial approximation and the sparsification step cost
\[
    \widetilde O(k^{O(1)}m)
    +
    k^{O(k)}n\log^{18k+O(1)}n(\log\log n)^{6k-6}.
\]
Since \(G\) is simple, \(m=O(n^2)\), and \(E_k\ge2\) for every
\(k\ge3\).  Hence this is at most
\[
    k^{O(k^2)}n^{E_k}(\log n)^{O(k^2)}.
\]

If \(s\le n^{\theta_k}\), \cref{thm:fpt_mincut} runs in
\[
    O(m)
    +
    k^{O(k)}ns^{6k-6}
    \log^{12k+O(1)}n(\log\log n)^{6k-6}.
\]
Using \(s\le n^{\theta_k}\), this is at most
\[
\begin{aligned}
    k^{O(k)}n^{1+(6k-6)\theta_k}(\log n)^{O(k^2)}
    =
    k^{O(k)}n^{E_k}(\log n)^{O(k^2)}
    \le
    k^{O(k^2)}n^{E_k}(\log n)^{O(k^2)}.
\end{aligned}
\]

\paragraph{Large-\(s\) branch: basic bounds.}
Assume \(s>n^{\theta_k}\), and let
\[
    N=\widetilde O(n/s)
\]
be the number of vertices of \(G'\).  Since \(G'\) is obtained by
contraction, also \(N\le n\).

We include all amplification costs, all polylogarithmic losses in the
bound on \(N\), and all parameter-dependent enumeration factors in the
common factor
\[
    k^{O(k^2)}(\log n)^{O(k^2)}.
\]
Indeed, the border-enumeration parameter \(p_i\) introduced below is
\(O(k)\), so its \(k^{O(p_i k)}\) dependence is covered.  Amplifying
each of the \(O(k)\) calls costs only an additional
\(k^{O(1)}\log n\) factor.  The amplified additive
\(O(|E(G')|)\) terms contribute at most
\[
    k^{O(1)}sn\log n
    \le k^{O(1)}n^2\log n,
\]
because
\[
    s\le2\lambda_k(G)\le2(k-1)n.
\]
This is also covered by the claimed bound.

If \(N<k\), each border-enumeration call outputs at most \(k^{O(k)}\)
terminal partitions.  For \(i\le2\), their total extension work is at
most \(k^{O(k)}n^2\).  For \(i\ge3\), it is at most
\[
    k^{O(k)}n^{\widehat\Phi(i)}
    \le
    k^{O(k)}n^{\widehat\Phi(k-1)}.
\]
These quantities are dominated by the bounds below.  We may
therefore assume for the remainder of the branch analysis that \(N\ge k\).

For \(i=1,2\), candidate borders remain implicit, so no \(O(n)\)-time
lift is performed per border.  For \(i\ge3\), materializing a lift costs
\(O(n)\), which is absorbed by the
\(n^{\widehat\Phi(i)}\) extension time.  The winning cut is materialized
only once.

The no-island branch uses \cref{thm:weighted_k_cut_sparse}.  Since
\(G'\) has \(N=\widetilde O(n/s)\) vertices and \(O(sn)\) edges, its
running time is at most
\[
\begin{aligned}
    O(sn)
    &+
    k^{O(k^2)}
    N^{k-2}(sn+N)\log^3(2N)\\
    &\le
    k^{O(k^2)}
    \widetilde O\left(\frac{n^{k-1}}{s^{k-3}}\right).
\end{aligned}
\]
Thus its exponent is at most
\[
    k-1-(k-3)\theta_k.
    \tag{4}\label{eq:no_island_exp}
\]

For every enumerated branch \(1\le i\le k-2\), put
\[
    p_i:=\beta_i k
        =k-i+\frac{2i}{\log n}.
\]
The fractional power satisfies
\[
    N^{2i/\log n}
    \le
    n^{2i/\log n}
    =
    \exp(O(i))
    =
    \exp(O(k)).
\]
It is therefore absorbed into \(k^{O(k^2)}\).  Apart from the endpoint
factor treated below which arises from the case distinction in \cref{lem:border_enumeration}, border enumeration consequently contributes
\[
    k^{O(k^2)}
    \left(\frac ns\right)^{k-i}
    (\log n)^{O(k^2)}
\]
time and candidates.

For \(i=1\), the one-island routine costs \(O(s+k^2)\) per retained
border.  Since \(s\ge1\), the resulting exponent is at most
\[
    k-1-(k-2)\theta_k.
    \tag{5}\label{eq:one_island_exp}
\]
This is smaller than \eqref{eq:no_island_exp} by \(\theta_k\).

For \(i=2\), if \(k=3\), then \(\ell=1\) and there is no border
enumeration.  The two-island routine costs
\[
    O(s^2+k^{O(1)})\le k^{O(1)}n^2.
\]
Suppose \(k\ge4\).  The two-island extension costs
\(O(s^2+k^{O(1)})\) per retained border, so its contribution is
\[
    \left(\frac ns\right)^{k-2}s^2
    =
    \frac{n^{k-2}}{s^{k-4}}.
\]
Its exponent is at most
\[
    k-2-(k-4)\theta_k.
    \tag{6}\label{eq:two_island_exp_new}
\]
This is smaller than \eqref{eq:no_island_exp} by
\(1-\theta_k>0\).

For \(3\le i\le k-2\), border enumeration followed by
\cref{lem:border_coupled_islands}, apart from a possible endpoint
factor, has exponent at most
\[
    k-i+\widehat\Phi(i)-(k-i)\theta_k.
    \tag{7}\label{eq:many_island_exp}
\]

Finally, when \(i=k-1\), there is no border enumeration, and the exponent
is
\[
    \widehat\Phi(k-1).
    \tag{8}\label{eq:last_island_exp}
\]

\paragraph{The endpoint factor.}
If \(p_i<3\), \cref{lem:border_enumeration} contributes the additional
factor
\[
    F_{k,N}
    :=
    \exp\!\left(
        O\!\left(
            k\log(k+2)\sqrt{\log(N+2)}
        \right)
    \right),
    \tag{9}\label{eq:endpoint_factor}
\]
because \(p_i<3\). Since \(N\le n\), Young's inequality gives, for every \(\delta>0\),
\[
\begin{aligned}
    F_{k,N}
    &\le
    n^{\delta/2}
    \exp\!\left(
        O\!\left(
            \frac{k^2\log^2(k+2)}{\delta}
        \right)
    \right).
\end{aligned}
\tag{10}\label{eq:endpoint_young}
\]

Because \(k-i\) is an integer at least \(2\), the inequality \(p_i<3\)
forces $k-i=2$.
Among the first two island branches, the only possible endpoint cases
are therefore \((k,i)=(3,1)\) and \((k,i)=(4,2)\).  Their polynomial
exponents and respective slacks below \(E_k\) are
\[
    E_3-(2-\theta_3)=\theta_3=\frac1{12}
\]
and
\[
    E_4-2=\frac{17}{19}.
\]
Applying \eqref{eq:endpoint_young} with either fixed slack absorbs the
endpoint factor.

For \(k=5\), the polynomial part of the endpoint branch has exponent
\[
    2+\widehat\Phi(3)-2\theta_5.
\]
Since \(26\theta_5=1+2.37134\),
\[
\begin{aligned}
    E_5-
    \bigl(2+\widehat\Phi(3)-2\theta_5\bigr)
    &=
    2.37134-\widehat\Phi(3)>0.
\end{aligned}
\]
Likewise, for \(k=6\),
\[
\begin{aligned}
    E_6-
    \bigl(2+\widehat\Phi(4)-2\theta_6\bigr)
    &=
    3.250386-\widehat\Phi(4)>0.
\end{aligned}
\]
These are fixed positive gaps.  Applying
\eqref{eq:endpoint_young} with the corresponding gap absorbs the
endpoint factor for \(k=5,6\).  The constants needed for these finitely
many fixed values of \(k\) are contained in \(k^{O(k^2)}\).

It remains to treat \(k\ge7\).  The chosen exponents satisfy
\[
    \widehat\Phi(r)
    <
    \frac45r+\frac1{10}
    \qquad(r\ge3).
    \tag{11}\label{eq:phi_endpoint_bound}
\]
For \(r=3,4,5\), this follows from
\[
    \widehat\Phi(3)<2.37134<\frac52,\qquad
    \widehat\Phi(4)<3.250386<\frac{33}{10},
\]
and
\[
    \widehat\Phi(5)<4.085989<\frac{41}{10}.
\]
The remaining cases follow by induction, since increasing \(r\) by
\(3\) increases \(\widehat\Phi(r)\) by
\[
    \widehat\Phi(3)<\frac{12}{5}.
\]

Let
\[
    B_k:=2+\widehat\Phi(k-2)-2\theta_k,
    \qquad
    \Delta_k:=E_k-B_k,
\]
and put
\[
    a:=3.250386.
\]
Using \eqref{eq:phi_endpoint_bound} and
\(\theta_k=(k+a-5)/(7k-10)\), we obtain
\[
\begin{aligned}
    \Delta_k
    >
    \frac12-\frac45k+(6k-4)\theta_k
    =
    \frac{
        4k^2+(60a-225)k+150-40a
    }{
        10(7k-10)
    }.
    \label{eq:endpoint_slack}
\end{aligned}
\]
The numerator on the right is positive at \(k=7\) and strictly
increasing thereafter.  Hence \(\Delta_k>0\) for every \(k\ge7\).
Moreover, for \(k\ge15\), direct comparison gives
\[
    \Delta_k
    >
    \frac{2k}{35}-\frac23
    \ge
    \frac{k}{100}.
\]

Apply \eqref{eq:endpoint_young} with \(\delta=\Delta_k\).  For
\(k\ge15\), its parameter-dependent multiplier is at most
\[
    \exp(O(k\log^2(k+2)))
    \le
    k^{O(k^2)}.
\]
For \(7\le k\le14\), the finitely many values of \(\Delta_k\) are fixed
and positive, so the multiplier is also absorbed into \(k^{O(k^2)}\).
Consequently, the full endpoint contribution is at most
\[
\begin{aligned}
    k^{O(k^2)}
    n^{B_k}F_{k,N}
    (\log n)^{O(k^2)}
    \le
    k^{O(k^2)}
    n^{B_k+\Delta_k/2}
    (\log n)^{O(k^2)}
    \le
    k^{O(k^2)}
    n^{E_k}
    (\log n)^{O(k^2)}.
\end{aligned}
\]
Thus every endpoint factor is absorbed uniformly in \(k\).

\paragraph{Optimization for \(k=3,4,5\).}
For \(k=3\), \(\theta_3=1/12\), and hence
\[
    E_3=1+12\theta_3=2.
\]
The no-island branch has exponent \(2\), the one-island branch has
exponent \(2-\theta_3<2\), and the two-island branch costs at most
\(k^{O(1)}n^2\).

For \(k=4\), \(\theta_4=2/19\), and hence
\[
    E_4=1+18\theta_4=\frac{55}{19}.
\]
The no-island branch has exponent
\[
    3-\theta_4=\frac{55}{19}.
\]
The one-island and two-island branches are smaller, and the \(i=3\)
branch has exponent
\[
    \widehat\Phi(3)<2.37134<\frac{55}{19}.
\]

For \(k=5\),
\[
    \theta_5=\frac{1+2.37134}{26},
\]
so
\[
    E_5
    =
    1+24\theta_5
    =
    1+\frac{12}{13}(1+2.37134).
\]
The \(i=3\) branch has polynomial exponent
\[
    2+\widehat\Phi(3)-2\theta_5,
\]
and its slack below \(E_5\) is
\[
    2.37134-\widehat\Phi(3)>0.
\]
Its possible endpoint factor was absorbed above.  The no-island branch
is dominated because
\[
    E_5-\bigl(4-2\theta_5\bigr)
    =
    2.37134-2>0.
\]
The two-island branch has exponent \(3-\theta_5\), which is smaller
than the no-island exponent, and the one-island branch is smaller still.
Finally,
\[
    \widehat\Phi(4)<3.250386<E_5,
\]
so the \(i=4\) branch is also dominated.  Therefore
\[
    E_5
    =
    1+\frac{12}{13}(1+2.37134)
    <
    4.112007.
\]

\paragraph{Optimization for \(k\ge6\).}
Put
\[
    a:=3.250386.
\]
Then
\[
    \theta_k=\frac{k+a-5}{7k-10}.
\]
The rounded upper bound for the \(i=4\) exponent is
\[
    k-4+a-(k-4)\theta_k.
\]
It balances the FPT exponent exactly:
\[
\begin{aligned}
    E_k-
    \bigl(k-4+a-(k-4)\theta_k\bigr)
    =
    (7k-10)\theta_k-(k+a-5)
    =0.
\end{aligned}
\]
The actual \(i=4\) exponent is strictly smaller because
\(\widehat\Phi(4)<a\).

The no-island branch is dominated because
\[
\begin{aligned}
    E_k-
    \bigl(k-1-(k-3)\theta_k\bigr)
    =
    (7k-9)\theta_k-(k-2)
    =
    \frac{
        (7a-20)k+25-9a
    }{
        7k-10
    }
    >0
\end{aligned}
\]
for every \(k\ge6\).  The one-island branch is smaller than the
no-island branch by \(\theta_k\), and the two-island branch is smaller
by \(1-\theta_k\).

For the \(i=3\) branch, use
\(\widehat\Phi(3)<2.37134\).  Then
\[
\begin{aligned}
    E_k-
    \bigl(k-3+\widehat\Phi(3)-(k-3)\theta_k\bigr)
    &>
    E_k-
    \bigl(k-3+2.37134-(k-3)\theta_k\bigr)\\
    &=
    \frac{
        (7a-7(2.37134)-6)k
        +10(2.37134)-9a+5
    }{
        7k-10
    }
    >0
\end{aligned}
\]
for \(k\ge6\).

For \(5\le i\le k-2\), \eqref{eq:phi_tail_bound_main} gives
\[
    \widehat\Phi(i)
    <
    a+\frac{6(i-4)}7.
\]
Consequently,
\[
\begin{aligned}
    E_k-
    \bigl(k-i+\widehat\Phi(i)-(k-i)\theta_k\bigr)
    >
    E_k-
    \left(
        k-i+a+\frac{6(i-4)}7-(k-i)\theta_k
    \right)
    =
    \frac{
        (25-7a)(i-4)
    }{
        7(7k-10)
    }
    >0.
\end{aligned}
\]
This handles every nonendpoint branch in the range; the possible
endpoint \(i=k-2\) was handled uniformly above.

Finally,
\[
    \widehat\Phi(k-1)
    <
    a+\frac{6(k-5)}7.
\]
Therefore
\[
\begin{aligned}
    E_k-\widehat\Phi(k-1)
    >
    1+(6k-6)\frac{k+a-5}{7k-10}
      -a-\frac{6(k-5)}7
    =
    \frac{
        (67-7a)k+28a-160
    }{
        7(7k-10)
    }
    >0.
\end{aligned}
\]
Thus every large-\(s\) branch is bounded by
\[
    k^{O(k^2)}n^{E_k}(\log n)^{O(k^2)}.
\]
Combining this with the preprocessing and small-\(s\) bounds proves the
theorem for every \(3\le k\le n\).
\end{proof}

\section*{Acknowledgement} The first author would like to thank Amir Abboud, Shyan Akmal, Thatchaphol Saranurak, and Yinzhan Xu for early discussions on minimum $3$-cut. The second author used OpenAI’s GPT 5.5 Sol on Max effort to assist with drafting and revising portions of the exposition. The authors assume responsibility for all content.

\bibliographystyle{alpha}
\bibliography{references}

\appendix

\section{Deferred Proofs for the Unbreakable Decomposition}
\label{app:unbreakable_decomposition}

\subsection{Compression of the Raw Quotient}
\label{app:compression}

\begin{proof}[Proof of \cref{lem:compressed_td}]
Starting from \(Q_{\rm raw}\), repeatedly delete every leaf component \(C\) with \(V_C=\emptyset\). Then remove every degree-two component \(C\) with \(V_C=\emptyset\) and connect its endpoints with an edge. Let the resulting tree be \(Q\). Thus every node \(C\in V(Q)\) is either terminal, meaning \(V_C\neq\emptyset\), or branching, meaning it has degree at least \(3\) in \(Q\).

Since the nonempty sets \(V_C\) are disjoint and each contains at least one graph vertex, there are at most \(n\) terminal nodes. A tree with \(k\) terminal or marked nodes and all unmarked nodes of degree at least \(3\) has \(O(k)\) nodes. Hence
\[
    |V(Q)|+|E(Q)|=O(n).
\]
Each edge \(\varepsilon=CD\in E(Q)\) corresponds to a path in the raw quotient
\[
    C=C_0,e_1,C_1,e_2,\ldots,e_r,C_r=D,
\]
where each \(e_i\in F\) is a low edge of the original hierarchical tree. For the endpoint \(C\), call \(e_1\) the first raw low edge of \(\varepsilon\) incident with \(C\). Similarly, \(e_r\) is the first raw low edge of \(\varepsilon\) incident with \(D\).

For a retained node \(C\in V(Q)\), let \(I(C)\) be the set of first raw low edges over all compressed edges of \(Q\) incident with \(C\). Define
\[
    \beta(C) := V_C \cup \bigcup_{e\in I(C)}\mu(e).
\]
Observe that every \(e\in I(C)\) is incident with \(C\) in \(Q_{\rm raw}\). Therefore
\[
    \beta(C)\subseteq \beta_{\rm raw}(C).
\]

We first prove vertex connectedness directly. Fix \(v\in V\), and let \(C_v\in V(Q)\) be the retained node containing the singleton leaf \(v\). Let
\[
    T'_v := \{C_v\} \cup \bigcup_{vu\in E(G)} V\bigl(P_Q(C_v,C_u)\bigr),
\]
where \(P_Q(C_v,C_u)\) is the unique path from \(C_v\) to \(C_u\) in the compressed tree \(Q\). We prove that \(T'_v\) is exactly the set of compressed bags containing \(v\). First let \(C\in T'_v\). If \(C=C_v\), then $v\in V_C\subseteq \beta(C)$. Otherwise, there is an edge \(vu\in E(G)\) such that \(C\) lies on \(P_Q(C_v,C_u)\). Let \(\varepsilon\) be the compressed edge incident with \(C\) in the direction of \(C_v\). The first raw low edge of \(\varepsilon\) incident with \(C\) lies on the raw tree path between the singleton leaves \(v\) and \(u\). Therefore the graph edge \(vu\) crosses the cut represented by this raw low edge, and so \(v\) belongs to the corresponding middle set included in \(\beta(C)\).

Conversely, suppose \(v\in\beta(C)\). If \(v\in V_C\), then \(C=C_v\), so \(C\in T'_v\). Otherwise, \(v\in\mu(e)\) for some \(e\in I(C)\). By definition of \(I(C)\), the edge \(e\) is the first raw low edge of some compressed edge incident with \(C\). Since \(v\in\mu(e)\), there exists an edge \(vu\in E(G)\) crossing the cut \(S_e\). Thus \(e\) lies on the raw path between the singleton leaves \(v\) and \(u\). Since \(e\) is incident with the retained endpoint \(C\) of its compressed path segment, the retained node \(C\) lies on the compressed path \(P_Q(C_v,C_u)\). Hence \(C\in T'_v\).

Thus the compressed bags containing \(v\) are exactly \(T'_v\). Since \(T'_v\) is a union of paths all containing \(C_v\), it is connected. Next let \(uv\in E(G)\). If \(C_u=C_v\), then $u,v\in V_{C_u}\subseteq \beta(C_u)$. Otherwise, consider the first compressed edge on the path from \(C_u\) to \(C_v\). Let \(e\) be its first raw low edge incident with \(C_u\). Since \(e\) lies on the raw tree path between the singleton leaves \(u\) and \(v\), the edge \(uv\) crosses the cut represented by \(e\). Hence $u,v\in\mu(e)\subseteq \beta(C_u)$, so \(uv\) is covered.

Finally we prove the adhesion bound. Let \(\varepsilon=CD\in E(Q)\), and let
\[
    C=C_0,e_1,C_1,e_2,\ldots,e_r,C_r=D
\]
be the corresponding raw quotient path. Suppose
\[
    v\in \beta(C)\cap\beta(D).
\]
Because \(\beta(C)\subseteq\beta_{\rm raw}(C)\) and \(\beta(D)\subseteq\beta_{\rm raw}(D)\), the raw occurrence subtree of \(v\) contains both \(C\) and \(D\). By \cref{lem:raw_td}, that raw occurrence set is connected, so it contains the entire raw path from \(C\) to \(D\). In particular it contains both endpoints of the raw edge \(e_1\). Applying the adhesion statement of \cref{lem:raw_td} to that raw edge gives $v\in \mu(e_1)$. Therefore $\beta(C)\cap\beta(D)\subseteq \mu(e_1)$. By \cref{lem:low_edge_middle_small}, $|\beta(C)\cap\beta(D)| \le |\mu(e_1)| \le 2\Lambda$.
\end{proof}

\subsection{Compactification and Redundancy Removal}
\label{app:compactification}
We now give a cleanup algorithm which makes a rooted tree decomposition compact and suppresses redundant bags. We use the convention that the root adhesion is empty, i.e.
\[
    \sigma_\tau(r):=\emptyset .
\]
Let
\[
    M:=|V(\tau)|+|E(G)|+\sum_{t\in V(\tau)}|\beta(t)|, \qquad \eta:=\max\{|\sigma_\tau(t)|:t\ne r\},
\]
with \(\eta:=0\) if \(\tau\) has only one node.

For every vertex \(v\in V(G)\), let \(\operatorname{top}(v)\) be the minimum-depth node \(t\) such that \(v\in\beta(t)\). This is well-defined because the bags containing \(v\) form a connected subtree.

\begin{lemma}[Activation Identity]
\label{lem:activation_identity}
For every node \(t\in V(\tau)\),
\[
    \alpha_\tau(t) = \{v\in V(G):\operatorname{top}(v)\in V(\tau_t)\}.
\]
Moreover, if \(uv\in E(G)\), then \(\operatorname{top}(u)\) and \(\operatorname{top}(v)\) are comparable in \(\tau\). If \(a(uv)\) is the shallower of these two nodes, then
\[
    uv\in E(G[\alpha_\tau(t)]) \qquad\Longleftrightarrow\qquad a(uv)\in V(\tau_t).
\]
\end{lemma}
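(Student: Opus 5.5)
The plan is to prove the identity for \(\alpha_\tau(t)\) by double inclusion, using only the connected-subtree axiom and the definitions \(\gamma(t)=\bigcup_{u\in V(\tau_t)}\beta(u)\), \(\sigma(t)=\beta(t)\cap\beta(p(t))\) and \(\alpha(t)=\gamma(t)\setminus\sigma(t)\). Let \(T_v\) be the set of nodes whose bag contains \(v\). It is a connected subtree, so it has a unique minimum-depth node \(\operatorname{top}(v)\), and every node of \(T_v\) is a descendant of \(\operatorname{top}(v)\).

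For the inclusion \(\supseteq\), suppose \(\operatorname{top}(v)\in V(\tau_t)\). Then \(v\in\gamma(t)\). If \(t\) is the root, then \(\sigma(t)=\emptyset\) and we are done. Otherwise \(v\notin\beta(p(t))\), since \(p(t)\) lies outside \(\tau_t\) and every node of \(T_v\) lies inside \(\tau_{\operatorname{top}(v)}\subseteq\tau_t\). Hence \(v\notin\sigma(t)\).

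For the inclusion \(\subseteq\), suppose \(v\in\gamma(t)\setminus\sigma(t)\). Some node \(u\in\tau_t\) has \(v\in\beta(u)\). Suppose \(\operatorname{top}(v)\notin\tau_t\). By connectivity, \(T_v\) contains the whole tree path from \(u\) to \(\operatorname{top}(v)\). That path leaves \(\tau_t\) through the edge \(t\,p(t)\), so \(v\in\beta(t)\cap\beta(p(t))=\sigma(t)\), a contradiction.

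For the edge statement, let \(uv\in E(G)\). Edge coverage gives a node \(x\) with \(u,v\in\beta(x)\). Then \(x\in\tau_{\operatorname{top}(u)}\cap\tau_{\operatorname{top}(v)}\), and two rooted subtrees that share a node must be nested. So their roots \(\operatorname{top}(u)\) and \(\operatorname{top}(v)\) are comparable.

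Let \(a(uv)\) be the shallower of the two tops. For the forward direction, if \(u,v\in\alpha_\tau(t)\), the first part puts both tops in \(\tau_t\), so in particular \(a(uv)\in\tau_t\). For the converse, if \(a(uv)\in\tau_t\), the other top is a descendant of \(a(uv)\), so it also lies in \(\tau_t\). The first part then gives \(u,v\in\alpha_\tau(t)\), and so \(uv\in E(G[\alpha_\tau(t)])\).

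The main point needing care is the step in the inclusion \(\subseteq\) showing that an occurrence subtree reaching both inside and outside \(\tau_t\) must contain both \(t\) and \(p(t)\). This holds because \(t\,p(t)\) is the only edge of \(\tau\) leaving \(\tau_t\). Every other step is an unwinding of the definitions.
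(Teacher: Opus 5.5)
Your proof is correct and follows the same route as the paper. The vertex identity comes from connectedness of each vertex's occurrence subtree: an occurrence set meeting both \(\tau_t\) and its complement must use the edge \(t\,p(t)\). Comparability of the two top nodes comes from both being ancestors of a bag that covers \(uv\). You simply spell out in full the double inclusion that the paper asserts in one line.
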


\begin{proof}
The first identity follows directly from connectedness of the bags containing a fixed vertex: a vertex appears in \(\gamma_\tau(t)\) but not in \(\sigma_\tau(t)\) exactly when its topmost occurrence lies in the subtree of \(t\).

For an edge \(uv\), some bag contains both endpoints. Therefore \(\operatorname{top}(u)\) and \(\operatorname{top}(v)\) are both ancestors of that bag, and hence are comparable. By the first identity, both endpoints lie in \(\alpha_\tau(t)\) exactly when both top nodes lie in \(V(\tau_t)\). Since the two top nodes are comparable, this is equivalent to their shallower top node \(a(uv)\) lying in \(V(\tau_t)\).
\end{proof}

Thus \(G[\alpha_\tau(t)]\) is obtained by a monotone process: vertex \(v\) is inserted at \(\operatorname{top}(v)\), and edge \(uv\) is inserted at \(a(uv)\).

\begin{algorithm}[H]
\caption{\textsc{CompactifyAndSuppress}$(G,(\tau,\beta))$}
\label{alg:compactify_and_suppress}
\begin{algorithmic}[1]
\State Compute \(\operatorname{top}(v)\) for every \(v\in V(G)\).
\State Assign every edge \(uv\in E(G)\) to \(a(uv)\), the shallower of \(\operatorname{top}(u)\) and \(\operatorname{top}(v)\).
\State Process the nodes of \(\tau\) in postorder using union-find.
\State At node \(t\), activate all vertices \(v\) with \(\operatorname{top}(v)=t\), and all edges assigned to \(t\).
\State For every final union-find component \(C\) whose vertex set was not already represented by a component before processing \(t\), create an event node \(x_C\). Its children are the maximal event nodes, created before processing \(t\), whose represented sets are proper subsets of \(C\).
\State When \(x_C\) is created, materialize its candidate bag
\[
    B(x_C) := N_G(C) \cup \bigl(C\cap\beta(t)\bigr) \cup \bigcup_{x_{C_i}\text{ child of }x_C} N_G(C_i).
\]
Also store \(N_G(C)\).
\State If the event forest has several roots, add a dummy root with empty candidate bag.
\State Traverse the event forest top-down. Keep the root. For every other event node \(x\), let \(a\) be its nearest kept ancestor. If \(B(x)\subseteq B(a)\), suppress \(x\); otherwise keep \(x\) and attach it as a child of \(a\) in the output tree.
\State Return the kept nodes with bags \(B(x)\).
\end{algorithmic}
\end{algorithm}

Let the returned decomposition be \((\widehat\tau,\widehat\beta)\).

\begin{lemma}[Correctness of the Cleanup]
\label{lem:compactify_correctness}
\cref{alg:compactify_and_suppress} returns a compact rooted tree decomposition \((\widehat\tau,\widehat\beta)\) of \(G\). Moreover, every output bag is a subset of some input bag, every nonempty output adhesion is a subset of some input adhesion, and no non-root output bag is contained in its parent bag. If the input decomposition tree \(\tau\) has depth \(d\), then the output tree \(\widehat\tau\) has depth at most \(d+1\), where the additive \(1\) is only for the possible dummy root.
\end{lemma}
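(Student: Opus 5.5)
I would organize the proof around the event forest built by the postorder union-find sweep, using \cref{lem:activation_identity} as the basic invariant. By that lemma, once node \(t\) has been processed, the union-find components are exactly the connected components of \(G[\alpha_\tau(t')]\), taken over the subtrees \(\tau_{t'}\) processed so far. So every event node \(x_C\) represents a vertex set \(C\) that induces a connected subgraph of \(G\). Moreover, the event-forest ancestry is the containment order on these sets, and the sets form a laminar family.

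\textbf{Tree decomposition.} For a vertex \(v\), I would first show that the kept nodes whose bags contain \(v\) form a connected subtree. Before suppression, \(v\in B(x_C)\) holds exactly when one of two things happens: \(v\in C\cap\beta(t)\), or \(v\) lies in \(N_G(C)\) or in \(N_G(C_i)\) for some child \(C_i\). The neighborhood terms make occurrences propagate along the path from the event node where \(v\) is first adjacent to an active component up to the node where \(v\) is activated. Laminarity plus connectivity of \(C\) should then show that this set of nodes is a subtree. For edge coverage, take \(uv\) and let \(C\) be the event component created when the later of \(u,v\) (with respect to activation) becomes active. Then both endpoints lie in \(B(x_C)\): one via \(C\cap\beta(t)\), the other either via the same term or via \(N_G(C_i)\) of the child containing it. Suppression keeps both properties. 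A suppressed node's bag is contained in its nearest kept ancestor's bag, so occurrence subtrees only lose nodes while staying connected, and edge coverage is inherited by that ancestor.

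\textbf{Compactness and subset properties.} Here I would show that the adhesion of a kept node \(x_C\) to its parent is exactly \(N_G(C)\), and that \(\alpha(x_C)=C\). Connectivity of \(G[C]\) and \(N_G(C)=\sigma(x_C)\) then give compactness. To see that \(N_G(C)\) lies in the parent bag, note that each neighbor of \(C\) is either activated at the parent's event or lies in the neighborhood of the parent's component. For "output bag \(\subseteq\) input bag", I would use that \(N_G(C)\subseteq\sigma_\tau(t)\): neighbors of \(\alpha_\tau(t)\)-components outside \(\alpha\) lie in \(\sigma_\tau(t)\) by the tree-decomposition separation property. Similarly \(N_G(C_i)\subseteq\sigma_\tau(\text{child})\subseteq\beta(t)\). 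Hence \(B(x_C)\subseteq\beta(t)\), and nonempty adhesions are subsets of input adhesions. The non-containment property holds directly by the suppression rule. For depth, every event-forest path corresponds to a chain of distinct processing nodes \(t\) along a root path of \(\tau\), since at most one new event per component is created at each \(t\). This gives depth \(\le d\), plus one for the dummy root.

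\textbf{Main obstacle.} I expect the subtle part to be the interaction of suppression with the neighborhood terms. After a node is suppressed, its children are reattached to a higher ancestor, and I must check two things: that the adhesion is still \(N_G(C)\), so compactness survives, and that \(N_G(C)\) is contained in the new parent bag. I would handle this with a separate claim. Since the suppressed bag \(B(x)\) contains \(N_G(C_{\text{child}})\) and \(B(x)\subseteq B(a)\), the new parent bag still contains that neighborhood. Also, \(\alpha\) computed in the output tree is still the union of the represented sets.
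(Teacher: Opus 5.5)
\textbf{Gap: the edge-coverage step fails.} Your architecture matches the paper's: the event forest, \(B(x_C)\subseteq\beta(t_C)\), adhesion \(N_G(C)\) with \(\alpha(x_C)=C\), and suppression only contracting paths. The problem is your choice of event node for edge coverage. You take the event created when the \emph{later}-activated endpoint becomes active. In postorder that is the endpoint \(u\) whose top node is shallower. The other endpoint \(v\) then lies \emph{inside} a child component \(C_i\), so it can never belong to \(N_G(C_i)\). It is \(u\) that lies in \(N_G(C_i)\), and \(v\) need not lie in \(\beta(\operatorname{top}(u))\).

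Here is a concrete counterexample. Let \(G\) have edges \(uv,uw\), and let \(\tau\) have root \(r\) with \(\beta(r)=\{u,w\}\) and a single child \(t\) with \(\beta(t)=\{u,v\}\). Processing \(t\) creates \(x_{\{v\}}\) with bag \(\{u,v\}\). Processing \(r\) creates \(x_C\) for \(C=\{u,v,w\}\) with bag \(\{u,w\}\cup N_G(\{v\})=\{u,w\}\), which misses \(v\).

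The fix is the paper's argument: use the event born at the \emph{deeper} top node \(t:=\operatorname{top}(v)\), for the component \(C\) containing \(v\). Then \(v\in C\cap\beta(t)\). Either \(\operatorname{top}(u)=t\), in which case \(u\in C\cap\beta(t)\) as well because \(uv\) is activated at \(t\). Or \(\operatorname{top}(u)\) is a proper ancestor of \(t\), so \(u\notin\alpha_\tau(t)\) and hence \(u\in N_G(C)\subseteq B(x_C)\).

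\textbf{Smaller hole: the reverse inclusion for adhesions.} In the suppression step you flag as the main obstacle, you only show \(N_G(C)\subseteq B(a)\) for the new parent \(a\). Compactness needs the equality \(B(x_C)\cap B(a)=N_G(C)\), i.e.\ that no vertex of \(C\) occurs in \(B(a)\). The same reverse inclusion is also left unproved for the event parent, before any suppression. The paper proves \(B(x_D)\cap C=\emptyset\) for the event parent and then transfers the adhesion along the suppressed path by running intersection.

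With your ingredients this is shorter. \(B(a)\subseteq\beta(t_a)\) with \(t_a\) a proper ancestor of \(t_C\), vertices of \(C\subseteq\alpha_\tau(t_C)\) occur only in bags of \(\tau_{t_C}\), and \(B(x_C)\subseteq C\cup N_G(C)\). Together these give the equality.

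This route needs the fact that birth nodes along an event-forest path descend in \(\tau\). Your depth argument asserts this, but distinctness alone does not give it. It follows because every event born at \(t\) contains a vertex whose top node is \(t\), so \(C\subseteq D\subseteq\alpha_\tau(t_D)\) forces \(t_C\in V(\tau_{t_D})\).

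\textbf{One acceptable alternative.} Your bound \(N_G(C_i)\subseteq\sigma_\tau(\cdot)\subseteq\beta(t)\) is correct provided the adhesion is that of the \(\tau\)-child \(t'\) of \(t\) whose subtree contains \(C_i\). Then \(C_i\) is a full component of \(G[\alpha_\tau(t')]\). It is not correct with the birth node of \(C_i\) in place of \(t'\). Read this way, it is a valid alternative to the paper's activation-node argument.
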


\begin{proof}
We first analyze the event forest before suppression. If \(x_C\) is an event node, let \(t_C\) be its birth node. By \cref{lem:activation_identity}, \(C\) is a connected component of \(G[\alpha_\tau(t_C)]\).

We use two elementary facts. First, every event node born at a node \(t\) contains a vertex \(v\) with \(\operatorname{top}(v)=t\). Indeed, an event node is created only when a component is new or changes while processing \(t\). If a vertex is activated, the claim is immediate for the new component containing it. If the change is caused by an activated edge, then that edge is assigned to \(t\), so one of its endpoints has top node \(t\), and this endpoint lies in the new component.

Second, if \(x_C\) is a child of \(x_D\), then \(t_C\) is a proper descendant of \(t_D\). Since \(x_C\) was already present before processing \(t_D\), and since \(C\subseteq D\subseteq\alpha_\tau(t_D)\), the previous paragraph gives a vertex \(v\in C\) with \(\operatorname{top}(v)=t_C\). \cref{lem:activation_identity} gives \(t_C\in V(\tau_{t_D})\). Moreover \(t_C\ne t_D\), because \(x_C\) already existed before \(t_D\) was processed.

We prove the following invariant for every event node \(x_C\). Let
\[
    U_C:=\bigcup_{x\text{ in the subtree of }x_C}B(x).
\]
Then
\[
    B(x_C)\subseteq \beta(t_C), \qquad U_C=C\cup N_G(C),
\]
and, if \(x_C\) has parent \(x_D\), then
\[
    B(x_D)\cap U_C=N_G(C), \qquad B(x_C)\cap B(x_D)=N_G(C).       \tag{1}
\]

First we show \(B(x_C)\subseteq\beta(t_C)\). Let \(y\in N_G(C)\), and choose \(xy\in E(G)\) with \(x\in C\). Since \(C\) is a component of \(G[\alpha_\tau(t_C)]\), we have \(y\notin\alpha_\tau(t_C)\). Also every bag containing \(x\) lies in the subtree \(\tau_{t_C}\): otherwise connectedness of the bags containing \(x\) would force \(x\in\sigma_\tau(t_C)\), contradicting \(x\in\alpha_\tau(t_C)\). Hence any edge-covering bag for \(xy\) lies in \(\tau_{t_C}\), so \(y\in\gamma_\tau(t_C)\). Since \(y\notin\alpha_\tau(t_C)\),
\[
    y\in\gamma_\tau(t_C)\setminus\alpha_\tau(t_C) = \sigma_\tau(t_C).
\]
Thus
\[
    N_G(C)\subseteq\sigma_\tau(t_C)\subseteq\beta(t_C).    \tag{2}
\]

Now let \(x_{C_i}\) be a child of \(x_C\). We claim that
\[
    N_G(C_i)\subseteq\beta(t_C).                           \tag{3}
\]
Let \(y\in N_G(C_i)\), and choose \(xy\in E(G)\) with \(x\in C_i\). If \(y\notin C\), then \(y\in N_G(C)\), so \(y\in\beta(t_C)\) by (2). Otherwise \(y\in C\setminus C_i\). Since \(x,y\in C\subseteq\alpha_\tau(t_C)\), the activation node \(a(xy)\) lies in the subtree of \(t_C\). If \(a(xy)\) were a proper descendant of \(t_C\), then \(xy\) would already have been active before processing \(t_C\). In that case both endpoints would already have been represented before processing \(t_C\), and the edge \(xy\) would put them in the same previous union-find component, contradicting the maximality of the child component \(C_i\). Hence \(a(xy)=t_C\). Since \(x\in C_i\) was already represented before processing \(t_C\), \(\operatorname{top}(x)\ne t_C\). As \(a(xy)\) is the shallower of \(\operatorname{top}(x)\) and \(\operatorname{top}(y)\), it follows that \(\operatorname{top}(y)=t_C\), and hence \(y\in\beta(t_C)\). This proves (3), and therefore \(B(x_C)\subseteq\beta(t_C)\). Thus every non-dummy event bag is contained in an input bag; the dummy root, if present, has empty bag.

Next we prove
\[
    U_C=C\cup N_G(C).
\]
The inclusion \(C\cup N_G(C)\subseteq U_C\) follows by induction over the event forest. The bag \(B(x_C)\) contains \(N_G(C)\) and \(C\cap\beta(t_C)\). Every vertex of \(C\) whose top node is not \(t_C\) was already active before processing \(t_C\), and therefore lies in one of the maximal child components \(C_i\). By induction, it lies in the subtree of \(x_{C_i}\). Conversely, each child subtree contributes only
\[
    C_i\cup N_G(C_i)\subseteq C\cup N_G(C),
\]
because \(C_i\subseteq C\), and any vertex outside \(C\) adjacent to \(C_i\) lies in \(N_G(C)\). Also \(B(x_C)\subseteq C\cup N_G(C)\). Thus \(U_C=C\cup N_G(C)\).

Now suppose \(x_D\) is the parent of \(x_C\). Since \(x_C\) is a child of \(x_D\), the bag \(B(x_D)\) contains \(N_G(C)\), and \(B(x_C)\) also contains \(N_G(C)\). Hence
\[
    N_G(C)\subseteq B(x_D)\cap U_C \quad\text{and}\quad N_G(C)\subseteq B(x_C)\cap B(x_D).
\]
We show that no vertex of \(C\) lies in \(B(x_D)\). The parent bag has the form
\[
    B(x_D) = N_G(D) \cup \bigl(D\cap\beta(t_D)\bigr) \cup \bigcup_{x_Y\text{ child of }x_D}N_G(Y).
\]
A vertex \(z\in C\) cannot lie in \(N_G(D)\), since \(C\subseteq D\). It cannot lie in \(D\cap\beta(t_D)\), because \(t_D\) is a proper ancestor of \(t_C\), and an occurrence of \(z\in C\subseteq\alpha_\tau(t_C)\) at \(t_D\) would force \(z\in\sigma_\tau(t_C)\), contradicting \(z\in\alpha_\tau(t_C)\). Finally, \(z\) cannot lie in \(N_G(Y)\) for a sibling child \(x_Y\ne x_C\). Otherwise there is an edge \(zy\) with \(y\in Y\). Both endpoints were already represented before processing \(t_D\), so neither has top node \(t_D\). Since both endpoints lie in \(D\subseteq\alpha_\tau(t_D)\), the activation node \(a(zy)\) is a proper descendant of \(t_D\). Thus \(zy\) was already active before processing \(t_D\), contradicting that \(C\) and \(Y\) were distinct maximal child components of \(x_D\).

Therefore \(B(x_D)\cap C=\emptyset\). Since \(U_C=C\cup N_G(C)\) and \(N_G(C)\subseteq B(x_D)\), we get
\[
    B(x_D)\cap U_C=N_G(C).
\]
Since \(B(x_C)\subseteq U_C\) and \(N_G(C)\subseteq B(x_C)\), this also gives
\[
    B(x_C)\cap B(x_D)=N_G(C).
\]
This proves the invariant.

The invariant gives compactness of the event decomposition. For every non-root event node \(x_C\),
\[
    \alpha(x_C) = U_C\setminus B(\operatorname{parent}(x_C)) = (C\cup N_G(C))\setminus N_G(C) = C,
\]
and
\[
    \sigma(x_C) = B(x_C)\cap B(\operatorname{parent}(x_C)) = N_G(C) = N_G(\alpha(x_C)).
\]
For a root event node, after all nodes of \(\tau\) have been processed, the active graph is all of \(G\). Hence each root event node represents a connected component \(C\) of \(G\), and so \(N_G(C)=\emptyset\). Thus the same compactness identity holds for roots, with empty adhesion. If a dummy root is added, its bag and adhesion are also empty.

We now verify the tree-decomposition axioms for the event forest. Vertex coverage follows from the identity \(U_C=C\cup N_G(C)\) applied to the roots of the event forest, whose represented sets are the connected components of \(G\).

For edge coverage, let \(uv\in E(G)\), and suppose without loss of generality that \(\operatorname{top}(u)\) is an ancestor of \(\operatorname{top}(v)\). If \(\operatorname{top}(u)=\operatorname{top}(v)=t\), then both endpoints are activated at \(t\), and the event node born at \(t\) for their component has a bag containing both endpoints in \(C\cap\beta(t)\). Otherwise, let \(t=\operatorname{top}(v)\), and let \(x_C\) be the event node born at \(t\) for the component containing \(v\). Then \(v\in C\cap\beta(t)\), while \(u\notin\alpha_\tau(t)\), because \(\operatorname{top}(u)\) is a proper ancestor of \(t\). Thus \(u\notin C\), and since \(uv\in E(G)\), we have \(u\in N_G(C)\). Hence \(u,v\in B(x_C)\).

For vertex connectedness, fix \(v\in V(G)\), and let \(x_v\) be the event node born at \(t_v:=\operatorname{top}(v)\) for the component containing \(v\) after processing \(t_v\). This event node exists because \(v\) is activated at \(t_v\). We show that every event bag containing \(v\) is connected to \(x_v\) through bags containing \(v\).

Let \(v\in B(x_C)\). If \(v\in C\cap\beta(t_C)\), then \(v\in C\subseteq\alpha_\tau(t_C)\) and \(v\in\beta(t_C)\), so \(\operatorname{top}(v)=t_C\), and hence \(x_C=x_v\).

If \(v\in N_G(C)\), choose an edge \(vw\) with \(w\in C\). Since \(v\notin \alpha_\tau(t_C)\), the node \(t_v=\operatorname{top}(v)\) is a proper ancestor of \(t_C\). Follow the parent chain upward from \(x_C\) until the first event node whose represented component contains \(v\). Such a node exists: when \(t_v\) is processed, the vertex \(v\) is activated and the edge \(vw\) is also active, since its activation node is \(t_v\). Before this first merge, the represented component contains \(w\), remains adjacent to \(v\) through the edge \(vw\), and does not contain \(v\). Hence \(v\) belongs to the boundary of every represented component on this chain and appears in every corresponding bag. From the first event node whose represented component contains \(v\), the path continues along the chain of event nodes whose represented components contain \(v\), which connects to \(x_v\). Thus \(x_C\) is connected to \(x_v\) through bags containing \(v\).

Finally, if \(v\in N_G(C_i)\) for a child \(x_{C_i}\) of \(x_C\), then \(v\in B(x_{C_i})\). Applying the previous cases to \(x_{C_i}\), the node \(x_{C_i}\) is connected to \(x_v\) by a path of bags containing \(v\). Since \(B(x_C)\) also contains \(v\), this connects \(x_C\) to the same subtree. Thus all event bags containing \(v\) form a connected subtree.

The depth bound for the event forest follows from the birth-node fact above: if \(x_C\) is a child of \(x_D\), then \(t_C\) is a proper descendant of \(t_D\). Hence birth nodes strictly descend along every root-to-leaf path in the event forest, so the event forest has depth at most the depth \(d\) of \(\tau\), except for the possible additive \(1\) from a dummy root.

It remains to justify suppression. A node \(x\) is suppressed only when \(B(x)\subseteq B(a)\), where \(a\) is its nearest kept ancestor. Deleting such a node preserves edge coverage because any edge covered using \(B(x)\) is also covered by \(B(a)\). It preserves vertex connectedness because, for each vertex, every deleted bag containing that vertex is replaced by an ancestor bag containing the same vertex, so the subtree of bags containing that vertex is only contracted.

Let \(u\) be a kept node, and let \(a\) be its new parent after suppression. Let
\[
    a=t_0,t_1,\ldots,t_r,u
\]
be the corresponding path in the original event forest, where \(t_1,\ldots,t_r\) were deleted. If \(r=0\), the adhesion of \(u\) is unchanged. If \(r>0\), then \(B(t_r)\subseteq B(a)\). By vertex connectedness in the event decomposition, every vertex in \(B(u)\cap B(a)\) appears in every bag on the path from \(a\) to \(u\), and therefore in \(B(t_r)\). Thus
\[
    B(u)\cap B(a)\subseteq B(u)\cap B(t_r).
\]
The reverse inclusion follows from \(B(t_r)\subseteq B(a)\), so
\[
    B(u)\cap B(a)=B(u)\cap B(t_r).
\]
Thus every output adhesion is exactly the corresponding event-forest adhesion, and is therefore either empty or equal to some \(N_G(C)\subseteq\sigma_\tau(t_C)\). Hence every nonempty output adhesion is a subset of an input adhesion.

Suppression also preserves compactness. Fix a kept node \(u\), and let
\[
    S_u:=\bigcup_{z\text{ in the original event-forest subtree of }u}B(z).
\]
The union of bags in the output subtree rooted at \(u\) is still \(S_u\): every deleted descendant has its bag contained in its nearest kept ancestor, which is itself in the output subtree of \(u\). Moreover,
\[
    S_u\cap B(a)=B(u)\cap B(a).
\]
Indeed, if a vertex belongs to both \(S_u\) and \(B(a)\), then it appears in some bag in the original subtree of \(u\) and in the ancestor bag \(B(a)\); by vertex connectedness in the event decomposition, it appears on the whole path between these bags, in particular in \(B(u)\). Therefore the new lower side of \(u\) is
\[
    S_u\setminus B(a) = S_u\setminus (B(u)\cap B(a)),
\]
which is the same as the old lower side by the adhesion identity above. Since compactness held in the event decomposition, it continues to hold after suppression.

Every non-dummy output bag is an event bag, hence is a subset of some input bag; the dummy root has empty bag. Suppression only contracts paths, so it cannot increase depth, and the dummy root contributes at most one additional level. Finally, by construction, a non-root node is kept only if its bag is not contained in the bag of its nearest kept ancestor, which is its parent in the output tree. Therefore no non-root output bag is contained in its parent bag.
\end{proof}

\begin{lemma}[Size and running time]
\label{lem:compactify_time}
\cref{alg:compactify_and_suppress} can be implemented in
\[
    \widetilde O(M+\eta n)
\]
time and space. The returned decomposition satisfies
\[
    |V(\widehat\tau)|\le n+1, \qquad \sum_{x\in V(\widehat\tau)}|\widehat\beta(x)| = O((\eta+1)n).
\]
In particular, if all input adhesions have size \(O(\Lambda)\) where $\Lambda \ge 1$, then the cleanup costs
\[
    \widetilde O(M+\Lambda n)
\]
time and outputs total bag volume \(O(\Lambda n)\).
\end{lemma}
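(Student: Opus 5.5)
The plan is to prove the two size bounds first and then charge the running time of each line of \cref{alg:compactify_and_suppress} against $M$ or $\eta n$.

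\emph{Node count.} Every event node born at $t$ contains a vertex $v$ with $\operatorname{top}(v)=t$, as shown in the proof of \cref{lem:compactify_correctness}. The event nodes born at the same $t$ are distinct final components, so they are disjoint. Charge each event node to one of its newly activated vertices. Each vertex is activated exactly once, so there are at most $n$ event nodes. Adding the dummy root gives $|V(\widehat\tau)|\le n+1$, and suppression only deletes nodes.

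\emph{Bag volume.} For the volume bound I would avoid bounding $B(x_C)$ through its defining formula, because the children terms $N_G(C_i)$ could repeat. Instead I would use the decomposition
\[
    B(x_C)\subseteq N_G(C)\cup\bigl(C\cap\beta(t_C)\cap\{v:\operatorname{top}(v)=t_C\}\bigr)\cup\bigcup_i N_G(C_i).
\]
The justification is the argument for (3): a vertex of $C\cap\beta(t_C)$ whose top is not $t_C$ lies in a child component $C_i$ and not in any $\alpha$-complement, and such a vertex cannot be in $\beta(t_C)$ unless it lies in $\sigma$. I expect this to be the delicate point, and I would instead try to prove directly that $C\cap\beta(t_C)\subseteq\{\operatorname{top}=t_C\}\cup\bigcup_i N_G(C_i)\cup N_G(C)$. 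If that fails, I would use the weaker containment $B(x_C)\subseteq\{\operatorname{top}=t_C\}\cup\sigma_\tau(t_C)\cup\bigcup_i\sigma_\tau(t_{C_i})$, which follows from (2) and (3).

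With this decomposition the sum is easy to bound:
\begin{itemize}
\item the top-vertex terms sum to $n$;
\item each $|N_G(C)|\le|\sigma_\tau(t_C)|\le\eta$ by (2);
\item each node appears as a child only once.
\end{itemize}
Hence $\sum|B(x)|\le n+2\eta(n+1)=O((\eta+1)n)$.

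\emph{Running time.} I would go through the algorithm line by line.
\begin{enumerate}
\item Computing $\operatorname{top}$ is one BFS over $\tau$ scanning the bags, which costs $O(M)$.
\item Assigning edges uses depth comparisons, which costs $O(|E(G)|)$.
\item The postorder union-find with activations costs $\widetilde O(M)$.
\item To detect new components at $t$, I mark the union-find roots touched by activations at $t$. Only those components are new, and their children are the previous event nodes of the roots merged into them. The total is linear in the number of unions plus activations.
\item Computing $N_G(C)$ by scanning $C$'s adjacency is too expensive. Instead I restrict to candidates in $\sigma_\tau(t_C)$: for each $y\in\sigma_\tau(t_C)$, test whether $y$ has a neighbor in $C$. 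Here the key step is to maintain, for each component, the set of outside vertices adjacent to it as a mergeable set (small-to-large with hashing), intersected with the current $\sigma$. A cleaner alternative is to maintain per component the boundary list restricted to activated-later vertices, merging lists small-to-large, and filtering out members that become internal. Each vertex enters a boundary list once per incident edge, and small-to-large merging gives $O(|E|\log n)$. The reported boundary at creation time has size at most $\eta$, so materializing costs $O(\eta)$ per node, $O(\eta n)$ in total. I expect maintaining these boundaries within budget to be the main obstacle.
\item The term $C\cap\beta(t)$ is obtained by scanning $\beta(t)$ with a union-find lookup, which costs $\widetilde O(\sum|\beta(t)|)$ in total.
\item Suppression tests $B(x)\subseteq B(a)$ by marking $B(a)$ with timestamps during a top-down DFS. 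The marks of the current kept ancestor are maintained on a stack, reassigned on entry and exit, so each test costs $O(|B(x)|+|B(a)|)$. Since each kept node is the tested ancestor of its children, $|B(a)|$ could be summed too many times. To avoid this, I would compare only against $\sigma$: $B(x)\subseteq B(a)$ can be checked by verifying $B(x)\cap B(a)=B(x)$, where the adhesion equals $N_G(C)$ by the invariant. So the test reduces to $|B(x)|=|N_G(C)|$ plus the adhesion-preservation fact, which costs $O(1)$ given stored sizes.
\end{enumerate}
Summing these gives $\widetilde O(M+\eta n)$ time and space.
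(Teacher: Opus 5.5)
Your proposal is correct, and its skeleton is the paper's. Both arguments get at most $n$ event nodes because each contains a vertex activated at its birth node. Both charge the bag volume as $n$ for the sets $C\cap\beta(t_C)$, plus $\eta$ for each $N_G(C)$, plus $\eta$ for each child boundary (charged to the unique event edge above that child). Both maintain per-component boundary structures melded small-to-large, with lazy removal of vertices that have become internal.

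Two steps differ.
\begin{itemize}
\item For $|V(\widehat\tau)|\le n+1$ you use the event-node count and the fact that suppression only deletes nodes. The paper instead argues from the output's non-redundancy: each kept non-root node is the topmost output occurrence of a vertex missing from its parent bag.
\item For suppression you replace the paper's per-bag membership dictionaries, which cost $\widetilde O(\sum_x|B(x)|)$, by the $O(1)$ test $|B(x_C)|=|N_G(C)|$. This is sound because $B(x_C)\cap B(a)=N_G(C)$ for the nearest kept ancestor $a$. That identity follows from invariant (1) in the proof of \cref{lem:compactify_correctness}, plus running intersection through the suppressed ancestors, whose bags lie in $B(a)$.
\end{itemize}
Pushed one step further, your test shows more. $B(x_C)$ always contains a vertex $v\in C$ with $\operatorname{top}(v)=t_C$, and $v\notin N_G(C)$. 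So the test never fires, and suppression is vacuous for this event forest. The paper's argument is more generic in that it does not lean on this structure.

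Two details need tightening.

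First, the containment you call delicate is a one-liner, and it is what the paper uses. $C\subseteq\alpha_\tau(t_C)$ is disjoint from $\sigma_\tau(t_C)$, so any $v\in C\cap\beta(t_C)$ is absent from the parent bag of $t_C$. Connectivity of its occurrences then forces $\operatorname{top}(v)=t_C$. Also, the children terms $N_G(C_i)$ cannot repeat: each appears only in the bag of the event parent of $x_{C_i}$. So the defining formula can be summed directly.

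Second, in your ``cleaner alternative'' the boundary list has one entry per edge, and filtering out internal members does not remove duplicates. Your claim that reporting $N_G(C)$ costs $O(\eta)$ then fails. Take $G$ to be a path $v_1\cdots v_n$ plus a vertex $y$ adjacent to every $v_i$, with bags $\{y,v_i,v_{i+1}\}$ forming a path rooted at $i=1$. Then $M=O(n)$ and $\eta=2$. But after $j$ postorder steps the current component's list holds $j$ copies of $y$, and rescanning it at each event costs $\Theta(n^2)$ in total. The boundary must be deduplicated by vertex, e.g.\ a hash map that merges multiplicities on meld, as in the paper's boundary dictionaries. Your first, hash-set variant already does this.
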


\begin{proof}
Computing all top nodes and assigning edges to activation nodes takes \(O(M)\) time after computing depths in \(\tau\). The bottom-up activation process uses union-find. Each vertex is activated once, each edge is processed once, and each successful union decreases the number of components.

An event node born at \(t\) contains at least one vertex activated at \(t\): if a new component is created because of an activated edge, that edge has an endpoint whose top node is \(t\). Distinct event nodes born at the same \(t\) contain disjoint vertices activated at \(t\). Hence the number of non-dummy event nodes is at most \(n\).

For an event node \(x_C\) born at \(t\), we have
\[
    C\cap\beta(t)=\{v\in C:\operatorname{top}(v)=t\}.
\]
Indeed, \(C\subseteq\alpha_\tau(t)\), so no vertex of \(C\cap\beta(t)\) lies in \(\sigma_\tau(t)\); hence its topmost occurrence is \(t\). Thus, for all event nodes born at a fixed \(t\), the sets \(C\cap\beta(t)\) are formed by placing each vertex activated at \(t\) into its final union-find component after all activations at \(t\) have been processed.

We bound the total volume of all candidate event bags before suppression. For an event node \(x_C\) born at \(t_C\),
\[
    |N_G(C)|\le |\sigma_\tau(t_C)|\le \eta
\]
for \(t_C\ne r\), and the same bound holds for \(t_C=r\) because then \(N_G(C)=\emptyset\). Thus the total contribution of the \(N_G(C)\) terms is \(O(\eta n)\). Also,
\[
    C\cap\beta(t_C)\subseteq \beta(t_C)\setminus\sigma_\tau(t_C),
\]
and these newly introduced vertices are disjoint over all event nodes, so their total contribution is at most \(n\). Finally, each child boundary \(N_G(C_i)\) is charged once to the unique event edge from \(x_{C_i}\) to its parent, and has size at most \(\eta\). Since the event forest has \(O(n)\) edges, these child-boundary terms contribute \(O(\eta n)\). Therefore
\[
    \sum_x |B(x)|=O((\eta+1)n).
\]

It remains to explain how the sets \(N_G(C)\) and the child lists are maintained within the claimed time. For each union-find component, maintain a list of its current maximal represented child event nodes, a dictionary for its boundary vertices, and a membership list of active vertices in the component. When two components are united, their child lists and boundary dictionaries are melded small-to-large. Boundary entries whose vertex has become internal are deleted lazily when encountered; each such stale entry is generated by an original edge endpoint and is moved only \(O(\log n)\) times by small-to-large melding. Thus the total cost of maintaining and cleaning these dictionaries is \(\widetilde O(M)\). When an event node is created, its children are precisely the current maximal represented event nodes stored for the final union-find component, and after the event is created this child list is replaced by the single new event node.

Materializing and deduplicating all candidate bags costs
\[
    \widetilde O\!\left(\sum_x |B(x)|\right) = \widetilde O((\eta+1)n).
\]
For redundancy suppression, store each candidate bag as a membership dictionary. The top-down traversal tests \(B(x)\subseteq B(a)\) by scanning \(B(x)\) and querying the dictionary of the nearest kept ancestor \(a\). Since every candidate bag is scanned once, this costs
\[
    \widetilde O\!\left(\sum_x |B(x)|\right) = \widetilde O((\eta+1)n).
\]
Thus the total running time and space are
\[
    \widetilde O(M+\eta n).
\]

The final node bound \(|V(\widehat\tau)|\le n+1\) follows from the absence of redundant parent-child pairs. Every kept non-root node \(x\) has a vertex in
\[
    \widehat\beta(x)\setminus \widehat\beta(p(x)).
\]
Choose \(v_x\in\widehat\beta(x)\setminus\widehat\beta(p(x))\). Since the bags containing \(v_x\) form a connected subtree and \(p(x)\) does not contain \(v_x\), the node \(x\) is the root of the connected subtree of output bags containing \(v_x\). A fixed vertex has only one such root, so no vertex is charged to two kept non-root nodes. Therefore there are at most \(n\) kept non-root nodes, plus the possible root, giving \(|V(\widehat\tau)|\le n+1\).
\end{proof}

\begin{corollary}
\label{cor:compactify_preserves_parameters}
If every input bag is \((q,s)\)-edge-unbreakable, then every output bag is \((q,s)\)-edge-unbreakable. If every input adhesion has size at most \(\eta\), then every output adhesion has size at most \(\eta\).
\end{corollary}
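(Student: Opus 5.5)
The corollary should follow directly from the containment properties of \cref{lem:compactify_correctness}. Both claims are monotonicity statements, and the lemma already gives us the needed containments.

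\textbf{Unbreakability.} Let \(\widehat\beta(x)\) be an output bag. By \cref{lem:compactify_correctness}, it is either the empty dummy-root bag or a subset of some input bag \(\beta(t)\). The empty set is vacuously \((q,s)\)-edge-unbreakable.

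Otherwise, fix any \(s\)-cut \((L,V(G)\setminus L)\) of \(G\). Unbreakability is measured in the same graph \(G\) for input and output, as in \cref{def:unbreakability}. Since \(\beta(t)\) is \((q,s)\)-edge-unbreakable, one of \(|L\cap\beta(t)|\le q\) or \(|(V(G)\setminus L)\cap\beta(t)|\le q\) holds. Intersecting with the smaller set \(\widehat\beta(x)\subseteq\beta(t)\) can only decrease these counts. Hence the same side witnesses the bound for \(\widehat\beta(x)\), and no \((q,s)\)-breakable witness exists.

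\textbf{Adhesion.} Again by \cref{lem:compactify_correctness}, every output adhesion is either empty or a subset of some input adhesion \(\sigma_\tau(t)\). The proof of that lemma makes this explicit: the adhesion equals \(N_G(C)\subseteq\sigma_\tau(t_C)\). Therefore its size is either \(0\) or at most \(|\sigma_\tau(t)|\le\eta\).

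\textbf{Possible subtlety.} The only point needing care is the case where \(\sigma_\tau(t_C)\) is the root's empty adhesion. There \(N_G(C)=\emptyset\), so the bound \(\eta\) holds trivially, including under the convention \(\eta=0\) for a single-node tree. I do not expect any real obstacle here; everything substantive is in \cref{lem:compactify_correctness}.
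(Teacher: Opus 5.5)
Your proposal is correct and takes the same approach as the paper. Both claims follow directly from the containments in \cref{lem:compactify_correctness}: every non-dummy output bag lies inside an input bag, edge-unbreakability in the fixed graph \(G\) passes to subsets, and the empty dummy bag is vacuously unbreakable; likewise, every output adhesion is empty or contained in an input adhesion, so its size is at most \(\eta\).
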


\begin{proof}
Every non-dummy output bag is a subset of an input bag, and the dummy root has empty bag, so edge-unbreakability is preserved. Every output adhesion is either empty or a subset of an input adhesion, so adhesion size is preserved.
\end{proof}

\section{Rectangular Minimum Vertex-Weighted Triangle}
\label[appendix]{app:rectangular_vertex_weighted_triangles}

This appendix extends the algorithm of
Akmal and Fischer~\cite{AF26} to tripartite graphs with unequal part
sizes.  We use the following standard consequence of Sch\"onhage's
rectangular asymptotic sum inequality; see, for example,
\cite[Theorem~3.2]{ADWXXZ25}.

\begin{lemma}[Rectangular scaling inequality]
\label{lem:rectangular_scaling_inequality}
For \(a,b,c\ge0\) and \(0\le d\le\min\{a,b,c\}\),
\[
    2d+\omega(a-d,b-d,c-d)\le\omega(a,b,c).
\]
\end{lemma}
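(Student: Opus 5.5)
The plan is to prove the inequality at the level of tensors, by exhibiting many independent copies of the smaller product inside the larger one and then applying Sch\"onhage's asymptotic sum inequality. This is the standard route: the obvious tensor factorization only bounds the large product from above in terms of the small one, which is the wrong direction.

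Write $\langle x,y,z\rangle$ for the matrix multiplication tensor. First, factor
\[
    \langle n^a,n^b,n^c\rangle\cong\langle n^{a-d},n^{b-d},n^{c-d}\rangle\otimes\langle n^d,n^d,n^d\rangle .
\]
Second, invoke the classical degeneration of a square product into independent scalar products, due to Strassen and used by Sch\"onhage. For $N=n^d$, $\langle N,N,N\rangle$ degenerates to a direct sum of $\Omega(N^2)$ copies of $\langle1,1,1\rangle$; one may use the ``$\approx\tfrac34N^2$ independent products'' construction, and any constant fraction suffices. Third, tensor this degeneration with $\langle n^{a-d},n^{b-d},n^{c-d}\rangle$. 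Degeneration is preserved under tensor products, so
\[
    \langle n^a,n^b,n^c\rangle\ \trianglerighteq\ \bigoplus_{j=1}^{\Omega(n^{2d})}\langle n^{a-d},n^{b-d},n^{c-d}\rangle .
\]
Border rank is monotone under degeneration, so the border rank of the right-hand direct sum is at most that of $\langle n^a,n^b,n^c\rangle$, which is $n^{\omega(a,b,c)+o(1)}$.

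Fourth, apply the rectangular form of the asymptotic sum inequality. This is the form that \cite[Theorem~3.2]{ADWXXZ25} packages, applied along the sequence $n\to\infty$ with the homogeneity $\omega(\alpha a,\alpha b,\alpha c)=\alpha\,\omega(a,b,c)$ recorded in \cref{def:mm_exponents}. If a direct sum of $p$ copies of $\langle x,y,z\rangle$ has border rank at most $r$, then $p\cdot(xyz)^{\tau}\le r$, where $\tau$ is the exponent normalized so that $(xyz)^{\tau}$ is the cost of one copy. Translated into the exponent $\omega(\cdot,\cdot,\cdot)$, this gives
\[
    n^{2d}\cdot n^{\omega(a-d,b-d,c-d)-o(1)}\le n^{\omega(a,b,c)+o(1)} .
\]
Taking logarithms base $n$ and letting $n\to\infty$ yields the claim.

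The main obstacle is the fourth step. The asymptotic sum inequality is cleanest for square or single-shape tensors, so it must be applied to many copies of a \emph{rectangular} shape and the result converted back into the exponent notation $\omega(\cdot,\cdot,\cdot)$. I would handle this by symmetrizing: tensor the three cyclic rotations of the shape together to obtain a square tensor, apply the square asymptotic sum inequality to the resulting direct sum of $\Omega(n^{6d})$ copies, and divide the exponents by $3$ at the end. The permutation invariance in \cref{def:mm_exponents} makes this division legitimate. The constant-fraction loss in the degeneration of the second step contributes only an $n^{o(1)}$ factor.
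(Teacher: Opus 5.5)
Your first three steps match the paper's argument with one substitution. The paper also factors $\langle q^a,q^b,q^c\rangle=\langle q^d,q^d,q^d\rangle\otimes\langle q^{a-d},q^{b-d},q^{c-d}\rangle$ and extracts $q^{2d}$ independent copies of the smaller product from the square factor. It does this through the asymptotic subrank, restricting $N$-th tensor powers, whereas you use Strassen's degeneration to a unit tensor of size about $\frac34N^2$. Your version is fine and avoids passing to powers. The paper then applies the rectangular asymptotic sum inequality \cite[Theorem~3.2]{ADWXXZ25} directly, which is what your step~4 first states.

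The genuine problem is the symmetrization you propose to handle step~4: it cannot produce the rectangular exponent. Tensoring the three cyclic rotations gives $\Omega(n^{6d})$ copies of a square product of side $n^{a+b+c-3d}$, with border rank at most $n^{3\omega(a,b,c)+o(1)}$. The square asymptotic sum inequality then yields only
\[
    2d+\frac{\omega}{3}\,(a+b+c-3d)\le\omega(a,b,c).
\]
This is weaker than the lemma, because $3\,\omega(\alpha,\beta,\gamma)\ge\omega\cdot(\alpha+\beta+\gamma)$, usually strictly. Permutation invariance together with submultiplicativity only gives $\omega(\sigma,\sigma,\sigma)\le3\,\omega(\alpha,\beta,\gamma)$ for $\sigma=\alpha+\beta+\gamma$, which is the wrong direction for ``dividing by $3$''.

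For example, with $d=a\le b\le c$ the lemma says $b+c\le\omega(a,b,c)$, since $\omega(0,\beta,\gamma)=\beta+\gamma$. Your route gives only $2a+\frac{\omega}{3}(b+c-2a)$ on the left, which is strictly smaller whenever $\omega<3$ and $b+c>2a$. The weaker form would also not suffice for \cref{lem:rectangular_zero_sum_triangle}, which needs exactly $\omega(a-d',b-d',c-d')\le\omega(a,b,c)-2d'$.

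The remedy is to drop the symmetrization. The single-shape asymptotic sum inequality holds for every tensor $T$, by an elementary argument:
\begin{enumerate}
    \item Let $r=\underline R(p\odot T)$ and $u=\lceil r/p\rceil$.
    \item Since $m\odot T$ is a restriction of $\lceil m/p\rceil\odot(p\odot T)$, induction gives $p\odot T^{\otimes N}\trianglelefteq\langle u^{N-1}r\rangle$, so $\widetilde R(T)\le u$.
    \item Applying this to $p^M\odot T^{\otimes M}$ and letting $M\to\infty$ removes the ceiling and gives $p\,\widetilde R(T)\le r$.
\end{enumerate}
Combined with $\widetilde R(\langle q^{\alpha},q^{\beta},q^{\gamma}\rangle)=q^{\omega(\alpha,\beta,\gamma)}$, which the paper also uses, this gives exactly the rectangular step you need.
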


\begin{proof}
Write \(\langle x,y,z\rangle\) for the tensor of multiplying an
\(x\times y\) matrix by a \(y\times z\) matrix, and write
\(\widetilde R\) and \(\widetilde Q\) for asymptotic rank and
asymptotic subrank, respectively.  Up to immaterial rounding of the
dimensions,
\[
    \langle q^a,q^b,q^c\rangle
    =
    \langle q^d,q^d,q^d\rangle
    \otimes
    \langle q^{a-d},q^{b-d},q^{c-d}\rangle.
\]
The square matrix-multiplication tensor has asymptotic subrank
\[
    \widetilde Q
    \bigl(\langle q^d,q^d,q^d\rangle\bigr)
    =q^{2d};
\]
see, for example, \cite{CVZ21}.  Consequently, up to a factor
\(q^{o(N)}\), the \(N\)-th power of the tensor above restricts to a
direct sum of \(q^{2dN-o(N)}\) copies of
\[
    \langle
        q^{N(a-d)},q^{N(b-d)},q^{N(c-d)}
    \rangle.
\]
Indeed, tensoring a diagonal tensor of size \(t\) with a tensor \(S\)
gives the direct sum of \(t\) copies of \(S\).

The rectangular asymptotic sum inequality
\cite[Theorem~3.2]{ADWXXZ25} therefore lower-bounds the asymptotic rank
of this power by
\[
    q^{N(2d+\omega(a-d,b-d,c-d)-o(1))}.
\]
Taking \(N\)-th roots and logarithms base \(q\), and using
\[
    \omega(a,b,c)
    =
    \log_q
    \widetilde R
    \bigl(\langle q^a,q^b,q^c\rangle\bigr),
\]
proves the claim.
\end{proof}

We first consider the decision version.  A \emph{zero-sum triangle} is
a triangle whose three vertex weights sum to zero.

\begin{lemma}[Rectangular zero-sum triangle]
\label{lem:rectangular_zero_sum_triangle}
Let \(F\) be a vertex-weighted tripartite graph with \(O(\log n)\)-bit
integer vertex weights and parts \(A,B,C\) satisfying
\[
    |A|\le n^a,\qquad |B|\le n^b,\qquad |C|\le n^c.
\]
For every fixed
\(\widehat\omega(a,b,c)>\omega(a,b,c)\), whether \(F\) contains a
zero-sum triangle can be determined in
\(O(n^{\widehat\omega(a,b,c)})\) time.
\end{lemma}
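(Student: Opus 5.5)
We follow the Akmal--Fischer approach for zero-sum triangle detection~\cite{AF26} and replace its square matrix products by rectangular ones, using \cref{lem:rectangular_scaling_inequality} to control the size-balancing step. Write the weights as \(W:A\cup B\cup C\to\mathbb Z\) with \(|W|\le n^{O(1)}\). We seek \(x\in A,y\in B,z\in C\) with \(xy,yz,zx\in E(F)\) and \(W(x)+W(y)+W(z)=0\).

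\textbf{Step 1: weight hashing.} Choose a random prime modulus \(p\), or a small family of moduli, as in \cite{AF26}. Bucket the weights so that the condition \(W(x)+W(y)+W(z)=0\) reduces, for each bucket triple, to a congruence \(W(x)+W(y)+W(z)\equiv0\pmod p\) together with a bounded number of carry cases. This is the step that turns exact zero-sum into a structured counting problem. Set \(d:=\min\{a,b,c\}\) and parametrize the bucket sizes by \(n^{\delta}\).

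\textbf{Step 2: the main counting.} For a fixed residue class decomposition, group the vertices of \(A\), \(B\), \(C\) by their weights modulo \(p\). For each pair of residue classes \((\alpha,\beta)\), the class \(\gamma\equiv-\alpha-\beta\) is forced. The number of such triangles is then a sum, over \(p\) diagonal ``slices,'' of products of adjacency submatrices. Following \cite{AF26}, we embed these slices as a block-diagonal structure inside one larger product. Concretely, we take the tensor \(\langle n^{a},n^{b},n^{c}\rangle\) and remove a \(\langle n^{\delta},n^{\delta},n^{\delta}\rangle\) factor that indexes the residues. The remaining products have shape \(\langle n^{a-\delta},n^{b-\delta},n^{c-\delta}\rangle\), and there are about \(n^{2\delta}\) of them. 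Their total cost is
\[
n^{2\delta+\omega(a-\delta,b-\delta,c-\delta)+o(1)},
\]
which \cref{lem:rectangular_scaling_inequality} bounds by \(n^{\omega(a,b,c)+o(1)}\). Heavy residue classes, meaning those whose class exceeds the average size by a large factor, are handled by brute-force rectangular multiplication restricted to the heavy part. Their total size is controlled by Markov's inequality, so this also costs at most \(n^{\widehat\omega(a,b,c)}\).

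\textbf{Step 3: combining and derandomization.} We combine the counts over all carry cases and detect a nonzero count. This gives detection of a zero-sum triangle in \(O(n^{\widehat\omega(a,b,c)})\) time. The slack \(\widehat\omega>\omega\) absorbs the \(n^{o(1)}\) and \(\mathrm{polylog}\) factors, since the weights have \(O(\log n)\) bits. For determinism we can replace the random prime by trying all primes in a range of size \(n^{O(\delta)}\) with a fixed allocation, or use deterministic hashing as in \cite{AF26}.

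\textbf{Main obstacle.} We expect Step 2 to be the hardest: arranging the residue-sliced products so that they truly form \(n^{2\delta}\) independent rectangular products of the reduced shape. The asymmetry \(a\ne b\ne c\) means the slicing must be done on all three dimensions simultaneously. We would first try to mimic the square case exactly with \(\delta\) fixed near \(d\), and then check that the heavy/light balancing still closes with rectangular exponents.
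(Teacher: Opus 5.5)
\textbf{There is a genuine gap, and it sits in Steps 1 and 2.}

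\emph{Exactness is lost.} Your Step 1 does not turn exact zero-sum into something that triangle counting can certify. Take \(p=n^{\delta}\) with \(\delta\le\min\{a,b,c\}\); the weights have \(\Theta(\log n)\) bits. Write \(W=hp+\ell\) and guess the carry \(c\in\{0,1,2\}\). Then \(W(x)+W(y)+W(z)=0\) becomes \(\ell_x+\ell_y+\ell_z=cp\) \emph{and} \(h_x+h_y+h_z=-c\). The second condition is again a zero-sum triangle condition, now on the high parts, so the carry cases only dispose of the lowest digit. Consequently, triangle counts in the slices \(A_\alpha\times B_\beta\times C_\gamma\) include every triangle whose sum is \(\equiv0\pmod p\), and a positive count does not certify a zero-sum triangle. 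With a random prime, some of the \(n^{a+b+c}\) triples will produce false positives. If you instead take \(p\) larger than the weight range, the congruence becomes exact, but the residue classes are just the exact weight classes and the balanced \(\langle n^\delta,n^\delta,n^\delta\rangle\) slicing disappears.

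\emph{The heavy-class step does not work.} That imbalance is the real difficulty. If all weights are distinct, the problem is trivial in \(O(n^{a+b})\) time by lookup; if all weights are equal, it is plain triangle detection. Hashing never separates equal weights. Markov's inequality bounds only the number of heavy classes, not their total size: a single weight may fill an entire part. So ``brute force on the heavy part'' can be the whole instance, and plain multiplication there still would not enforce exactness. Your stated main obstacle is also misplaced. You never need the \(n^{2\delta}\) products to sit inside one product; \cref{lem:rectangular_scaling_inequality} is only used to bound the cost of each separately computed small product.

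\textbf{The missing idea, which the paper uses, is to fix an exact weight on one side.}

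\emph{Enforcing exactness.} If every vertex of \(U_x\) has weight \(x\), the zero-sum condition becomes the pairwise condition \(W(v)+W(z)=-x\). You enforce it by deleting the \(V\)--\(Z\) edges that violate it, after which ordinary triangle detection is exact.

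\emph{Balancing by frequency.} Balancing is done by frequencies, not hashing. For \(U\in\{A,B\}\), set \(p:=\nu_U(x)\) and skip \(x\) if \(p<n^{-a}\). Group the weights of \(V\) with frequency at most \(p\) into \(O(1/p)\) groups, each of total frequency at most \(2pN_V\). Cut each corresponding \(Z_{x,P}\) into blocks of size \(\lceil pN_Z\rceil\), giving \(O(1/p)\) blocks in total.

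\emph{Cost.} Each call has all three dimensions scaled by \(p=n^{-d}\). By \cref{lem:rectangular_scaling_inequality}, with \(d\) discretized to finitely many values so the exponent bound is uniform, one call costs \(p^2n^{\widehat\omega}\). There are \(O(1/p)\) calls per weight, and \(\sum_x\nu_U(x)\le1\), so the total is \(O(n^{\widehat\omega})\).

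\emph{Correctness.} Take \(U\) to be whichever of \(A,B\) has the more frequent weight on the target triangle. Then the other vertex's weight has frequency at most \(p\) and lands in some group, and the \(C\)-vertex lands in the matching block. The procedure is deterministic, so your Step 3 is unnecessary.
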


\begin{proof}
By permuting the parts, assume \(a\le b\le c\), and put
\[
    N_A:=n^a,\qquad N_B:=n^b,\qquad N_C:=n^c.
\]
Write \(W\) for the vertex-weight function.  For
\(U\in\{A,B,C\}\), let
\[
    f_U(x):=
    |\{u\in U:W(u)=x\}|,
    \qquad
    \nu_U(x):=\frac{f_U(x)}{N_U}.
\]
We run the construction below twice, for
\[
    (U,V,Z)=(A,B,C)
    \qquad\text{and}\qquad
    (U,V,Z)=(B,A,C).
\]

Fix a weight \(x\) occurring in \(U\), put
\(p:=\nu_U(x)\), and skip this weight if \(p<n^{-a}\).  Let
\[
    \mathcal W_V(x)
    :=
    \{y:f_V(y)>0\text{ and }\nu_V(y)\le p\}.
\]
Scan \(\mathcal W_V(x)\) in an arbitrary fixed order and partition it
by next fit into maximal groups \(P\) satisfying
\[
    \sum_{y\in P}f_V(y)\le2pN_V.
\]
Every group other than possibly the last has total frequency greater
than \(pN_V\): otherwise, the next weight, whose frequency is at most
\(pN_V\), could have been added.  Thus there are \(O(1/p)\) groups.

For each group \(P\), define
\begin{align*}
    U_x&:=\{u\in U:W(u)=x\},\\
    V_P&:=\{v\in V:W(v)\in P\},\\
    Z_{x,P}&:=\{z\in Z:-x-W(z)\in P\}.
\end{align*}
The sets \(Z_{x,P}\) are disjoint as \(P\) varies.  They can be
constructed by storing the groups in a lookup table and scanning \(Z\)
once.  Split each \(Z_{x,P}\) into blocks of size
\(\lceil pN_Z\rceil\), except that the last block may be smaller.
Because \(p\ge n^{-a}\) and \(a\le c\), we have \(pN_Z\ge1\).
Consequently, every block has size at most \(2pN_Z\), and the total
number of blocks over all \(P\) is
\[
    O\left(
        \sum_P
        \left(
            \frac{|Z_{x,P}|}{pN_Z}+1
        \right)
    \right)
    =O(1/p).
\]

For every resulting block \(Z'\), delete from the graph induced by
\(U_x\cup V_P\cup Z'\) each edge
\(vz\in E(V_P,Z')\) for which
\[
    x+W(v)+W(z)\ne0,
\]
and run ordinary tripartite triangle detection.  Since every vertex
of \(U_x\) has weight \(x\), the remaining triangles are exactly the
zero-sum triangles in \(U_x\times V_P\times Z'\).

For correctness, let
\((u_A,u_B,u_C)\in A\times B\times C\) be a zero-sum triangle.  Choose
\(U\in\{A,B\}\) so that
\[
    \nu_U(W(u_U))
    =
    \max\{
        \nu_A(W(u_A)),
        \nu_B(W(u_B))
    \},
\]
and let \(V\) be the other of \(A,B\).  The selected frequency
\(p\) satisfies
\[
    p\ge\nu_A(W(u_A))\ge\frac1{N_A}=n^{-a},
\]
and \(\nu_V(W(u_V))\le p\).  Thus \(W(u_V)\) belongs to some group
\(P\).  Since
\[
    W(u_V)=-W(u_U)-W(u_C),
\]
the vertex \(u_C\) belongs to \(Z_{W(u_U),P}\), and hence to one of
its blocks.  The corresponding call therefore detects the triangle.
Conversely, every reported triangle has weight zero.

It remains to bound the running time.  Write
\[
    \widehat\omega:=\widehat\omega(a,b,c),
    \qquad
    \delta:=\widehat\omega-\omega(a,b,c)>0.
\]
For the current weight, write \(p=n^{-d}\).  Since \(p\ge n^{-a}\),
\[
    0\le d\le a\le\min\{a,b,c\}.
\]
Every call for \(x\) has part sizes
\[
    pN_U,\qquad O(pN_V),\qquad O(pN_Z),
\]
up to constant factors.

We make the exponent estimate uniform over the possibly
\(n\)-dependent value of \(d\).  Choose constants
\(\gamma,\delta_0>0\) satisfying
\[
    2\gamma+\delta_0<\delta,
\]
and let \(d'\le d\) be the largest multiple of \(\gamma\).  Pad the
three matrices to dimensions with exponents
\(a-d',b-d',c-d'\).  There are only finitely many possible values of
\(d'\), so we may fix algorithms for all of them whose exponents are
at most \(\delta_0\) above the corresponding rectangular exponents.
By \cref{lem:rectangular_scaling_inequality}, one call therefore takes at most
\begin{align*}
    n^{\omega(a-d',b-d',c-d')+\delta_0}
    \le
    n^{\omega(a,b,c)-2d'+\delta_0}
    \le
    n^{\widehat\omega-2d}
    =
    p^2n^{\widehat\omega}
\end{align*}
time.  Since there are \(O(1/p)\) calls for \(x\), their total cost is
\[
    O(pn^{\widehat\omega}).
\]
For either choice of \(U\),
\[
    \sum_{x:f_U(x)>0}\nu_U(x)
    =
    \frac{|U|}{N_U}
    \le1.
\]
Summing over all weights and both choices of \(U\) gives
\(O(n^{\widehat\omega})\) total matrix-multiplication time.

The remaining bookkeeping fits within the same bound.  Scanning the
weight classes in the other parts for every fixed weight costs
\[
    O(N_AN_B+N_AN_C+N_BN_C)
    =
    O(n^{\widehat\omega}),
\]
where we use the standard lower bound
\[
    \omega(a,b,c)
    \ge
    \max\{a+b,a+c,b+c\}.
\]
Moreover, for a fixed \(x\), the total sizes of all matrices
constructed over its \(O(1/p)\) calls are
\[
    O\bigl(
        p(N_UN_V+N_UN_Z+N_VN_Z)
    \bigr).
\]
Summing this expression over \(x\), using
\(\sum_xp\le1\), and then over the two orientations gives the same
pairwise bound.  This proves the lemma.
\end{proof}

\begin{proof}[Proof of
\cref{lem:rectangular_min_vertex_weighted_triangle}]
The reduction used in \cite[Corollary~2]{AF26} reduces
Minimum Vertex-Weighted Triangle to zero-sum vertex-weighted triangle
with \(O(\log W)\) overhead, where the absolute values of the input
weights are at most \(W\).

For an already tripartite instance, specialize this reduction to
triples in \(A\times B\times C\).  It leaves the three vertex sets fixed
and, in every generated instance, only changes vertex weights and
deletes vertices or edges.

Choose a constant
\[
    \omega(a,b,c)
    <
    \omega'
    <
    \widehat\omega(a,b,c).
\]
Apply \cref{lem:rectangular_zero_sum_triangle} with exponent
\(\omega'\) to every generated instance.  Since the weights are
polynomially bounded, the logarithmic overhead is absorbed by the
fixed gap between \(\omega'\) and
\(\widehat\omega(a,b,c)\).  A standard logarithmic-depth
self-reduction recovers a minimizing triangle within the same bound.
\end{proof}

\end{document}